\documentclass[a4paper,UKenglish,cleveref, autoref, thm-restate]{lipics-v2021}

\usepackage{multirow, rotating, url, hyperref, xspace}
\usepackage{tikz,pgfmath} \usetikzlibrary{calc} 
\usetikzlibrary{decorations.pathmorphing}
\usetikzlibrary{arrows.meta}
\usetikzlibrary{hobby}

\usepackage{relsize}
\usepackage{bbm,xcolor}

\usepackage{todonotes}

\newcommand{\es}{\varnothing}
\newcommand{\NP}{\ensuremath{\mathbb{NP}}}
\newcommand{\sm}{\setminus}
\newcommand{\NC}{\operatorname{N}}

\newcommand{\pR}{pro\-per rain\-bow\xspace}

\newcommand{\pDR}{pro\-per double rain\-bow\xspace}

\newcommand{\iR}{im\pR}
\newcommand{\IR}{Im\pR}
\newcommand{\PRC}{\operatorname{PRC}}

\newcommand{\ESAT}{\textsf{3SAT}\xspace}
\newcommand{\NSAT}{\textsf{NAE3SAT}\xspace}

\newcommand{\zero}{0}
\newcommand{\eins}{1}
\newcommand{\zwei}{2}
\newcommand{\oder}{v}
\newcommand{\strn}{*}
\newcommand{\glch}{=}
\newcommand{\plus}{+}
\newcommand{\vllt}{?}
\newcommand{\bsmt}{!}

\newcommand{\typesymbol}[1]{\if#1\zero{\normalfont{\textsf{\bfseries0}}}%
  \else\if\eins#1{\normalfont{\textsf{\bfseries1}}}%
  \else\if\zwei#1{\normalfont{\textsf{\bfseries2}}}%
  \else\if\oder#1{\ensuremath{\vee}}%
  \else\if\strn#1{\ensuremath{\star}}%
  \else\if\glch#1{\normalfont{\textsf{\bfseries=}}}%
  \else\if\plus#1{\normalfont{\textsf{\bfseries+}}}%
  \else\if\vllt#1{\normalfont{\textsf{\bfseries?}}}%
  \else\if\bsmt#1{\normalfont{\textsf{\bfseries!}}}%
  \else{\textcolor{red}{\normalfont{#1}}}%
  \fi\fi\fi\fi\fi\fi\fi\fi\fi}

\newcommand{\type}[2]{\typesymbol{#1}\typesymbol{#2}}
\newcommand{\Type}[2]{#1#2}

\newsavebox{\easy}
\sbox{\easy}{\raisebox{-1.25ex}{\tikz{\node[fill=green!60]{polynomial}}}}
\newsavebox{\hard}
\sbox{\hard}{\raisebox{-1.25ex}{\tikz{\node[fill=red!50]{\NP-complete}}}}
\newsavebox{\properRainbow}
\sbox{\properRainbow}{\raisebox{-1.25ex}{\tikz{\node[fill=pink!60]{follows from \pR}}}}
\newsavebox{\properDoubleRainbow}
\sbox{\properDoubleRainbow}{\raisebox{-1.25ex}{\tikz{\node[fill=violet!60]{follows from \pDR}}}}
\newcommand{\etk}[1]{\raisebox{-0.8ex}{\tikz{\node[fill=green!40]{#1}}}}
\newcommand{\e}[1]{\raisebox{-0.8ex}{\tikz{\node[fill=green!60!black]{#1}}}}
\newcommand{\h}[1]{\raisebox{-0.8ex}{\tikz{\node[fill=red!50]{#1}}}}
\newcommand{\htk}[1]{\raisebox{-0.8ex}{\tikz{\node[fill=pink!60]{#1}}}}
\newcommand{\Kh}[3]{\h{\hyperref[#1]{\type{#2}{#3}}}}
\newcommand{\Ke}[3]{\e{\hyperref[#1]{\type{#2}{#3}}}}
\newcommand{\Ktk}[3]{\tk{\hyperref[#1]{\type{#2}{#3}}}}
\newcommand{\Ketk}[3]{\etk{\hyperref[#1]{\type{#2}{#3}}}}

{}%

\hypersetup{hidelinks=true}

\definecolor{c1}{rgb}{1.0,0.0,0.0} 
\definecolor{c2}{rgb}{0.0,0.5,0.0} 
\definecolor{c3}{rgb}{0.0,0.3,1.0} 
\definecolor{g1}{rgb}{1.0,0.6,0.6} 
\definecolor{g2}{rgb}{0.6,0.9,0.4} 
\definecolor{g3}{rgb}{0.6,0.6,1.0} 
\definecolor{g4}{rgb}{1.0,1.0,0.5} 
\definecolor{g5}{RGB}{210, 174, 235}
\definecolor{g6}{RGB}{143, 201, 191} 
\definecolor{c4}{rgb}{1.0,0.9,0.0} 
\definecolor{c5}{RGB}{26, 199, 167}
\definecolor{c6}{RGB}{177, 5, 245}
\definecolor{c7}{RGB}{245, 145, 5}
\tikzstyle{lw1}=[line width=2]
\tikzstyle{a} = [draw, very thick, circle, inner sep=0pt, minimum size=2.3mm]
\tikzstyle{aa} = [draw, very thick, inner sep=0pt, minimum size=2.3mm]
\tikzstyle{s} = [draw, thick, circle, inner sep=0pt, minimum size=1.8mm]
\tikzstyle{w} = [a, fill=white]
\tikzstyle{r} = [a, fill=c1]
\tikzstyle{g} = [a, fill=c2]
\tikzstyle{b} = [a, fill=c3]
\tikzstyle{ye} = [a, fill=c4]
\tikzstyle{tq} = [a, fill=c5]
\tikzstyle{p} = [a, fill=c6]
\tikzstyle{o} = [a, fill=c7]
\tikzstyle{n} = [draw, thick, circle, inner sep=0pt, minimum size=5mm]
\tikzstyle{t} = [draw, thick, rounded corners=2.5mm, inner sep=0pt, minimum height=5mm, minimum width=8mm]
\tikzstyle{l} = [a, color=lightgray]
\tikzstyle{c} = [a, fill=c4]
\tikzstyle{gb}= [l, fill=g3]
\tikzstyle{gr}= [l, fill=g1]
\tikzstyle{gg}= [l, fill=g2]
\tikzstyle{gtq} = [l, fill=g6]
\tikzstyle{gp} = [l, fill=g5]
\tikzstyle{gc}= [l, fill=g4]
\tikzstyle{d} = [circle, inner sep=0pt, minimum size=2.3mm] 
\tikzstyle{A} = [draw, very thick, circle, inner sep=0pt, minimum size=5mm]
\tikzstyle{M} = [very thick, circle, inner sep=0pt, minimum size=10mm]

\title{
Complexity Classification of Colouring Problems with Parity Constraints} 

\author{R\'emy Belmonte}{Universit\'e Gustave Eiffel, CNRS, LIGM, F-77454 Marne-la-Vall\'ee, France}{remy.belmonte@u-pem.fr}{0000-0001-8043-5343}{}

\author{Juan Pablo Bravo}{LIRMM, Université de Montpellier, CNRS, Montpellier, France}{juan-pablo.bravo-garrido@lirmm.fr}{https://orcid.org/0009-0008-4731-9445}{}

\author{Noleen K\"ohler}{University of Leeds, Leeds, LS2 9JT, UK}{N.Koehler@leeds.ac.uk}{https://orcid.org/0000-0002-1023-6530}{}

\author{Haiko M\"uller}{University of Leeds, Leeds, LS2 9JT, UK}{H.Muller@leeds.ac.uk}{https://orcid.org/0000-0002-1123-1774}{}

\authorrunning{R.~Belmonte, J.\,P.~Bravo, N.~K\"ohler and H.~M\"uller}

\Copyright{R\'emy Belmonte, Juan Pablo Bravo, Noleen K\"ohler and Haiko M\"uller} 

\ccsdesc[100]{
G.2.2 
F.2.2 
} 

\keywords{vertex colouring, odd colouring, computational complexity} 

\nolinenumbers 

\EventEditors{} 
\EventNoEds{2}
\EventLongTitle{} 
\EventShortTitle{} 
\EventAcronym{TR} 
\EventYear{2026}
\EventDate{} 
\EventLocation{} 
\EventLogo{}
\SeriesVolume{} 
\ArticleNo{23}

\begin{document}

\maketitle

\begin{abstract}
We study variants of graph colouring with parity constraints. 
More specifically, we consider $q$-colourings $c\colon V(G)\rightarrow \{1,\dots,q\}$ of a graph $G$ where, for every vertex $v\in V(G)$, the number of neighbours $w$ of $v$ with $c(w)=c(v)$ is restricted to be odd, even, positive, zero or a combination thereof. 
For every colour $i\neq c(v)$ the number of neighbours $w$ of $v$ with $c(w)=i$ is restricted by a constraint of similar type. 
Many known colouring problems such as proper colouring, defective colouring, exact defective colouring, odd colouring, and strong odd colouring can be described within this framework of constraining graph colourings, and therefore considering variants constitutes a natural generalisation of known colouring problems.
We provide a comprehensive study of the computational complexity of different combinations of constraints involving parity.
\end{abstract}

\section{Introduction}
\label{sec:intro}

Through the formulation of the $4$-colour conjecture for planar graphs, vertex colouring is among the oldest and most important graph-theoretic problems, see \cite{ApHa1989,McKay2012} for its proof, history and further references.
Many real world problems can be modelled by graph colouring. Consequently, the computational complexity of (variants of) graph colouring problems is central in Theoretical Computer Science. A \emph{(vertex) colouring} with $q\in \mathbb{N}$ colours of a graph $G$ is a mappings $c\colon V(G) \to \{1,\dots, q\}$  (colourings are in particular not required to be proper).
The classical version of vertex colouring, called proper colouring, seeks to decide, for a given graph $G$ and positive integer $q$, whether the vertices of $G$ can be coloured by at most $q$ colours such that adjacent vertices receive different colours, and was (for $q\geq 3$) among the $21$ problems shown to be \NP-complete in Karp's seminal work \cite{Karp2010}. 

\noindent \textbf{Related vertex colourings}
Naturally, many variants of proper colouring obtained by \textsl{e.g.} restricting or relaxing proper colouring have been studied. \emph{Defective colouring} constitutes a very natural example of such a relaxation. Essentially, in a defective colouring $c\colon V(G)\rightarrow [q]$ one allows every vertex to have at most $d$ (or precisely $d$ in the case of exact defective colouring) neighbours of the same colour for a small constant $d$, see \cite{frick1993survey,woodall1990improper} for references to original work.
This relaxation remains \NP-complete even on $q$-regular graphs \cite{cowen1997defective}. The exact version for defect $d=1$ is also called \emph{$q$-colourable perfect matching} and was shown to be \NP-complete for $2$-colours \cite{Sch78} and for $q\geq 3$ colours \cite{DemaineKP25}. $2$-colourable perfect matching remains \NP-complete even on $2$-connected $3$-regular planar
graphs.
A variant akin to defective colouring, called \emph{even/odd colouring}, is obtained by allowing every vertex to have an even/odd number of neighbours of the same colour.  
Gallai showed that every graph admits an even colouring with $2$ colours \cite{Lovasz93}, implying that the problem is polynomial time solvable. On the other hand,  there is no bound on the number of colours needed to odd colour a graph \cite{Scott01} (consider \textsl{e.g.} a $K_n$ with each edge subdivided once, in which no two non-subdivision vertices can receive the same colour in an odd colouring). While every graph without isolated vertices is odd colourable \cite{Scott01}, the associated minimization problem is \NP-complete for $3$ colours \cite{BelmonteS21} and polynomial time solvable for $2$ colours \cite{BelmonteS21}.

While the above variants are relaxations of proper colouring (only the appearance of neighbours of the same colour are restricted) another recent direction of research studies colourings that are restrictions of proper colourings in the sense that the appearance of colours  not equal to a vertex's own colour in the neighbourhood is restricted in some way. Also introduced under the name odd colouring, but for clarity referred to as \emph{proper odd colouring} here, Petrusevski and Skrekovsk introduced proper colourings for which additionally for each vertex $v$ some colour different from the colour of $v$ appears an odd number of times in the neighbourhood of $v$ \cite{PetrS21}. This notion is inspired by a notion of odd colouring for hypergraphs introduced by Cheilaris \textsl{et al.} \cite{cheilaris2013unique} and is \NP-complete to compute \cite{caro2022remarks} even on bipartite graphs \cite{ahn2025proper}. Very recently, a stronger version of proper odd colouring, called \emph{strong odd colouring}, was introduced \cite{kwon2024strong}. Strong odd colourings are proper colouring where every colour appearing in a neighbourhood appears there on an odd number of vertices, see also \cite{CaroPST25,Pilipczuk25} for further references. To the best of our knowledge the complexity of this problem is still unknown. Finally, a similar version, called \emph{conflict-free colouring}, in which each non-isolated vertex $v$ is required to have a neighbour whose colour is unique within the open neighbourhood of $v$  was introduced in \cite{fabrici2023proper} and shown to be \NP-complete even on bipartite graphs (for $4$ colours even on planar graphs) \cite{ahn2025proper}.

\noindent \textbf{Our results} 
Inspired by the colouring variant of Petrusevski and Skrekovsk we study colourings where we place some restriction on the appearance of neighbours of a vertex $v$ with the same colour as $v$ and we place some restriction of the appearance of all other colours in the neighbourhood of $v$. More specifically, we consider colouring $c\colon V(G)\rightarrow \{1,\dots, q\}$ for which we impose, for every vertex $v \in V(G)$, a parity condition on some colour classes restricted to the open neighbourhood $N(v)$ of $v$.
That is, for some colours $i\in \{1,\dots,q\}$, the set $N(v) \cap c^{-1}(i)$ should be of even or odd size. Additionally, we may require that the set is empty or not empty. The parity condition might depend on the colour of $v$, that is, whether $i=c(v)$ or $i \ne c(v)$ holds. All combinations of considered constraints can be seen \textsl{e.g.} in \cref{tab:casesTwoColours}.

\begin{table}[ht]
\begin{center}
  \renewcommand{\arraystretch}{1.7}
  \renewcommand{\l}[1]{\multirow{2}{*}{#1}}
  \newcommand{\m}[1]{\multirow{2}{*}{$#1$0}}
  \begin{tabular}{ll|*{9}{c|}}
    \multicolumn{2}{c|}{\multirow{3}{*}{$|N(v) \cap c^{-1}(i)|$}}
    & \multicolumn{9}{c|}{$i \ne c(v)$} \\[+1.7ex]
    \multicolumn{2}{c|}{}
    &\l{even} &even    &\l{odd} & odd    & arbi-  & \m{=}  & \m{>}  &\l{$<$2} &\l{$=1$}
    \\[-1.7ex] \multicolumn{2}{c|}{}                                 
    &       &and \!$>$0&        & or =0  & trary  &        &        & &
    \\ \hline \multirow{9}{*}{\hspace*{-3mm}\rotatebox{90}{$i=c(v)$}\hspace*{2mm}} & even
    & \Ketk{link1234:00}{0}{0} & \Kh{link2:02}{0}{2} & \Ke{link2:01}{0}{1} & \Ke{link2:0v}{0}{v} &\etk{\type0*} \cite{Lovasz93} & \Ketk{link1234:0=}{0}{=} & \Kh{link2:0+}{0}{+} & \Kh{link2:0?}{0}{?} & \Kh{link234:0!}{0}{!}
    \\ \cline{2-11} & even and $> 0$
    & \Ketk{link1234:20}{2}{0} & \Kh{link2:22}{2}{2} & \Kh{link2:21}{2}{1} & \Kh{link2:2v}{2}{v} & \Kh{link2:2*}{2}{*} & \Ketk{link1234:2=}{2}{=} & \Kh{link2:2+}{2}{+} & \Kh{link2:2?}{2}{?} & \Kh{link2:2!}{2}{!} 
    \\ \cline{2-11} & odd
    & \Ketk{link1234:10}{1}{0} & \Kh{link2:12}{1}{2} & \Ke{link2:11}{1}{1} & \Ke{link2:1v}{1}{v} & \etk{\type1*} \cite{BelmonteS21} & \Ketk{link1234:1=}{1}{=} & \Kh{link2:1+}{1}{+} & \Kh{link2:1?}{1}{?} & \Kh{link2:1!}{1}{!}
    \\ \cline{2-11} & odd or $= 0$
    & \Ke{link2:v0}{v}{0} & \Kh{link2:v2}{v}{2} & \Ke{link2:v1}{v}{1} & \Ke{link2:vv}{v}{v} & \Kh{link2:v*}{v}{*} & \Ketk{link1234:v=}{v}{=} & \Kh{link2:v+}{v}{+} & \Kh{link2:v?}{v}{?} & \Kh{link2:v!}{v}{!}
    \\ \cline{2-11} & arbitrary
    & \Ketk{link1234:*0}{*}{0} & \Kh{link2:*2}{*}{2} & \Ke{link2:*1}{*}{1} & \Ketk{link1234:*v}{*}{v} & \Ketk{link1234:**}{*}{*} & \Ketk{link1234:*=}{*}{=} & \Ketk{link2:*+}{*}{+} & \Ketk{link1234:*?}{*}{?} & \Kh{link2:*!}{*}{!}
    \\ \cline{2-11} & $= 0$
    & \Ketk{link2:=0}{=}{0} & \Ketk{link2:=2}{=}{2} & \Ketk{link2:=1}{=}{1} & \Ketk{link2:=v}{=}{v} & \Ketk{link2:=*}{=}{*} & \Ketk{link1234:==}{=}{=} & \Ketk{link2:=+}{=}{+} & \Ketk{link2:=?}{=}{?} & \Ketk{link2:=!}{=}{!}
    \\ \cline{2-11} & $> 0$
    & \Ketk{link1234:+0}{+}{0} & \Kh{link2:+2}{+}{2} & \Kh{link2:+1}{+}{1} & \Ketk{link1234:+v}{+}{v} & \Ketk{link1234:+*}{+}{*} & \Ketk{link1234:+=}{+}{=} & \Kh{link2:++}{+}{+} & \Ketk{link1234:+?}{+}{?} & \Kh{link2:+!}{+}{!}
    \\ \cline{2-11} & $< 2$
    & \Kh{link2:?0}{?}{0} & \Kh{link2:?2}{?}{2} & \Kh{link2:?1}{?}{1} & \Kh{link234:?v}{?}{v} & \htk{\type?*} \cite{cowen1997defective} & \Ketk{link1234:?=}{?}{=} & \Kh{link234:?+}{?}{+} & \Ketk{link2:??}{?}{?} & \Ketk{link2:?!}{?}{!}
    \\ \cline{2-11} & $=1$
    & \Kh{link234:!0}{!}{0}  & \Kh{link2:!2}{!}{2} & \Kh{link2:!1}{!}{1} & \Kh{link234:!v}{!}{v} & \htk{\type!*} \cite{Sch78} & \Ketk{link1234:!=}{!}{=} & \Kh{link2:!+}{!}{+} & \Ketk{link2:!?}{!}{?} & \Ketk{link2:!!}{!}{!}
    \\ \hline
  \end{tabular}
  \caption{The complexity of the different types of colourings for two colours. Here a green box signifies that the problem is polynomial while a red/pink box indicates that the problem is \NP-hard. A light coloured box indicates known or trivial results (where known results are references and trivial results are explained in \cref{sec:trivial}) while a darker coloured box indicate new results.}
  \label{tab:casesTwoColours}
\end{center}
\end{table}

\begin{table}[p]
\begin{center}
  \renewcommand{\arraystretch}{1.7}
  \renewcommand{\l}[1]{\multirow{2}{*}{#1}}
  \newcommand{\m}[1]{\multirow{2}{*}{$#1$0}}
  \begin{tabular}{ll|*{9}{c|}}
    \multicolumn{2}{c|}{\multirow{3}{*}{$|N(v) \cap c^{-1}(i)|$}}
    & \multicolumn{9}{c|}{$i \ne c(v)$} \\[+1.7ex]
    \multicolumn{2}{c|}{}
    &\l{even} &even    &\l{odd} & odd    & arbi-  & \m{=}  & \m{>}  &\l{$<$2} &\l{$=1$}
    \\[-1.7ex] \multicolumn{2}{c|}{}                                 
    &       &and \!$>$0&        & or =0  & trary  &        &        & &
    \\ \hline \multirow{9}{*}{\hspace*{-3mm}\rotatebox{90}{$i=c(v)$}\hspace*{2mm}} & even
    & \Ketk{link1234:00}{0}{0} & \Kh{link34:02}{0}{2} & \Kh{link3:01}{0}{1} & \Kh{link34:0v}{0}{v} &\etk{\type0*} \cite{Lovasz93} & \Ketk{link1234:0=}{0}{=} & \Kh{link3:0+}{0}{+} & \Kh{link34:0?}{0}{?} & \Kh{link234:0!}{0}{!}
    \\ \cline{2-11} & even and $> 0$
    & \Ketk{link1234:20}{2}{0} & \Kh{link34:22}{2}{2} & \Kh{link34:21}{2}{1} & \Kh{link34:2v}{2}{v} & \Kh{link34:2*}{2}{*} & \Ketk{link1234:2=}{2}{=} & \Kh{link34:2+}{2}{+} & \Kh{link34:2?}{2}{?} & \Kh{link34:2!}{2}{!}
    \\ \cline{2-11} & odd
    & \Ketk{link1234:10}{1}{0} & \Kh{link34:12}{1}{2} & \Kh{link34:11}{1}{1} & \Kh{link34:1v}{1}{v} & \htk{\type1*} \cite{BelmonteS21} & \Ketk{link1234:1=}{1}{=} & \Kh{link34:1+}{1}{+} & \Kh{link3:1?}{1}{?} & \Kh{link34:1!}{1}{!}
    \\ \cline{2-11} & odd or $= 0$
    & \Kh{link34:v0}{v}{0} & \Kh{link34:v2}{v}{2} & \Kh{link34:v1}{v}{1} & \Kh{link34:vv}{v}{v} & \type{v}* & \Ketk{link1234:v=}{v}{=} & \Kh{link3:v+}{v}{+} & \Kh{link3:v?}{v}{?} & \Kh{link34:v!}{v}{!}
    \\ \cline{2-11} & arbitrary
    & \Ketk{link1234:*0}{*}{0} & \Kh{link34:*2}{*}{2} & \Kh{link34:*1}{*}{1} & \Ketk{link1234:*v}{*}{v} & \Ketk{link1234:**}{*}{*} & \Ketk{link1234:*=}{*}{=} & \Kh{link3:*+}{*}{+} & \Ketk{link1234:*?}{*}{?} & \Kh{link34:*!}{*}{!}
    \\ \cline{2-11} & $= 0$
    & \Kh{link34:=0}{=}{0} & \Kh{link34:=2}{=}{2} & \Kh{link34:=1}{=}{1} & \Kh{link34:=v}{=}{v} & \htk{\type=*} \cite{G&J} & \Ketk{link1234:==}{=}{=} & \Kh{link34:=+}{=}{+} & \Ketk{link3:=?}{=}{?} & \Ketk{link3:=!}{=}{!}
    \\ \cline{2-11} & $> 0$
    & \Ketk{link1234:+0}{+}{0} & \Kh{link34:+2}{+}{2} & \Kh{link34:+1}{+}{1} & \Ketk{link1234:+v}{+}{v} & \Ketk{link1234:+*}{+}{*} & \Ketk{link1234:+=}{+}{=} & \Kh{link34:++}{+}{+} & \Ketk{link1234:+?}{+}{?} & \Kh{link34:+!}{+}{!}
    \\ \cline{2-11} & $< 2$
    & \Kh{link34:?0}{?}{0} & \Kh{link34:?2}{?}{2} & \Kh{link34:?1}{?}{1} & \Kh{link234:?v}{?}{v} & \htk{\type?*} \cite{cowen1997defective} & \Ketk{link1234:?=}{?}{=} & \Kh{link234:?+}{?}{+} & \Kh{link34:??}{?}{?} & \Kh{link34:?!}{?}{!}
    \\ \cline{2-11} & $=1$
    & \Kh{link234:!0}{!}{0} & \Kh{link34:!2}{!}{2} & \Kh{link34:!1}{!}{1} & \Kh{link234:!v}{!}{v} & \htk{\type!*} \cite{DemaineKP25} & \Ketk{link1234:!=}{!}{=} & \Kh{link34:!+}{!}{+} & \Kh{link34:!?}{!}{?} & \Kh{link34:!!}{!}{!}
    \\ \hline
  \end{tabular}
  \caption{The complexity of the different types of colourings for three colours. }
  \label{tab:casesThreeColours}
\end{center}
\end{table} 

\begin{table}[p]
\begin{center}
  \renewcommand{\arraystretch}{1.7}
  \renewcommand{\l}[1]{\multirow{2}{*}{#1}}
  \newcommand{\m}[1]{\multirow{2}{*}{$#1$0}}
  \begin{tabular}{ll|*{9}{c|}}
    \multicolumn{2}{c|}{\multirow{3}{*}{$|N(v) \cap c^{-1}(i)|$}}
    & \multicolumn{9}{c|}{$i \ne c(v)$} \\[+1.7ex]
    \multicolumn{2}{c|}{}
    &\l{even} &even    &\l{odd} & odd    & arbi-  & \m{=}  & \m{>}  &\l{$<$2} &\l{$=1$}
    \\[-1.7ex] \multicolumn{2}{c|}{}                                 
    &       &and \!$>$0&        & or =0  & trary  &        &        & &
    \\ \hline \multirow{9}{*}{\hspace*{-3mm}\rotatebox{90}{$i=c(v)$}\hspace*{2mm}} & even
    & \Ketk{link1234:00}{0}{0} & \Kh{link34:02}{0}{2} & \Kh{link4:01}{0}{1} & \Kh{link34:0v}{0}{v} &\etk{\type0*} \cite{Lovasz93} & \Ketk{link1234:0=}{0}{=} & \Kh{link4:0+}{0}{+} & \Kh{link34:0?}{0}{?} & \Kh{link234:0!}{0}{!}
    \\ \cline{2-11} & even and $> 0$
    & \Ketk{link1234:20}{2}{0} & \Kh{link34:22}{2}{2} & \Kh{link34:21}{2}{1} & \Kh{link34:2v}{2}{v} & \Kh{link34:2*}{2}{*} & \Ketk{link1234:2=}{2}{=} & \Kh{link34:2+}{2}{+} & \Kh{link34:2?}{2}{?} & \Kh{link34:2!}{2}{!}
    \\ \cline{2-11} & odd
    & \Ketk{link1234:10}{1}{0} & \Kh{link34:12}{1}{2} & \Kh{link34:11}{1}{1} & \Kh{link34:1v}{1}{v} & \htk{\type1*} \cite{BelmonteS21} & \Ketk{link1234:1=}{1}{=} & \Kh{link34:1+}{1}{+} & \Kh{link4:1?}{1}{?} & \Kh{link34:1!}{1}{!}
    \\ \cline{2-11} & odd or $= 0$
    & \Kh{link34:v0}{v}{0} & \Kh{link34:v2}{v}{2} & \Kh{link34:v1}{v}{1} & \Kh{link34:vv}{v}{v} & \type{v}* & \Ketk{link1234:v=}{v}{=} & \Kh{link4:v+}{v}{+} & \Kh{link4:v?}{v}{?} & \Kh{link34:v!}{v}{!}
    \\ \cline{2-11} & arbitrary
    & \Ketk{link1234:*0}{*}{0} & \Kh{link34:*2}{*}{2} & \Kh{link34:*1}{*}{1} & \Ketk{link1234:*v}{*}{v} & \Ketk{link1234:**}{*}{*} & \Ketk{link1234:*=}{*}{=} & \Kh{link4:*+}{*}{+} & \Ketk{link1234:*?}{*}{?} & \Kh{link34:*!}{*}{!}
    \\ \cline{2-11} & $= 0$
    & \Kh{link34:=0}{=}{0} & \Kh{link34:=2}{=}{2} & \Kh{link34:=1}{=}{1} & \Kh{link34:=v}{=}{v} & \htk{\type=*} \cite{G&J} & \Ketk{link1234:==}{=}{=} & \Kh{link34:=+}{=}{+} & \Kh{link4:=?}{=}{?} & \Kh{link4:=!}{=}{!}
    \\ \cline{2-11} & $> 0$
    & \Ketk{link1234:+0}{+}{0} & \Kh{link34:+2}{+}{2} & \Kh{link34:+1}{+}{1} & \Ketk{link1234:+v}{+}{v} & \Ketk{link1234:+*}{+}{*} & \Ketk{link1234:+=}{+}{=} & \Kh{link34:++}{+}{+} & \Ketk{link1234:+?}{+}{?} & \Kh{link34:+!}{+}{!}
    \\ \cline{2-11} & $< 2$
    & \Kh{link34:?0}{?}{0} & \Kh{link34:?2}{?}{2} & \Kh{link34:?1}{?}{1} & \Kh{link234:?v}{?}{v} & \htk{\type?*} \cite{cowen1997defective} & \Ketk{link1234:?=}{?}{=} & \Kh{link234:?+}{?}{+} & \Kh{link34:??}{?}{?} & \Kh{link34:?!}{?}{!}
    \\ \cline{2-11} & $=1$
    & \Kh{link234:!0}{!}{0} & \Kh{link34:!2}{!}{2} & \Kh{link34:!1}{!}{1} & \Kh{link234:!v}{!}{v} & \htk{\type!*} \cite{DemaineKP25} & \Ketk{link1234:!=}{!}{=} & \Kh{link34:!+}{!}{+} & \Kh{link34:!?}{!}{?} & \Kh{link34:!!}{!}{!}
    \\ \hline
  \end{tabular}
  \caption{The complexity of the different types of colourings for four or more colours. }
  \label{tab:cases}
\end{center}
\end{table}

Several known colouring problems correspond to one of the considered variants. For example, the type \type=* refers to the usual proper colourings, \type?* refers to defective colouring with defect $d=1$, \type!* corresponds to exact defective colouring with defect $d=1$, \type0* refers to even colouring, \type1* is odd colouring and finally \type=v corresponds to strong odd colouring. 

We systematically study the computational complexity of combinations of such constraints. Our results are summarised in \cref{tab:casesTwoColours} for two colours, \cref{tab:casesThreeColours} for three colours and \cref{tab:cases} for $q\geq 4$ colours. As can be seen in the tables, the only remaining open case constitutes \type{v}{*}-colouring for $q\geq 3$ colours. For two colours we show that this problem is \NP-complete. We observe differences in the complexity for small number of colours between different variations as can be seen from the three tables.
Unfortunately, besides in the restricted case of two colours, we conclude that all variants are either trivially solvable or \NP-complete.

An interesting phenomenon that we encounter, through considering different variants, is that colouring of regular graphs is of particular importance as many variants collapse to the same problem when considering this restriction. We use this extensively for our reductions for two colours.  Furthermore, when considering three or more colours, we  define 
the following variants of colouring of regular graphs that may be of independent interest. An \emph{improper rainbowcolouring} of a graph $G$ is a colouring $c\colon V(G)\rightarrow \{1,\dots,q\}$ in which for each vertex $v\in V(G)$ every colour $i\in \{1,\dots,q\}$ appears exactly once within the open neighbourhood $N(v)$ of $v$. A \emph{proper rainbow colouring} of a graph $G$ is a proper colouring $c\colon V(G)\rightarrow \{1,\dots,q\}$ in which for each vertex $v\in V(G)$ every colour $i\in \{1,\dots,q\}\setminus \{c(v)\}$ appear exactly once within the open neighbourhood $N(v)$ of $v$. Naturally, both variants are only sensible to consider on regular graphs ($q+1$-regular for improper and $q$-regular for proper rainbow colouring). While a reduction from edge colouring for improper rainbow colouring is not too difficult  (see \cref{ss:improperRainbow}), this seems more difficult in the proper rainbow case. However, proper rainbow colouring is a special case of a covering problem in which the graph $H$ that the input graph $G$ is ``covered'' with is equal to $K_q$ for $q$ the number of colours (see \cite{kratochvil1997covering}). A result by Fiala and Kratochv{\'\i}l implies \NP-completeness of proper rainbow colouring for $q\geq 4$ colours (see \cref{ss:proper-rainbow} for details). To the best of our knowledge, it is unknown whether proper rainbow colouring is hard to solve for three colours.

\noindent \textbf{Other related colouring variants}
A \emph{b-colouring} \cite{b-col} is a proper colouring where one vertex per colour class is adjacent
to vertices of every other colour, and a \emph{fall colouring} \cite{fall} this is required
for all vertices. Several decision problems related to these colourings are \NP-complete,
but solvable in polynomial time for restricted classes of inputs \cite{b&fall}. For graphs regular of degree $d$ the fall $(d+1)$-colourings coincide with \emph{proper rainbow colourings}.

\noindent \textbf{Structure of the paper}
Due to space restriction, we only provide a small sample of results within the main text. The remaining results can be found in the appendix, where they are grouped according to whether they hold from two (\cref{sec:twoColours}) or  three (\cref{sec:threeColours}) colours and according to which problem the corresponding reduction is obtained from.
To establish easy navigation, Tables \ref{tab:casesTwoColours}, \ref{tab:casesThreeColours} and \ref{tab:cases} provide clickable links to the proofs of the respective results.

\section{Preliminaries}
We denote the set of natural number excluding $0$ by $\mathbb{N}$. For $n\in \mathbb{N}$ we denote by $[n]$ the set $\{1,\dots, n\}\subseteq \mathbb{N}$.

In the following all graphs are undirected and simple. For a graph $G$ we denote the set of vertices of $G$ by $V(G)$ and the set of edges of $G$ by $V(G)$. For $X \subseteq V(G)$ we set $X^{(2)} = \{vw \mid v \in X, w \in X, v \ne w\}$.
We use the shorthand notation $vw$ to denote edge $\{v,w\}$. 
For a set $X\subseteq V(G)$ we denote the graph \emph{induced} by the set $X$ by $G[X]$.
For a vertex $v\in V(G)$ we let $N_G(v)$ be the \emph{open neighbourhood} of $G$, \textsl{i.e.} the set of all vertices $w\in V(G)$ for which $vw\in E(G)$. Furthermore, we let $N_G[v]:=N_G(v)\cup \{v\}$ be the \emph{closed neighbourhood} of $v$. We extend the notation to sets by defining $N_G(X):= \big(\bigcup_{v\in X}N_G(v)\big) \setminus X$ and $N_G[X]:=\bigcup_{v\in X}N_G[v]$ for any subset $X\subseteq V(G)$. The \emph{degree} of vertex $v\in V(G)$ is the number $|N_G(v)|$ and is denoted by $\deg_G(v)$. In the preceding notation we omit $G$ if the graph is clear from context. 

A \emph{colouring} of a graph $G$ with $q\in \mathbb{N}$ colours is a map $c\colon V(G)\rightarrow [q]$. We say that vertex $v\in V(G)$ \emph{receives} colour $i\in [q]$ if $c(v)=i$. 
For $i\in [q]$ we call the set $\{v\in V(G)\mid c(v)=i\}$ of all vertices receiving colour $i$ the \emph{colour class of colour $i$}. We encode parity constraint of colourings as follows. We say that constraint $\sigma\in \{\typesymbol{0}, \typesymbol{1}, \typesymbol{2}, \typesymbol{v}, \typesymbol{*}, \typesymbol{=}, \typesymbol{+}, \typesymbol{?}, \typesymbol{!}\}$ holds at a vertex $v\in V(G)$ for a colour $i\in [q]$ in a colouring $c\colon V(G)\rightarrow [q]$ if $|N(v)\cap c^{-1}(i)|$ satisfies the condition as specified in Table~\ref{tab:TypeSymbols}.

For $\sigma_1,\sigma_2\in \{\typesymbol{0}, \typesymbol{1}, \typesymbol{2}, \typesymbol{v}, \typesymbol{*}, \typesymbol{=}, \typesymbol{+}, \typesymbol{?}, \typesymbol{!}\}$ we say that a colouring $c$ with $q\in \mathbb{N}$ colours is a $\sigma_1\sigma_2$-$q$-colouring if constraint $\sigma_1$ holds at every vertex $v\in V(G)$ for colour $c(v)$ and constraint $\sigma_2$ holds for every vertex $v\in V(G)$ for colour $i\neq c(v)$ in the colouring $c$.

We denote the complete graph on $n\in \mathbb{N}$ vertices by $K_n$ and the complete bipartite graph with parts of size $m\in \mathbb{N}$ and $n\in \mathbb{N}$ by $K_{m,n}$.
For a $k\in \mathbb{N}$ we call a graph $G$ \emph{$k$-regular} if every vertex $v\in V(G)$ has degree $k$ and we call $3$-regular graphs \emph{cubic}. A \emph{$2$-factor} of a graph $G$ is a spanning subgraph in which every vertex has degree $2$.
Two vertices $v$ and $w$ are called \emph{true twins} if their closed
neighbourhoods are equal, $N[v]=N[w]$.  Two vertices $v$ and $w$ are called \emph{false twins} if their open
neighbourhoods are equal, $N(v)=N(w)$.

\begin{table}[ht]
  \centering
\begin{tabular}{|c|c|c|} 
 \hline
 $\sigma$ & formal constraint on $C = N(u)\cap c^{-1}(i)$ & informal constraint on $C$ \\ 
 \hline\hline
 $\typesymbol{0}$ & $|C| \equiv 0 \pmod 2$ & $C$ is even\\ 
 \hline
 $\typesymbol{1}$ & $|C| \equiv 1 \pmod 2$ & $C$ is odd\\ 
 \hline
 $\typesymbol{2}$ & $|C| \equiv 0 \pmod 2$ and $|C|\ge2$ & $C$ is even and nonempty \\ 
 \hline
 $\typesymbol{v}$ & $|C| \equiv 1 \pmod 2$ or $|C|=0$ & $C$ is odd or empty \\ 
 \hline
 $\typesymbol{*}$ & $|C|$ is not constrained & none \\ 
 \hline
 $\typesymbol{=}$ & $|C|=0$ & $C$ is empty \\ 
 \hline
 $\typesymbol{+}$ & $|C|>0$ & $C$ is nonempty \\ 
 \hline
 $\typesymbol{?}$ & $|C|<2$ & $|C|$ is zero or one \\ 
 \hline
 $\typesymbol{!}$ & $|C|=1$ & $|C|$ is exactly one \\ 
 \hline
\end{tabular}
 \caption{The encoding of the different colouring constraints.}
 \label{tab:TypeSymbols}
\end{table}

\section{Complexity for two and potentially more colours}\label{sec:twoColours}


\subsection{Reductions from \NSAT}
\ESAT is the satisfiability problem restricted to formulae with exactly three
literals per clause. That is, its input is a boolean formula $\varphi$ in
conjunctive normal form, $\varphi = \bigwedge_{j=1}^{m} c_j$. Each clause
$c_j$ is the disjunction of exactly three literals, $c_j = (\ell_{j,1} \vee
\ell_{j,2} \vee \ell_{j,3})$. If $X = \{x_1, x_2, \dots, x_n\}$ is the set of
variables that appear in $\varphi$ then every literal is a variable $x_i \in X$
or its negation $\neg x_i$, often denoted as $\bar{x}_i$. The question is whether
there exists a truth assignment $\alpha\colon X \to \{\mathsf{true}, \mathsf{false}\}$ 
such that $\alpha(\varphi) = \mathsf{true}$.
\NSAT (not-all-equal \ESAT) has the same domain as \ESAT. Instead of asking for at least one true
literal per clause, we ask for at least one true literal and at least one false literal. 
Formally, for $\alpha\colon X \to \{\mathsf{true}, \mathsf{false}\}$ let
$\bar{\alpha}$ be defined by
$\bar{\alpha}\colon X \to \{\mathsf{true}, \mathsf{false}\} \quad x \mapsto \bar{\alpha}(x) = \alpha(\bar{x})$. 
The instance $\varphi$ is accepted for \NSAT if there is a truth
assignment $\alpha$ such that both $\alpha(\varphi) = \mathsf{true}$ and
$\bar{\alpha}(\varphi) = \mathsf{true}$.

As a consequence, we may assume that for every clause
$(\ell_i \vee \ell_2 \vee \ell_3)$ of $\varphi$ there is also a clause
$(\neg\ell_i \vee \neg\ell_2 \vee \neg\ell_3)$ in $\varphi$.
In this way, we ensure that every variable $x \in X$ appears an even number of
times in $\varphi$, half of them positively as $x$ and half of them negatively
as $\bar{x}$.

Our constructions for $q=2$ colours use the component design technique. Presume we are considering $\sigma$-colourings for one of the variants considered in table \cref{tab:cases}.
For each variable, we construct a \emph{truth assignment component} or
\emph{tac} for short. Typically, it allows for two $\sigma$-$2$-colourings only, encoding
the two possible truth values. More precisely, the tac has an outlet (outlets can consist of more than one vertex and in the tac there can be edges between outlet vertices) for each occurrence of the variable in the input formula $\varphi$. Essentially, the outlet specifies how the gadget is glued to the rest of the graph. There is a colouring of the closed neighbourhood of an outlets encoding that the literal which the outlet represents is set to true 
and a different colouring representing that the literal is set to false.  A colouring of the tac is a $\sigma$-2-colouring if the conditions of $\sigma$-2-colourings are satisfied for each vertex in the gadget apart from the outlets. The tac must have the following property:

\begin{align*}
\tag{$\Pi_{\operatorname{tac}}$}\label{property:tac}
     &\textit{For every variable $x$ the corresponding tac has a $\sigma$-2-colouring satisfying that }\\
     &\textit{the colours of the closed neighbourhood of each outlet representing an occurrence}\\
     &\textit{of literal $x$ encodes true/false and the colours at the closed neighbourhood of each}\\
     &\textit{outlet representing an occurrence of $\overline{x}$ encodes false/true. Additionally, up to }\\ 
     &\textit{permutation of colours, each  $\sigma$-$2$-colouring of the tac satisfies one of the two }\\
     &\textit{conditions above.}
\end{align*}
For each variable, we construct a
\emph{satisfaction test component} or \emph{stc} that tests whether the three
literals that make up the clause have the right truth values: a true and a false literal. The tac will have one outlet for each literal of the clause. 
Additionally, there is a colouring of the closed neighbourhood of an outlet of the tac that encodes that the literal is true and a different colouring that encodes that the literal is false. A $2$-colouring of the stc is a $\sigma$-2-colouring if the conditions of $\sigma$-2-colourings are satisfied for each vertex in the gadget apart from the outlets. A $\sigma$-2-colouring of the stc is \emph{valid} if the closed neighbourhood of each outlet represents either true or false. The stc must satisfy the following:
\begin{align*}
\tag{$\Pi_{\operatorname{stc}}$}\label{property:stc}
&\textit{For each clause $c$ of $\varphi$ every valid $\sigma$-2-colouring of the stc corresponding to $c$ }\\
&\textit{satisfies that the colours of the closed neighbourhood of at least one outlet encode }\\
&\textit{true and the colours of the closed neighbourhood of at least one outlet encode false.}
\end{align*}

We further require that the stc and tac can be glued together and appropriate $\sigma$-2-colourings of the tac and stc can be combined. Assume $S$ is the stc of a clause $c$ containing a literal $\ell$ equal to $x$ or $\overline{x}$ and $s_1,\dots, s_k$ are the outlets representing $\ell$. Additionally, assume  that $T$ is the tac of variable $x$ and $t_1,\dots,t_\ell$ are the vertices of the outlet of $T$ corresponding to the occurrence of $x$ in $c$. The aim is to identify the outlets of the stc with the neighbours of the outlets of the tac and vice versa. Formally, we say that the tac $T$ and stc $S$ are \emph{compatible} if there is an isomorphism $\iota: N_S[s_1,\dots, s_k]\rightarrow N_T[t_1,\dots, t_\ell]$ between the graph $S[N_S[s_1,\dots, s_k]]$ and $T[N_T[t_1,\dots, t_\ell]]$ such that $\iota(s_1,\dots,s_k)=N_T(t_1,\dots, t_\ell)$ (which also implies that $\iota^{-1}(t_1,\dots,t_\ell)=N_S(s_1,\dots, s_k)$). In this case, we define the \emph{glueing operation} by identifying vertex $v$ of $S$ with vertex $\iota(v)$ of $T$. 
We further say that a $\sigma$-$2$-colouring $c_S$ of $S$ and a $\sigma$-$2$-colouring of $T$ are \emph{compatible} if $c_S(v)=c_T(\iota(v))$ for every $v\in N_S[s_1,\dots, s_k]$. Define the following property:
\begin{align*}
    \tag{$\Pi_+$}\label{property:combiningSTCandTAC}
    &\textit{The stc and tac are compatible and, after globally fixing two $\sigma$-$2$-colourings of }\\
    &\textit{the tac (encoding true and false) and one $\sigma$-$2$-colouring for each assignment}\\
    &\textit{of literals of the stc the following holds. For every clause $c$ containing literal }\\
    &\textit{$\ell\in \{x,\overline{x}\}$ any $\sigma$-$2$-colouring of $c$ that encodes that $\ell$ is true/false is compatible}\\
    &\textit{with the $\sigma$-$2$-colouring of the tac of $x$ that encodes that $\ell$ is true/false.}
\end{align*}
We call $H_\varphi$ the graph obtained by gluing the stc and tac along matching occurances of literals as described above. From the properties \ref{property:tac}, \ref{property:stc} and \ref{property:combiningSTCandTAC} it easily follows that $\phi$ is a YES-instance of \NSAT if and only if $H_\phi$ has a $\sigma$-$2$-colouring and therefore the following holds.

\begin{lemma}\label{lem:reductionFromNAE3SAT}
    If we have a tac satisfying \ref{property:tac} and an stc satisfying \ref{property:tac} which together satisfy \ref{property:combiningSTCandTAC} then $\sigma$-$q$-colouring is \NP-complete for $q=2$ colours.
\end{lemma}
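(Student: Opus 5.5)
Membership in \NP{} is immediate: given a $2$-colouring $c$, we check for every vertex $v$ and each of the two colours $i$ whether $|N(v)\cap c^{-1}(i)|$ satisfies $\sigma_1$ (if $i=c(v)$) or $\sigma_2$ (otherwise). This takes $O(|V|+|E|)$ time. For hardness we reduce from \NSAT, which is \NP-complete. We may assume, as noted above, that the formula is closed under complementing clauses. The map $\varphi\mapsto H_\varphi$ is polynomial: it uses one tac per variable with one outlet per occurrence, one stc per clause, and one gluing per literal occurrence, and the gadgets have size polynomial in their number of outlets. It therefore remains to show that $\varphi$ is a YES-instance if and only if $H_\varphi$ has a $\sigma$-$2$-colouring.

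\emph{Forward direction.} Let $\alpha$ be a not-all-equal assignment. On each tac of variable $x$ we place the globally fixed true/false colouring given by \ref{property:tac} according to $\alpha(x)$. On each stc we place the fixed colouring for the literal assignment induced by $\alpha$; it exists because the clause has both a true and a false literal. By \ref{property:combiningSTCandTAC}, these colourings agree on every glued region $N_S[s_1,\dots,s_k]\cong N_T[t_1,\dots,t_\ell]$, so together they define a single colouring $c$ of $H_\varphi$.

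The key check is that every vertex satisfies its constraint in $H_\varphi$, including the outlet vertices, which are exempt inside each gadget. Here we use the gluing structure. Each outlet vertex of the stc is identified with a neighbour of a tac outlet, so it is a non-outlet vertex of the tac. Its full neighbourhood in $H_\varphi$ must be the one it has in $T$. Symmetrically, each tac outlet vertex is a non-outlet vertex of $S$. I would argue that after gluing, the neighbourhood of every vertex is contained in a single gadget in which that vertex is not an outlet, so the gadget's own $\sigma$-condition certifies it. Making this precise is the step I expect to be the main obstacle. It requires reading the definition of compatibility so that outlets have no neighbours outside the glued closed neighbourhood other than in their own gadget, and I would state this as a mild implicit assumption on the gadgets if needed.

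\emph{Backward direction.} Let $c$ be a $\sigma$-$2$-colouring of $H_\varphi$. Its restriction to each tac is a $\sigma$-$2$-colouring of that tac, because the conditions at non-outlet vertices are unaffected by the rest of the graph. By \ref{property:tac}, up to swapping colours, each restriction encodes a truth value of $x$ consistently at all occurrences; define $\alpha(x)$ accordingly. The colour-swap ambiguity is harmless: swapping both colours globally preserves $\sigma$-colourings, and the encoding is read locally from the closed outlet neighbourhoods, which are shared with the stc. The restriction to each stc is therefore a valid $\sigma$-$2$-colouring, since every outlet neighbourhood is coloured as a tac outlet neighbourhood, hence encodes true or false, consistently with $\alpha$. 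By \ref{property:stc}, each clause has a true and a false literal under $\alpha$, so $\alpha$ witnesses that $\varphi$ is a YES-instance. This completes the reduction, and with \NP{} membership it gives \NP-completeness for $q=2$.
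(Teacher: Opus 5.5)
Your argument is correct and takes the same approach as the paper. The paper builds $H_\varphi$ by gluing the gadgets and simply asserts that the equivalence with \NSAT{} follows from the three properties; you additionally spell out \NP{} membership and both directions. The step you flag as the main obstacle follows immediately from the definition of compatibility, so no extra assumption is needed: $\iota$ is an isomorphism between the subgraphs induced by the closed outlet neighbourhoods, so every neighbour of an outlet already lies in the glued region and gluing creates no new adjacencies there. Hence each stc outlet has exactly its tac neighbourhood in $H_\varphi$, and each tac outlet exactly its stc neighbourhood.
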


\subsubsection{\Type{2}{?}, \Type{0}{?}, \Type{2}{1}, \Type{2}{v}, \Type{2}{*}, \Type{2}{!}, \Type{*}{!}, \Type{+}{1} and \Type{+}{!}-colouring for \boldmath$q=2$}
First recall that $c:V(G)\rightarrow [q]$ is a \type2?-$q$-colouring of $G$ if for every vertex $v$ we have $|N_G(v)\cap c^{-1}(c(v))|$ is even and larger than $0$  and $|N_G(v)\cap c^{-1}(c(v))|<2$ for every $i\neq c(v)$. 

\begin{theorem}\label{link2:2?}
    For $q = 2$ it is \NP-complete to decide whether a graph admits a \type2?-$q$-colouring.
\end{theorem}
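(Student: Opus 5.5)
The plan is to apply \cref{lem:reductionFromNAE3SAT}. That means building a tac and an stc with properties \ref{property:tac}, \ref{property:stc} and \ref{property:combiningSTCandTAC} for $\sigma=\type2?$. Membership in \NP\ is clear, so all the work is in the gadgets.

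First I will record the local consequences of the constraint.
\begin{itemize}
\item Every vertex has at least two neighbours of its own colour, an even number of them, and at most one neighbour of the other colour.
\item The bichromatic edges therefore form a matching.
\item A vertex of degree $2$ has both neighbours in its own colour, so a path whose internal vertices have degree $2$ is monochromatic. These paths will serve as wires.
\item A vertex of degree $3$ has exactly two neighbours of its own colour and exactly one of the other colour. So the colours on its three neighbours are never all equal.
\end{itemize}
The last point is exactly the \NSAT\ condition on three neighbours. This suggests the stc: a single vertex $s_c$ of degree $3$ adjacent to one outlet vertex per literal of clause $c$. Valid colourings, up to swapping colours, then correspond to not-all-equal patterns on the three literal vertices, which gives \ref{property:stc}.

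For the tac of a variable $x$ I plan to take two cycles $C_x$ and $C_{\bar x}$ of suitable length. They are joined by a perfect matching between corresponding vertices. Each cycle is monochromatic, because every cycle vertex has its two cycle neighbours in its own colour and at most one other-coloured neighbour. If the matching edges are to be bichromatic, the two cycles get different colours. Any literal vertex on $C_x$ then has colour "true" or "false", and the vertices on $C_{\bar x}$ carry the opposite value. Up to permutation of the two colours these are the only two colourings, which gives \ref{property:tac}. Some literal vertices of $C_x$ (respectively $C_{\bar x}$) are designated as outlets for the occurrences of $x$ (respectively $\bar x$) and are attached to the corresponding $s_c$.

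The main obstacle is parity at the interface. An outlet vertex $u$ on a cycle has two same-coloured cycle neighbours. If $s_c$ receives the colour of $u$, the count at $u$ becomes three, which is odd and hence forbidden. If $s_c$ receives the other colour, then $u$ has two other-coloured neighbours once the matching edge is included. The fix I would try first has two parts.
\begin{itemize}
\item Do not use the bichromatic matching at outlet positions. Instead use a dedicated small "complementer" gadget elsewhere on the cycles to force $C_x$ and $C_{\bar x}$ to differ.
\item Give each outlet $u$ one private extra neighbour $w_u$ inside a small rigid gadget, for example a $K_4$ or a $K_4$ with a subdivided edge. This gadget should be colourable in exactly two ways: $w_u$ equal to $c(u)$ when $c(s_c)\ne c(u)$, and $w_u\ne c(u)$ when $c(s_c)=c(u)$. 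Then $u$ always has exactly two or exactly four same-coloured neighbours and at most one of the other colour.
\end{itemize}
I would check by a small case analysis that such a gadget exists and that every gadget vertex satisfies \type2? in both colourings. Gluing then only identifies the neighbourhoods of the $u$ and $s_c$ in the uniform way prescribed, which gives \ref{property:combiningSTCandTAC}. With all three properties in hand, \cref{lem:reductionFromNAE3SAT} yields \NP-completeness for $q=2$.
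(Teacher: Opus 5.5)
Your proposal has a genuine gap, at exactly the point you flagged as the main obstacle: the parity fix fails, and it cannot be repaired while the outlets have even degree.

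Recount the neighbours of an outlet $u$: its two same-coloured cycle neighbours, $s_c$, and $w_u$.
\begin{itemize}
\item If $c(s_c)\neq c(u)$ and $c(w_u)=c(u)$, then $u$ has three neighbours of its own colour.
\item If $c(s_c)=c(u)$ and $c(w_u)\neq c(u)$, then again $u$ has three neighbours of its own colour.
\end{itemize}
So both of your intended colourings violate the \type2?-condition at $u$; the count is never two or four.

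A better gadget on $w_u$ will not help. In a \type2?-$2$-colouring, a vertex of even degree can have no neighbour of the other colour: exactly one such neighbour would leave an odd number of same-coloured neighbours. So every degree-$4$ outlet satisfies $c(u)=c(s_c)$. Then $s_c$ has all three neighbours in its own colour, which is an odd number. Hence every graph your construction produces, with at least one clause, has no \type2?-colouring at all, whatever $\varphi$ is.

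For a single degree-$3$ clause vertex $s_c$ to realise all six not-all-equal patterns, each of its three neighbours must have odd degree. Moreover, the tac must allow the unique bichromatic edge at each such vertex to lie either on $s_c$ or inside the tac, independently of the truth value its colour encodes. Nothing in your proposal provides this.

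Two further steps are asserted rather than shown.
\begin{itemize}
\item The tac rigidity argument is circular. A degree-$3$ cycle vertex may instead have its matching partner and only one cycle neighbour in its own colour. For example, the prism $C_n\times K_2$ with $4\mid n$ has a further colouring in which the $4$-cycles $u_iu_{i+1}v_{i+1}v_i$ ($i$ even) are coloured alternately.
\item Neither the complementer gadget nor the $w_u$-gadget is constructed, although that is where all the work lies.
\end{itemize}

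The paper avoids these problems by keeping $H_\varphi$ cubic.
\begin{itemize}
\item The bichromatic edges then form a perfect matching, and triangles are forced to be monochromatic.
\item The tac is a ring of triangles, so it is rigid, and the edge from each triangle to its stc literal vertex is always bichromatic.
\item The flexibility needed for the not-all-equal test sits inside the stc, a Petersen graph with three subdivided edges. There the all-equal case would require the four remaining vertices to induce a $4$-cycle, which the Petersen graph does not contain.
\end{itemize}
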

\begin{proof}
    We use a reduction from \NSAT to show that \type2?-colouring is
\NP-complete for two colours even when restricted to cubic graphs. For $q=2$
and general graphs, each vertex of even degree has the same colour as all its
neighbours in every \type2?-colouring. Vertices of odd degree have exactly one
neighbour of the opposite colour. For a cubic graph $G$ that means that $G$
is \type2?-colourable by two colours if and only if $G$ has a perfect matching
such that every cycle contains an even number of matching edges. Especially,
all vertices in a triangle receive the same colour, and all their neighbours
outside the triangle receive the opposite colour in every \type2?-colouring with
$q=2$ colours.
For a variable $x$ that appears $k$ times positively in the \NSAT-formula and $k$
times negatively we create a tac consisting of a ring of $2k$ triangles with an outlet consisting of a single vertex attached to each triangle This
tac allows for exactly two \type2?-colourings shown in Figure~\ref{fig:2?tac}. True is encoded by the outlet vertex having colour $1$ (and its neighbour having colour $2$ and false is encoded by the outlet vertex having colour $2$ (and its neighbour colour $1$). The tac naturally satisfies property \ref{property:tac}. 

\begin{figure}[hbtp]
  \hspace*{\fill}
  \begin{tikzpicture}[scale=0.5]
    \foreach \n in {1,...,24} {
      \pgfmathtruncatemacro{\a}{30*\n-15}
      \pgfmathtruncatemacro{\v}{\n/2}
      \node[\ifodd\v{b}\else{r}\fi] (\n) at (\a:1.932) {};
    }
    \foreach \n in {1,...,6} {
      \pgfmathtruncatemacro{\a}{60*\n}
      \node[\ifodd\n{b}\else{r}\fi]   (t\n) at (\a:2.732) {};
      \node[\ifodd\n{gr}\else{gb}\fi] (o\n) at (\a:3.732) {};
      \draw[lightgray,double] (t\n)--(o\n);
    }
    \draw[blue](2)--(t1)--(3)--(2)  (6)--(t3)--(7)--(6)  (10)--(t5)--(11)--(10);
    \draw[red] (4)--(t2)--(5)--(4)  (8)--(t4)--(9)--(8)  (12)--(t6)--(1) --(12);
    \draw[double] (1)--(2)  (3)--(4)  (5)--(6)  (7)--(8)  (9)--(10)  (11)--(12);
  \end{tikzpicture}
  \hspace*{\fill}
  \begin{tikzpicture}[scale=0.5]
    \foreach \n in {1,...,24} {
      \pgfmathtruncatemacro{\a}{30*\n-15}
      \pgfmathtruncatemacro{\v}{\n/2}
      \node[\ifodd\v{r}\else{b}\fi] (\n) at (\a:1.932) {};
    }
    \foreach \n in {1,...,6} {
      \pgfmathtruncatemacro{\a}{60*\n}
      \node[\ifodd\n{r}\else{b}\fi]   (t\n) at (\a:2.732) {};
      \node[\ifodd\n{gb}\else{gr}\fi] (o\n) at (\a:3.732) {};
      \draw[lightgray,double] (t\n)--(o\n);
    }
    \draw[red] (2)--(t1)--(3)--(2)  (6)--(t3)--(7)--(6)  (10)--(t5)--(11)--(10);
    \draw[blue](4)--(t2)--(5)--(4)  (8)--(t4)--(9)--(8)  (12)--(t6)--(1) --(12);
    \draw[double] (1)--(2)  (3)--(4)  (5)--(6)  (7)--(8)  (9)--(10)  (11)--(12);
  \end{tikzpicture}
  \hspace*{\fill}
  \caption{The two \type2?-colourings of a tac for a variable that appears
    trice positively and trice negatively. Edges
are single black lines if their endpoints colours are the same,
and by double black lines otherwise. So the double black lines indicate
the perfect matching.}
  \label{fig:2?tac}
\end{figure}

For each clause $c$ we create an stc that is obtained from a Petersen graph by
subdividing three edges whose endpoints induce a $6$-cycle. Outlets consisting of a single vertex are attached to the three subdivision vertices (again colour $1$ at an outlet encodes true and colour $2$ encodes false). If all three subdivision vertices receive the same colour then the entire outer cycle has to be
coloured that way. A claw remains from the stc for the opposite colour. But to
be \type2?-colourable these four vertices should induce a $4$-cycle, which
does not exist in the Petersen graph. Hence, property \ref{property:stc} is satisfied. On the other hand, the three subdivision vertices in the stc allow for the not-all-equal colourings depicted in
Figure~\ref{fig:2?stc}.
\begin{figure}[hbtp] 
  \begin{tikzpicture}[scale=0.4]
    \node[r] (o) at   (0,0) {};
    \node[r] (r) at ( 30:1) {};
    \node[r] (l) at (150:1) {};
    \node[b] (u) at (270:1) {};
    \foreach[count=\n] \c in {r,r,b,b,b,b} {
      \pgfmathtruncatemacro{\a}{60*\n}
      \node[\c] (\n) at (\a:2) {};
    }
    \node[r]  (12) at ($0.5*(1)+0.5*(2)$) {};
    \node[gb] (21) at ($(12)   +( 90:1)$) {}; \draw[lightgray,double] (12)--(21);
    \node[b]  (34) at ($0.5*(3)+0.5*(4)$) {};
    \node[gr] (43) at ($(34)   +(210:1)$) {}; \draw[lightgray,double] (34)--(43);
    \node[b]  (56) at ($0.5*(5)+0.5*(6)$) {};
    \node[gr] (65) at ($(56)   +(330:1)$) {}; \draw[lightgray,double] (56)--(65);
    \draw[red]  (1)--(12)--(2)--(r)-- (o)--(l)--(1);
    \draw[blue] (3)--(34)--(4)--(5)--(56)--(6)--(u)--(3);
    \draw[double] (2)--(3)  (4)--(l)  (5)--(r)  (6)--(1)  (u)--(o);
  \end{tikzpicture}
  \hfill 
  \begin{tikzpicture}[scale=0.4]
    \node[r] (o) at   (0,0) {};
    \node[b] (r) at ( 30:1) {};
    \node[r] (l) at (150:1) {};
    \node[r] (u) at (270:1) {};
    \foreach[count=\n] \c in {b,b,r,r,b,b} {
      \pgfmathtruncatemacro{\a}{60*\n}
      \node[\c] (\n) at (\a:2) {};
    }
    \node[b]  (12) at ($0.5*(1)+0.5*(2)$) {};
    \node[gr] (21) at ($(12)   +( 90:1)$) {}; \draw[lightgray,double] (12)--(21);
    \node[r]  (34) at ($0.5*(3)+0.5*(4)$) {};
    \node[gb] (43) at ($(34)   +(210:1)$) {}; \draw[lightgray,double] (34)--(43);
    \node[b]  (56) at ($0.5*(5)+0.5*(6)$) {};
    \node[gr] (65) at ($(56)   +(330:1)$) {}; \draw[lightgray,double] (56)--(65);
    \draw[red]  (3)--(34)--(4)--(l)-- (o)--(u)--(3);
    \draw[blue] (5)--(56)--(6)--(1)--(12)--(2)--(r)--(5);
    \draw[double] (1)--(l)  (2)--(3)  (4)--(5)  (6)--(u)  (r)--(o);
  \end{tikzpicture}
  \hfill 
  \begin{tikzpicture}[scale=0.4]
    \node[r] (o) at   (0,0) {};
    \node[r] (r) at ( 30:1) {};
    \node[b] (l) at (150:1) {};
    \node[r] (u) at (270:1) {};
    \foreach[count=\n] \c in {b,b,b,b,r,r} {
      \pgfmathtruncatemacro{\a}{60*\n}
      \node[\c] (\n) at (\a:2) {};
    }
    \node[b]  (12) at ($0.5*(1)+0.5*(2)$) {};
    \node[gr] (21) at ($(12)   +( 90:1)$) {}; \draw[lightgray,double] (12)--(21);
    \node[b]  (34) at ($0.5*(3)+0.5*(4)$) {};
    \node[gr] (43) at ($(34)   +(210:1)$) {}; \draw[lightgray,double] (34)--(43);
    \node[r]  (56) at ($0.5*(5)+0.5*(6)$) {};
    \node[gb] (65) at ($(56)   +(330:1)$) {}; \draw[lightgray,double] (56)--(65);
    \draw[red]  (5)--(56)--(6)--(u)-- (o)--(r)--(5);
    \draw[blue] (1)--(12)--(2)--(3)--(34)--(4)--(l)--(1);
    \draw[double] (1)--(6)  (2)--(r)  (3)--(u)  (4)--(5)  (l)--(o);
  \end{tikzpicture}
  \hfill 
  \begin{tikzpicture}[scale=0.4]
    \node[b] (o) at   (0,0) {};
    \node[b] (r) at ( 30:1) {};
    \node[b] (l) at (150:1) {};
    \node[r] (u) at (270:1) {};
    \foreach[count=\n] \c in {b,b,r,r,r,r} {
      \pgfmathtruncatemacro{\a}{60*\n}
      \node[\c] (\n) at (\a:2) {};
    }
    \node[b]  (12) at ($0.5*(1)+0.5*(2)$) {};
    \node[gr] (21) at ($(12)   +( 90:1)$) {}; \draw[lightgray,double] (12)--(21);
    \node[r]  (34) at ($0.5*(3)+0.5*(4)$) {};
    \node[gb] (43) at ($(34)   +(210:1)$) {}; \draw[lightgray,double] (34)--(43);
    \node[r]  (56) at ($0.5*(5)+0.5*(6)$) {};
    \node[gb] (65) at ($(56)   +(330:1)$) {}; \draw[lightgray,double] (56)--(65);
    \draw[blue] (1)--(12)--(2)--(r)-- (o)--(l)--(1);
    \draw[red]  (3)--(34)--(4)--(5)--(56)--(6)--(u)--(3);
    \draw[double] (2)--(3)  (4)--(l)  (5)--(r)  (6)--(1)  (u)--(o);
  \end{tikzpicture}
  \hfill 
  \begin{tikzpicture}[scale=0.4]
    \node[b] (o) at   (0,0) {};
    \node[r] (r) at ( 30:1) {};
    \node[b] (l) at (150:1) {};
    \node[b] (u) at (270:1) {};
    \foreach[count=\n] \c in {r,r,b,b,r,r} {
      \pgfmathtruncatemacro{\a}{60*\n}
      \node[\c] (\n) at (\a:2) {};
    }
    \node[r]  (12) at ($0.5*(1)+0.5*(2)$) {};
    \node[gb] (21) at ($(12)   +( 90:1)$) {}; \draw[lightgray,double] (12)--(21);
    \node[b]  (34) at ($0.5*(3)+0.5*(4)$) {};
    \node[gr] (43) at ($(34)   +(210:1)$) {}; \draw[lightgray,double] (34)--(43);
    \node[r]  (56) at ($0.5*(5)+0.5*(6)$) {};
    \node[gb] (65) at ($(56)   +(330:1)$) {}; \draw[lightgray,double] (56)--(65);
    \draw[blue] (3)--(34)--(4)--(l)-- (o)--(u)--(3);
    \draw[red]  (5)--(56)--(6)--(1)--(12)--(2)--(r)--(5);
    \draw[double] (1)--(l)  (2)--(3)  (4)--(5)  (6)--(u)  (r)--(o);
  \end{tikzpicture}
  \hfill 
  \begin{tikzpicture}[scale=0.4]
    \node[b] (o) at   (0,0) {};
    \node[b] (r) at ( 30:1) {};
    \node[r] (l) at (150:1) {};
    \node[b] (u) at (270:1) {};
    \foreach[count=\n] \c in {r,r,r,r,b,b} {
      \pgfmathtruncatemacro{\a}{60*\n}
      \node[\c] (\n) at (\a:2) {};
    }
    \node[r]  (12) at ($0.5*(1)+0.5*(2)$) {};
    \node[gb] (21) at ($(12)   +( 90:1)$) {}; \draw[lightgray,double] (12)--(21);
    \node[r]  (34) at ($0.5*(3)+0.5*(4)$) {};
    \node[gb] (43) at ($(34)   +(210:1)$) {}; \draw[lightgray,double] (34)--(43);
    \node[b]  (56) at ($0.5*(5)+0.5*(6)$) {};
    \node[gr] (65) at ($(56)   +(330:1)$) {}; \draw[lightgray,double] (56)--(65);
    \draw[blue] (5)--(56)--(6)--(u)-- (o)--(r)--(5);
    \draw[red]  (1)--(12)--(2)--(3)--(34)--(4)--(l)--(1);
    \draw[double] (1)--(6)  (2)--(r)  (3)--(u)  (4)--(5)  (l)--(o);
  \end{tikzpicture}
  \caption{The six different \type2?-colourings of an stc}
  \label{fig:2?stc}
\end{figure}
We finally observe that the tac and stc are compatible and the \type2?-$2$-colourings of the tac and stc are compatible as well. Hence, property \ref{property:combiningSTCandTAC} is satisfied and by \cref{lem:reductionFromNAE3SAT} we obtain that \type2?-$2$-colouring is \NP-complete.
\end{proof}

\begin{corollary}\label{link2:0?}\label{link2:21}\label{link2:2v}\label{link2:2*}\label{link2:2!}\label{link2:*!}\label{link2:+1}\label{link2:+!}
    For $q=2$ it is \NP-complete to decide whether a graph admits a  \type0?, \type21, \type2v, \type2*, \type2!, \type*!, \type+1, or a \type+!-$q$-colouring.
\end{corollary}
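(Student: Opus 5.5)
The plan is to observe that on cubic graphs each of the listed constraint pairs defines exactly the same set of $2$-colourings as \type2?. The proof of \cref{link2:2?} already shows that \type2?-$2$-colouring is \NP-complete when restricted to cubic graphs, since the graph $H_\varphi$ built from the tacs and stcs is cubic. So it suffices to prove, for each listed type $\sigma_1\sigma_2$ and every cubic graph $G$, that a map $c\colon V(G)\to[2]$ is a $\sigma_1\sigma_2$-$2$-colouring if and only if it is a \type2?-$2$-colouring. The reduction is then the identity on cubic instances, and membership in \NP\ is trivial.

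The key step is a small case analysis. With two colours, a vertex $v$ of a cubic graph has $s$ neighbours of its own colour and $3-s$ of the other colour, where $s\in\{0,1,2,3\}$. A \type2?-colouring forces $s$ to be even and positive, and $3-s<2$. Hence $s=2$ and exactly one neighbour has the opposite colour. For each listed type we check that the pair of constraints on $(s,3-s)$ again admits only $(2,1)$.

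\begin{itemize}
\item \type21, \type2v, \type2*, \type2!: the constraint \typesymbol{2} gives $s=2$, so $3-s=1$, which satisfies odd, odd-or-zero, arbitrary and exactly one.
\item \type0?: even $s$ with $3-s\le1$ forces $s=2$.
\item \type*!, \type+!: $3-s=1$ gives $s=2$, which satisfies both \typesymbol{*} and \typesymbol{+}.
\item \type+1: $3-s$ odd means $s\in\{0,2\}$, and positivity gives $s=2$.
\end{itemize}

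In every case the constraint set on a cubic graph collapses to "exactly two neighbours of the same colour, one of the other colour", which is precisely the \type2?-condition.

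The only step needing care is to confirm that the instances produced in the proof of \cref{link2:2?} really are cubic. Every vertex in the ring of triangles has degree $3$ once the outlets are glued. The subdivided Petersen stc has degree-$3$ vertices, and each subdivision vertex gains its outlet edge. The gluing identifies each outlet with the neighbour on the other side, so no degrees change. Given this, hardness transfers verbatim for all eight types.
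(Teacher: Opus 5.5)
Your proposal is correct and takes the same route as the paper. On cubic graphs all eight types allow only the local pattern of two same-coloured neighbours and one of the other colour, so they coincide with \type{2}{?}-$2$-colouring, and the graph $H_\varphi$ from \cref{link2:2?} is cubic; you simply spell out the per-vertex case analysis and the degree check after gluing, which the paper leaves implicit.
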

\begin{proof}
    Note that on cubic graphs  \type0?, \type21, \type2v, \type2*, \type2!, \type*!, \type+1, and  \type+!-$2$-colouring coincide with \type2?-$2$-colouring. Since the graph $H_\varphi$ constructed in the proof of \cref{link2:2?} is cubic, the reduction also shows \NP-completeness of \type0?, \type21, \type2v, \type2*, \type2!, \type*!, \type+1, and \type+!-$2$-colouring. 
\end{proof}

\subsubsection{\Type{2}{2}, \Type{2}{+} and \Type{+}{2}-colouring for \boldmath$q=2$}
First recall that $c:V(G)\rightarrow [q]$ is a \type22-$q$-colouring of $G$ if for every vertex $v$ we have $|N_G(v)\cap c^{-1}(i)|$ is even and larger than $0$ for every $i\in [q]$. 

\begin{theorem}\label{link2:22}
    For $q = 2$ it is \NP-complete to decide whether a graph admits a \type22-$q$-colouring.
\end{theorem}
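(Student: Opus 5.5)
The plan is to reduce from \NSAT via \cref{lem:reductionFromNAE3SAT}, working only with $4$-regular graphs. In a \type22-$2$-colouring every vertex needs an even, positive number of neighbours of its own colour and of the other colour, so every degree is even and at least $4$. In a $4$-regular graph this means exactly two neighbours of each colour. Equivalently, the monochromatic edges form a $2$-factor $F$ whose cycles are monochromatic, and the complementary $2$-factor $E(G)\setminus F$ consists of bichromatic edges. So $G$ is \type22-$2$-colourable if and only if $G$ has a $2$-factor $F$ such that contracting the cycles of $F$ leaves a bipartite graph in which both colour classes can be chosen consistently.

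This reformulation is what I will use to analyse the gadgets. It also yields the corollary for free: on $4$-regular graphs, \type2+- and \type+2-$2$-colourings also force exactly two neighbours of each colour. For \type2+, the alternative of four same-coloured neighbours is excluded by ``$>0$'' for the other colour; \type+2 is symmetric. So the same construction will cover those cases.

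For the tac I would take a rigid ``cyclic'' $4$-regular structure, analogous to the ring of triangles in \cref{link2:2?}. A candidate is the square $C_{n}^2$ of a cycle, or a ring of small blocks such as $K_4$'s minus a perfect matching, joined cyclically. In $C_n^2$ the colouring in blocks of three consecutive equal colours ($\dots AAABBB\dots$) works, and I expect a case analysis to show it is the only type up to shifting. Since shifts give more than two colourings, I would break the symmetry by local modifications, for example inserting rigid blocks that force block boundaries at fixed positions. The aim is that exactly two colourings remain up to colour swap, and that these two alternate at the $2k$ outlets, as in Figure~\ref{fig:2?tac}. Each outlet would be a pair of vertices attached by two edges, so that degrees stay $4$ after gluing. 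True or false is encoded by the colour of the outlet relative to its attachment vertices.

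For the stc I would search for a small $4$-regular graph with three outlet pairs, for instance a suitably modified $K_{4,4}$ or the line graph of a Petersen-like cubic graph. Its $2$-factor/bipartiteness structure must forbid the all-equal assignment while admitting all six not-all-equal assignments. The gadget should satisfy \ref{property:stc} and \ref{property:combiningSTCandTAC}, the latter checked by noting that the local colourings around each outlet agree.

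The main obstacle is the stc. I need to verify by exhaustive case analysis on $2$-factors that the all-equal assignment is impossible, and also exhibit the six valid colourings. I would first try computer-aided search over small $4$-regular graphs with $3$ degree-deficient pairs, then give the forcing argument by hand. Membership in \NP{} is immediate.
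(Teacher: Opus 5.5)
Your framework is the paper's. You reduce from \NSAT{} via \cref{lem:reductionFromNAE3SAT} to $4$-regular graphs. There a \type{2}{2}-$2$-colouring means exactly two neighbours of each colour, so the monochromatic edges form a $2$-factor whose contraction is bipartite. For the converse you need the cycles of that $2$-factor to be chordless, as in the paper's claim. The corollary for \type{2}{+} and \type{+}{2} also follows as you say.

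The gap is that the proof consists of the gadgets, and you construct neither. No tac with exactly two colourings and no stc is actually given, so nothing establishes \ref{property:tac}, \ref{property:stc} or \ref{property:combiningSTCandTAC}. The one concrete analysis you offer is also wrong. In $C_n^2$ with $n$ even, colouring vertex $i$ by the parity of $i$ is a \type{2}{2}-colouring: the neighbours $i\pm1$ get the other colour and $i\pm2$ the same colour. So the $AAABBB$ blocks are not the only type up to shift. For $6\mid n$ both the two parity colourings and the six block colourings occur, and any symmetry-breaking modification must destroy the former as well. You would also have to delete edges to create outlets and then re-check rigidity.

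You have also placed the difficulty in the wrong gadget. The not-all-equal test is immediate here. If a vertex $a$ of degree $4$ is adjacent to the three vertices carrying the literal values, then $a$ sees exactly two vertices of each colour, so those three cannot all be equal. This is exactly the paper's stc: two bipartite pieces $A^c$ and $B^c$ joined by edges, where the three outlet-attachment vertices of $A^c$ share a common neighbour $a^c$. What remains is only to exhibit the six not-all-equal colourings.

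The real work is the tac. The paper uses a specific \emph{bead}: two triangles and a $4$-cycle interlaced by an $8$-cycle, so that all vertices except the two end vertices have degree $4$. In any $4$-regular host graph a bead admits exactly two \type{2}{2}-colourings. Beads are chained cyclically by single edges, and the remaining free edges at the ends of consecutive beads carry the outlets. Without concrete rigid gadgets of this kind, and a check that their colourings glue compatibly, your proposal is an outline rather than a proof.
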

\begin{proof}
    We provide a reduction from \NSAT. For our reduction the graph $H_\varphi$ for an instance $\varphi$ of \NSAT is regular of degree four. First observe the following.
\begin{claim}\label{claim:222factor}
    A $4$-regular  graph allows for a \type22-colouring if it
    has a $2$-factor such that
    \begin{itemize}
        \item all cycles are induced, that is, without chords, and
        \item contracting these cycles leads to a bipartite graph.
    \end{itemize}
\end{claim}
We start with a component called \emph{bead}. If a bead appears in a
$4$-regular graph $G$ then every \type22-colouring of $G$ restricted to the
bead looks as depicted in Figure~\ref{fig:22-bead}.
\begin{figure}[hbtp]
  \begin{tikzpicture}[scale=0.7]
    \node[b] (le) at (0,0) {};
    \node[b] (lo) at ($(le)+( 30:1)$) {};
    \node[b] (lu) at ($(le)+(330:1)$) {};
    \draw[blue] (le)--(lo)--(lu)--(le);
    \node[r] (mo) at ($(le)+( 30:2)$) {};
    \node[r] (mu) at ($(le)+(330:2)$) {};
    \node[b] (ro) at ($(mo)+(330:1)$) {};
    \node[b] (ru) at ($(mu)+( 30:1)$) {};
    \node[b] (re) at ($(mu)+( 30:2)$) {};
    \draw[blue] (re)--(ro)--(ru)--(re);
    \node[r] (mr) at ($(le)+(  0:2)$) {};
    \node[r] (ml) at ($(re)+(180:2)$) {};
    \draw[red]  (ml)--(mu)--(mr)--(mo)--(ml);
    \draw (ml)--(lu)--(mu)--(ru)--(mr)--(ro)--(mo)--(lo)--(ml);
    \node[gr] (LO) at ($(le)+(135:1)$) {};
    \node[gr] (LU) at ($(le)+(225:1)$) {};
    \draw[lightgray] (LO)--(le)--(LU);
    \node[gr] (RO) at ($(re)+( 45:1)$) {};
    \node[gr] (RU) at ($(re)+(315:1)$) {};
    \draw[lightgray] (RO)--(re)--(RU);
    \draw[<->,thick] (4.2,0)--(5.2,0);
    \coordinate (sl) at (6,0);
    \coordinate (so) at ($(sl)+( 30:1)$);
    \coordinate (sr) at ($(so)+(330:1)$);
    \coordinate (su) at ($(sl)+(330:1)$);
    \draw[fill=red, line width=1.5pt] (sl)--(so)--(sr)--(su)--cycle;
    \node[b] (sl) at (sl) {};
    \node[b] (sr) at (sr) {};
    \node[gr] (Lo) at ($(sl)+(135:1)$) {};
    \node[gr] (Lu) at ($(sl)+(225:1)$) {};
    \draw[lightgray] (Lo)--(sl)--(Lu);
    \node[gr] (Ro) at ($(sr)+( 45:1)$) {};
    \node[gr] (Ru) at ($(sr)+(315:1)$) {};
    \draw[lightgray] (Ro)--(sr)--(Ru);
  \end{tikzpicture}
  \hfill
  \begin{tikzpicture}[scale=0.7]
    \node[r] (le) at (0,0) {};
    \node[r] (lo) at ($(le)+( 30:1)$) {};
    \node[r] (lu) at ($(le)+(330:1)$) {};
    \draw[red]  (le)--(lo)--(lu)--(le);
    \node[b] (mo) at ($(le)+( 30:2)$) {};
    \node[b] (mu) at ($(le)+(330:2)$) {};
    \node[r] (ro) at ($(mo)+(330:1)$) {};
    \node[r] (ru) at ($(mu)+( 30:1)$) {};
    \node[r] (re) at ($(mu)+( 30:2)$) {};
    \draw[red]  (re)--(ro)--(ru)--(re);
    \node[b] (mr) at ($(le)+(  0:2)$) {};
    \node[b] (ml) at ($(re)+(180:2)$) {};
    \draw[blue] (ml)--(mu)--(mr)--(mo)--(ml);
    \draw (ml)--(lu)--(mu)--(ru)--(mr)--(ro)--(mo)--(lo)--(ml);
    \node[gb] (LO) at ($(le)+(135:1)$) {};
    \node[gb] (LU) at ($(le)+(225:1)$) {};
    \draw[lightgray] (LO)--(le)--(LU);
    \node[gb] (RO) at ($(re)+( 45:1)$) {};
    \node[gb] (RU) at ($(re)+(315:1)$) {};
    \draw[lightgray] (RO)--(re)--(RU);
    \draw[<->,thick] (4.2,0)--(5.2,0);
    \coordinate (sl) at (6,0);
    \coordinate (so) at ($(sl)+( 30:1)$);
    \coordinate (sr) at ($(so)+(330:1)$);
    \coordinate (su) at ($(sl)+(330:1)$);
    \draw[fill=blue, line width=1.5pt] (sl)--(so)--(sr)--(su)--cycle;
    \node[r] (sl) at (sl) {};
    \node[r] (sr) at (sr) {};
    \node[gb] (Lo) at ($(sl)+(135:1)$) {};
    \node[gb] (Lu) at ($(sl)+(225:1)$) {};
    \draw[lightgray] (Lo)--(sl)--(Lu);
    \node[gb] (Ro) at ($(sr)+( 45:1)$) {};
    \node[gb] (Ru) at ($(sr)+(315:1)$) {};
    \draw[lightgray] (Ro)--(sr)--(Ru);
  \end{tikzpicture}
  \caption{The two \type22-colourings of a bead and
    their symbolic representations. Edges of the $2$-factor satisfying \cref{claim:222factor} are coloured blue or red.}
  \label{fig:22-bead}
\end{figure}
A tac representing a variable $x$ that appears $k$ times consists of $k$ beads
that are circularly linked by single edges. For consecutive beads the last and first unused outlet vertex of the bead is used as an outlet of an occurrence of $x$ in the tac. The two
possible \type22-colourings are shown in Figure~\ref{fig:22tac}.
\begin{figure}[hbtp]
  \hspace*{\fill}
  \begin{tikzpicture}[scale=0.7]
    \foreach \n in {1,...,7} {
      \pgfmathtruncatemacro{\a}{ 30+60*\n}
      \pgfmathtruncatemacro{\b}{ 90+60*\n}
      \pgfmathtruncatemacro{\c}{-30+60*\n}
      \coordinate (i\n) at (\a:2);
      \coordinate (a\n) at ($(i\n)+(\a:1)$) {};;
      \coordinate (l\n) at ($(i\n)+(\b:1)$);
      \coordinate (r\n) at ($(i\n)+(\c:1)$);
      \ifodd\n\draw[fill=red, line width=1.5pt] (l\n)--(a\n)--(r\n)--(i\n)--cycle;
      \else\draw[fill=blue, line width=1.5pt] (l\n)--(a\n)--(r\n)--(i\n)--cycle;\fi
      \node[\ifodd\n b\else r\fi] (ll\n) at (l\n) {};
      \node[\ifodd\n b\else r\fi] (rr\n) at (r\n) {};
    }
    \foreach \n in {1,...,6} {
      \pgfmathtruncatemacro{\m}{1+\n}
      \pgfmathtruncatemacro{\a}{60*\m}
      \draw (ll\n)--(rr\m);
      \ifnum\n<4%
        \ifodd\n\node[gr] (L\n) at ($(r\m)+(\a:1)$) {};
          \node[gb] (R\m) at ($(l\n)+(\a:1)$) {};
        \else\node[gb] (L\n) at ($(r\m)+(\a:1)$) {};
          \node[gr] (R\m) at ($(l\n)+(\a:1)$) {};
        \fi
      \else%
        \ifodd\n\node[gr] (L\n) at ($(l\n)+(\a:1)$) {};
          \node[gb] (R\m) at ($(r\m)+(\a:1)$) {};
        \else\node[gb] (L\n) at ($(l\n)+(\a:1)$) {};
          \node[gr] (R\m) at ($(r\m)+(\a:1)$) {};
        \fi
      \fi
      \draw[lightgray] (L\n)--(ll\n)  (R\m)--(rr\m);
    }
  \end{tikzpicture}
  \hspace*{\fill}
  \begin{tikzpicture}[scale=0.7]
    \foreach \n in {1,...,7} {
      \pgfmathtruncatemacro{\a}{ 30+60*\n}
      \pgfmathtruncatemacro{\b}{ 90+60*\n}
      \pgfmathtruncatemacro{\c}{-30+60*\n}
      \coordinate (i\n) at (\a:2);
      \coordinate (a\n) at ($(i\n)+(\a:1)$) {};;
      \coordinate (l\n) at ($(i\n)+(\b:1)$);
      \coordinate (r\n) at ($(i\n)+(\c:1)$);
      \ifodd\n\draw[fill=blue, line width=1.5pt] (l\n)--(a\n)--(r\n)--(i\n)--cycle;
      \else\draw[fill=red, line width=1.5pt] (l\n)--(a\n)--(r\n)--(i\n)--cycle;\fi
      \node[\ifodd\n r\else b\fi] (ll\n) at (l\n) {};
      \node[\ifodd\n r\else b\fi] (rr\n) at (r\n) {};
    }
    \foreach \n in {1,...,6} {
      \pgfmathtruncatemacro{\m}{1+\n}
      \pgfmathtruncatemacro{\a}{60*\m}
      \draw (ll\n)--(rr\m);
      \ifnum\n<4%
        \ifodd\n\node[gb] (L\n) at ($(r\m)+(\a:1)$) {};
          \node[gr] (R\m) at ($(l\n)+(\a:1)$) {};
        \else\node[gr] (L\n) at ($(r\m)+(\a:1)$) {};
          \node[gb] (R\m) at ($(l\n)+(\a:1)$) {};
        \fi
      \else%
        \ifodd\n\node[gb] (L\n) at ($(l\n)+(\a:1)$) {};
          \node[gr] (R\m) at ($(r\m)+(\a:1)$) {};
        \else\node[gr] (L\n) at ($(l\n)+(\a:1)$) {};
          \node[gb] (R\m) at ($(r\m)+(\a:1)$) {};
        \fi
      \fi
        \draw[lightgray] (L\n)--(ll\n)  (R\m)--(rr\m);
    }
  \end{tikzpicture}
  \hspace*{\fill}
  \caption{The two \type22-colourings of a tac appearing in six clauses,
    trice negatively and trice positively.}
    \label{fig:22tac}
\end{figure}
To be able to encode true or false at occurrences of $x$ or $\overline{x}$ we define one of the two vertices of each outlet to be the first vertex of that outlet. For every occurrence of $x$ the first vertex of the outlet is the last unused outlet vertex of the previous bead (when looking at the cycle of beads in clockwise order) while for every occurrence of $\overline{x}$ the first vertex of the outlet is the first unused outlet vertex of the current bead. True is encoded by the first vertex of the outlet having colour $1$ while false is encoded by the first vertex of the outlet having colour $2$. We obtain that property \ref{property:tac} is satisfied. 

\begin{figure}[hbtp]
  \begin{tikzpicture}[xscale=0.5, yscale=0.3] 
    \node[b] (44) at (4,4) {};
    \node[r] (54) at (5,4) {};
    \foreach[count=\n] \y in {1,3,5,7} {
      \node[\ifodd\n b\else r\fi] (3\y) at (3,\y) {}; 
      \node[\ifodd\n r\else b\fi] (6\y) at (6,\y) {}; 
    }
    \foreach \l/\y/\r in {b/2/r, r/4/b, r/6/b} {
      \node[\l]  (2\y) at (2,\y) {};
      \node[g\r] (1\y) at (1,\y) {}; \draw[lightgray] (1\y)--(2\y);
      \node[\r]  (7\y) at (7,\y) {};
      \node[g\l] (8\y) at (8,\y) {}; \draw[lightgray] (8\y)--(7\y);
    }
    \draw[red]  (24)--(33)--(26)--(37)--(24)  (72)--(61)--(54)--(65)--(72);
    \draw[blue] (22)--(31)--(44)--(35)--(22)  (74)--(63)--(76)--(67)--(74);
    \draw (22)--(33)--(44)--(37)--(67)--(54)--(63)--(72);
    \draw (24)--(31)--(61)--(74)  (26)--(35)--(65)--(76);
  \end{tikzpicture}
  \hfill
  \begin{tikzpicture}[xscale=0.5, yscale=0.3] 
    \node[b] (44) at (4,4) {};
    \node[r] (54) at (5,4) {};
    \foreach[count=\n] \y in {1,3,5,7} {
      \pgfmathtruncatemacro{\i}{\n/2}
      \node[\ifodd\i r\else b\fi] (3\y) at (3,\y) {}; 
      \node[\ifodd\i b\else r\fi] (6\y) at (6,\y) {}; 
    }
    \foreach \l/\y/\r in {r/2/b, b/4/r, r/6/b} {
      \node[\l]  (2\y) at (2,\y) {};
      \node[g\r] (1\y) at (1,\y) {}; \draw[lightgray] (1\y)--(2\y);
      \node[\r]  (7\y) at (7,\y) {};
      \node[g\l] (8\y) at (8,\y) {}; \draw[lightgray] (8\y)--(7\y);
    }
    \draw[red]  (22)--(33)--(26)--(35)--(22)  (74)--(61)--(54)--(67)--(74);
    \draw[blue] (24)--(31)--(44)--(37)--(24)  (72)--(63)--(76)--(65)--(72);
    \draw (24)--(33)--(44)--(35)--(65)--(54)--(63)--(74);
    \draw (22)--(31)--(61)--(72)  (26)--(37)--(67)--(76);
  \end{tikzpicture}
  \hfill
  \begin{tikzpicture}[xscale=0.5, yscale=0.3] 
    \node[b] (44) at (4,4) {};
    \node[r] (54) at (5,4) {};
    \foreach[count=\n] \y in {1,3,5,7} {
      \pgfmathtruncatemacro{\i}{(\n+1)/2}
      \node[\ifodd\i r\else b\fi] (3\y) at (3,\y) {}; 
      \node[\ifodd\i b\else r\fi] (6\y) at (6,\y) {}; 
    }
    \foreach \l/\y/\r in {r/2/b, r/4/b, b/6/r} {
      \node[\l]  (2\y) at (2,\y) {};
      \node[g\r] (1\y) at (1,\y) {}; \draw[lightgray] (1\y)--(2\y);
      \node[\r]  (7\y) at (7,\y) {};
      \node[g\l] (8\y) at (8,\y) {}; \draw[lightgray] (8\y)--(7\y);
    }
    \draw[red]  (22)--(33)--(24)--(31)--(22)  (76)--(65)--(54)--(67)--(76);
    \draw[blue] (26)--(35)--(44)--(37)--(26)  (72)--(63)--(74)--(61)--(72);
    \draw (26)--(33)--(44)--(31)--(61)--(54)--(63)--(76);
    \draw (22)--(35)--(65)--(72)  (24)--(37)--(67)--(74);
  \end{tikzpicture}

  \quad
  
  \begin{tikzpicture}[xscale=0.5, yscale=0.3] 
    \node[r] (44) at (4,4) {};
    \node[b] (54) at (5,4) {};
    \foreach[count=\n] \y in {1,3,5,7} {
      \node[\ifodd\n r\else b\fi] (3\y) at (3,\y) {}; 
      \node[\ifodd\n b\else r\fi] (6\y) at (6,\y) {}; 
    }
    \foreach \l/\y/\r in {r/2/b, b/4/r, b/6/r} {
      \node[\l]  (2\y) at (2,\y) {};
      \node[g\r] (1\y) at (1,\y) {}; \draw[lightgray] (1\y)--(2\y);
      \node[\r]  (7\y) at (7,\y) {};
      \node[g\l] (8\y) at (8,\y) {}; \draw[lightgray] (8\y)--(7\y);
    }
    \draw[blue] (24)--(33)--(26)--(37)--(24)  (72)--(61)--(54)--(65)--(72);
    \draw[red]  (22)--(31)--(44)--(35)--(22)  (74)--(63)--(76)--(67)--(74);
    \draw (22)--(33)--(44)--(37)--(67)--(54)--(63)--(72);
    \draw (24)--(31)--(61)--(74)  (26)--(35)--(65)--(76);
  \end{tikzpicture}
  \hfill
  \begin{tikzpicture}[xscale=0.5, yscale=0.3] 
    \node[r] (44) at (4,4) {};
    \node[b] (54) at (5,4) {};
    \foreach[count=\n] \y in {1,3,5,7} {
      \pgfmathtruncatemacro{\i}{\n/2}
      \node[\ifodd\i b\else r\fi] (3\y) at (3,\y) {}; 
      \node[\ifodd\i r\else b\fi] (6\y) at (6,\y) {}; 
    }
    \foreach \l/\y/\r in {b/2/r, r/4/b, b/6/r} {
      \node[\l]  (2\y) at (2,\y) {};
      \node[g\r] (1\y) at (1,\y) {}; \draw[lightgray] (1\y)--(2\y);
      \node[\r]  (7\y) at (7,\y) {};
      \node[g\l] (8\y) at (8,\y) {}; \draw[lightgray] (8\y)--(7\y);
    }
    \draw[blue] (22)--(33)--(26)--(35)--(22)  (74)--(61)--(54)--(67)--(74);
    \draw[red]  (24)--(31)--(44)--(37)--(24)  (72)--(63)--(76)--(65)--(72);
    \draw (24)--(33)--(44)--(35)--(65)--(54)--(63)--(74);
    \draw (22)--(31)--(61)--(72)  (26)--(37)--(67)--(76);
  \end{tikzpicture}
  \hfill
  \begin{tikzpicture}[xscale=0.5, yscale=0.3] 
    \node[r] (44) at (4,4) {};
    \node[b] (54) at (5,4) {};
    \foreach[count=\n] \y in {1,3,5,7} {
      \pgfmathtruncatemacro{\i}{(\n+1)/2}
      \node[\ifodd\i b\else r\fi] (3\y) at (3,\y) {}; 
      \node[\ifodd\i r\else b\fi] (6\y) at (6,\y) {}; 
    }
    \foreach \l/\y/\r in {b/2/r, b/4/r, r/6/b} {
      \node[\l]  (2\y) at (2,\y) {};
      \node[g\r] (1\y) at (1,\y) {}; \draw[lightgray] (1\y)--(2\y);
      \node[\r]  (7\y) at (7,\y) {};
      \node[g\l] (8\y) at (8,\y) {}; \draw[lightgray] (8\y)--(7\y);
    }
    \draw[blue] (22)--(33)--(24)--(31)--(22)  (76)--(65)--(54)--(67)--(76);
    \draw[red]  (26)--(35)--(44)--(37)--(26)  (72)--(63)--(74)--(61)--(72);
    \draw (26)--(33)--(44)--(31)--(61)--(54)--(63)--(76);
    \draw (22)--(35)--(65)--(72)  (24)--(37)--(67)--(74);
  \end{tikzpicture}
  \caption{Six different colourings of an stc}
  \label{fig:22stc}
\end{figure}

The stc of a clause $c$ consists of two bipartite graphs, $A^c$ and $B^c$, that are linked by some edged. Additionally, there are three outlet vertices attached to $A^c$ and three outlet vertices to $B^c$.  Each outlet of a literal of $c$ consists of one outlet attached to $A^c$ and one outlet attached to $B^c$. We encode true by the outlet attached to $A^c$ having colour $1$, its neighbour colour $2$, the outlet attached to $B^c$ having colour $2$ and its neighbour colour $1$. We encode false by switching the colours at the closed neighbourhood of the outlets. 
Note that in $A^c$ the three vertices to which outlet vertices are attached have a common neighbour $a^c$.  By \cref{claim:222factor} this implies that the three vertices of $A^c$ to which outlets are attached cannot all receive the same colour. Hence, property \ref{property:stc} is satisfied. The 6 valid colourings of the stc are depicted in \cref{fig:22stc}. We finally observe that the tac and stc are compatible and by glueing the first outlet of a literal in the tac to the neighbour of the outlet attached at $A^c$ we obtain a consistent encoding of truth values as stated in property \ref{property:combiningSTCandTAC}. We conclude that by \cref{lem:reductionFromNAE3SAT} \type22-$2$-colouring is \NP-complete.
\end{proof}

\begin{corollary}\label{link2:+2}\label{link2:2+}
    For $q=2$ it is \NP-complete to decide whether a graph admits a  \type2+ or a \type+2-$q$-colouring.
\end{corollary}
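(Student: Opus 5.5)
The plan is the same as in \cref{link2:0?}: I will show that on the $4$-regular graphs $H_\varphi$ built in the proof of \cref{link2:22}, the \type2+- and \type+2-$2$-colourings coincide with the \type22-$2$-colourings. The reduction from \NSAT in \cref{link2:22} then also proves \NP-completeness of both new variants.

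First consider a \type2+-$2$-colouring of a $4$-regular graph $G$. Each vertex $v$ has $k\in\{2,4\}$ neighbours of its own colour and $4-k$ neighbours of the other colour. The \type+ constraint requires $4-k>0$, so $k=2$, and then the other colour also appears exactly twice. This is even and positive, hence a \type22-colouring.

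Next consider a \type+2-$2$-colouring. The other colour appears $j\in\{2,4\}$ times and the own colour $4-j>0$ times, so $j=2$. Again both counts equal $2$.

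Conversely, every \type22-$2$-colouring of a $4$-regular graph has exactly two neighbours of each colour at every vertex. So it is both a \type2+- and a \type+2-colouring.

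It remains to check that $H_\varphi$ is $4$-regular. This is stated in the proof of \cref{link2:22} and holds for the glued graph because beads, tacs and stcs are designed so that every vertex has degree four after glueing. Membership in \NP{} is clear. The only point needing care is this regularity of the final glued graph, which I take from the construction.
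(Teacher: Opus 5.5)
Your proposal is correct and matches the paper's proof, which likewise observes that on $4$-regular graphs \type2+- and \type+2-$2$-colourings coincide with \type22-$2$-colourings and reuses the reduction from \cref{link2:22}, whose graph $H_\varphi$ is $4$-regular. Your count showing that every vertex must have exactly two neighbours of each colour just spells out the coincidence the paper asserts without detail.
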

\begin{proof}
    Note that on $4$-regular graphs  \type2+ and \type+2-$2$-colouring coincide with \type22-$2$-colouring. Since the graph $H_\varphi$ constructed in the proof of \cref{link2:22} is regular of degree $4$, the reduction also shows \NP-completeness of \type2+ and \type+2-$2$-colouring. 
\end{proof}

\subsubsection{\Type{1}{2}, \Type{v}{2}, \Type{*}{2}, \Type{?}{2}, \Type{!}{2}, \Type{?}{0}, \Type{!}{0}, \Type{1}{+} and \Type{!}{+}-colouring for \boldmath$q=2$}

First recall that $c:V(G)\rightarrow [q]$ is a \type12-$q$-colouring of $G$ if for every vertex $v$ we have $|N_G(v)\cap c^{-1}(c(v))|$ is odd and $|N_G(v)\cap c^{-1}(i)|$ is even and larger than $0$ for every $i\not=c(v)$. 

\begin{theorem}\label{thm:212}\label{link2:12}
    For $q = 2$ it is \NP-complete to decide whether a graph admits a \type12-$q$-colouring.
\end{theorem}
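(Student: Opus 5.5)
I would prove \cref{link2:12} by a reduction from \NSAT via \cref{lem:reductionFromNAE3SAT}. As in the two previous subsections, I would keep $H_\varphi$ regular, here cubic, because then the whole family in the heading of this subsection collapses to one problem. In a cubic graph, a \type12-$2$-colouring requires at each vertex an odd number of same-coloured neighbours and an even, positive number of neighbours of the other colour. With three neighbours this forces exactly one same-coloured neighbour and exactly two neighbours of the other colour. The same conclusion follows for \type v2, \type*2, \type?2, \type!2, \type?0, \type!0, \type1+ and \type!+. So a corollary analogous to the earlier ones will give all of them at once.

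\textbf{Step 1 (structural claim).} I would first state and check a claim in the style of \cref{claim:222factor}. A cubic graph $G$ has a \type12-$2$-colouring if and only if it has a perfect matching $M$ such that the multigraph $G/M$, obtained by contracting the edges of $M$, is bipartite. In one direction, the monochromatic edges of a valid colouring form $M$, every other edge is bichromatic, and this gives a proper $2$-colouring of $G/M$. In the other direction, a proper $2$-colouring of $G/M$ pulls back to $G$. Useful local consequences are the following. A triangle is never monochromatic; it has colours $x,x,y$, and the $y$-vertex's outside neighbour has colour $y$. More generally, any odd cycle of $G$ must contain an odd number of $M$-edges.

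\textbf{Step 2 (tac).} For a variable occurring $k$ times positively and $k$ times negatively, I would build a cyclic chain of $2k$ identical small pieces, each with a pendant outlet vertex, joined by single edges. A natural piece is a $4$-cycle with one chord, or a triangle with its apex as attachment point. The piece must be chosen so that the colour of the linking edges propagates rigidly around the ring and alternates between consecutive outlets. The ring then admits exactly two colourings, swapped by exchanging colours, and outlets for $x$ and $\bar x$ sit at alternate positions. Truth would be encoded as in the proof of \cref{link2:2?}: outlet colour $1$ with its neighbour coloured $2$ means true. Verifying \ref{property:tac} means checking, piece by piece, that the matching $M$ is forced inside each piece.

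\textbf{Step 3 (stc), which I expect to be the main obstacle.} I need a cubic gadget with three pendant outlets such that every valid colouring has its three outlet edges bichromatic; this part is automatic, since a vertex has only one same-coloured neighbour. The gadget must forbid exactly the all-equal patterns and admit all six not-all-equal patterns. I would first try the Petersen-type idea from \cref{link2:2?}: subdivide three suitable edges of a small cubic graph and attach outlets to the subdivision vertices. If all three attachment vertices have the same colour, I would aim to derive an odd cycle in $G/M$ via the parity fact from Step 1. I would then exhibit the six not-all-equal colourings explicitly by drawing them. If the Petersen variant fails, my fallback is a gadget in which the three attachment vertices have a common neighbour. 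That neighbour must see exactly one same-coloured neighbour and two of the opposite colour, so its three neighbours cannot all share one colour. Here the remaining check is that the rest of the gadget can be completed for every not-all-equal pattern. Finally, compatibility and \ref{property:combiningSTCandTAC} are immediate because each outlet is a single vertex with one neighbour. \cref{lem:reductionFromNAE3SAT} then gives \NP-completeness, and the cubic observation transfers it to the other listed types.
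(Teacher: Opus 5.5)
Your framework matches the paper's: a reduction from \NSAT{} through the tac/stc lemma, with $H_\varphi$ cubic. Your Step~1 is correct: the monochromatic edges form a perfect matching $M$, $G/M$ is bipartite, and odd cycles carry an odd number of $M$-edges. This is a slightly stronger form of the paper's claim that each colour class induces disjoint copies of $K_2$.

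The genuine gap is that the whole difficulty of this theorem lies in the gadgets, and you do not supply a tac or stc that works. The concrete candidates you name fail.
\begin{itemize}
\item A $4$-cycle with one chord has only two vertices of degree $2$. It cannot carry two ring links and an outlet in a cubic graph, since a piece with three free half-edges must have an odd number of vertices.
\item A ring of triangles $\{a_i,b_i,t_i\}$ with links $b_ia_{i+1}$ and outlet $o_i$ at $t_i$ is not rigid. Each triangle has exactly one vertex matched outside it, and this may be $t_i$ or a link endpoint.
\item If every $t_i$ is matched to $o_i$, you get a valid colouring whose outlet edges are all monochromatic, so no truth value is encoded.
\item If $b_ia_{i+1}\in M$, then $t_i,t_{i+1}$ receive one colour and $o_i,o_{i+1}$ the other. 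So an $x$-outlet and the adjacent $\bar x$-outlet encode the same value.
\end{itemize}
Hence $(\Pi_{\operatorname{tac}})$ fails. The first case also shows that bichromatic outlet edges are not ``automatic'' as you claim in Step~3: the unique same-coloured neighbour of an attachment vertex can be the outlet itself, so this must be enforced by the design.

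For the stc, the subdivided Petersen gadget from the \type{2}{?}-proof is not merely unverified but infeasible. With bichromatic outlet edges, each subdivision vertex forces the two ends of its subdivided edge to get different colours. Now take the neighbour of the centre that shares the centre's colour. Its two neighbours on the $6$-cycle get the other colour, then their partners are forced, then the next pair. This leaves another neighbour of the centre with no neighbour of its own colour, so the gadget has no \type{1}{2}-colouring at all. Your common-neighbour fallback only gives the local not-all-equal constraint. Whether the rest can be completed cubically, with $G/M$ bipartite, for all six patterns is exactly what must be proved, and you leave it open.

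The paper resolves both issues with purpose-built gadgets:
\begin{itemize}
\item The tac is a cyclic chain of $16$-vertex beads, each of which has exactly two colourings. Each outlet consists of two vertices attached to consecutive beads.
\item The stc is a two-layer cubic gadget with an outer and an inner cycle; each outlet has one vertex on each layer.
\item For the stc, the paper proves that equal colours at two outer attachment vertices force the colouring of the whole path between them. This rules out three equal ones.
\end{itemize}
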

\begin{proof}
    We give a reduction from \NSAT. The reduction graph $H_\varphi$ for an input formula $\phi$ of \NSAT will be cubic. We observe:
    \begin{claim}\label{claim:12cubic}
        In any \type12-$2$-colouring of a cubic graph each colour class induces a disjoint union of $K_2$'s.
    \end{claim}
    \begin{claimproof}
        Presume that $u$ is an arbitrary vertex of a cubic graph $G$ with \type12-colouring $c:V(G)\rightarrow[2]$. Let $v$ be a neighbour of $u$ such that $c(v)=c(u)$ which exists as $u$ must have an odd number of neighbours with the same colour. Since $u$ must have an even number and at least one neighbour of colour $i\neq c(u)$, the two remaining neighbours $u_1,u_2$ of $u$ must have colour $i\not=c(u)$. Similarly, the two remaining neighbours $v_1,v_2$ of $v$ must have colour $i\neq c(v)=c(u)$. Hence the connected component containing $u$ in the graph induced by the colour class containing $u$ is a $K_2$ proving the statement.
    \end{claimproof}

The tac of a variable $x$ is built from beads. The bead depicted in Figure~\ref{fig:12bead} has only the two \type12-colourings by \cref{claim:12cubic}.
\begin{figure}[hbtp]
  \begin{tikzpicture}[scale=0.4] 
    \foreach \x in {1,2,3,4} \node[b] (\x2) at (\x,2) {};
    \foreach \x in {5,6,7,8} \node[r] (\x2) at (\x,2) {};
    \foreach \x in {3,4} \foreach \y in {1,3} \node[r] (\x\y) at (\x,\y) {};
    \foreach \x in {5,6} \foreach \y in {1,3} \node[b] (\x\y) at (\x,\y) {};
    \draw[blue] (12)--(22)  (32)--(42)  (51)--(61)  (53)--(63);
    \draw[red]  (31)--(41)  (33)--(43)  (52)--(62)  (72)--(82);
    \draw (22)--(33)--(32)--(31)--(22)  (72)--(63)--(62)--(61)--(72);
    \draw (41)--(42)--(43)--(53)--(52)--(51)--(41);
    \node[gr] (01) at ($(12)+(240:1)$) {};
    \node[gr] (03) at ($(12)+(120:1)$) {};
    \draw[lightgray] (01)--(12)--(03);
    \node[gb] (91) at ($(82)+( 60:1)$) {};
    \node[gb] (93) at ($(82)+(300:1)$) {};
    \draw[lightgray] (91)--(82)--(93);
    \draw[<->,thick] (9.2,2)--(10.8,2);
    \foreach \x in {12,13} \node[b] (\x2) at (\x,2) {};
    \draw[blue] (122)--(132);              
    \foreach \x in {15,16} \node[r] (\x2) at (\x,2) {};
    \draw[red]  (152)--(162);              
    \draw[line width=1.5pt] (132)--($(132)+( 60:1)$)--($(152)+(120:1)$)--(152);
    \draw[line width=1.5pt] (132)--($(132)+(300:1)$)--($(152)+(240:1)$)--(152);
    \node[gr] (111) at ($(122)+(240:1)$) {};
    \node[gr] (113) at ($(122)+(120:1)$) {};
    \draw[lightgray] (111)--(122)--(113);
    \node[gb] (171) at ($(162)+( 60:1)$) {};
    \node[gb] (173) at ($(162)+(300:1)$) {};
    \draw[lightgray] (171)--(162)--(173);
  \end{tikzpicture}
  \hfill
  \begin{tikzpicture}[scale=0.4] 
    \foreach \x in {1,2,3,4} \node[r] (\x2) at (\x,2) {};
    \foreach \x in {5,6,7,8} \node[b] (\x2) at (\x,2) {};
    \foreach \x in {3,4} \foreach \y in {1,3} \node[b] (\x\y) at (\x,\y) {};
    \foreach \x in {5,6} \foreach \y in {1,3} \node[r] (\x\y) at (\x,\y) {};
    \draw[red]  (12)--(22)  (32)--(42)  (51)--(61)  (53)--(63);
    \draw[blue] (31)--(41)  (33)--(43)  (52)--(62)  (72)--(82);
    \draw (22)--(33)--(32)--(31)--(22)  (72)--(63)--(62)--(61)--(72);
    \draw (41)--(42)--(43)--(53)--(52)--(51)--(41);
    \node[gb] (01) at ($(12)+(240:1)$) {};
    \node[gb] (03) at ($(12)+(120:1)$) {};
    \draw[lightgray] (01)--(12)--(03);
    \node[gr] (91) at ($(82)+( 60:1)$) {};
    \node[gr] (93) at ($(82)+(300:1)$) {};
    \draw[lightgray] (91)--(82)--(93);
    \draw[<->,thick] (9.2,2)--(10.8,2);
    \foreach \x in {12,13} \node[r] (\x2) at (\x,2) {};
    \draw[red]  (122)--(132);              
    \foreach \x in {15,16} \node[b] (\x2) at (\x,2) {};
    \draw[blue] (152)--(162);              
    \draw[line width=1.5pt] (132)--($(132)+( 60:1)$)--($(152)+(120:1)$)--(152);
    \draw[line width=1.5pt] (132)--($(132)+(300:1)$)--($(152)+(240:1)$)--(152);
    \node[gb] (111) at ($(122)+(240:1)$) {};
    \node[gb] (113) at ($(122)+(120:1)$) {};
    \draw[lightgray] (111)--(122)--(113);
    \node[gr] (171) at ($(162)+( 60:1)$) {};
    \node[gr] (173) at ($(162)+(300:1)$) {};
    \draw[lightgray] (171)--(162)--(173);
  \end{tikzpicture}
  \caption{The two \type12-colourings of a bead and
    their symbolic representations}
  \label{fig:12bead}
\end{figure}
For each variable $x$ we build a tac out of cyclically connected beads. More
precisely, we use one bead for each occurrence of the variable $x$. For consecutive beads the last and first unused outlet vertex of the bead is used as an outlet of an occurrence of $x$ in the tac. The two
possible \type12-colourings are shown in Figure~\ref{fig:12tac}.
\begin{figure}[hbtp]
  \hspace*{\fill}
  \begin{tikzpicture}[scale=0.7]
    \node[a] (1) at (102:3) {};
    \foreach \n in {2,3,...,31} {
      \pgfmathtruncatemacro{\a}{ 90 + 12*\n}
      \pgfmathtruncatemacro{\b}{102 + 12*\n}
      \pgfmathtruncatemacro{\m}{\n-1}
      \pgfmathtruncatemacro{\z}{ 90 + 12*\m}
      \pgfmathtruncatemacro{\q}{\n / 5}
      \pgfmathtruncatemacro{\r}{\n - 5*\q}
      \ifcase\r  \coordinate (\n) at (\a:3);   
      \or         \node[b]   (\n) at (\a:3) {};
        \pgfmathtruncatemacro{\l}{\n-2}
        \pgfmathtruncatemacro{\d}{\z - 30}
        \pgfmathtruncatemacro{\e}{\z + 30}
        \draw[line width=1.5pt] (\l)--($(\m)+(\d:0.5)$)--($(\m)+(\e:0.5)$)--(\n);
        \draw[line width=1.5pt] (\l)--($(\m)+(\e:-.6)$)--($(\m)+(\d:-.6)$)--(\n);
      \or         \node[b]   (\n) at (\a:3) {}; 
        \ifnum\q<3\node[gr] (g\n) at (\b:4) {}; \draw[lightgray] (g\n)--(\n);
        \else     \node[gr] (g\n) at (\a:4) {}; \draw[lightgray] (g\n)--(\n);
        \fi
        \draw[blue] (\n)--(\m);
      \or         \node[r]   (\n) at (\a:3) {}; 
        \ifnum\q<3\node[gb] (g\n) at (\z:4) {}; \draw[lightgray] (g\n)--(\n);
        \else     \node[gb] (g\n) at (\a:4) {}; \draw[lightgray] (g\n)--(\n);
        \fi
        \draw (\n)--(\m);
      \or         \node[r]   (\n) at (\a:3) {}; 
        \draw[red]  (\n)--(\m);
      \fi
    }
  \end{tikzpicture}
  \hspace*{\fill}
  \begin{tikzpicture}[scale=0.7]
    \node[a] (1) at (102:3) {};
    \foreach \n in {2,3,...,31} {
      \pgfmathtruncatemacro{\a}{ 90 + 12*\n}
      \pgfmathtruncatemacro{\b}{102 + 12*\n}
      \pgfmathtruncatemacro{\m}{\n-1}
      \pgfmathtruncatemacro{\z}{ 90 + 12*\m}
      \pgfmathtruncatemacro{\q}{\n / 5}
      \pgfmathtruncatemacro{\r}{\n - 5*\q}
      \ifcase\r  \coordinate (\n) at (\a:3);   
      \or         \node[r]   (\n) at (\a:3) {}; 
        \pgfmathtruncatemacro{\l}{\n-2}
        \pgfmathtruncatemacro{\d}{\z - 30}
        \pgfmathtruncatemacro{\e}{\z + 30}
        \draw[line width=1.5pt] (\l)--($(\m)+(\d:0.5)$)--($(\m)+(\e:0.5)$)--(\n);
        \draw[line width=1.5pt] (\l)--($(\m)+(\e:-.6)$)--($(\m)+(\d:-.6)$)--(\n);
      \or         \node[r]   (\n) at (\a:3) {}; 
        \ifnum\q<3\node[gb] (g\n) at (\b:4) {}; \draw[lightgray] (g\n)--(\n);
        \else     \node[gb] (g\n) at (\a:4) {}; \draw[lightgray] (g\n)--(\n);
        \fi
        \draw[red] (\n)--(\m);
      \or         \node[b]   (\n) at (\a:3) {}; 
        \ifnum\q<3\node[gr] (g\n) at (\z:4) {}; \draw[lightgray] (g\n)--(\n);
        \else     \node[gr] (g\n) at (\a:4) {}; \draw[lightgray] (g\n)--(\n);
        \fi
        \draw (\n)--(\m);
      \or         \node[b]   (\n) at (\a:3) {}; 
        \draw[blue] (\n)--(\m);
      \fi
    }
  \end{tikzpicture}
  \hspace*{\fill}
  \caption{Two \type12-colourings of a tac for a variable occurring trice negatively and trice positively.}
  \label{fig:12tac}
\end{figure}
In order to encode true or false at occurrences of $x$ or $\overline{x}$ we make the convention that one of the two vertices of each outlet is the first vertex of that outlet. For every occurrence of $x$ the first vertex of the outlet is the last unused outlet vertex of the previous bead (when looking at the cycle of beads in clockwise order) while for every occurrence of $\overline{x}$ the first vertex of the outlet is the first unused outlet vertex of the current bead. True is encoded by the first vertex of the outlet having colour $1$ while false is encoded by the first vertex of the outlet having colour $2$. With this convention, property \ref{property:tac} is satisfied.

The stc of a clause $c$ is depicted in \cref{fig:12stc}. 
\begin{figure}[hbtp]
  \hspace*{\fill}
  \begin{tikzpicture}[scale=0.5]
    \node[b] (a) at (  0:3) {}; \node[r] (A) at ( 20:4) {};
    \node[r] (b) at ( 40:3) {}; \node[b] (B) at ( 40:4) {};
    \node[r] (c) at ( 60:3) {}; \node[b] (C) at ( 60:4) {};
    \node[b] (d) at ( 80:3) {}; \node[r] (D) at ( 80:4) {};
    \node[b] (e) at (100:3) {}; \node[r] (E) at (100:4) {};
    \node[r] (f) at (140:3) {}; \node[b] (F) at (120:4) {};
    \node[r] (g) at (160:3) {}; \node[b] (G) at (160:4) {};
    \node[b] (h) at (180:3) {}; \node[r] (H) at (180:4) {};
    \node[b] (i) at (200:3) {}; \node[r] (I) at (200:4) {};
    \node[r] (j) at (220:3) {}; \node[b] (J) at (220:4) {};
    \node[r] (k) at (260:3) {}; \node[b] (K) at (240:4) {};
    \node[b] (l) at (280:3) {}; \node[r] (L) at (280:4) {};
    \node[b] (m) at (300:3) {}; \node[r] (M) at (300:4) {};
    \node[r] (n) at (320:3) {}; \node[b] (N) at (320:4) {};
    \node[r] (p) at (340:3) {}; \node[b] (P) at (340:4) {};
    \node[b] (o) at ($0.3*(a) + 0.3*(f) + 0.4*(k)$) {};
    \node[r] (O) at ($0.3*(A) + 0.4*(F) + 0.3*(K)$) {};
    \draw[blue] (o)--(a)  (B)--(C)  (d)--(e)  (F)--(G)  (h)--(i)  (J)--(K)  (l)--(m)  (N)--(P);
    \draw[red]  (O)--(A)  (b)--(c)  (D)--(E)  (f)--(g)  (H)--(I)  (j)--(k)  (L)--(M)  (n)--(p);
    \draw (a)--(b)--(B)--(A)--(P)--(p)--(a)  (c)--(d)--(D)--(C);
    \draw (o)--(k)--(l)--(L)--(K)--(O)--(F)--(E)--(e)--(f)--(o);
    \draw (i)--(j)--(J)--(I)--(i)  (H)--(G)--(g)--(h);
    \draw (c)--(d)--(D)--(C)  (m)--(n)--(N)--(M);
    \node[gb] (x) at ( 60:2) {}; \node[gr] (X) at ( 60:5) {};
    \draw[lightgray] (x)--(c)  (C)--(X);
    \node[gr] (y) at (180:2) {}; \node[gb] (Y) at (180:5) {};
    \draw[lightgray] (y)--(h)  (H)--(Y);
    \node[gr] (z) at (300:2) {}; \node[gb] (Z) at (300:5) {};
    \draw[lightgray] (z)--(m)  (M)--(Z);
  \end{tikzpicture}
  \hspace*{\fill}
  \caption{One of six \type12-colourings of the stc}
  \label{fig:12stc}
\end{figure}
We encode true by the outer vertex of an outlet being colour $1$, its neighbour colour $2$, the inner vertex of the outlet being colour $2$ and its neighbour being colour $1$. Let $\ell$ and $\ell'$ be two arbitrary literals of $c$. Let $u_\ell$ be the neighbour of the outlet of $\ell$ on the outer cycle and $v_\ell$ be the neighbour of the outlet of $\ell$ on the inner cycle. Define $u_{\ell'}, v_{\ell'}$ in the same way. Furthermore, let $(u_\ell,u_1,\dots,u_4, u_{\ell'})$ be the path along the outer cycle from $u_\ell$ to $u_{\ell'}$ (that does not contain the neighbour of the outlet of the third literal of $c$) and, similarly, let $(v_\ell,v_1,\dots,v_4, v_{\ell'})$ be the path along the outer cycle from $v_\ell$ to $v_{\ell'}$ (see \cref{fig:12stc}). Without loss of generality, assume $u_iv_i$ is an edge in the stc for all $i\neq 2$ (this must be the case for $i\neq 2$ or $i\neq 3$). 
\begin{claim}\label{claim:12stc}
    Given in a valid \type12-$2$-colouring $c$ of the stc it hold that $c(u_\ell)=c(u_{\ell'})$, then $c(u_\ell)=c(u_1)=c(u_4)=c(u_{\ell'})=c(v_2)=c(v_3)$ and $c(v_\ell)=c(v_1)=c(v_4)=c(v_{\ell'})=c(u_2)=c(u_3)\neq c(u_\ell)$.
\end{claim}
\begin{claimproof}
    Since $c$ is a valid colouring, we know by assumption of the encoding of true and false that $c({u_\ell})\neq c(v_\ell)$ and $c(u_{\ell'})\neq c(v_{\ell'})$.
    Now, observe that $c(u_1)\neq c(v_1)$ as otherwise either $u_1$ or $u_2$ must have two neighbour of its own colour contradicting \cref{claim:12cubic}. Similarly, $c(u_4)\neq c(v_4)$ which implies, using the same argument, that $c(u_3)\neq c(v_3)$.
    
    We now argue that $c(u_3)\neq c(u_{\ell'})$. For this consider two cases. If $c(u_{\ell'})=c(u_4)$, then $c(u_3)\neq c(u_4)$ by \cref{claim:12cubic}. On the other hand, if $c(u_{\ell'})\neq c(u_4)$, then $c(u_3)=c(u_4)$ by \cref{claim:12cubic} as $u_4$'s other two neighbours $u_{\ell'}$ and $v_4$ do not have the same colour as $u_4$. 

    Next, we argue that $c(u_1)\neq c(u_3)$. For this assume that $c(u_2)\neq c(u_3)$. Since $c(u_3)\neq c(u_{\ell'})=c(u_\ell)$ we further get that $c(u_2)=c(u_\ell)$. As $u_1$ is adjacent to $u_2$ and $u_\ell$ this implies by \cref{claim:12cubic} that $c(u_1)\neq c(u_2)$ and $c(v_1)=c(u_1)$. Using that $c(v_\ell)\neq c(u_\ell)$ and we contradict \cref{claim:12cubic} as $v_1$ has two neighbours of the same colour. Therefore $c(u_2)=c(u_3)$ which implies, using \cref{claim:12cubic} that $c(u_1)\neq c(u_3)$. 

    Using that $c(u_\ell)=c(u_1)=c(u_{\ell'})=c(v_3)$ and $c(v_\ell)=c(v_1)=c(v_{\ell'})=c(u_3)\neq c(u_\ell)$ it follows that the remaining colours have to be assigned as claimed by \cref{claim:12cubic}.
\end{claimproof}
Now, presume that $c=\ell_1\lor \ell_2 \lor \ell_3$ and $u_{\ell_i}$ is the vertex adjacent to an outlet vertex of $\ell_i$ on the outer cycle of the stc. Let $c$ be a valid \type12-$2$-colouring of the stc with $c(u_{\ell_1})=c(u_{\ell_2})=c(u_{\ell_3})$. Presume $u$ and $u'$ are the two vertices on the outer cycle adjacent to $u_{\ell_1}$. By \cref{claim:12stc} $c(u)=c(u_{\ell_1})=c(u')$ which contradicts \cref{claim:12cubic}. Hence, we cannot have $c(u_{\ell_1})=c(u_{\ell_2})=c(u_{\ell_3})$ implying property \ref{property:stc} is satisfied.   
Note that all six \type12-colourings of the stc can be obtained from the one shown in
Figure~\ref{fig:12stc} by rotation and swap of colours. Finally, we observe that the tac and stc are compatible and that the appropriate colourings are compatiple implying property \ref{property:combiningSTCandTAC}. Hence, \type12-$2$-colouring is \NP-complete.
\end{proof}

\begin{corollary}\label{link2:v2}\label{link2:*2}\label{link2:?2}\label{link2:!2}\label{link2:?0}\label{link2:!0}\label{link2:1+}\label{link2:!+}
    For $q=2$ it is \NP-complete to decide whether a graph admits a \type{v}2, \type*2, \type?2, \type!2, \type?0, \type!0, \type1+, or  \type!+-$q$-colouring.
\end{corollary}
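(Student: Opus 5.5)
The plan mirrors the proofs of the earlier corollaries to \cref{link2:2?} and \cref{link2:22}. The graph $H_\varphi$ built in the proof of \cref{link2:12} is cubic. So it suffices to show that, on cubic graphs with $q=2$, each of the listed colourings coincides with \type12-$2$-colouring; the reduction from \NSAT then transfers verbatim. Membership in \NP\ is immediate in every case, since a colouring can be checked locally.

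The key step is a degree count. Let $v$ have degree $3$, with $a=|N(v)\cap c^{-1}(c(v))|$ and $b=3-a$ the number of neighbours of the other colour; with two colours there is exactly one colour $i\neq c(v)$. The \type12 condition requires $a$ odd and $b$ even and positive, i.e.\ $(a,b)=(1,2)$. I will go through each type and show it forces exactly $(a,b)=(1,2)$.
\begin{itemize}
\item \type{v}2, \type*2, \type?2 and \type!2: the condition $b$ even and $\geq 2$ forces $b=2$, hence $a=1$, and $a=1$ satisfies the first constraint in every one of these cases.
\item \type?0 and \type!0: the condition $a\le 1$ (respectively $a=1$) together with $b=3-a$ even forces $a=1$, $b=2$.
\item \type1+ and \type!+: $a$ odd (respectively $a=1$) with $b>0$ excludes $a=3$, so again $a=1$, $b=2$.
\end{itemize}
Conversely, $(a,b)=(1,2)$ satisfies every listed type. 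Hence all these problems are the same problem on cubic graphs.

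The only potential subtlety is whether the constraints are required at the pendant or outlet vertices of the gadgets. After gluing, however, $H_\varphi$ is a genuinely cubic graph, and the equivalence is applied to $H_\varphi$ as a whole. The ``if and only if'' from \cref{lem:reductionFromNAE3SAT} therefore carries over directly, and I do not expect a real obstacle here.
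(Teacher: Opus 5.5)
Your proposal is correct and is the paper's proof. The paper likewise notes that on cubic graphs all eight variants coincide with \type{1}{2}-$2$-colouring and reuses the cubic graph $H_\varphi$ from \cref{thm:212}; your case analysis forcing $(a,b)=(1,2)$ spells out the verification the paper calls easy.
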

\begin{proof}
    It is easy to verify that on cubic graphs \type{v}2, \type*2,  \type?2, \type!2, \type?0, \type!0, \type1+, and  \type!+-$2$-colouring coincide with \type12-$2$-colouring. Since the graph $H_\varphi$ constructed in the proof of \cref{thm:212} is cubic, the reduction also shows \NP-completeness of \type{v}2, \type*2, \type?2, \type!2, \type?0, \type!0, \type1+, and  \type!+-$2$-colouring.
\end{proof}

\subsubsection{\Type{!}{1}-colouring for \boldmath$q=2$}

First recall that $c:V(G)\rightarrow [q]$ is a \type!1-$q$-colouring of $G$ if for every vertex $v$ we have $|N_G(v)\cap c^{-1}(c(v))|=1$ and $|N_G(v)\cap c^{-1}(i)|$ is odd for every $i\not=c(v)$. 

\begin{theorem}\label{link2:!1}
    For $q = 2$ it is \NP-complete to decide whether a graph admits a \type!1-$q$-colouring.
\end{theorem}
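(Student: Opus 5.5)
We reduce from \NSAT via \cref{lem:reductionFromNAE3SAT}, following the pattern of \cref{link2:2?} and \cref{thm:212}, and aim for an $H_\varphi$ that is regular of a fixed small even degree. In a \type!1-$2$-colouring every vertex $v$ has exactly one neighbour of its own colour, so the monochromatic edges form a perfect matching $M$. Also $\deg(v)-1$ must be odd, so every vertex has even degree. The natural target class is therefore $4$-regular graphs. There a \type!1-$2$-colouring is the same as a perfect matching $M$ together with a $2$-colouring in which the matching partners agree, and each vertex has exactly three neighbours of the other colour outside $M$. Equivalently, contracting $M$ gives a $6$-regular multigraph that must be properly $2$-coloured, i.e.\ be bipartite with no loops. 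I would first isolate this as a claim, analogous to \cref{claim:222factor} and \cref{claim:12cubic}: a $4$-regular graph $G$ is \type!1-$2$-colourable if and only if it has a perfect matching $M$ with $M$-edges not contained in triangles and $G/M$ bipartite. The triangle condition holds because if $uv\in M$ had a common neighbour $w$, then $w$ would need the colour opposite to both ends, which is fine. So I need to state the condition precisely as $G/M$ being bipartite and loopless.

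Next I would design a \emph{bead}: a small $4$-regular gadget with two pendant attachment vertices on each side, in which the matching $M$ is forced and hence admits exactly two colourings, swapped by exchanging colours. A good candidate is built from $K_4$'s or $K_{3,3}$-like blocks, where a short case analysis shows the unique matching-compatible structure. The tac for a variable with $2k$ occurrences is a ring of $2k$ beads linked by single edges, as in \cref{fig:22tac}. The outlets are taken between consecutive beads, and a ``first vertex'' convention, with colour $1$ meaning true, handles positive versus negative occurrences, giving \ref{property:tac}.

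For the stc I would take a $4$-regular-after-gluing gadget containing a vertex $a^c$ adjacent to the three vertices receiving the outlets. I would argue as in \cref{link2:22} that the three outlet neighbours cannot all receive the same colour: if they did, $a^c$ would have three neighbours of one colour, forcing its own colour and matching partner in a way that yields an odd cycle in $G/M$. I would then exhibit the six not-all-equal colourings, obtained from one by rotation and colour swap, to show that valid colourings exist. Compatibility of the gluing, \ref{property:combiningSTCandTAC}, is checked by matching the outlet neighbourhoods as before.

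The main obstacle is constructing the stc. It must reject the all-equal pattern while admitting all six not-all-equal patterns under the rigid ``exactly one same-colour neighbour, odd number of other-colour neighbours'' constraint, and it must stay $4$-regular. I would first search small $4$-regular graphs (for instance a circulant or the line graph of a cubic graph) with three subdivided or detached edges, and verify via the $G/M$ bipartiteness criterion, exactly as the Petersen-graph argument was used for \type2?.
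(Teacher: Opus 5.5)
Your structural observations are sound. Every vertex needs even degree, and the monochromatic edges form a perfect matching $M$. On $4$-regular graphs a \type{!}{1}-$2$-colouring is exactly a perfect matching $M$ with $G/M$ bipartite; you rightly drop the triangle condition, and looplessness is automatic for simple graphs.

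The gap is that neither gadget is actually constructed, and the one concrete mechanism you propose for \ref{property:stc} does not transfer from \cref{link2:22}. There, a common neighbour $a^c$ of the three outlet neighbours rules out the all-equal pattern locally, because under \type{2}{2} a vertex of a $4$-regular graph can never see three vertices of one colour. Under \type{!}{1} this is the generic situation. If the three neighbours of $a^c$ all have colour $1$, then $a^c$ simply takes colour $2$ and its fourth neighbour is its $M$-partner of colour $2$. So $a^c$ sees one vertex of its own colour and three (an odd number) of the other, which is legal. Nothing is forced, so the ``odd cycle in $G/M$'' you invoke is not derived and would have to come from structure you have not specified. Rejecting the all-equal pattern is the heart of the reduction, and you list it yourself as the open obstacle, so the argument is incomplete as it stands. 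The same applies, to a lesser degree, to the unspecified bead.

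The missing idea is a block whose colouring is rigid for purely local reasons. The paper uses a $K_4$ in which every vertex has exactly one neighbour outside it. A $4$--$0$ or $3$--$1$ split would give some vertex at least two same-coloured neighbours, so each colour occurs exactly twice in the $K_4$. Each vertex then already has its unique same-coloured neighbour inside the $K_4$, so every external neighbour gets the colour opposite to its $K_4$-vertex. Hence the external neighbours of any three vertices of such a $K_4$ are never monochromatic. The stc is two such copies $A^c,B^c$ joined by one edge, each outlet being a pair of vertices attached to $A^c$ and $B^c$ respectively.

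For the tac, the paper uses the fact that true twins always receive the same colour. If $c(u)\neq c(v)$, then $v$ sees exactly two vertices of the other colour, namely $u$ and the unique same-coloured neighbour of $u$, which is an even number. This makes a ring built from copies of the true-twin gadget admit exactly two colourings. Some rigidity lemma of this kind is what your ``bead in which $M$ is forced'' would need.
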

\begin{proof}
We show here by a reduction
from \NSAT that \type!1-$2$-colouring is \NP-complete for $4$-regular graphs.
\begin{figure}[hbtp]
  \centering
  \begin{tikzpicture}[xscale=0.4, yscale=0.7]
    \node[b, label=left: $u$] (u) at (2,1) {};
    \node[r, label=right:$v$] (v) at (4,1) {};
    \draw (u)--(v);
    \foreach \n/\x/\c in {a/1/b, b/3/r, c/5/r} {
      \node[\c] (\n) at (\x,2) {};
      \draw (u)--(\n)--(v);
    }
    \draw (a)--(b);
    \draw[lightgray] (0.5,2.5)--(a)--(1.5,2.5);
    \draw[lightgray] (2.5,2.5)--(b)--(3.5,2.5);
    \draw[lightgray] (4.5,2.5)--(c)--(5.5,2.5);
  \end{tikzpicture}
  \caption{Two true twins of different colours.}
  \label{fig:!1tt}
\end{figure}
First we observe the following.
\begin{claim}\label{claim:!1trueTwins}
    True twins receive the same colour in every \type!1-colouring with two colours.
\end{claim}
\begin{claimproof}
    Presume this is  not the case and $u,v$ are true twins in a graph $G$ and $c:V(G)\rightarrow [2]$ is a \type!1-colouring of $G$ in which $c(u)\not=c(v)$. In this case, there must be exactly one neighbour $w$ of $u$ with $c(w)=c(u)$. Therefore, $v$ has exactly two neighbours $u$ and $w$ of colour not equal two $c(v)$ (see \cref{fig:!1tt}), contradicting the definition of \type!1-colouring.
\end{claimproof}

We combine copies of the graph shown in Figure~\ref{fig:!1tt} to form the tac for variable $x$ given in
Figure~\ref{fig:!1tac}. Using \cref{claim:!1trueTwins} it is easy to verify that the tac has precisely two \type!1-$2$-colourings. Each outlet consist of a pair of vertices attached to two consecutive copies of the graph depicted in \cref{fig:!1tt}. In order to encode different truth values, for each outlet we declare one vertex to be the first vertex of the outlet. Presume an outlet corresponds to a literal $x$, then we choose the vertex attached to the even copy to be the first vertex, where every second copy of the graph from \cref{fig:!1tt} along the cycle is even (starting from an arbitrarily chosen copy). For outlets corresponding to literal $\overline{x}$ we choose the vertex attached to the even copy to be the first vertex of the outlet. We encode that the literal corresponding to an outlet is true if its first vertex has colour 1 and we encode false if its first vertex has colour 2.
With this encoding the tac satisfies property \ref{property:tac}.

\begin{figure}[hbtp]
  \hspace*{\fill}
  \begin{tikzpicture}[scale=0.7]
    \newcommand{\rb}[1]{\ifodd#1{r}\else{b}\fi}
    \newcommand{\br}[1]{\ifodd#1{b}\else{r}\fi}
    \newcommand{\Br}[1]{\ifodd#1{gb}\else{gr}\fi}
    \newcommand{\RB}[1]{\ifodd#1{red}\else{blue}\fi}
    \newcommand{\BR}[1]{\ifodd#1{blue}\else{red}\fi}
    \node[gr] (G3)  at ( 3.3, 2.7) {}; 
    \node[gb] (G9)  at ( 2.7, 3.3) {}; 
    \node[gb] (G21) at (-2.7, 3.3) {}; 
    \node[gr] (G15) at (-3.3, 2.7) {}; 
    \node[gb] (G33) at (-3.3,-2.7) {}; 
    \node[gr] (G27) at (-2.7,-3.3) {}; 
    \node[gr] (G39) at ( 2.7,-3.3) {}; 
    \node[gb] (G45) at ( 3.3,-2.7) {}; 
    \node[a] (-1) at (-7.5:3.0) {};
    \foreach \n in {1,...,48} {
      \pgfmathtruncatemacro{\z}{\n-1}
      \pgfmathtruncatemacro{\y}{\n-2}
      \pgfmathtruncatemacro{\x}{\n-3}
      \pgfmathsetmacro{\a}{7.5 * \n}
      \pgfmathtruncatemacro{\m}{\n / 6}
      \pgfmathtruncatemacro{\r}{\n - 6*\m}
      \ifcase\r\relax                         
      \or \node[\rb{\m}] (\n) at (\a:3.0) {}; 
          \draw (\n)--(\y);
      \or \node[\br{\m}] (\n) at (\a:2.3) {}; 
          \draw (\n)--(\z);
      \or \node[\rb{\m}] (\n) at (\a:3.0) {}; 
          \draw (\n)--(\z);
          \draw[lightgray] (\n)--(G\n);
      \or \node[\br{\m}] (\n) at (\a:2.3) {}; 
          \draw (\x)--(\n)--(\z);
          \draw[\BR{\m}] (\n)--(\y);
      \or \node[\rb{\m}] (\n) at (\a:3.0) {}; 
          \draw (\x)--(\n)--(\z);
          \draw[\RB{\m}] (\n)--(\y);
      \fi
    }
    \draw[blue] (1)--(25)  (13)--(37);
    \draw[red]  (7)--(31)  (19)--(43);
  \end{tikzpicture}
  \hspace*{\fill}
  \begin{tikzpicture}[scale=0.7]
    \newcommand{\rb}[1]{\ifodd#1{r}\else{b}\fi}
    \newcommand{\br}[1]{\ifodd#1{b}\else{r}\fi}
    \newcommand{\Rb}[1]{\ifodd#1{gr}\else{gb}\fi}
    \newcommand{\RB}[1]{\ifodd#1{red}\else{blue}\fi}
    \newcommand{\BR}[1]{\ifodd#1{blue}\else{red}\fi}
    \node[gb] (G3)  at ( 3.3, 2.7) {}; 
    \node[gr] (G9)  at ( 2.7, 3.3) {}; 
    \node[gr] (G21) at (-2.7, 3.3) {}; 
    \node[gb] (G15) at (-3.3, 2.7) {}; 
    \node[gr] (G33) at (-3.3,-2.7) {}; 
    \node[gb] (G27) at (-2.7,-3.3) {}; 
    \node[gb] (G39) at ( 2.7,-3.3) {}; 
    \node[gr] (G45) at ( 3.3,-2.7) {}; 
    \node[a] (-1) at (-7.5:3.0) {};
    \foreach \n in {1,...,48} {
      \pgfmathtruncatemacro{\z}{\n-1}
      \pgfmathtruncatemacro{\y}{\n-2}
      \pgfmathtruncatemacro{\x}{\n-3}
      \pgfmathsetmacro{\a}{7.5 * \n}
      \pgfmathtruncatemacro{\m}{\n / 6}
      \pgfmathtruncatemacro{\r}{\n - 6*\m}
      \ifcase\r\relax                         
      \or \node[\br{\m}] (\n) at (\a:3.0) {}; 
          \draw (\n)--(\y);
      \or \node[\rb{\m}] (\n) at (\a:2.3) {}; 
          \draw (\n)--(\z);
      \or \node[\br{\m}] (\n) at (\a:3.0) {};
          \draw (\n)--(\z);
          \draw[lightgray] (\n)--(G\n);
      \or \node[\rb{\m}] (\n) at (\a:2.3) {}; 
          \draw (\x)--(\n)--(\z);
          \draw[\RB{\m}] (\n)--(\y);
      \or \node[\br{\m}] (\n) at (\a:3.0) {}; 
          \draw (\x)--(\n)--(\z);
          \draw[\BR{\m}] (\n)--(\y);
      \fi
    }
    \draw[red]  (1)--(25)  (13)--(37);
    \draw[blue] (7)--(31)  (19)--(43);
  \end{tikzpicture}
  \hspace*{\fill}
  \caption{The two possible \type{!}{1}-colourings of the tac
  representing a variable that appears twice negatively and twice positively.
  }
  \label{fig:!1tac}
\end{figure}

The stc of a clause $c$ depicted in Figure~\ref{fig:!1stc} consists of two copies of $K_4$, called $A^c$ and $B^c$,
linked by a single edge and outlets consisting of pairs of vertices attached. 
\begin{figure}[hbtp]
  \hspace*{\fill}
  \begin{tikzpicture}[scale=0.5]
    \node[r] (lb) at (2,1) {};  \node[b] (lt) at (2,3) {};
    \node[r] (ll) at (1,2) {};  \node[b] (lr) at (3,2) {};
    \draw (lb)--(lt)--(ll)--(lr)--(lb);
    \draw[red]  (lb)--(ll);
    \draw[blue] (lt)--(lr);
    \node[b] (rb) at (6,1) {};  \node[r] (rt) at (6,3) {};
    \node[b] (rl) at (5,2) {};  \node[r] (rr) at (7,2) {};
    \draw (rb)--(rt)--(rl)--(rr)--(rb)--(lb);
    \draw[blue] (rb)--(rl);
    \draw[red]  (rt)--(rr);
    \node[gb] (sl) at (1.0,4) {}; \draw[lightgray] (ll)--(sl);
    \node[gr] (tl) at (2.0,4) {}; \draw[lightgray] (rl)--(tl);
    \node[gr] (sm) at (3.5,4) {}; \draw[lightgray] (lt)--(sm);
    \node[gb] (tm) at (4.5,4) {}; \draw[lightgray] (rt)--(tm);
    \node[gr] (sr) at (6.0,4) {}; \draw[lightgray] (lr)--(sr);
    \node[gb] (tr) at (7.0,4) {}; \draw[lightgray] (rr)--(tr);
  \end{tikzpicture}
  \hspace*{\fill}
  \begin{tikzpicture}[scale=0.5]
    \node[r] (lb) at (2,1) {};  \node[r] (lt) at (2,3) {};
    \node[b] (ll) at (1,2) {};  \node[b] (lr) at (3,2) {};
    \draw (lb)--(ll)--(lt)--(lr)--(lb);
    \draw[red]  (lb)--(lt);
    \draw[blue] (ll)--(lr);
    \node[b] (rb) at (6,1) {};  \node[b] (rt) at (6,3) {};
    \node[r] (rl) at (5,2) {};  \node[r] (rr) at (7,2) {};
    \draw (rb)--(rl)--(rt)--(rr)--(rb)--(lb);
    \draw[blue] (rb)--(rt);
    \draw[red]  (rl)--(rr);
    \node[gr] (sl) at (1.0,4) {}; \draw[lightgray] (ll)--(sl);
    \node[gb] (tl) at (2.0,4) {}; \draw[lightgray] (rl)--(tl);
    \node[gb] (sm) at (3.5,4) {}; \draw[lightgray] (lt)--(sm);
    \node[gr] (tm) at (4.5,4) {}; \draw[lightgray] (rt)--(tm);
    \node[gr] (sr) at (6.0,4) {}; \draw[lightgray] (lr)--(sr);
    \node[gb] (tr) at (7.0,4) {}; \draw[lightgray] (rr)--(tr);
  \end{tikzpicture}
  \hspace*{\fill}
  \begin{tikzpicture}[scale=0.5]
    \node[r] (lb) at (2,1) {};  \node[b] (lt) at (2,3) {};
    \node[b] (ll) at (1,2) {};  \node[r] (lr) at (3,2) {};
    \draw (lb)--(lt)--(lr)--(ll)--(lb);
    \draw[red]  (lb)--(lr);
    \draw[blue] (lt)--(ll);
    \node[b] (rb) at (6,1) {};  \node[r] (rt) at (6,3) {};
    \node[r] (rl) at (5,2) {};  \node[b] (rr) at (7,2) {};
    \draw (rb)--(rt)--(rr)--(rl)--(rb)--(lb);
    \draw[blue] (rb)--(rr);
    \draw[red]  (rt)--(rl);
    \node[gr] (sl) at (1.0,4) {}; \draw[lightgray] (ll)--(sl);
    \node[gb] (tl) at (2.0,4) {}; \draw[lightgray] (rl)--(tl);
    \node[gr] (sm) at (3.5,4) {}; \draw[lightgray] (lt)--(sm);
    \node[gb] (tm) at (4.5,4) {}; \draw[lightgray] (rt)--(tm);
    \node[gb] (sr) at (6.0,4) {}; \draw[lightgray] (lr)--(sr);
    \node[gr] (tr) at (7.0,4) {}; \draw[lightgray] (rr)--(tr);
  \end{tikzpicture}
  \hspace*{\fill}
  \caption{Three different \type{!}{1}-colourings of the stc. Three more can
    be obtained by swapping colours. }
  \label{fig:!1stc}
\end{figure}
We encode that the colouring at an outlet encodes that the corresponding literal is true if the vertex of the outlet attached to $A^c$ receives colour $1$ and false if the vertex of the outlet attached to $A^c$ receives colour two. 
Note that in any \type!1-$2$-colouring of the stc each $K_4$ must have two vertices of each colour. Hence, the three vertices of the outlets that are attached to $A^c$ cannot all receive the same colour. We obtain that property \ref{property:stc} holds. To combine the tac and stc, we need to glue the first vertex of the outlet of the tac corresponding to literal $\ell$ to the neighbour of the outlet which is attached to $A^c$. This ensures that the correct colouring are compatible and hence property \ref{property:combiningSTCandTAC} holds. We conclude that by \cref{lem:reductionFromNAE3SAT} \type02-$2$-colouring is \NP-complete.
\end{proof}

\subsubsection{\Type{0}{+}-colouring for \boldmath$q=2$}
First recall that $c:V(G)\rightarrow [q]$ is a \type0+-$q$-colouring of $G$ if for every vertex $v$ we have $|N_G(v)\cap c^{-1}(c(v))|$ is even and $|N_G(v)\cap c^{-1}(i)|>0$ for every $i\not=c(v)$. 

\begin{theorem}\label{link2:0+}
    For $q = 2$ it is \NP-complete to decide whether a graph admits a \type0+-$q$-colouring.
\end{theorem}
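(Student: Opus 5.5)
We reduce from \NSAT via \cref{lem:reductionFromNAE3SAT}, restricting to a regular degree where \type0+-$2$-colouring takes a rigid combinatorial form, as in the earlier proofs.

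\textbf{Choice of degree.} Take $H_\varphi$ $3$-regular. Then each vertex has $0$ or $2$ neighbours of its own colour and at least one of the other colour. Hence it has exactly $2$ own-colour neighbours and $1$ other-colour neighbour. So on cubic graphs \type0+ coincides with \type2? (and with \type21, \type2!, \type2*, etc.). Such a colouring corresponds to a perfect matching $M$ whose complementary $2$-factor can be $2$-coloured cycle-wise so that every matching edge joins cycles of different colours. Equivalently, every cycle of $G$ uses an even number of $M$-edges, exactly as observed in the proof of \cref{link2:2?}.

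\textbf{Main step.} Check that the graph $H_\varphi$ built in the proof of \cref{link2:2?} is cubic. The tac (a ring of $2k$ triangles with pendant outlets glued into the stc) and the stc (a Petersen graph with three subdivided edges plus outlet edges) give degree $3$ everywhere after gluing. Since the two colouring notions coincide on cubic graphs, a \type0+-$2$-colouring of $H_\varphi$ is exactly a \type2?-$2$-colouring. The properties \ref{property:tac}, \ref{property:stc} and \ref{property:combiningSTCandTAC} were verified there, so \cref{lem:reductionFromNAE3SAT} yields \NP-completeness. Membership in \NP\ is trivial.

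\textbf{Main obstacle.} The only delicate point is the equivalence at a vertex: we must exclude the case of $0$ same-colour neighbours. In that case all three neighbours have the other colour, which satisfies \type+ and the parity $0$ for \type0. So \type0+ does \emph{not} force exactly two same-colour neighbours, and the equivalence above fails. We must therefore show that no colouring of $H_\varphi$ has such isolated vertices.

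Take a vertex in a triangle of the tac. If it is isolated in its colour class, its two triangle neighbours have the other colour and are adjacent to each other. Each then has one same-colour neighbour, which is odd. To be even, each must also have its third neighbour in the same colour. Then the outlet-side and ring-side neighbours propagate colours, and we must rule these colourings out case by case. If the Petersen-based stc admits unintended colourings with isolated vertices, the plan is to modify the degree. We would instead use a $4$-regular construction akin to the bead construction in \cref{link2:22}: degree $4$ with an even number of same-colour neighbours and a positive number of other-colour neighbours allows $0$ or $2$ same-colour neighbours. We would then again exclude $0$ via local triangle and $K_4$ arguments. This exclusion is the step I expect to require the real work.
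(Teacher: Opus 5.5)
Your reduction cannot be repaired this way: the obstacle you flag is fatal, not technical. On a cubic graph, an even number of same-coloured neighbours means $0$ or $2$, which leaves $3$ or $1$ neighbours of the other colour. So the positivity part of \type0+ is automatic, and \type0+-$2$-colouring coincides with \type0*-$2$-colouring (even colouring). The same holds on every graph all of whose degrees are odd. By Gallai's theorem \cite{Lovasz93}, every graph has an even $2$-colouring. Hence the cubic graph $H_\varphi$ from \cref{link2:2?} admits a \type0+-$2$-colouring for every formula $\varphi$, satisfiable or not, so your reduction maps NO-instances to YES-instances. The colourings with vertices having no same-coloured neighbour, which you hoped to exclude case by case, are guaranteed to exist. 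For the same reason, your opening claim that every vertex has exactly two same-coloured neighbours is false. The $4$-regular fallback is only a sketch: you give no gadgets, and the bead argument of \cref{link2:22} relies on every vertex having at least two same-coloured neighbours, which \type0+ does not provide.

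The missing idea is that hardness must come from even-degree vertices, where positivity genuinely constrains the colouring. The paper exploits degree-$2$ vertices. Such a vertex cannot have two same-coloured neighbours, since it would then have none of the other colour, so both its neighbours receive the opposite colour. The tac is a cycle in which every edge has a degree-$2$ endpoint, so it is forced to be properly $2$-coloured. A vertex attached to a degree-$3$ cycle vertex must then get the opposite colour of that cycle vertex. The stc is a claw with a false twin added to the centre and a true twin added to each leaf. Parity forces the two (degree-$6$) centres to have opposite colours and each leaf pair to be monochromatic. The requirement that each centre sees the other colour then says precisely that the three leaf pairs are not all coloured alike.
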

\begin{proof}
    We provide a reduction from \NSAT.
    First we observe that if a vertex $v$ of $G$ has exactly two neighbours then
    these receive the same colour in every $2$-colouring of $G$ of type \type{0}{+}.
    This leads to the tac of variable $x$ shown in Figure \ref{fig:0+tac}, which is a cycle with outlets consisting of a single vertex attached. 
    \begin{figure}[hbtp]
  \hspace*{\fill}
  \newcommand{\br}[1]{\ifodd#1{b}\else{r}\fi}
  \newcommand{\rb}[1]{\ifodd#1{r}\else{b}\fi}
  \begin{tikzpicture}
    \pgfmathsetmacro{\a}{-360/7}
    \node[a] (1) at (\a:1) {};
    \foreach[count=\m] \n in {2,...,15} {
      \pgfmathsetmacro{\a}{(\n-3)*360/14}
      \node[\rb{\n}] (\n) at (\a:1) {};
      \draw (\n)--(\m);
    }
    \foreach \n in {1,3,5} {
      \pgfmathsetmacro{\a}{(\n-3)*360/14}
      \node[gb] (g\n) at (\a:1.5) {};
      \draw[lightgray] (g\n)--(\n);
    }
    \foreach \n in {8,10,12} {
      \pgfmathsetmacro{\a}{(\n-3)*360/14}
      \node[gr] (g\n) at (\a:1.5) {};
      \draw[lightgray] (g\n)--(\n);
    }
  \end{tikzpicture}
  \hspace*{\fill}
  \begin{tikzpicture}
    \pgfmathsetmacro{\a}{-360/7}
    \node[a] (1) at (\a:1) {};
    \foreach[count=\m] \n in {2,...,15} {
      \pgfmathsetmacro{\a}{(\n-3)*360/14}
      \node[\br{\n}] (\n) at (\a:1) {};
      \draw (\n)--(\m);
    }
    \foreach \n in {1,3,5} {
      \pgfmathsetmacro{\a}{(\n-3)*360/14}
      \node[gr] (g\n) at (\a:1.5) {};
      \draw[lightgray] (g\n)--(\n);
    }
    \foreach \n in {8,10,12} {
      \pgfmathsetmacro{\a}{(\n-3)*360/14}
      \node[gb] (g\n) at (\a:1.5) {};
      \draw[lightgray] (g\n)--(\n);
    }
  \end{tikzpicture}
  \hspace*{\fill}
  \caption{The two $2$-colourings of type \type{0}{+} of a tac representing a
    variable that appears trice negatively and trice positively.}
  \label{fig:0+tac}
\end{figure}
    Since the vertices the outlets are attached to have two degree $2$ neighbours on the cycle, the definition of \type0+-colouring implies that the outlet has the opposite colour to the vertex it is attached to. We furthermore encode that the literal an outlet encodes is true by the outlet having colour $1$ and false by the outlet having colour $2$. Note that property \ref{property:tac} is guaranteed through the choice of location on the cycle the outlets for occurrences of $x$ and $\overline{x}$.

The stc is obtained from a claw $K_{1,3}$ by adding a false twin to the central
vertex and a true twin to each leaf.
One outlet is attached to one copy of each leaf. True and false are encoded by the outlet having colour $1$ or $2$ and their neighbours having the opposite colour, respectively.
Observe that in every $2$-colouring of
type \type{0}{+} the true twins receive the same colour, and the false twins receive
opposite colours.

\begin{figure}[hbtp]
  \hspace*{\fill}
  \begin{tikzpicture}[scale=0.5]
    \node[r] (c!l) at (0,0) {};
    \node[b] (c!r) at (1,0) {};
    \node[r] (t!l) at (0,1) {};
    \node[r] (t!r) at (1,1) {};
    \draw (t!l)--(c!l)--(t!r)--(c!r)--(t!l)--(t!r);
    \foreach \d/\a/\c in {l/225/b, r/315/b} {
      \foreach \s in {l,r} {
        \node[\c] (\d!\s) at ($(c!\s) + (\a:1.414)$) {};
        \draw (c!l)--(\d!\s)--(c!r);
      }
      \draw (\d!l)--(\d!r);
    }
    \node[gr] (l!g) at ($(l!l)+(-1,0)$) {}; \draw[lightgray] (l!g)--(l!l);
    \node[gr] (r!g) at ($(r!r)+( 1,0)$) {}; \draw[lightgray] (r!g)--(r!r);
    \node[gb] (t!g) at ($(t!r)+( 1,0)$) {}; \draw[lightgray] (t!g)--(t!r);
  \end{tikzpicture}
  \hspace*{\fill}
  \begin{tikzpicture}[scale=0.5]
    \node[b] (c!l) at (0,0) {};
    \node[r] (c!r) at (1,0) {};
    \node[r] (t!l) at (0,1) {};
    \node[r] (t!r) at (1,1) {};
    \draw (t!l)--(c!l)--(t!r)--(c!r)--(t!l)--(t!r);
    \foreach \d/\a/\c in {l/225/b, r/315/b} {
      \foreach \s in {l,r} {
        \node[\c] (\d!\s) at ($(c!\s) + (\a:1.414)$) {};
        \draw (c!l)--(\d!\s)--(c!r);
      }
      \draw (\d!l)--(\d!r);
    }
    \node[gr] (l!g) at ($(l!l)+(-1,0)$) {}; \draw[lightgray] (l!g)--(l!l);
    \node[gr] (r!g) at ($(r!r)+( 1,0)$) {}; \draw[lightgray] (r!g)--(r!r);
    \node[gb] (t!g) at ($(t!r)+( 1,0)$) {}; \draw[lightgray] (t!g)--(t!r);
  \end{tikzpicture}
  \hspace*{\fill}
  \begin{tikzpicture}[scale=0.5]
    \node[b] (c!l) at (0,0) {};
    \node[r] (c!r) at (1,0) {};
    \node[b] (t!l) at (0,1) {};
    \node[b] (t!r) at (1,1) {};
    \draw (t!l)--(c!l)--(t!r)--(c!r)--(t!l)--(t!r);
    \foreach \d/\a/\c in {l/225/r, r/315/b} {
      \foreach \s in {l,r} {
        \node[\c] (\d!\s) at ($(c!\s) + (\a:1.414)$) {};
        \draw (c!l)--(\d!\s)--(c!r);
      }
      \draw (\d!l)--(\d!r);
    }
    \node[gb] (l!g) at ($(l!l)+(-1,0)$) {}; \draw[lightgray] (l!g)--(l!l);
    \node[gr] (r!g) at ($(r!r)+( 1,0)$) {}; \draw[lightgray] (r!g)--(r!r);
    \node[gr] (t!g) at ($(t!r)+( 1,0)$) {}; \draw[lightgray] (t!g)--(t!r);
  \end{tikzpicture}
  \hspace*{\fill}
  \begin{tikzpicture}[scale=0.5]
    \node[b] (c!l) at (0,0) {};
    \node[r] (c!r) at (1,0) {};
    \node[b] (t!l) at (0,1) {};
    \node[b] (t!r) at (1,1) {};
    \draw (t!l)--(c!l)--(t!r)--(c!r)--(t!l)--(t!r);
    \foreach \d/\a/\c in {l/225/b, r/315/r} {
      \foreach \s in {l,r} {
        \node[\c] (\d!\s) at ($(c!\s) + (\a:1.414)$) {};
        \draw (c!l)--(\d!\s)--(c!r);
      }
      \draw (\d!l)--(\d!r);
    }
    \node[gr] (l!g) at ($(l!l)+(-1,0)$) {}; \draw[lightgray] (l!g)--(l!l);
    \node[gb] (r!g) at ($(r!r)+( 1,0)$) {}; \draw[lightgray] (r!g)--(r!r);
    \node[gr] (t!g) at ($(t!r)+( 1,0)$) {}; \draw[lightgray] (t!g)--(t!r);
  \end{tikzpicture}
  \hspace*{\fill}
  \caption{Some $2$-colourings of type \type{0}{+} of the stc.}
  \label{fig:0+stc}
\end{figure}
To ensure that both false twins see the opposite colour at least once, the twin-pairs cannot all have the same colour which enforces property \ref{property:stc}. 
Finally, property \ref{property:combiningSTCandTAC} clearly holds and therefore \type0+-$2$-colouring is \NP-complete by \cref{lem:reductionFromNAE3SAT}.
\end{proof}

\subsubsection{\Type{0}{2}-colouring for \boldmath$q=2$}
First recall that $c:V(G)\rightarrow [q]$ is a \type02-$q$-colouring of $G$ if for every vertex $v$ we have $|N_G(v)\cap c^{-1}(c(v))|$ is even and $|N_G(v)\cap c^{-1}(i)|$ is even and larger than $0$ for every $i\not=c(v)$. We prove the following.

\begin{theorem}\label{link2:02}
    For $q = 2$ it is \NP-complete to decide whether a graph admits a \type02-$q$-colouring.
\end{theorem}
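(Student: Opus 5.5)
We follow the component-design method of \cref{lem:reductionFromNAE3SAT}, as in the proof of \cref{link2:22}, so that $H_\varphi$ is $4$-regular. First I would record the local structure. In a \type02-$2$-colouring every vertex $v$ has an even number of neighbours of its own colour. It also has an even, positive number of neighbours of the other colour, so $\deg(v)$ is even. If $H_\varphi$ is $4$-regular, every vertex is therefore of one of two kinds: a \emph{$(2,2)$-vertex}, with exactly two neighbours of each colour, or a \emph{$(0,4)$-vertex}, with all four neighbours of the opposite colour. A colouring with only $(2,2)$-vertices is exactly a \type22-colouring. The plan is thus to reuse (a possibly modified version of) the bead, tac and stc of \cref{link2:22}. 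The key additional claim is that in these gadgets no vertex can be a $(0,4)$-vertex; then the \type02- and \type22-colourings of $H_\varphi$ coincide, and \cref{claim:222factor} together with the already established properties \ref{property:tac}, \ref{property:stc} and \ref{property:combiningSTCandTAC} finish the proof.

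To exclude $(0,4)$-vertices I would argue locally, using triangles and diamonds (a $K_4$ minus an edge). Suppose $v$ is a $(0,4)$-vertex and $a,b$ are adjacent neighbours of $v$. Then $a$ and $b$ share the opposite colour, so each of them has a same-colour neighbour and is a $(2,2)$-vertex. Now take a diamond $v,a,b,w$ with $ab$ the middle edge and $w$ the fourth vertex. Both $a$ and $b$ already see $v$ of the other colour and each other of their own colour. Counting how many of their remaining neighbours may carry $v$'s colour, and propagating around the bead (in which every vertex lies in a triangle or diamond), should yield a contradiction or force a configuration that clashes at the outlet structure.

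If the original bead does admit an unexpected colouring containing $(0,4)$-vertices, I would change the gadget. Each inner vertex would be placed in two triangles sharing an edge in a way that makes the parity count at the common neighbours impossible. Alternatively, each vertex could be replaced by a small $4$-regular-compatible block whose only \type02-colourings are the two symmetric ones of \cref{fig:22-bead}. After that, the tac (a cyclic chain of beads) has exactly two colourings up to swapping colours, which gives \ref{property:tac}. The stc's $A^c$ part, where three outlet-neighbours have a common neighbour $a^c$, must still forbid the all-equal pattern. Here I would check that $a^c$ cannot be a $(0,4)$-vertex, since it has at most three neighbours outside the relevant side. Then the argument of \cref{link2:22} applies, giving \ref{property:stc}, and \ref{property:combiningSTCandTAC} is inherited unchanged.

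The main obstacle is this exclusion of $(0,4)$-vertices, especially at outlet vertices and their neighbours, where the local structure is only partially inside the gadget. I would first try to verify directly, by a short case check on the bead of \cref{fig:22-bead}, that a $(0,4)$-vertex forces two adjacent same-coloured vertices to have an odd number of same-coloured neighbours. Only if that fails would I redesign the bead as described.
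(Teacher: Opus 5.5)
\textbf{There is a genuine gap.} Your key claim is that in the gadgets of \cref{link2:22} no vertex can be a $(0,4)$-vertex, so that \type02- and \type22-colourings of $H_\varphi$ coincide. This claim is false, and in the stc it fails in a way that breaks the reduction.

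$A^c$ is bipartite: it is $K_{4,4}$ minus three matching edges, with the three outlet-neighbours plus one further vertex on one side and $a^c$ on the other. The same holds for $B^c$. In fact the whole stc, including linking edges and outlets, is bipartite, and every non-outlet vertex has degree $4$. Its proper $2$-colouring is therefore a \type02-colouring in which every non-outlet vertex, $a^c$ included, is a $(0,4)$-vertex. In this colouring the three outlet-neighbours in $A^c$ share one colour and the outlets attached to $A^c$ carry the other, while on $B^c$ the roles are swapped. That is exactly the encoding of three equal truth values, so \ref{property:stc} fails.

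Bead ends see the opposite colour at their outlet-neighbours, so this colouring glues to the standard tac colourings. Hence every truth assignment gives a \type02-colouring of $H_\varphi$: use the colourings of \cref{link2:22} at not-all-equal clauses, and the proper colouring (possibly with colours swapped) at all-equal clauses. So $H_\varphi$ is \type02-colourable for every $\varphi$.

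The bead also has extra colourings. Keep the left triangle and both ends blue, turn the $4$-cycle vertex adjacent to the right triangle blue, and turn the other two vertices of the right triangle red. This is a \type02-colouring with the same exterior and two $(0,4)$-vertices, so your diamond-counting argument cannot produce a contradiction. Your fallback of redesigning the bead is not specified, and it would not repair the stc anyway.

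The underlying reason is that on $4$-regular graphs \type02 is strictly weaker than \type22. Any bipartite piece with positive even degrees gets the all-$(0,4)$ proper colouring for free, so the collapse to \type22 used for \type2+ and \type+2 does not transfer.

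The paper therefore exploits $(0,4)$-vertices instead of excluding them. It drops regularity and uses degree-$2$ vertices, which must see both neighbours in the opposite colour. The tac is a subdivided prism, whose only \type02-colourings are its two proper $2$-colourings. The stc consists of two claws whose centres each receive a true twin, joined by paths of length $3$. The two twins must share a colour, and parity at a twin (adjacent to the other twin and the three leaves) forces exactly two leaves into the opposite colour. That is precisely the not-all-equal condition.
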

\begin{proof}

    We give a reduction from \NSAT. 
    First observe the following.
    \begin{claim}\label{claim:02degree2}
        All vertices of degree two have two
        neighbours of opposite to their own colour in every
        \type02-colouring with two colours.
    \end{claim}

    The tac of variable $x$ is a subdivision of a prism. 
    Each
    of the original vertices of the prism is connected to one of two outlet vertices forming one outlet. The tac is bipartite and its two
    $2$-colourings of type \type02 are the proper $2$-colourings. This is enforced by the subdivision vertices and \cref{claim:02degree2}. 

\begin{figure}[hbtp]
  \hspace*{\fill}
  \begin{tikzpicture}[scale=0.45, bend angle=15]
    \foreach \n in {0,...,25} {
      \pgfmathtruncatemacro{\q}{\n / 4}
      \pgfmathtruncatemacro{\r}{\n - 4*\q}
      \pgfmathsetmacro{\z}{67.5 + 15*\n}
      \pgfmathsetmacro{\a}{82.5 + 15*\n}
      \pgfmathsetmacro{\b}{97.5 + 15*\n}
      \pgfmathtruncatemacro{\p}{\n - 1}
      \pgfmathtruncatemacro{\m}{\n - 2}
      \ifcase \n         \node[a]  (i\n) at (\a:3) {}; 
        \or              \node[a]  (i\n) at (\a:3) {}; 
        \else \ifcase \r \node[r]  (i\n) at (\a:3) {}; 
                         \draw (o\m) to[bend right] (i\n);
                 \or     \node[b]  (i\n) at (\a:3) {}; 
                         \draw (o\m) to[bend right] (i\n);
                 \or     \node[r]  (i\n) at (\a:3) {}; 
                         \node[b]  (o\n) at (\a:4) {};
                         \draw (i\m) to[bend right] (o\n);
                         \draw (o\n)--(i\n);
                         \ifnum\n<13\node[gr] (g\n) at (\b:5) {};
                         \else\node[gr] (g\n) at (\a:5) {};\fi
                         \draw[lightgray] (o\n)--(g\n);
                 \or     \node[b]  (i\n) at (\a:3) {}; 
                         \node[r]  (o\n) at (\a:4) {};
                         \draw (i\m) to[bend right] (o\n);
                         \draw (o\n)--(i\n)--(i\p);
                         \ifnum\n<13\node[gb] (g\n) at (\z:5) {};
                         \else\node[gb] (g\n) at (\a:5) {};\fi
                         \draw[lightgray] (o\n)--(g\n);
              \fi     
      \fi
    }
  \end{tikzpicture}
  \hspace*{\fill}
  \begin{tikzpicture}[scale=0.45, bend angle=15]
    \foreach \n in {0,...,25} {
      \pgfmathtruncatemacro{\q}{\n / 4}
      \pgfmathtruncatemacro{\r}{\n - 4*\q}
      \pgfmathsetmacro{\z}{67.5 + 15*\n}
      \pgfmathsetmacro{\a}{82.5 + 15*\n}
      \pgfmathsetmacro{\b}{97.5 + 15*\n}
      \pgfmathtruncatemacro{\p}{\n - 1}
      \pgfmathtruncatemacro{\m}{\n - 2}
      \ifcase \n         \node[a]  (i\n) at (\a:3) {}; 
        \or              \node[a]  (i\n) at (\a:3) {}; 
        \else \ifcase \r \node[b]  (i\n) at (\a:3) {}; 
                         \draw (o\m) to[bend right] (i\n);
                 \or     \node[r]  (i\n) at (\a:3) {}; 
                         \draw (o\m) to[bend right] (i\n);
                 \or     \node[b]  (i\n) at (\a:3) {}; 
                         \node[r]  (o\n) at (\a:4) {};
                         \draw (i\m) to[bend right] (o\n);
                         \draw (o\n)--(i\n);
                         \ifnum\n<13\node[gb] (g\n) at (\b:5) {};
                         \else\node[gb] (g\n) at (\a:5) {};\fi
                         \draw[lightgray] (o\n)--(g\n);
                 \or     \node[r]  (i\n) at (\a:3) {}; 
                         \node[b]  (o\n) at (\a:4) {};
                         \draw (i\m) to[bend right] (o\n);
                         \draw (o\n)--(i\n)--(i\p);
                         \ifnum\n<13\node[gr] (g\n) at (\z:5) {};
                         \else\node[gr] (g\n) at (\a:5) {};\fi
                         \draw[lightgray] (o\n)--(g\n);
              \fi     
      \fi
    }
  \end{tikzpicture}
  \hspace*{\fill}
  \caption{The tac for type \type02 representing a variable that appears
    trice negatively and trice positively.}
    \label{fig:tac02}
\end{figure}
    
    Let $Y$ and $Z$ be the two parts of the bipartition of the tac. Observe that the two vertices forming an outlet are in different parts of the bipartition and hence receive different colours in any \type02-$2$-colouring. To distinguish occurrences of $x$ and $\overline{x}$ we declare one vertex the first and one vertex second vertex of an outlet. For occurrences of $x$ the first vertex of the outlet is the one in $Y$ and for occurrences of $\overline{x}$ the first vertex of the corresponding outlet is the vertex in $Z$. We encode that the literal corresponding to an outlet is true by its first vertex having colour 1 and we encode false by the first vertex having colour 2. With this encoding we enforce that property \ref{property:tac} is satisfied.

The stc of a clause $c$ is obtained from two copies of $K_{1,3}$ by adding a true twin to the central vertex  called $A^c$ and $B^c$. Additionally, $A^c$ and $B^c$ are linked by  paths of length 3, and outlets attached (see \cref{fig:stc02}). First, we observe 
that the two
true twins of $A^c$ receive the 
same colour in every \type02-colouring of the stc. The same is true for the two true twins
 of $B^c$. Thereafter, we see that the remaining three vertices of $A^c$ (or $B^c$, respectively) will not be coloured all equally.
\begin{figure}[hbtp]
  \hspace*{\fill}
  \begin{tikzpicture}[xscale=0.4, yscale=0.2]
    \foreach \x/\c in {1/b, 8/r} \foreach \y in {1,9} \node[\c] (\x\y) at (\x,\y) {};
    \foreach \y/\c in {1/b, 5/r, 9/r} \node[\c] (3\y) at (3,\y) {};
    \foreach \y/\c in {1/r, 5/b, 9/b} \node[\c] (6\y) at (6,\y) {};
    \draw (11)--(19)  (81)--(89);
    \foreach \n in {31, 35, 39} \draw (11)--(\n)--(19);
    \foreach \n in {61, 65, 69} \draw (81)--(\n)--(89);
    \foreach \y/\c in {0/r, 2/gr, 4/b, 6/gb, 8/b, 10/gb} \node[\c] (4\y) at (4,\y) {};
    \foreach \y/\c in {0/b, 2/gb, 4/r, 6/gr, 8/r, 10/gr} \node[\c] (5\y) at (5,\y) {};
    \draw (31)--(40)--(50)--(61)  (35)--(44)--(54)--(65)  (39)--(48)--(58)--(69);
    \draw[lightgray] (31)--(42)  (52)--(61)  (35)--(46)  (56)--(65)  (39)--(410)  (510)--(69);
  \end{tikzpicture}
  \hspace*{\fill}
  \begin{tikzpicture}[xscale=0.4, yscale=0.2]
    \foreach \x/\c in {1/b, 8/r} \foreach \y in {1,9} \node[\c] (\x\y) at (\x,\y) {}; 
    \foreach \y/\c in {1/r, 5/b, 9/r} \node[\c] (3\y) at (3,\y) {};
    \foreach \y/\c in {1/b, 5/r, 9/b} \node[\c] (6\y) at (6,\y) {};
    \draw (11)--(19)  (81)--(89);
    \foreach \n in {31, 35, 39} \draw (11)--(\n)--(19);
    \foreach \n in {61, 65, 69} \draw (81)--(\n)--(89);
    \foreach \y/\c in {0/b, 2/gb, 4/r, 6/gr, 8/b, 10/gb} \node[\c] (4\y) at (4,\y) {};
    \foreach \y/\c in {0/r, 2/gr, 4/b, 6/gb, 8/r, 10/gr} \node[\c] (5\y) at (5,\y) {};
    \draw (31)--(40)--(50)--(61)  (35)--(44)--(54)--(65)  (39)--(48)--(58)--(69);
    \draw[lightgray] (31)--(42)  (52)--(61)  (35)--(46)  (56)--(65)  (39)--(410)  (510)--(69);
  \end{tikzpicture}
  \hspace*{\fill}
  \begin{tikzpicture}[xscale=0.4, yscale=0.2]
    \foreach \x/\c in {1/b, 8/r} \foreach \y in {1,9} \node[\c] (\x\y) at (\x,\y) {};
    \foreach \y/\c in {1/r, 5/r, 9/b} \node[\c] (3\y) at (3,\y) {};
    \foreach \y/\c in {1/b, 5/b, 9/r} \node[\c] (6\y) at (6,\y) {};
    \draw (11)--(19)  (81)--(89);
    \foreach \n in {31, 35, 39} \draw (11)--(\n)--(19);
    \foreach \n in {61, 65, 69} \draw (81)--(\n)--(89);
    \foreach \y/\c in {0/b, 2/gb, 4/b, 6/gb, 8/r, 10/gr} \node[\c] (4\y) at (4,\y) {};
    \foreach \y/\c in {0/r, 2/gr, 4/r, 6/gr, 8/b, 10/gb} \node[\c] (5\y) at (5,\y) {};
    \draw (31)--(40)--(50)--(61)  (35)--(44)--(54)--(65)  (39)--(48)--(58)--(69);
    \draw[lightgray] (31)--(42)  (52)--(61)  (35)--(46)  (56)--(65)  (39)--(410)  (510)--(69);
  \end{tikzpicture}
  \hspace*{\fill}
  \caption{Three different \type02-colourings of the stc. The other three 
    colourings can be obtained by swapping colours.
    }
  \label{fig:stc02}
\end{figure}
This implies that the three vertices that are part of the three outlets and are attached at $A^c$ cannot all have the same colour.  We encode that a literal is true by the vertex of the corresponding outlet which is attached to $A^c$ having colour $1$ and false by the vertex of the corresponding outlet which is attached to $A^c$ having colour $2$. We observe that this choice guarantees property  \ref{property:stc}. For property \ref{property:combiningSTCandTAC} to be satisfies we need to glue the first vertex of the outlet of the tac corresponding to literal $\ell$ to the neighbour of the outlet which is attached to $A^c$. This ensures that the correct colouring are compatible. We conclude that by \cref{lem:reductionFromNAE3SAT} \type02-$2$-colouring is \NP-complete.
\end{proof}

\subsubsection{\Type{v}{$\star$}-colouring for \boldmath$q=2$}
Recall that $c:V(G)\rightarrow [q]$ is a \type{v}{*}-$q$-colouring of $G$  if for every vertex $v$ it holds that $|N_G(v)\cap c^{-1}(c(v))|$ is odd or $0$.  
We prove the following.

\begin{theorem}\label{thm2:v*}\label{link2:v*}
    For $q = 2$ it is \NP-complete to decide whether a graph admits a \type{v}{*}-$q$-colouring.
\end{theorem}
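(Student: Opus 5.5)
The plan is to use the component design of \cref{lem:reductionFromNAE3SAT}, reducing from \NSAT to \type{v}{*}-$2$-colouring on graphs of maximum degree $3$. The first thing to note is that we cannot simply work with cubic graphs. On a connected cubic graph the monochromatic colouring is always a \type{v}{*}-colouring, since every vertex then has $3$ neighbours of its own colour. So $H_\varphi$ must contain vertices of even degree, where an all-equal neighbourhood is forbidden. I will use vertices of degree $2$ and degree $3$ only.

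The structural observation I would prove first, as a claim, is the following. In a graph of maximum degree $3$, a $2$-colouring is a \type{v}{*}-colouring if and only if each colour class induces a subgraph in which every vertex has degree $0$, $1$ or $3$. In particular, a degree-$2$ vertex never has both neighbours in its own colour, so every degree-$2$ vertex "breaks" monochromatic regions. Every later argument will invoke this claim, much as \cref{claim:12cubic} was used for \type12.

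For the tac of a variable $x$ I will use a cycle built from short chains of degree-$2$ vertices joined by degree-$3$ vertices that carry single-vertex outlets. The target rigidity is that around each chain the colours must alternate in a forced pattern: two consecutive degree-$2$ vertices of the same colour force their other neighbours to the opposite colour. I would choose the chain lengths so that exactly two colourings survive, and these will be complementary. For each outlet I then designate a first vertex, placed differently for occurrences of $x$ and $\overline{x}$, so that \ref{property:tac} holds.

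For the stc I would take a small graph whose three outlet-carrying vertices have degree $3$. The gadget must force that if all three outlet-adjacent vertices get the same colour, some vertex of the gadget is pushed into having exactly two neighbours of its own colour. A natural candidate is a subdivided claw or subdivided $K_4$, where a central degree-$3$ vertex sees three degree-$2$ subdivision vertices. I would check the six not-all-equal patterns by hand and exhibit a colouring for each, as in the earlier figures.

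The main obstacle is rigidity. Because the class of allowed monochromatic components is large (for example, a star $K_{1,3}$ has degrees $3,1,1,1$ and is permitted), gadgets tend to admit unintended colourings, such as a degree-$3$ vertex together with all its neighbours being monochromatic. I expect most of the work to be a careful case analysis showing that in the tac these stars cannot occur. My first attempt would be to make every degree-$3$ vertex adjacent to at least two degree-$2$ vertices whose other neighbours are forced to the opposite colour. Once the gadgets are rigid, compatibility \ref{property:combiningSTCandTAC} should follow by gluing first outlet vertices as in the previous proofs, and \cref{lem:reductionFromNAE3SAT} gives \NP-completeness.
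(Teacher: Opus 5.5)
There is a genuine gap, and it defeats the whole plan: restricting to graphs of maximum degree $3$ makes the problem trivial. You noticed that the monochromatic colouring handles cubic graphs, but adding degree-$2$ vertices does not help. Take a $2$-colouring $c$ of $G$ that maximises the number of bichromatic edges. Recolouring a vertex $v$ with $s_v$ neighbours of its own colour changes the number of bichromatic edges by $2s_v-\deg(v)$. Maximality therefore forces $s_v\le \deg(v)/2$, so $s_v\le 1$ whenever $\deg(v)\le 3$.

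Hence every graph of maximum degree at most $3$ has a \type{v}{*}-$2$-colouring. It even has a \type?*-$2$-colouring, in which both colour classes induce matchings. In particular your $H_\varphi$ is colourable for every formula $\varphi$, including formulas that are not NAE-satisfiable. Equivalently, no tac and stc of maximum degree $3$ can satisfy \ref{property:tac} and \ref{property:stc} at the same time. If they did, restricting such a colouring of $H_\varphi$ to the gadgets would give a not-all-equal satisfying assignment of an arbitrary formula.

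The rigidity you hoped to get from chains of degree-$2$ vertices is absent for the same reason. Along such a chain the only constraint is that no degree-$2$ vertex has both neighbours in its own colour, which allows monochromatic runs of length $1$ and $2$ in many arrangements.

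Hardness can only come from vertices of degree at least $4$, where the forbidden value $s_v=2$ can actually be forced. The paper uses exactly such vertices:
\begin{itemize}
\item A $K_4$ hanging from a cut vertex gives that vertex an odd number of own-coloured neighbours (\cref{claim:K4v*}).
\item The wheel gadget has rim vertices of degree $4$ and a centre of degree $6$, and these restrict the wheel to three colourings up to swapping colours.
\item In the negator $2K_3\Join K_1$, the centre receives an even, positive number of own-coloured neighbours from the two triangles, so its two outside neighbours must receive different colours.
\end{itemize}
To repair your proposal you have to drop the degree bound and build gadgets of this kind.
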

\begin{proof}
    We provide a reduction from \NSAT. First, observe the following.
    \begin{claim}\label{claim:K4v*}
      Let $G$ be a graph and $v_1,\dots, v_4$ the vertices of an induces $K_4$ in $G$ such that $v_1,\dots, v_3$ have no neighbours in $V(G)\setminus \{v_1,\dots, v_4\}$. In any \type{v}{*}-colouring $c: V(G)\rightarrow [2]$ of $G$, $v_4$ must have an odd number of neighbours of its own colour among $v_1,v_2,v_3$.  
    \end{claim}
    \begin{proof}
        Let $C,C'$ be the two colour classes of $c$ and assume $|C\cap \{v_1,\dots, v_4\}|\leq |C'\cap \{v_1,\dots, v_4\}|$.
         If $C\cap \{v_1,\dots, v_4\}$ is either $0$ or $2$, the statement holds. Hence, assume $C\cap \{v_1,\dots, v_4\}=1$. In this case there is $v\in C'\cap\{v_1,v_2, v_3\}$ and since $|C'\cap\{v_1,\dots, v_4\}|=3$, $v$ has two neighbours of its own colour contradicting that $c$ is a \type{v}{*}-colouring. 
    \end{proof}
    First, we build a gadget, called the \emph{wheel gadget}, from a $6$-wheel by adding three vertices $a,b,c$, each being adjacent to two consecutive vertices of the rim of the wheel (see \cref{fig:v*wheel}). We say that the vertices  $a,b,c$ are the outlets of the wheel gadget. In a graph $G$ vertices $a,b,c$ are connected by a wheel gadget, if there are vertices $w,w_1,\dots,w_6$ for which $a,b,c,w,w_1,\dots,w_6$ induce a wheel gadget in $G$ with outlets $a,b,c$  and $w,w_1,\dots, w_6$ do not have neighbours outside the wheel gadget.  
    Observe that for a graph $G$ with \type{v}{*}-$2$-colouring and $a,b,c$ connected by a wheel gadget in $G$, the wheel gadget must be coloured by one of the three colourings given in 
    Figure~\ref{fig:v*wheel} up to permutation of the two colours. In particular, convince yourself that the central vertex of the wheel cannot have precisely one neighbour of either colour. 
\begin{figure}[hbtp]
  \hspace*{\fill}
  \begin{tikzpicture}[scale=0.6]
    \foreach \r/\c in {0/gb, 1/r, 2/b, 3/gb} \node[\c] (a\r) at ($(0,0)+(60:\r)$) {};
    \foreach \r/\c in {0/b, 1/r, 2/r} \node[\c] (b\r) at ($(1,0)+(60:\r)$) {};
    \foreach \r/\c in {0/r, 1/b} \node[\c] (c\r) at ($(2,0)+(60:\r)$) {};
    \node[gb] (d0) at (3,0) {};
    \draw[lightgray] (b0)--(a0)--(a1)  (a2)--(a3)--(b2)  (c0)--(d0)--(c1);
    \draw (b0)--(a1)--(a2)--(b2)--(c1)--(c0)--(b0);
    \foreach \n in {b0, a1, a2, b2, c1, c0} \draw (\n)--(b1);
  \end{tikzpicture}
  \hspace*{\fill}
  \begin{tikzpicture}[scale=0.6]
    \foreach \r/\c in {0/gb, 1/b, 2/b, 3/gb} \node[\c] (a\r) at ($(0,0)+(60:\r)$) {};
    \foreach \r/\c in {0/b, 1/r, 2/b} \node[\c] (b\r) at ($(1,0)+(60:\r)$) {};
    \foreach \r/\c in {0/b, 1/b} \node[\c] (c\r) at ($(2,0)+(60:\r)$) {};
    \node[gb] (d0) at (3,0) {};
    \draw[lightgray] (b0)--(a0)--(a1)  (a2)--(a3)--(b2)  (c0)--(d0)--(c1);
    \draw (b0)--(a1)--(a2)--(b2)--(c1)--(c0)--(b0);
    \foreach \n in {b0, a1, a2, b2, c1, c0} \draw (\n)--(b1);
  \end{tikzpicture}
  \hspace*{\fill}
  \begin{tikzpicture}[scale=0.6]
    \foreach \r/\c in {0/gb, 1/b, 2/r, 3/gb} \node[\c] (a\r) at ($(0,0)+(60:\r)$) {};
    \foreach \r/\c in {0/r, 1/r, 2/b} \node[\c] (b\r) at ($(1,0)+(60:\r)$) {};
    \foreach \r/\c in {0/b, 1/r} \node[\c] (c\r) at ($(2,0)+(60:\r)$) {};
    \node[gb] (d0) at (3,0) {};
    \draw[lightgray] (b0)--(a0)--(a1)  (a2)--(a3)--(b2)  (c0)--(d0)--(c1);
    \draw (b0)--(a1)--(a2)--(b2)--(c1)--(c0)--(b0);
    \foreach \n in {b0, a1, a2, b2, c1, c0} \draw (\n)--(b1);
  \end{tikzpicture}
  \hspace*{\fill}
  \caption{The \type{v}{*}-colourings of the $6$-wheel up to permutation of colours.}
  \label{fig:v*wheel}
\end{figure}

The tac of variable $x$ 
consists of one wheel gadget for each occurrence of the variable and two additional wheel gadgets with outlets added to two consecutive vertices of the corresponding $6$-wheels. Each copy of $K_4$
provides its cut vertex with an odd number of neighbours of the same colour
in each $2$-colouring of type \type{v}{*} by \cref{claim:K4v*} (note that we choose to depict $K_4$ with all vertices of the same colour, but equivalently the $K_4$ could also have two vertices of one colour and two of the other). This ensures that all $6$-wheels
receive the central $2$-colouring shown in Figure~\ref{fig:v*wheel} (or the central colouring with colours swapped). We observe that therefore the outlet and its two neighbours must receive the same colour. Hence, without loss of generality, we encode the corresponding literal to be true when its outlet plus the two neighbours have colour $1$ and false it the outlet and its two neighbours have colour $2$.
The $6$-wheels for occurrences of $x$ and $\overline{x}$ are connected by
 two negators consisting of $2K_3 \Join K_1$. By \cref{claim:K4v*} the two neighbours of the central vertex must have different colours, see
Figure~\ref{fig:v*tac}. Hence, the tac satisfies property \ref{property:tac}.

\begin{figure}[h]
  \centering
  \begin{tikzpicture}[scale=0.6]
    \node[b] (a0) at (4,7) {};
    \node[r] (r0) at (4,3) {};
    \foreach \n in {1,...,12} {
      \pgfmathtruncatemacro{\x}{\n+4}
      \pgfmathtruncatemacro{\m}{\n-1}
      \pgfmathtruncatemacro{\q}{\n/3}
      \pgfmathtruncatemacro{\r}{\n-3*\q}
      \ifcase\r{
          \node[b]  (a\n) at (\x,7)            {}; \draw (a\m)--(a\n);
          \node[b]  (b\n) at ($(a\n)+(120:1)$) {}; \draw (a\m)--(b\n)--(b\m);
          \node[b]  (c\n) at ($(a\n)+(120:2)$) {}; \draw (b\m)--(c\n)--(c\m);
          \ifnum\q=1
            \node[b] (d\n) at ($(a\n)+(120:3)$) {};
            \draw (c\m)--(d\n)--(c\n);
          \else
            \node[gb] (d\n) at ($(a\n)+(120:3)$) {};
            \draw[lightgray] (c\m)--(d\n)--(c\n);
          \fi
          \draw (a\n)--(b\n)--(c\n);
          \node[r]  (r\n) at (\x,3)            {}; \draw (r\m)--(r\n);
          \node[r]  (s\n) at ($(r\n)+(240:1)$) {}; \draw (r\m)--(s\n)--(s\m);
          \node[r]  (t\n) at ($(r\n)+(240:2)$) {}; \draw (s\m)--(t\n)--(t\m);
          \ifnum\q=4
            \node[r] (u\n) at ($(r\n)+(240:3)$) {};
            \draw (t\m)--(u\n)--(t\n);
          \else
            \node[gr] (u\n) at ($(r\n)+(240:3)$) {};
            \draw[lightgray] (t\m)--(u\n)--(t\n);
          \fi
          \draw (r\n)--(s\n)--(t\n);
        }
      \or{
          \node[b]  (a\n) at (\x,7)            {};
          \node[b]  (b\n) at ($(a\n)+(120:1)$) {};
          \draw (a\m)--(a\n)--(b\n)--(a\m);
          \node[r]  (r\n) at (\x,3)            {};
          \node[r]  (s\n) at ($(r\n)+(240:1)$) {};
          \draw (r\m)--(r\n)--(s\n)--(r\m);
        }
      \or{
          \node[b]  (a\n) at (\x,7)            {};
          \node[r]  (b\n) at ($(a\n)+(120:1)$) {};
          \node[b]  (c\n) at ($(a\n)+(120:2)$) {};
          \draw (a\m)--(a\n)--(b\n)--(c\n)--(b\m)--(b\n)--(a\m);
          \node[r]  (r\n) at (\x,3)            {};
          \node[b]  (s\n) at ($(r\n)+(240:1)$) {};
          \node[r]  (t\n) at ($(r\n)+(240:2)$) {};
          \draw (r\m)--(r\n)--(s\n)--(t\n)--(s\m)--(s\n)--(r\m);
         }
      \fi
    }
    \foreach \n in {0, 3, 6, 9, 12} {
      \pgfmathsetmacro{\x}{\n+4.0}
      \node[b] (f\n) at (\x,6.0) {}; \node[r] (w\n) at (\x,4.0) {};
      \pgfmathsetmacro{\x}{\n+3.5}
      \node[b] (e\n) at (\x,6.5) {}; \node[r] (v\n) at (\x,3.5) {};
      \pgfmathsetmacro{\x}{\n+4.5}
      \node[b] (g\n) at (\x,6.5) {}; \node[r] (x\n) at (\x,3.5) {};
      \draw (a\n)--(e\n)--(f\n)--(g\n)--(a\n)--(f\n)  (e\n)--(g\n);
      \draw (r\n)--(v\n)--(w\n)--(x\n)--(r\n)--(w\n)  (v\n)--(x\n);
    }
    \foreach \n/\x in {k/1, l/2, n/3}    \node[b] (n\n) at (\x,5) {};
    \foreach \n/\x in {q/17, r/18, s/19} \node[r] (n\n) at (\x,5) {};
    \foreach \m/\y in {l/4.5, u/5.5} {
      \foreach \n/\x in {i/1.5,  j/2.5}  \node[b] (\m\n) at (\x,\y) {};
      \foreach \n/\x in {o/17.5, p/18.5} \node[r] (\m\n) at (\x,\y) {};      
    }
    \draw (nk)--(nl)--(ui)--(nk)--(li)--(nl)  (li)--(ui);
    \draw (nn)--(nl)--(uj)--(nn)--(lj)--(nl)  (lj)--(uj);
    \draw (nq)--(nr)--(uo)--(nq)--(lo)--(nr)  (lo)--(uo);
    \draw (ns)--(nr)--(up)--(ns)--(lp)--(nr)  (lp)--(up);
    \draw[rounded corners=7mm] (r0)--  (2,3)    --(nl);
    \draw[rounded corners=7mm] (d3)--  (2,9.598)--(nl);
    \draw[rounded corners=7mm] (a12)--(18,7)    --(nr);
    \draw[rounded corners=7mm] (u12)--(18,0.402)--(nr);
  \end{tikzpicture}
  \caption{A typical $2$-colouring of type \type{v}{*} of a tac representing
    a variable that appears trice negatively and trice positively. The second colouring is obtained by swapping colours.}
  \label{fig:v*tac}
\end{figure}

The stc is depicted in Figure~\ref{fig:v*stc}. Note that we encode that the literal corresponding to an outlet is set to true by the two outlets and their joint neighbour having colour $1$ and false by the two outlets and their joint neighbour having colour $2$. It is easy to verify that the stc satisfies property \ref{property:stc}.

\begin{figure}[h]
  \hspace*{\fill}
  \begin{tikzpicture}
        \node[b] (x) at (0,3) {};
        \node[gb] (x1) at (0.3,3.5) {};
        \node[gb] (x2) at (-0.3,3.5) {};
        \node[r] (y) at (1,0.5) {};
        \node[gr] (y1) at (1.5,0.8) {};
        \node[gr] (y2) at (1.5,0.2) {};
        \node[r] (z) at (-1,0.5) {};
        \node[gr] (z1) at (-1.5,0.8) {};
        \node[gr] (z2) at (-1.5,0.2) {};
        \node[b] (a) at (0,0) {};
        \node[r] (b) at (0,1) {};
        \node[b] (c) at (0,2) {};
        \node[b] (d1) at (0.3,2.3) {};
        \node[b] (d2) at (0.3,1.7) {};
        \node[b] (d3) at (0.6,2) {};
        \node[b] (e1) at (-0.3,2.3) {};
        \node[b] (e2) at (-0.3,1.7) {};
        \node[b] (e3) at (-0.6,2) {};
        \node[r] (f) at (1,1.5) {};
    \draw[lightgray] (x1)--(x)--(x2)--(x1);
    \draw[lightgray] (y1)--(y)--(y2)--(y1);
    \draw[lightgray] (z1)--(z)--(z2)--(z1);
    \draw (x)--(c)--(d1)--(d2)--(d3)--(d1)(d3)--(c)--(d2)(c)--(e1)--(e2)--(e3)--(e1)(e3)--(c)--(e2)(c)--(b)--(f)(b)--(a)--(y)--(b)--(z)--(a);
  \end{tikzpicture}
  \hspace*{\fill}
  \begin{tikzpicture}
        \node[b] (x) at (0,3) {};
        \node[gb] (x1) at (0.3,3.5) {};
        \node[gb] (x2) at (-0.3,3.5) {};
        \node[r] (y) at (1,0.5) {};
        \node[gr] (y1) at (1.5,0.8) {};
        \node[gr] (y2) at (1.5,0.2) {};
        \node[b] (z) at (-1,0.5) {};
        \node[gb] (z1) at (-1.5,0.8) {};
        \node[gb] (z2) at (-1.5,0.2) {};
        \node[b] (a) at (0,0) {};
        \node[r] (b) at (0,1) {};
        \node[b] (c) at (0,2) {};
        \node[b] (d1) at (0.3,2.3) {};
        \node[b] (d2) at (0.3,1.7) {};
        \node[b] (d3) at (0.6,2) {};
        \node[b] (e1) at (-0.3,2.3) {};
        \node[b] (e2) at (-0.3,1.7) {};
        \node[b] (e3) at (-0.6,2) {};
        \node[b] (f) at (1,1.5) {};
    \draw[lightgray] (x1)--(x)--(x2)--(x1);
    \draw[lightgray] (y1)--(y)--(y2)--(y1);
    \draw[lightgray] (z1)--(z)--(z2)--(z1);
    \draw (x)--(c)--(d1)--(d2)--(d3)--(d1)(d3)--(c)--(d2)(c)--(e1)--(e2)--(e3)--(e1)(e3)--(c)--(e2)(c)--(b)--(f)(b)--(a)--(y)--(b)--(z)--(a);
  \end{tikzpicture}
  \hspace*{\fill}
  \begin{tikzpicture}
        \node[b] (x) at (0,3) {};
        \node[gb] (x1) at (0.3,3.5) {};
        \node[gb] (x2) at (-0.3,3.5) {};
        \node[b] (y) at (1,0.5) {};
        \node[gb] (y1) at (1.5,0.8) {};
        \node[gb] (y2) at (1.5,0.2) {};
        \node[r] (z) at (-1,0.5) {};
        \node[gr] (z1) at (-1.5,0.8) {};
        \node[gr] (z2) at (-1.5,0.2) {};
        \node[b] (a) at (0,0) {};
        \node[r] (b) at (0,1) {};
        \node[b] (c) at (0,2) {};
        \node[b] (d1) at (0.3,2.3) {};
        \node[b] (d2) at (0.3,1.7) {};
        \node[b] (d3) at (0.6,2) {};
        \node[b] (e1) at (-0.3,2.3) {};
        \node[b] (e2) at (-0.3,1.7) {};
        \node[b] (e3) at (-0.6,2) {};
        \node[b] (f) at (1,1.5) {};
    \draw[lightgray] (x1)--(x)--(x2)--(x1);
    \draw[lightgray] (y1)--(y)--(y2)--(y1);
    \draw[lightgray] (z1)--(z)--(z2)--(z1);
    \draw (x)--(c)--(d1)--(d2)--(d3)--(d1)(d3)--(c)--(d2)(c)--(e1)--(e2)--(e3)--(e1)(e3)--(c)--(e2)(c)--(b)--(f)(b)--(a)--(y)--(b)--(z)--(a);
  \end{tikzpicture}
  \hspace*{\fill}
  \caption{Three  possible type \type{v}{*} $2$-colourings of the stc of a clause $c$ corresponding to different assignments (the other three assignments are obtained by swapping colours).}
  \label{fig:v*stc}
\end{figure}

Finally, the tac and stc are compatible and the $\sigma$-$2$-colourings encoding that a literal is true/false of the tac are compatible with valid $\sigma$-$2$-colourings encoding that the literal is true/false. Since property \ref{property:combiningSTCandTAC} is also satisfied \type{v}{*}-colouring is \NP-complete by \cref{lem:reductionFromNAE3SAT}.
\end{proof}

\subsubsection{\Type1?, \Type1!, \Type{v}? and \Type{v}! for 2 colours.}

Recall that $c:V(G)\rightarrow [q]$ is a \type1?-$q$-colouring of $G$  if for every vertex $v$ we have $|N_G(v)\cap c^{-1}(c(v))|$ is odd and  $|N_G(v)\cap c^{-1}(i)|<2$ for every $i\not=c(v)$. We show the following.

\begin{theorem}\label{thm:21?}\label{link2:1?}
    For $q=2$ it is \NP-complete to decide whether a graph admits
    a \type1?-$q$-colouring.
\end{theorem}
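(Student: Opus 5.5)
We reduce from \NSAT via \cref{lem:reductionFromNAE3SAT}. The construction will be a cubic graph $H_\varphi$. On such a graph a \type1?-$2$-colouring has a rigid local structure: every vertex has an odd number of neighbours of its own colour, so either $1$ or $3$. If it has exactly one, then it has $2$ neighbours of the other colour, which violates \type?. Hence in a cubic graph every vertex has all three neighbours in its own colour. By connectivity, each component is then monochromatic, and any colouring that is constant on components is valid. That makes the problem trivial on cubic graphs, so cubic graphs are the wrong target.

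The graph must therefore have vertices of degree $\equiv 0 \pmod 2$ or of larger degree. We note the degree-based constraints in general. A vertex of even degree $d$ has odd same-colour count $s$ with $d-s\le 1$, so $s=d-1$ and exactly one neighbour has the opposite colour. A vertex of odd degree $d$ has $s=d$, so its whole neighbourhood has its colour. Thus, if we use only degrees $2$ and $4$, a \type1?-$2$-colouring is exactly a colouring in which every vertex has exactly one neighbour of the opposite colour. Equivalently, the bichromatic edges form a perfect matching $M$, and contracting the components of $G-M$ gives a graph in which these components can be properly $2$-coloured along $M$. This parallels the perfect-matching characterisation used for \type2? in \cref{link2:2?}.

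\textbf{Gadgets.} We build a $4$-regular graph (or a $\{2,4\}$-degree graph) and use true twins and triangles to force structure.
\begin{itemize}
\item \textbf{Key claim.} In a $K_4$ minus an edge, or in two adjacent true twins of degree $4$ sharing two common neighbours, all these vertices receive the same colour. If twins $u,v$ differed, $u$ would already have $v$ as its unique neighbour of the other colour, so $w\in N(u)\cap N(v)$ would have to have $u$'s colour. But $v$ sees $w$ of the other colour as well as $u$, which is too many.
\item \textbf{Tac.} A ring of such blocks, in the spirit of the bead construction of \cref{link2:22}. Consecutive blocks are joined so that the single bichromatic edge of each boundary vertex alternates along the ring. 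This leaves exactly two colourings, which are swapped by exchanging colours. One outlet vertex is attached per occurrence, and the occurrences of $x$ and $\overline{x}$ sit on blocks of opposite parity.
\item \textbf{Stc.} We take three outlet-receiving vertices that each have a designated private "other colour" neighbour. We arrange a central block of degree-$4$ vertices such that, if all three literal sides had the same colour, some central vertex would need two neighbours of the opposite colour, or none. Parity then forbids the configuration.
\end{itemize}

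We then verify \ref{property:tac}, \ref{property:stc} and \ref{property:combiningSTCandTAC} by exhibiting the six valid stc colourings and checking compatibility at the glued outlets.

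\textbf{Main obstacle.} The hard part is the stc: we need a small gadget that rules out the all-equal pattern while allowing all six not-all-equal patterns. We would first try a $4$-regular analogue of the Petersen-based stc, namely a structure whose components of $G-M$ cannot be consistently $2$-coloured when all outlets agree. We would check this by case analysis on which edge at each central vertex is the bichromatic one.

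Finally, \type1!, \type{v}? and \type{v}! follow as corollaries. On $\{2,4\}$-degree graphs whose vertices always have a same-colour neighbour, these types coincide with \type1?, since \type1 and \type{v} are then equivalent and \type? and \type! are equivalent.
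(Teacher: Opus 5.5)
Your local analysis is correct and matches the paper's key observation. With all degrees in $\{2,4\}$, a \type{1}{?}-$2$-colouring is exactly a $2$-colouring in which every vertex has exactly one neighbour of the other colour. You are also right that cubic graphs are useless here.

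The gap is that the reduction itself is never carried out. The tac is described only as ``a ring of such blocks'' whose bichromatic edges ``alternate''. For the stc you explicitly defer the construction (``we would first try \dots''). Property $(\Pi_{\operatorname{stc}})$, together with realisability of all six not-all-equal patterns, is the entire content of an \NSAT reduction, so what you have is a plan rather than a proof.

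The one concrete mechanism you suggest also does not work on its own. A degree-$4$ central vertex adjacent to three literal vertices of one colour can simply take that colour and use its fourth edge as its unique bichromatic edge. So the all-equal pattern can only be excluded by a global argument about the whole gadget, which you do not give.

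Your ``adjacent true twins of degree $4$'' block is unusable. No vertex may have two neighbours of the other colour, so every triangle is monochromatic. All of $N[u]$ lies in triangles through $uv$, hence is monochromatic, and $u$ has no bichromatic edge. Any graph containing this block is therefore a NO-instance; only the $K_4-e$ version of your claim survives.

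The paper fills the gap with two simple gadgets.

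\emph{Tac.} It is a cycle $C_{4t}$ with two pendants on every second vertex. Each degree-$2$ cycle vertex has one neighbour of each colour. A counting argument then forces every degree-$4$ cycle vertex to have its bichromatic edge on the cycle. Hence its two pendants share its colour, and consecutive outlets alternate.

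\emph{Stc.} It is the hypercube $Q_4$ with three edges subdivided. The subdivision vertices are glued to the degree-$4$ cycle vertices of the tacs.

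The paper checks $(\Pi_{\operatorname{stc}})$ exhaustively, but there is a clean reason it holds. Suppose every vertex of $Q_4$ has exactly one neighbour of the other colour. Then both colour classes have size $8$ and induce cubic bipartite graphs. These contain no $K_{3,3}$, so each class induces a $Q_3$, which one checks must be a coordinate subcube. Thus the valid colourings are exactly the $8$ coordinate colourings. The three subdivided edges in the paper are parallel, so they rule out exactly one coordinate direction. The remaining three directions give the three not-all-equal patterns up to swapping colours, and never the all-equal one. Some rigidity argument of this kind is what your sketch is missing.
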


\begin{proof}
    We provide a reduction from \NSAT. For an instance $\varphi$ of \NSAT we will construct a graph $H_\varphi$ by specifying a stc and tac for which every vertex has degree $2$ or $4$. 

    For variable $x$ that appears $t$ times (and $\lnot x$ appears $t$ times), the tac consists of a cycle $C_x$ of length $4t$ in which two pendants are attached to every second vertex of $C_x$ (see \cref{fig:1?tac} left). The two pendants attached to a vertex of $C_x$ form an outlet were along the cycles outlets for occurrences of $x$ and $\overline{x}$ alternate. 
    \begin{figure}[hbtp]
    \hspace*{\fill}
    \begin{tikzpicture}
    \foreach \n in {1,2,5,6,9,10,13,14} {
      \pgfmathsetmacro{\a}{(\n-3)*360/16}
      \node[r] (\n) at (\a:1) {};
    }
    \foreach \n in {3,4,7,8,11,12,15,16} {
      \pgfmathsetmacro{\a}{(\n-3)*360/16}
      \node[b] (\n) at (\a:1) {};
    }
    \foreach[count=\m] \n in {2,...,16} {
      \draw (\n) -- (\m);
    }
    \draw (16) -- (1);
    \foreach \n in {4,8,12,16} {
      \pgfmathsetmacro{\a}{(\n-3)*360/16+7}
      \pgfmathsetmacro{\b}{(\n-3)*360/16-7}
      \node[gb] (g\n) at (\a:1.5) {};
      \node[gb] (gg\n) at (\b:1.5) {};
      \draw[lightgray] (g\n)--(\n);
      \draw[lightgray] (gg\n)--(\n);
    }
    \foreach \n in {2,6,10,14} {
      \pgfmathsetmacro{\a}{(\n-3)*360/16+7}
      \pgfmathsetmacro{\b}{(\n-3)*360/16-7}
      \node[gr] (g\n) at (\a:1.5) {};
      \node[gr] (gg\n) at (\b:1.5) {};
      \draw[lightgray] (g\n)--(\n);
      \draw[lightgray] (gg\n)--(\n);
    }
  \end{tikzpicture}
  \hspace*{\fill}
  \begin{tikzpicture}
    \foreach \n in {1,2,5,6,9,10,13,14} {
      \pgfmathsetmacro{\a}{(\n-3)*360/16}
      \node[b] (\n) at (\a:1) {};
    }
    \foreach \n in {3,4,7,8,11,12,15,16} {
      \pgfmathsetmacro{\a}{(\n-3)*360/16}
      \node[r] (\n) at (\a:1) {};
    }
    \foreach[count=\m] \n in {2,...,16} {
      \draw (\n) -- (\m);
    }
    \draw (16) -- (1);
    \foreach \n in {4,8,12,16} {
      \pgfmathsetmacro{\a}{(\n-3)*360/16+7}
      \pgfmathsetmacro{\b}{(\n-3)*360/16-7}
      \node[gr] (g\n) at (\a:1.5) {};
      \node[gr] (gg\n) at (\b:1.5) {};
      \draw[lightgray] (g\n)--(\n);
      \draw[lightgray] (gg\n)--(\n);
    }
    \foreach \n in {2,6,10,14} {
      \pgfmathsetmacro{\a}{(\n-3)*360/16+7}
      \pgfmathsetmacro{\b}{(\n-3)*360/16-7}
      \node[gb] (g\n) at (\a:1.5) {};
      \node[gb] (gg\n) at (\b:1.5) {};
      \draw[lightgray] (g\n)--(\n);
      \draw[lightgray] (gg\n)--(\n);
    }
  \end{tikzpicture}
  \hspace*{\fill}
  \caption{The two colourings of the tac of a variable that appears four times positively and four times negatively.}
  \label{fig:1?tac}
\end{figure}
    Observe that every second vertex of $C_x$ has degree $2$ which implies that these vertices must have one neighbour of their own colour and one neighbour of the other colour by definition of \type1?-colouring. Therefore, the degree $2$ vertices force that restricted to the cycle each colour class induces a matching. This in particular implies that the cycle vertices with pendants attached have a neighbour coloured different to themselves on the cycle. Hence, both pendants must have the same colour as the vertex they are attached to. Furthermore, all outlets representing an occurrence of $x$ receive the same colour while all outlets representing an occurrence of $\overline{x}$ receive the other colour. Hence, we encode the corresponding literal to be true by both outlet vertices and the vertex they are attached to having colour $1$ and false by both outlet vertices and the vertex they are attached to having colour $2$ (this choice is arbitrary). 
    This ensures that property \ref{property:tac} holds.

    The stc consists of a $4$-dimensional hypercube with three edges subdivided (see \cref{fig:1?Stc} for an illustration). Here, the three subdivision vertices are the three outlets of the stc and as before the outlet plus its two neighbours having colour 1 encodes true and the outlet plus its two neighbours having colour 2 encodes false. 
   \begin{figure}[hbtp]
   \centering
  \begin{tikzpicture}[scale=0.26]
    \def \r {2.5cm} 
    \def \R {5cm} 
    \def \x {2.7cm} 
    \def \y {2cm} 
    \node[r]  (v1) at (22.5:\r) {};
    \node[b]  (v2) at (67.5:\r) {};
    \node[b]  (v3) at (112.5:\r) {};
    \node[r]  (v4) at (157.5:\r) {};
    \node[b]  (v5) at (202.5:\r) {};
    \node[r]  (v6) at (247.5:\r) {};
    \node[r]  (v7) at (292.5:\r) {};
    \node[b]  (v8) at (337.5:\r) {}; 
    \node[b]  (u1) at (22.5:\R) {};
    \node[b]  (u2) at (67.5:\R) {};
    \node[b]  (u3) at (112.5:\R) {};
    \node[b]  (u4) at (157.5:\R) {};
    \node[r]  (u5) at (202.5:\R) {};
    \node[r]  (u6) at (247.5:\R) {};
    \node[r]  (u7) at (292.5:\R) {};
    \node[r]  (u8) at (337.5:\R) {};
    \node[gb]  (x1) at (0,1.5*\R) {};
    \node[gb]  (x2) at (0,-1.5*\R) {};
    \node[gr]  (x3) at (-1*\R,-1.1*\R) {};
    \draw (u1)--(u2)--(x1)--(u3)--(u4)--(u5)--(u6)--(u7)--(u8)--(u1);
    \draw (u1)--(v2)--(u3)(u2)--(v3)--(u4)(u3)--(v4)--(u5)(u4)--(v5)--(u6)(u5)--(x3)--(v6)--(u7)(u6)--(v7)--(u8)(u7)--(v8)--(u1)(u8)--(v1)--(u2);
    \draw (v1)--(v4)--(v7)--(v2)--(v5)--(x2)--(v8)--(v3)--(v6)--(v1);

    \begin{scope}[xshift=15cm]
    \node[b]  (v1) at (22.5:\r) {};
    \node[r]  (v2) at (67.5:\r) {};
    \node[b]  (v3) at (112.5:\r) {};
    \node[b]  (v4) at (157.5:\r) {};
    \node[r]  (v5) at (202.5:\r) {};
    \node[b]  (v6) at (247.5:\r) {};
    \node[r]  (v7) at (292.5:\r) {};
    \node[r]  (v8) at (337.5:\r) {}; 
    \node[r]  (u1) at (22.5:\R) {};
    \node[b]  (u2) at (67.5:\R) {};
    \node[b]  (u3) at (112.5:\R) {};
    \node[b]  (u4) at (157.5:\R) {};
    \node[b]  (u5) at (202.5:\R) {};
    \node[r]  (u6) at (247.5:\R) {};
    \node[r]  (u7) at (292.5:\R) {};
    \node[r]  (u8) at (337.5:\R) {};
    \node[gb]  (x1) at (0,1.5*\R) {};
    \node[gr]  (x2) at (0,-1.5*\R) {};
    \node[gb]  (x3) at (-1*\R,-1.1*\R) {};
    \draw (u1)--(u2)--(x1)--(u3)--(u4)--(u5)--(u6)--(u7)--(u8)--(u1);
    \draw (u1)--(v2)--(u3)(u2)--(v3)--(u4)(u3)--(v4)--(u5)(u4)--(v5)--(u6)(u5)--(x3)--(v6)--(u7)(u6)--(v7)--(u8)(u7)--(v8)--(u1)(u8)--(v1)--(u2);
    \draw (v1)--(v4)--(v7)--(v2)--(v5)--(x2)--(v8)--(v3)--(v6)--(v1);
    \end{scope}
    \begin{scope}[xshift=30cm]
    \node[b]  (v1) at (22.5:\r) {};
    \node[b]  (v2) at (67.5:\r) {};
    \node[r]  (v3) at (112.5:\r) {};
    \node[b]  (v4) at (157.5:\r) {};
    \node[r]  (v5) at (202.5:\r) {};
    \node[r]  (v6) at (247.5:\r) {};
    \node[b]  (v7) at (292.5:\r) {};
    \node[r]  (v8) at (337.5:\r) {}; 
    \node[b]  (u1) at (22.5:\R) {};
    \node[b]  (u2) at (67.5:\R) {};
    \node[b]  (u3) at (112.5:\R) {};
    \node[r]  (u4) at (157.5:\R) {};
    \node[r]  (u5) at (202.5:\R) {};
    \node[r]  (u6) at (247.5:\R) {};
    \node[r]  (u7) at (292.5:\R) {};
    \node[b]  (u8) at (337.5:\R) {};
    \node[gb]  (x1) at (0,1.5*\R) {};
    \node[gr]  (x2) at (0,-1.5*\R) {};
    \node[gr]  (x3) at (-1*\R,-1.1*\R) {};
    \draw (u1)--(u2)--(x1)--(u3)--(u4)--(u5)--(u6)--(u7)--(u8)--(u1);
    \draw (u1)--(v2)--(u3)(u2)--(v3)--(u4)(u3)--(v4)--(u5)(u4)--(v5)--(u6)(u5)--(x3)--(v6)--(u7)(u6)--(v7)--(u8)(u7)--(v8)--(u1)(u8)--(v1)--(u2);
    \draw (v1)--(v4)--(v7)--(v2)--(v5)--(x2)--(v8)--(v3)--(v6)--(v1);
    \end{scope}

  \end{tikzpicture}
  
  \caption{
  The three colourings of the stc for which the top outlet is blue (we receive the remaining three by flipping colours). 
  }
  \label{fig:1?Stc}
\end{figure}
To verify property \ref{property:stc}, observe that all non-outlet vertices of the stc have degree 4. By definition of \type1?-colouring this implies the following.
\begin{claim}\label{claim:1?neighbourhoodInStc}
    In any \type1?, \type1!, \type{v}? or \type{v}!-colouring of the stc, every non-outlet vertex of the stc has exactly three neighbours of its own colour and one neighbour of the other colour. 
\end{claim}
Considering only valid \type1?-colouring of the stc (exactly those for which outlets and their neighbours have the same colour) and then applying \cref{claim:1?neighbourhoodInStc} exhaustively, it is easy to verify that property \cref{property:stc} holds. Also, observe that the stc and tac are compatible and the truth-assignment colourings of the tac and stc are compatible implying property \ref{property:combiningSTCandTAC}. We obtain that \type1?-$2$-colouring are \NP-complete by \cref{lem:reductionFromNAE3SAT}.
\end{proof} 
\begin{corollary}\label{cor2:1!}\label{link2:1!}\label{link2:v?}\label{link2:v!}
    For $q=2$ it is \NP-complete to decide whether a graph admits
    a  \type1!, \type{v}? or a \type{v}!-$q$-colouring.
\end{corollary}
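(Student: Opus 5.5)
The plan is to reuse the graph $H_\varphi$ from the proof of \cref{thm:21?} unchanged. We show that on this graph the \type1!, \type{v}? and \type{v}!-$2$-colourings are exactly the \type1?-$2$-colourings. Since $\varphi$ is a yes-instance of \NSAT if and only if $H_\varphi$ has a \type1?-$2$-colouring, the same then holds for the three new types.

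In $H_\varphi$ every vertex has degree $2$ or $4$. This is because tac cycle vertices with pendants and all non-outlet stc vertices have degree $4$. The remaining vertices have degree $2$: the other cycle vertices of $C_x$, and the outlet vertices, which after gluing each have one neighbour in the tac and one in the stc (or two neighbours on the subdivided edge).

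The key step is a local parity argument at a vertex $v$ of degree $d\in\{2,4\}$. Let $s$ be the number of neighbours of $v$ with its own colour, so $d-s$ neighbours have the other colour.
\begin{itemize}
\item Under \type1?, $s$ is odd and $d-s\le 1$. Since $d$ is even, $d-s$ is odd, so $d-s=1$.
\item Under \type{v}? we additionally allow $s=0$. This gives $d-s=d\ge 2$, which violates $\typesymbol{?}$. So again $s$ is odd and $d-s=1$.
\item Under \type1! and \type{v}!, $d-s=1$ is required directly, and then $s=d-1$ is odd and nonzero, so the $\typesymbol{1}$ or $\typesymbol{v}$ condition holds automatically.
\end{itemize}
Hence on graphs with all degrees even, all four conditions say precisely that every vertex has exactly one neighbour of the other colour. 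This is essentially \cref{claim:1?neighbourhoodInStc}, extended to every vertex.

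Therefore the four colouring notions coincide on $H_\varphi$, and the reduction of \cref{thm:21?} also establishes \NP-completeness for \type1!, \type{v}? and \type{v}!. Membership in \NP is trivial. The only point needing care is confirming that no vertex of $H_\varphi$ has odd degree after gluing. I expect this to hold by the construction, since the proof of \cref{thm:21?} states explicitly that all degrees are $2$ or $4$.
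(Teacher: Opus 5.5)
Your proposal is correct and matches the paper's argument: reuse $H_\varphi$ from \cref{thm:21?} and observe that on graphs with all degrees in $\{2,4\}$ the \type{1}{?}, \type{1}{!}, \type{v}{?} and \type{v}{!}-$2$-colourings coincide. Your parity computation, showing that each of them forces every vertex to have exactly one neighbour of the other colour, is precisely the ``easy to verify'' step the paper leaves implicit.

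One small slip in your degree bookkeeping: under the paper's gluing, each stc subdivision vertex is identified with a tac cycle vertex, and the two tac pendants are identified with its hypercube neighbours. These glued vertices therefore have degree $4$, not $2$, which does not affect your conclusion that all degrees lie in $\{2,4\}$.
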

\begin{proof}
    Recall that the graph $H_\varphi$ constructed in \cref{thm:21?} has the property that every vertex has degree $2$ or $4$. It is easy to verify that on such graphs \type1?-$2$-colouring, \type1!-$2$-colouring, \type{v}?-$2$-colouring and \type{v}!-$2$-colouring coincide. Hence, the reduction from \cref{thm:21?} shows that \type1!, \type{v}? and a \type{v}!-$2$-colouring are \NP-complete.
\end{proof}

\subsubsection{\Type?1 for \boldmath$2$ colours.}
First recall that $c:V(G)\rightarrow [q]$ is a \type{?}{1}-$q$-colouring of $G$  if for every vertex $v$ we have $|N_G(v)\cap c^{-1}(c(v))|<2$ and  $|N_G(v)\cap c^{-1}(i)|$ is odd for every $i\not=c(v)$. We show the following.

\begin{theorem}\label{link2:?1}
    For $q=2$ it is \NP-complete to decide whether a graph admits
    a \type?1-$q$-colouring.
\end{theorem}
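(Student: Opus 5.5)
The plan is to obtain the result as a direct consequence of \cref{link2:!1} rather than build new gadgets. The key observation is that on $4$-regular graphs, \type?1-$2$-colourings and \type!1-$2$-colourings coincide. The reduction from \NSAT in the proof of \cref{link2:!1} produces a $4$-regular graph $H_\varphi$, so the same reduction will show \NP-hardness of \type?1-$2$-colouring. Membership in \NP\ is clear, since the constraints at every vertex can be checked in polynomial time.

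First I will prove the coincidence. Let $G$ be $4$-regular and $c\colon V(G)\rightarrow[2]$, and fix a vertex $v$. Write $s=|N(v)\cap c^{-1}(c(v))|$ and $o=|N(v)\cap c^{-1}(i)|$ for the other colour $i$, so that $s+o=4$.
\begin{itemize}
\item Under \type?1 we need $s\in\{0,1\}$ and $o$ odd. The case $s=0$ forces $o=4$, which is even, so it is excluded. Hence $s=1$ and $o=3$.
\item Under \type!1 we need $s=1$ and $o$ odd, which again gives exactly $s=1$ and $o=3$.
\end{itemize}
The two local conditions are therefore identical at every vertex, and the sets of valid colourings agree.

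Second, I will confirm that the graph $H_\varphi$ from the proof of \cref{link2:!1} really is $4$-regular after all gluings. That proof asserts this, but I would double-check it on the gadgets.
\begin{itemize}
\item \emph{Tac.} Each copy of the true-twin gadget of \cref{fig:!1tt}, once chained cyclically and given its outlet vertex, has all internal vertices of degree $4$.
\item \emph{Stc.} In the two $K_4$'s, $A^c$ and $B^c$, each vertex has degree $3$ inside its $K_4$. The two vertices joined by the linking edge reach degree $4$ through that edge, and the remaining six vertices reach degree $4$ through their attached outlet vertex.
\item \emph{Gluing.} The gluing identifies outlets with neighbourhoods of outlets, so degrees computed inside each gadget are preserved.
\end{itemize}

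The main (mild) obstacle is exactly this degree bookkeeping. If some vertex were not of degree $4$, I would instead check directly that the tac and stc from \cref{link2:!1} satisfy \ref{property:tac}, \ref{property:stc} and \ref{property:combiningSTCandTAC} for \type?1 as well. The only local arguments that would need re-verifying are the true-twin claim (\cref{claim:!1trueTwins}) and the requirement that each $K_4$ in the stc has two vertices of each colour. Both rely only on the conditions $s=1$ and $o=3$, which I expect to hold in either case. With $4$-regularity confirmed, \NP-completeness of \type?1-$2$-colouring follows immediately from \cref{link2:!1}.
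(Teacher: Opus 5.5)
Your proof is correct, but it takes a different route from the paper.

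\textbf{The paper's route.} The paper gives a dedicated reduction from \NSAT for \type{?}{1}. It builds a new tac and stc from chains of $K_4$'s with pendant vertices. These are driven by one observation: in any \type{?}{1}-$2$-colouring, every induced $K_4$ is split two--two, and every neighbour outside the $K_4$ receives the opposite colour. The resulting graph has degree-$1$ vertices, where \type{?}{1} forces $s=0$, so it is not regular.

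\textbf{Your route.} You observe that on $4$-regular graphs both local conditions collapse to $s=1$, $o=3$. You then reuse the graph $H_\varphi$ of \cref{link2:!1} unchanged. This is the same device the paper uses in its corollaries for cubic graphs and for graphs with degrees in $\{2,4\}$.

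\textbf{Comparison.} Your route needs no new gadgets. It also yields the stronger statement that \type{?}{1}-$2$-colouring is \NP-complete already on $4$-regular graphs. The price is that it inherits any weakness of the \type{!}{1} construction, whereas the paper's proof is self-contained.

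\textbf{A small correction to your degree check.} Chaining the tac blocks cyclically and attaching outlets does not by itself give every vertex degree $4$. In each block, the vertex adjacent to the previous block gets only three neighbours this way. Its fourth neighbour comes from the chords joining opposite blocks in \cref{fig:!1tac}, so you should cite those chords explicitly. With them included, $H_\varphi$ is indeed $4$-regular, and your fallback plan is unnecessary.
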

\begin{proof}
    We provide a reduction from \NSAT. It is straightforward to observe the following essential property for our reduction.
    \begin{claim}\label{claim:K4?1}
        Let $G$ be a graph and $c_{\type?1}:V(G)\rightarrow \{1,2\}$ a \type?1-$2$-colouring of $G$.  In any induced $K_4$ of $G$, each colour appears twice. Furthermore, for any vertex $u$ in an induced $K_4$ and neighbour $v$ of $u$ not contained in the same $K_4$ it holds that $c_{\type?1}(u)\neq c_{\type?1}(v)$.
    \end{claim}
    Presume $\varphi$ is an instance of \NSAT. The tac of a variable $x$ appearing in $\varphi$ is given in \cref{fig:?1TAC}. True is encoded by the two vertices of an outlet having the same colour and their neighbours having the other colour and false is encoded by the two vertices of an outlet having different colours and their neighbours having the opposite different colours.
    \begin{figure}[hbtp]
    \centering
    \begin{tikzpicture}[scale=0.39]
        \node[r,label={[label distance=0.05cm]180:$v_x$}](V1) at (0,0){};
        \node[r](V2) at (3,0){};
        \node[b](PV2) at (4,-1){};
        \draw (V2)--(PV2);
        \node[b,label={[label distance=0.05cm]180:$u_x$}](V3) at (0,3){};
        \node[r](PV3) at (0,4.5){};
        \draw (V3)--(PV3);
        \node[b](V4) at (3,3){};
        \node[r](v1) at (1.5,2.1){};
        \node[r](v2) at (0.9,1.5){};
        \node[b](v3) at (1.5,0.9){};
        \node[b](v4) at (2.1,1.5){};
        \draw (v1)--(v2)--(v3)--(v4)--(v1)--(v3)--(v2)--(v4)(v1)--(V3)--(v2)(v3)--(V2)--(v4);
        \draw (V1)--(V2)--(V4)--(V3)--(V1)to[bend left=45](V4)(V2)to[bend left=45](V3);
        \node[b](U1) at (5,0){};
        \node[b](U2) at (8,0){};
        \node[r](U3) at (5,3){};
        \node[r](U4) at (8,3){};
        \draw (V2)--(U1)--(U2)--(U3)--(U4)--(U1)--(U3)--(V4)(U2)--(U4);
        \node[r](W1) at (10,0){};
        \node[r](W2) at (13,0){};
        \node[b](W3) at (10,3){};
        \node[b](W4) at (13,3){};
        \draw (U2)--(W1)--(W2)--(W3)--(W4)--(W1)--(W3)--(U4)(W2)--(W4);
        \node[b](X1) at (15,0){};
        \node[b](X2) at (18,0){};
        \node[r](X3) at (15,3){};
        \node[r](X4) at (18,3){};
        \draw (W2)--(X1)--(X2)--(X3)--(X4)--(X1)--(X3)--(W4)(X2)--(X4);
        \node[r](Y1) at (20,0){};
        \node[r](Y2) at (23,0){};
        \node[b](Y3) at (20,3){};
        \node[b](Y4) at (23,3){};
        \draw (X2)--(Y1)--(Y2)--(Y3)--(Y4)--(Y1)--(Y3)--(X4)(Y2)--(Y4);
        \node[b](Z1) at (25,0){};
        \node[b](Z2) at (28,0){};
        \node[r](Z3) at (25,3){};
        \node[r](Z4) at (28,3){};
        \draw (Y2)--(Z1)--(Z2)--(Z3)--(Z4)--(Z1)--(Z3)--(Y4)(Z2)--(Z4);
        \node[r](A1) at (30,0){};
        \node[r](A2) at (33,0){};
        \node[b](PA2) at (33,-1.5){};
        \draw (A2)--(PA2);
        \node[b](A3) at (30,3){};
        \node[b](A4) at (33,3){};
        \node[r](PA4) at (33,4.5){};
        \draw (A4)--(PA4);
        \draw (Z2)--(A1)--(A2)--(A3)--(A4)--(A1)--(A3)--(Z4)(A2)--(A4);
        \node[gb](O1)at (0,-3){};
        \node[gb](O2)at (3,-3){};
        \draw (O1)--(V1)(O2)--(V2);
        \node[gb](O3)at (10,-3){};
        \node[gb](O4)at (13,-3){};
        \draw (O3)--(W1)(O4)--(W2);
        \node[b](PW1) at (9,-1){};
        \draw (W1)--(PW1);
        \node[b](PW2) at (14,-1){};
        \draw (W2)--(PW2);
        \node[gb](O5)at (20,-3){};
        \node[gr](O6)at (23,-3){};
        \draw (O5)--(Y1)(O6)--(Y3);
        \node[b](PY1) at (19,-1){};
        \draw (Y1)--(PY1);
        \node[b](PY2) at (24,-1){};
        \draw (Y2)--(PY2);
        \node[gb](O7)at (30,-3){};
        \node[gr](O8)at (33,-3){};
        \draw (O7)--(A1)(O8)--(A3);
        \node[r](PA3) at (30,4.5){};
        \draw (A3)--(PA3);
        \node[b](PA1) at (29,-1){};
        \draw (A1)--(PA1);

        \begin{scope}[yshift=-9.8cm]
        \node[r,label={[label distance=0.05cm]180:$v_x$}](V1) at (0,0){};
        \node[b](V2) at (3,0){};
        \node[r](PV2) at (4,-1){};
        \draw (V2)--(PV2);
        \node[r,label={[label distance=0.05cm]180:$u_x$}](V3) at (0,3){};
        \node[b](PV3) at (0,4.5){};
        \draw (V3)--(PV3);
        \node[b](V4) at (3,3){};
        \node[b](v1) at (1.5,2.1){};
        \node[b](v2) at (0.9,1.5){};
        \node[r](v3) at (1.5,0.9){};
        \node[r](v4) at (2.1,1.5){};
        \draw (v1)--(v2)--(v3)--(v4)--(v1)--(v3)--(v2)--(v4)(v1)--(V3)--(v2)(v3)--(V2)--(v4);
        \draw (V1)--(V2)--(V4)--(V3)--(V1)to[bend left=45](V4)(V2)to[bend left=45](V3);
        \node[r](U1) at (5,0){};
        \node[b](U2) at (8,0){};
        \node[r](U3) at (5,3){};
        \node[b](U4) at (8,3){};
        \draw (V2)--(U1)--(U2)--(U3)--(U4)--(U1)--(U3)--(V4)(U2)--(U4);
        \node[r](W1) at (10,0){};
        \node[b](W2) at (13,0){};
        \node[r](W3) at (10,3){};
        \node[b](W4) at (13,3){};
        \draw (U2)--(W1)--(W2)--(W3)--(W4)--(W1)--(W3)--(U4)(W2)--(W4);
        \node[r](X1) at (15,0){};
        \node[b](X2) at (18,0){};
        \node[r](X3) at (15,3){};
        \node[b](X4) at (18,3){};
        \draw (W2)--(X1)--(X2)--(X3)--(X4)--(X1)--(X3)--(W4)(X2)--(X4);
        \node[r](Y1) at (20,0){};
        \node[b](Y2) at (23,0){};
        \node[r](Y3) at (20,3){};
        \node[b](Y4) at (23,3){};
        \draw (X2)--(Y1)--(Y2)--(Y3)--(Y4)--(Y1)--(Y3)--(X4)(Y2)--(Y4);
        \node[r](Z1) at (25,0){};
        \node[b](Z2) at (28,0){};
        \node[r](Z3) at (25,3){};
        \node[b](Z4) at (28,3){};
        \draw (Y2)--(Z1)--(Z2)--(Z3)--(Z4)--(Z1)--(Z3)--(Y4)(Z2)--(Z4);
        \node[r](A1) at (30,0){};
        \node[b](A2) at (33,0){};
        \node[r](PA2) at (33,-1.5){};
        \draw (A2)--(PA2);
        \node[r](A3) at (30,3){};
        \node[b](A4) at (33,3){};
        \node[r](PA4) at (33,4.5){};
        \draw (A4)--(PA4);
        \draw (Z2)--(A1)--(A2)--(A3)--(A4)--(A1)--(A3)--(Z4)(A2)--(A4);
        \node[gb](O1)at (0,-3){};
        \node[gr](O2)at (3,-3){};
        \draw (O1)--(V1)(O2)--(V2);
        \node[gb](O3)at (10,-3){};
        \node[gr](O4)at (13,-3){};
        \draw (O3)--(W1)(O4)--(W2);
        \node[b](PW1) at (9,-1){};
        \draw (W1)--(PW1);
        \node[r](PW2) at (14,-1){};
        \draw (W2)--(PW2);
        \node[gb](O5)at (20,-3){};
        \node[gb](O6)at (23,-3){};
        \draw (O5)--(Y1)(O6)--(Y3);
        \node[b](PY1) at (19,-1){};
        \draw (Y1)--(PY1);
        \node[r](PY2) at (24,-1){};
        \draw (Y2)--(PY2);
        \node[gb](O7)at (30,-3){};
        \node[gb](O8)at (33,-3){};
        \draw (O7)--(A1)(O8)--(A3);
        \node[b](PA3) at (30,4.5){};
        \draw (A3)--(PA3);
        \node[b](PA1) at (29,-1){};
        \draw (A1)--(PA1);
        \end{scope}
    \end{tikzpicture}
    \caption{The two (up to permutation of colours) colourings of the tac for a variable $x$ which occurs $4$ times, twice non-negated and twice negated.}\label{fig:?1TAC}
\end{figure}
To verify that the tac of some variable $x$ satisfies property \ref{property:tac}, observe that choosing the colours for $u_x$ and $v_x$ (marked in \cref{fig:?1TAC}) to be either the same or different fixes the colours for the rest of the tac by applying \cref{claim:K4?1} successively. Hence, up to permutation of colours there are two colourings.

The stc of some clause $c$ of $\varphi$, is given in \cref{fig:?1STC}. 
\begin{figure}[hbtp]
    \centering
    \begin{tikzpicture}[scale=0.27]
        \node[r](V1) at (0,0){};
        \node[r](V2) at (3,0){};
        \node[b](V3) at (0,3){};
        \node[b](V4) at (3,3){};
        \draw (V1)--(V2)--(V4)--(V3)--(V1)--(V4)(V2)--(V3);
        \node[r](U1) at (6,0){};
        \node[r](U2) at (9,0){};
        \node[b](U3) at (6,3){};
        \node[b](U4) at (9,3){};
        \draw (U1)--(U2)--(U3)--(U4)--(U1)--(U3)(U2)--(U4);
        \node[gr](O1)at (0,6){};
        \node[gb](O2)at (1.5,6){};
        \draw (O1)--(V4)(O2)to[bend left=15](U1);
        \node[gr](O3)at (3.75,6){};
        \node[gr](O4)at (5.25,6){};
        \draw (O3)--(V4)(O4)--(U3);
        \node[gr](O5)at (7.5,6){};
        \node[gr](O6)at (9,6){};
        \draw (O5)--(V4)(O6)--(U4);
        \node[r](P1)at (-3,-3){};
        \node[b](P2)at (0,-3){};
        \node[b](P3)at (3,-3){};
        \node[b](P4)at (9,-3){};
        \draw (P1)--(V3)(P2)--(V1)(P3)--(V2)(P4)--(U2);

        \begin{scope}[xshift=17cm]
        \node[r](V1) at (0,0){};
        \node[r](V2) at (3,0){};
        \node[b](V3) at (0,3){};
        \node[b](V4) at (3,3){};
        \draw (V1)--(V2)--(V4)--(V3)--(V1)--(V4)(V2)--(V3);
        \node[r](U1) at (6,0){};
        \node[b](U2) at (9,0){};
        \node[b](U3) at (6,3){};
        \node[r](U4) at (9,3){};
        \draw (U1)--(U2)--(U3)--(U4)--(U1)--(U3)(U2)--(U4);
        \node[gr](O1)at (0,6){};
        \node[gb](O2)at (1.5,6){};
        \draw (O1)--(V4)(O2)to[bend left=15](U1);
        \node[gr](O3)at (3.75,6){};
        \node[gr](O4)at (5.25,6){};
        \draw (O3)--(V4)(O4)--(U3);
        \node[gr](O5)at (7.5,6){};
        \node[gb](O6)at (9,6){};
        \draw (O5)--(V4)(O6)--(U4);
        \node[r](P1)at (-3,-3){};
        \node[b](P2)at (0,-3){};
        \node[b](P3)at (3,-3){};
        \node[r](P4)at (9,-3){};
        \draw (P1)--(V3)(P2)--(V1)(P3)--(V2)(P4)--(U2);
        \end{scope}

        \begin{scope}[xshift=34cm]
        \node[r](V1) at (0,0){};
        \node[r](V2) at (3,0){};
        \node[b](V3) at (0,3){};
        \node[b](V4) at (3,3){};
        \draw (V1)--(V2)--(V4)--(V3)--(V1)--(V4)(V2)--(V3);
        \node[r](U1) at (6,0){};
        \node[b](U2) at (9,0){};
        \node[r](U3) at (6,3){};
        \node[b](U4) at (9,3){};
        \draw (U1)--(U2)--(U3)--(U4)--(U1)--(U3)(U2)--(U4);
        \node[gr](O1)at (0,6){};
        \node[gb](O2)at (1.5,6){};
        \draw (O1)--(V4)(O2)to[bend left=15](U1);
        \node[gr](O3)at (3.75,6){};
        \node[gb](O4)at (5.25,6){};
        \draw (O3)--(V4)(O4)--(U3);
        \node[gr](O5)at (7.5,6){};
        \node[gr](O6)at (9,6){};
        \draw (O5)--(V4)(O6)--(U4);
        \node[r](P1)at (-3,-3){};
        \node[b](P2)at (0,-3){};
        \node[b](P3)at (3,-3){};
        \node[r](P4)at (9,-3){};
        \draw (P1)--(V3)(P2)--(V1)(P3)--(V2)(P4)--(U2);
        \end{scope}

        \begin{scope}[xshift=0cm,yshift=-14cm]
        \node[r](V1) at (0,0){};
        \node[r](V2) at (3,0){};
        \node[b](V3) at (0,3){};
        \node[b](V4) at (3,3){};
        \draw (V1)--(V2)--(V4)--(V3)--(V1)--(V4)(V2)--(V3);
        \node[b](U1) at (6,0){};
        \node[b](U2) at (9,0){};
        \node[r](U3) at (6,3){};
        \node[r](U4) at (9,3){};
        \draw (U1)--(U2)--(U3)--(U4)--(U1)--(U3)(U2)--(U4);
        \node[gr](O1)at (0,6){};
        \node[gr](O2)at (1.5,6){};
        \draw (O1)--(V4)(O2)to[bend left=15](U1);
        \node[gr](O3)at (3.75,6){};
        \node[gb](O4)at (5.25,6){};
        \draw (O3)--(V4)(O4)--(U3);
        \node[gr](O5)at (7.5,6){};
        \node[gb](O6)at (9,6){};
        \draw (O5)--(V4)(O6)--(U4);
        \node[r](P1)at (-3,-3){};
        \node[b](P2)at (0,-3){};
        \node[b](P3)at (3,-3){};
        \node[r](P4)at (9,-3){};
        \draw (P1)--(V3)(P2)--(V1)(P3)--(V2)(P4)--(U2);
        \end{scope}

        \begin{scope}[xshift=17cm,yshift=-14cm]
        \node[r](V1) at (0,0){};
        \node[r](V2) at (3,0){};
        \node[b](V3) at (0,3){};
        \node[b](V4) at (3,3){};
        \draw (V1)--(V2)--(V4)--(V3)--(V1)--(V4)(V2)--(V3);
        \node[b](U1) at (6,0){};
        \node[r](U2) at (9,0){};
        \node[b](U3) at (6,3){};
        \node[r](U4) at (9,3){};
        \draw (U1)--(U2)--(U3)--(U4)--(U1)--(U3)(U2)--(U4);
        \node[gr](O1)at (0,6){};
        \node[gr](O2)at (1.5,6){};
        \draw (O1)--(V4)(O2)to[bend left=15](U1);
        \node[gr](O3)at (3.75,6){};
        \node[gr](O4)at (5.25,6){};
        \draw (O3)--(V4)(O4)--(U3);
        \node[gr](O5)at (7.5,6){};
        \node[gb](O6)at (9,6){};
        \draw (O5)--(V4)(O6)--(U4);
        \node[r](P1)at (-3,-3){};
        \node[b](P2)at (0,-3){};
        \node[b](P3)at (3,-3){};
        \node[b](P4)at (9,-3){};
        \draw (P1)--(V3)(P2)--(V1)(P3)--(V2)(P4)--(U2);
        \end{scope}

        \begin{scope}[xshift=34cm,yshift=-14cm]
        \node[r](V1) at (0,0){};
        \node[r](V2) at (3,0){};
        \node[b](V3) at (0,3){};
        \node[b](V4) at (3,3){};
        \draw (V1)--(V2)--(V4)--(V3)--(V1)--(V4)(V2)--(V3);
        \node[b](U1) at (6,0){};
        \node[r](U2) at (9,0){};
        \node[r](U3) at (6,3){};
        \node[b](U4) at (9,3){};
        \draw (U1)--(U2)--(U3)--(U4)--(U1)--(U3)(U2)--(U4);
        \node[gr](O1)at (0,6){};
        \node[gr](O2)at (1.5,6){};
        \draw (O1)--(V4)(O2)to[bend left=15](U1);
        \node[gr](O3)at (3.75,6){};
        \node[gb](O4)at (5.25,6){};
        \draw (O3)--(V4)(O4)--(U3);
        \node[gr](O5)at (7.5,6){};
        \node[gr](O6)at (9,6){};
        \draw (O5)--(V4)(O6)--(U4);
        \node[r](P1)at (-3,-3){};
        \node[b](P2)at (0,-3){};
        \node[b](P3)at (3,-3){};
        \node[b](P4)at (9,-3){};
        \draw (P1)--(V3)(P2)--(V1)(P3)--(V2)(P4)--(U2);
        \end{scope}
    \end{tikzpicture}
    \caption{The six \type?1-$2$-colouring of the stc corresponding to the assignments of literals of  $c$.}
    \label{fig:?1STC}
\end{figure}
Using \cref{claim:K4?1} it follows that the three outlet vertices, one from each outlet, that share a neighbour have to receive the same colour in a \type?1-$2$-colouring. The three other outlet vertices are connected to three different vertices of a $K_4$ and therefore by \cref{claim:K4?1} cannot all receive the same colour. Therefore, there must be an outlet for which both vertices receive the same colour and one for which both vertices receive different colours. It follows that the stc satisfies property \ref{property:stc}.

Finally, the stc and tac are compatible. Furthermore, fixing colouring for true and false of the tac as in \cref{fig:?1TAC} and the $6$ colourings depicted in \cref{fig:?1STC} for the $6$ different assignments of the stc, the appropriate colourings of the two gadgets are compatible. Therefore, the stc and tac satisfy property \ref{property:combiningSTCandTAC}. By \cref{lem:reductionFromNAE3SAT} we obtain that \type?1-$2$-colouring is \NP-complete.
\end{proof}

\subsubsection{\Type{+}{+}-colouring for \boldmath$q=2$}

First recall that $c:V(G)\rightarrow [q]$ is a \type++-$q$-colouring of $G$ if for every vertex $v$ we have $|N_G(v)\cap c^{-1}(i)|>0$ for every $i\in[q]$. We show the following.

\begin{theorem}\label{link2:++}
    For $q=2$ it is \NP-complete to decide whether a graph admits
    a \type++-$q$-colouring.
\end{theorem}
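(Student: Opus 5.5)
I would give a direct reduction from \NSAT. The starting observation is that for $q=2$ a colouring $c$ is a \type++-colouring exactly when every open neighbourhood $N(v)$ contains both colours. The colour of $v$ itself plays no role, so the problem is $2$-colouring the hypergraph of open neighbourhoods so that no hyperedge is monochromatic. That is a not-all-equal condition, which suggests building a graph whose open neighbourhoods encode the clauses of $\varphi$, rather than going through the tac/stc framework of \cref{lem:reductionFromNAE3SAT}.

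\emph{Construction.} Let $\varphi$ be an instance of \NSAT over variables $X$.
\begin{itemize}
\item For each variable $x$, create two literal vertices $x$ and $\bar x$, and a negator vertex $n_x$ adjacent to exactly these two.
\item For each clause $c_j=(\ell_{j,1}\vee\ell_{j,2}\vee\ell_{j,3})$, create a clause vertex $c_j$ adjacent to the three literal vertices of its literals.
\item For each literal vertex $\ell$, attach a private $4$-cycle $p_1p_2p_3p_4$ and make $\ell$ adjacent to $p_1$ and $p_2$.
\end{itemize}
The resulting graph $H_\varphi$ is simple and has polynomial size. (If a clause repeats a literal, I would first remove such clauses by standard preprocessing or pad them, which preserves \NSAT-hardness.)

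\emph{Correctness.} The constraints at the vertices translate as follows.
\begin{itemize}
\item At $n_x$ the constraint says $c(x)\neq c(\bar x)$, so the literal colouring is a consistent truth assignment.
\item At $c_j$ the constraint says the three literal vertices of $c_j$ are not all equal, which is exactly the \NSAT condition.
\item Each private $4$-cycle is a gadget that absorbs all remaining constraints.
\end{itemize}

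For the gadget: whatever the colour of $\ell$, colour $p_1=p_4=1$ and $p_2=p_3=2$. Then
\begin{itemize}
\item $p_1$ sees $p_2=2$ and $p_4=1$;
\item $p_2$ sees $p_1=1$ and $p_3=2$;
\item $p_3$ sees $p_2$ and $p_4$, one of each colour;
\item $p_4$ sees $p_1$ and $p_3$, one of each colour;
\item $\ell$ sees $p_1$ and $p_2$, one of each colour.
\end{itemize}
So all gadget vertices and all literal vertices are satisfied independently of everything else. Vertices $n_x$ and $c_j$ never appear in any neighbourhood, so their own colours are arbitrary.

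This gives both directions. From a not-all-equal assignment $\alpha$, colour each literal vertex $1$ if it is true under $\alpha$ and $2$ otherwise, colour the gadgets as above, and colour $n_x$ and $c_j$ arbitrarily; this is a \type++-$2$-colouring. Conversely, any \type++-$2$-colouring gives complementary colours on $x,\bar x$ (from $n_x$) and non-monochromatic clauses (from $c_j$). Reading colour $1$ as true therefore yields an assignment in which every clause has a true and a false literal. Membership in \NP\ is clear.

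The main obstacle is ensuring that the vertices carrying the literal information do not themselves impose unwanted constraints; the private $4$-cycle gadget resolves this. The only remaining thing to check carefully is that the colour of $\ell$ never enters a gadget vertex's constraint in a harmful way. In the colouring above, $p_1$ and $p_2$ already see both colours inside the cycle, so this holds.
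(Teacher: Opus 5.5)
Your reduction is correct, and it takes a different route from the paper.

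The paper stays inside its tac/stc framework of \cref{lem:reductionFromNAE3SAT}. The tac of a variable is a cycle of length $4k$ carrying a path of length two at every second vertex; its degree-$2$ vertices force consecutive outlets to alternate in colour. The stc is a claw whose centre forces its three leaves (the glued tac outlets) to be not all equal.

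In your construction, $n_x$ plays the role of those degree-$2$ forcing vertices, and $c_j$ is exactly the claw centre. The new ingredient is the private $C_4$ gadget at each literal vertex. It discharges that vertex's own constraint once and for all, so a single vertex can carry every occurrence of a literal and no propagation along a cycle is needed.

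This buys you a short argument in which both directions are checked completely; the paper's encoding and compatibility conventions are only sketched. You also get the clean reformulation as non-monochromatic $2$-colouring of the open-neighbourhood hypergraph. What it costs is degree control. Your literal vertices have degree $3$ plus the number of occurrences, whereas every vertex in the paper's glued graph has degree $2$ or $3$. The paper's reduction therefore gives hardness already for subcubic graphs. This is not needed for the theorem; you could recover bounded degree by splitting each literal vertex into copies linked by degree-$2$ vertices, which is essentially the paper's cycle.

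Two small corrections:
\begin{itemize}
\item The sentence that $n_x$ and $c_j$ never appear in any neighbourhood is false: $n_x\in N(x)\cap N(\bar x)$, and $c_j\in N(\ell)$ for every literal $\ell$ of $c_j$. The conclusion that their colours are free still holds, because these neighbourhoods already contain $p_1$ and $p_2$, which have opposite colours.
\item No preprocessing for repeated literals is required. Not-all-equal on a multiset of literals is the same as not-all-equal on its underlying set, which is exactly what $N(c_j)$ encodes in a simple graph. This includes the unsatisfiable clause $(x\vee x\vee x)$, where $N(c_j)=\{x\}$.
\end{itemize}
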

\begin{proof}
We provide a reduction from \NSAT. First observe the following.
\begin{claim}\label{claim:++degree2}
    In any \type++-$2$-colouring every vertex of degree $2$ has a neighbour of each of the two colours.
\end{claim}
Using this property, we construct the tac of a variable $x$ that appears $k$ times positively and $k$ times negatively the instance $\varphi$ of \NSAT from a cycle of length $4k$ in which we attach a pendant to every second vertex. Additionally, we attach an outlet consisting of a single vertex to each of the pendant. 
    \begin{figure}[hbtp]
    \hspace*{\fill}
    \begin{tikzpicture}
    \def \r {1cm}
    \def \rr {1.4cm}
    \def \rrr {1.8cm}
    \def \rrrr {2.2cm}
    \foreach \n in {1,2,5,6,9,10} {
      \pgfmathsetmacro{\a}{(\n-3)*360/12}
      \node[r] (a\n) at (\a:\r) {};
    }
    \foreach \n in {3,4,7,8,11,12} {
      \pgfmathsetmacro{\a}{(\n-3)*360/12}
      \node[b] (a\n) at (\a:\r) {};
    }
    \foreach[count=\m] \n in {2,...,12} {
      \draw (a\n) -- (a\m);
    }
    \draw (a12) -- (a1);
    
    \foreach \n in {1,3,5,7,9,11} {
      \pgfmathsetmacro{\a}{(\n-3)*360/12}
      \node[r] (x\n) at (\a:\rr) {};
    }
    \foreach \n in {1,5,9} {
      \pgfmathsetmacro{\a}{(\n-3)*360/12}
      \node[gb] (y\n) at (\a:\rrr) {};
    }
    \foreach \n in {3,7,11} {
      \pgfmathsetmacro{\a}{(\n-3)*360/12}
      \node[gr] (y\n) at (\a:\rrr) {};
    }
    \foreach \n in {1,3,5,7,9,11} {
      \draw (a\n)--(x\n);
      \draw[lightgray](x\n)--(y\n);
    }

    \begin{scope}[xshift=45mm]
    
    \foreach \n in {1,2,5,6,9,10} {
      \pgfmathsetmacro{\a}{(\n-3)*360/12}
      \node[b] (a\n) at (\a:\r) {};
    }
    \foreach \n in {3,4,7,8,11,12} {
      \pgfmathsetmacro{\a}{(\n-3)*360/12}
      \node[r] (a\n) at (\a:\r) {};
    }
    \foreach[count=\m] \n in {2,...,12} {
      \draw (a\n) -- (a\m);
    }
    \draw (a12) -- (a1);
    
    \foreach \n in {1,3,5,7,9,11} {
      \pgfmathsetmacro{\a}{(\n-3)*360/12}
      \node[r] (x\n) at (\a:\rr) {};
    }
    \foreach \n in {1,5,9} {
      \pgfmathsetmacro{\a}{(\n-3)*360/12}
      \node[gr] (y\n) at (\a:\rrr) {};
    }
    \foreach \n in {3,7,11} {
      \pgfmathsetmacro{\a}{(\n-3)*360/12}
      \node[gb] (y\n) at (\a:\rrr) {};
    }
    \foreach \n in {1,3,5,7,9,11} {
      \draw (a\n)--(x\n);
      \draw[lightgray](x\n)--(y\n);
    }
    \end{scope}

    \begin{scope}[xshift=90mm]
      \node[a] (a0) at (0:\r) {};
      \foreach[count=\n, count=\m from 0] \c in {b,r,b,b,b,r,b,b,b,r,r,b} {
        \pgfmathtruncatemacro{\a}{30*\n}
        \node[\c] (a\n) at (\a:\r)   {}; \draw (a\n)--(a\m);
      }
      \foreach \n/\c/\g in {0/b/gr, 2/r/gb, 4/r/gr, 6/r/gb, 8/r/gr, 10/r/gb} {
        \pgfmathtruncatemacro{\a}{30*\n}
        \node[\c] (x\n) at (\a:\rr)  {}; \draw (a\n)--(x\n);
        \node[\g] (y\n) at (\a:\rrr) {}; \draw[lightgray] (x\n)--(y\n);
      }
    \end{scope}
  \end{tikzpicture}
  \hspace*{\fill}
  \caption{Three possible colourings of the tac of a variable that appears three times positively and three times negatively.}
  \label{fig:++tac}
\end{figure}
Observe that by \cref{claim:++degree2} the vertices of the cycle of degree $3$ alternate in colour. Since the pedants attached to these vertices have degree $2$ again, \cref{claim:++degree2} implies that the outlets alternate in colour as well. We associate every second outlet (along the cycle) with an occurrence of $x$ and the remaining outlets with an occurrence of $\overline{x}$. True is then encoded by the outlet having colour $1$ and false by the outlet having colour $2$. Observe that we can get a \type++-$2$-colouring of the tac e.g. as in \cref{fig:++tac}. Observe that property \ref{property:tac} holds for the tac.

We now construct the stc of a clause $c$ from a claw, by attaching an outlet consisting of a single vertex to each of the degree $1$ vertices of the claw. We encode a literal being true by the corresponding outlet having colour $1$ and its neighbour having colour $1$, while false is encoded by the outlet having colour $1$ and its neighbour having colour $2$. 
\begin{figure}[hbtp]
    \hspace*{\fill}
    \begin{tikzpicture}
    \def \r {0.5cm};
    \def \rr {1cm};
    \node[b] (m) at (0,0){};
    \foreach \n in {1,2} {
      \pgfmathsetmacro{\a}{(\n-3)*360/3}
      \node[r] (a\n) at (\a:\r) {};
      \node[gr] (x\n) at (\a:\rr) {};
    }
    \foreach \n in {3} {
      \pgfmathsetmacro{\a}{(\n-3)*360/3}
      \node[b] (a\n) at (\a:\r) {};
      \node[gr] (x\n) at (\a:\rr) {};
    }
    \foreach \n in {1,2,3} {
      \draw (a\n) -- (m);
      \draw[lightgray] (a\n)--(x\n);
    }
    
    \begin{scope}[xshift=2.1cm]
    \node[b] (m) at (0,0){};
    \foreach \n in {1} {
      \pgfmathsetmacro{\a}{(\n-3)*360/3}
      \node[r] (a\n) at (\a:\r) {};
      \node[gr] (x\n) at (\a:\rr) {};
    }
    \foreach \n in {2,3} {
      \pgfmathsetmacro{\a}{(\n-3)*360/3}
      \node[b] (a\n) at (\a:\r) {};
      \node[gr] (x\n) at (\a:\rr) {};
    }
    \foreach \n in {1,2,3} {
      \draw (a\n) -- (m);
      \draw[lightgray] (a\n)--(x\n);
    }
    \end{scope}

    \begin{scope}[xshift=4.2cm]
    \node[b] (m) at (0,0){};
    \foreach \n in {1,3} {
      \pgfmathsetmacro{\a}{(\n-3)*360/3}
      \node[r] (a\n) at (\a:\r) {};
      \node[gr] (x\n) at (\a:\rr) {};
    }
    \foreach \n in {2} {
      \pgfmathsetmacro{\a}{(\n-3)*360/3}
      \node[b] (a\n) at (\a:\r) {};
      \node[gr] (x\n) at (\a:\rr) {};
    }
    \foreach \n in {1,2,3} {
      \draw (a\n) -- (m);
      \draw[lightgray] (a\n)--(x\n);
    }
    \end{scope}

    \begin{scope}[xshift=6.3cm]
    \node[b] (m) at (0,0){};
    \foreach \n in {2,3} {
      \pgfmathsetmacro{\a}{(\n-3)*360/3}
      \node[r] (a\n) at (\a:\r) {};
      \node[gr] (x\n) at (\a:\rr) {};
    }
    \foreach \n in {1} {
      \pgfmathsetmacro{\a}{(\n-3)*360/3}
      \node[b] (a\n) at (\a:\r) {};
      \node[gr] (x\n) at (\a:\rr) {};
    }
    \foreach \n in {1,2,3} {
      \draw (a\n) -- (m);
      \draw[lightgray] (a\n)--(x\n);
    }
    \end{scope}

    \begin{scope}[xshift=8.4cm]
    \node[b] (m) at (0,0){};
    \foreach \n in {2} {
      \pgfmathsetmacro{\a}{(\n-3)*360/3}
      \node[r] (a\n) at (\a:\r) {};
      \node[gr] (x\n) at (\a:\rr) {};
    }
    \foreach \n in {1,3} {
      \pgfmathsetmacro{\a}{(\n-3)*360/3}
      \node[b] (a\n) at (\a:\r) {};
      \node[gr] (x\n) at (\a:\rr) {};
    }
    \foreach \n in {1,2,3} {
      \draw (a\n) -- (m);
      \draw[lightgray] (a\n)--(x\n);
    }
    \end{scope}

    \begin{scope}[xshift=10.5cm]
    \node[b] (m) at (0,0){};
    \foreach \n in {3} {
      \pgfmathsetmacro{\a}{(\n-3)*360/3}
      \node[r] (a\n) at (\a:\r) {};
      \node[gr] (x\n) at (\a:\rr) {};
    }
    \foreach \n in {1,2} {
      \pgfmathsetmacro{\a}{(\n-3)*360/3}
      \node[b] (a\n) at (\a:\r) {};
      \node[gr] (x\n) at (\a:\rr) {};
    }
    \foreach \n in {1,2,3} {
      \draw (a\n) -- (m);
      \draw[lightgray] (a\n)--(x\n);
    }
    \end{scope}
  \end{tikzpicture}
  \hspace*{\fill}
  \caption{The six valid \type++-$2$-colourings of the stc corresponding to the six possible assignments. }
  \label{fig:++stc}
\end{figure}
\end{proof}
Note that the central vertex of the claw can be coloured correctly in a \type++-$2$-colouring if and only if the neighbours of the outlets do not all have the same colour. Additionally, for the arbitrary choice of the outlets all having colour $1$ the centre of the claw has to receive colour $2$ (but this can be assigned differently) implying property \ref{property:stc}. After fixing the colourings of the tac as given in \cref{fig:++tac} and the colouring of the stc as given in \cref{fig:++stc}, it is easy to observe that property \ref{property:combiningSTCandTAC} holds. Hence, \cref{lem:reductionFromNAE3SAT} implies that \type++-$2$-colouring is \NP-complete.

\subsection{Reductions from other problems}

\subsubsection{\Type{?}{v}-colouring}

Recall that $c:V(G)\rightarrow [q]$ is a \type?v-$q$-colouring of $G$  if for every vertex $v$ it holds that $|N_G(v)\cap c^{-1}(c(v))|< 2$ and $|N_G(v)\cap c^{-1}(i)|$ is odd or $0$ for every $i\neq c(v)$. 

\begin{theorem}\label{thm:?v}\label{link234:?v}
    For $q \geq 2$ it is \NP-complete to decide whether a graph admits a \type?v-$q$-colouring.
\end{theorem}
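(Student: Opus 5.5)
For $q=2$ I would not build new gadgets but use the regular-graph collapse that the paper relies on throughout. Consider a $4$-regular graph $G$ and a $2$-colouring $c$, and let $s=|N(v)\cap c^{-1}(c(v))|$ for a vertex $v$. The constraint \type?v forces $s\in\{0,1\}$, so $v$ has $4-s\in\{4,3\}$ neighbours of the other colour. Since $4$ is even and nonzero it violates \typesymbol{v}, hence $s=1$ and the other colour appears exactly $3$ times. This is precisely the condition for a \type!1-$2$-colouring. Conversely, every \type!1-$2$-colouring of a $4$-regular graph is a \type?v-$2$-colouring. The proof of \cref{link2:!1} produces a $4$-regular graph $H_\varphi$, so the same reduction from \NSAT settles $q=2$.

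For $q\geq 3$ I would reduce from the case $q-1$ by induction, so I need a padding construction. Given an instance $G$ of \type?v-$(q-1)$-colouring, I would build $G'$ from $G$ by attaching to every vertex $v$ a gadget $F_v$. The gadget should have two properties.
\begin{enumerate}
\item In every \type?v-$q$-colouring of $G'$, the vertices of $F_v$ adjacent to $v$ take, as seen from $v$, exactly one colour $j_v$, with an odd number of neighbours.
\item The colour $j_v$ is forced to be the same new colour for all vertices of $G$, which then cannot be used on $G$.
\end{enumerate}
A natural candidate for $F_v$ is a small rigid piece, for instance copies of $K_4$-like blocks or an improper-rainbow-coloured regular graph, linked along a global ``colour-$q$ spine''. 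Every vertex of $G$ is made adjacent to exactly one spine vertex. The spine vertices are forced to be pairwise consistent, for example by arranging them in a structure whose only \type?v-colourings are monochromatic on a fixed independent set. Then each $v\in V(G)$ sees colour $q$ exactly once, which satisfies \typesymbol{v}. Moreover $v$ cannot itself receive colour $q$: it would already have one same-coloured neighbour on the spine, and adjacent vertices of $G$ coloured $q$ would break either the \typesymbol{?} bound or the spine's own constraints. Hence $c|_{V(G)}$ is a \type?v-$(q-1)$-colouring of $G$, and conversely any such colouring of $G$ extends to $G'$ by colouring the spine and the gadgets with colour $q$ and the remaining colours.

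The main obstacle is designing the spine so that it is rigid under \type?v, which is a very permissive constraint: colour classes only need to induce matchings, and the other colours need only odd or zero multiplicity. I would first try to make the spine itself regular of a degree where, as above, the constraint collapses to an exact one such as $s=1$ with every other colour appearing once. This is an improper-rainbow-type colouring, which is highly rigid on suitable small graphs such as $K_{q+2}$ minus a perfect matching. If that fails, I would instead look for a direct reduction for all $q\geq 3$ at once: use $(q+2)$-regular graphs, where parity counting limits the feasible multiplicity patterns, and reduce from improper rainbow colouring as in \cref{ss:improperRainbow}.
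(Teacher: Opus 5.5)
Your $q=2$ argument is correct and takes a different route from the paper. On a $4$-regular graph with two colours, \typesymbol{?} at $v$ leaves $4$ or $3$ neighbours of the other colour, and $4$ violates \typesymbol{v}. So \type{?}{v}-$2$-colourings are exactly \type{!}{1}-$2$-colourings. Since the \NSAT construction in \cref{link2:!1} is $4$-regular, this even gives hardness on $4$-regular graphs.

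For $q\geq 3$, however, there is no proof. Everything rests on a ``spine'' that you never construct, and you only assert the two properties you need from it. The first is that the spine neighbour of every $v\in V(G)$ is forced to carry one common colour $q$. The second is that no vertex of $G$ can itself be coloured $q$, and this does not follow from anything you state. If $v$ and its spine neighbour are both coloured $q$ and no $G$-neighbour of $v$ is, then $v$ has exactly one own-coloured neighbour, which \typesymbol{?} permits. Whatever excludes this would have to come from the unspecified spine. Your fallback of collapsing the constraint on regular graphs also fails as stated, because for $q\geq 3$ the condition \typesymbol{v} is too loose. In a $q$-regular graph a vertex may have no own-coloured neighbour and three neighbours of a single other colour. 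So \type{?}{v} does not collapse to improper rainbow colouring the way it collapses to \type{!}{1} for $q=2$.

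The missing idea is that no rigidity is needed; the paper exploits the permissiveness instead. A \type{?}{v}-colouring is a defect-$1$ colouring (\type{?}{*}) plus a parity condition on the other colours. That parity can be repaired locally with pendants, which are satisfied under every colouring. So reduce directly from defective $q$-colouring with defect $1$, which is \NP-complete for all $q\geq 2$ by \cite{cowen1997defective}. Attach $q$ pendants to every vertex of even degree and $q-1$ to every vertex of odd degree. Given a defective colouring $c$, give one pendant of $v$ colour $i$ for each colour $i$ occurring an even number of times in $N_G(v)$; for odd degree, first set aside one colour of odd multiplicity. The leftover pendants are even in number, and they all get one fixed colour different from $c(v)$. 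Then every colour occurs an odd number of times around $v$. The colour $c(v)$ occurs exactly once, because a pendant receives $c(v)$ only when $v$ has no own-coloured neighbour in $G$. Conversely, any \type{?}{v}-colouring restricted to $G$ is a defect-$1$ colouring. This settles all $q\geq 2$ uniformly with no gadget design. Your $q=2$ route stays within the paper's \NSAT machinery, and in return gives the stronger $4$-regular statement for two colours.
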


\begin{proof}
    We give a reduction from defective-$q$-colouring with defect $d = 1$, which is \NP-complete \cite{cowen1997defective} for $q\geq 2$ colours.  
    Given a graph $G$, we construct a graph $H$ by attaching $q$ pendants to every vertex of even degree and $q-1$ pendants to every vertex of odd degree.
    
    First, 
    let $c:V(G) \rightarrow [q]$ be a defective colouring of $G$. We obtain a \type?v-colouring $c_{\type{?}{v}}:V(H) \rightarrow [q]$ of $H$ as follows. For each vertex $v \in V(G)$, we set $c_{\type{?}{v}}(v) = c(v)$. Now, fix a vertex $v \in V(G)$.
    If $v$ has even degree, let $v_1,\dots,v_q \in V(H)$ be the pendants attached to $v$ in $H$. 
    Since the degree of $v$ is even, there is an even number of colour classes containing an odd number of neighbours of $v$. Hence, there is an even number, if $q$ is even, or an odd number, if $q$ is odd, of colour classes containing an even number of neighbours of $v$. For each $i\in [q]$, if $v$ has an even number of neighbours of colour $i$, we set $c_{\type{?}{v}}(v_{i})=i$. The remaining even number of pendants get assigned to an arbitrary colour not equal to $c(v)$ (note that since $q\geq 2$ such a colour exists). Hence, $v$ has an odd number of neighbours of each colour in $H$. Furthermore, if $v$ had a neighbour of its own colour in $G$, then the colour class of colour $c(v)$ is odd and therefore no pendant receives colour $c(v)$ implying that $|N_H(v)\cap c_{\type?v}^{-1}(c_{\type?v}(v))|<2$. On the other hand, if $v$ has no neighbour of its own colour in $G$, then the colour class of $c(v)$ is even and one pendant of $v$ receives colour $c(v)$ implying $|N_H(v)\cap c_{\type?v}^{-1}(c_{\type?v}(v))|<2$. Finally, each of the pendants has one neighbour in some colour which satisfies the conditions of \type?v-colouring. 

    If $v$ has odd degree, let $v_1,\dots,v_{q-1} \in V(H)$ be the pendants attached to $v$ in $H$. 
    Since the degree of $v$ is odd, there is an odd number of colour classes containing an odd number of neighbours of $v$. In particular, there must be at least one colour class containing an odd number of neighbours of $v$ and, without loss of generality, assume the respective colour is $q$. We additionally obtain that there is an even number, if $q-1$ is even, or an odd number, if $q-1$ is odd,  of colour classes containing an even number of neighbours of $v$. For each $i\in [q-1]$, if $v$ has an even number of neighbours of colour $i$, we set $c_{\type{?}{v}}(v_{i})=i$. The remaining even number of pendants get assigned to an arbitrary colour not equal to $c(v)$. Similarly to before we can justify that this is a valid \type?v-colouring.

    On the other hand, 
    let $c_{\type{?}{v}}:V(H) \rightarrow [q]$ be a \type?v-colouring of $H$. We get a defective colouring $c:V(G) \rightarrow [q]$ of $G$ as follows. For each vertex $v \in V(G)$, we set $c(v) = c_{\type{?}{v}}(v)$. It follows from the definition of \type?v-colouring that $c$ is a defective colouring with defect $1$.
\end{proof}


\subsubsection{\Type{?}{+}-colouring}

Recall that $c:V(G)\rightarrow [q]$ is a $\type{?}{+}$-$q$-colouring of $G$ if for every vertex $v$ it holds that $|N_G(v)\cap c^{-1}(c(v))|< 2$ and $|N_G(v)\cap c^{-1}(i)|>0$ for every $i\neq c(v)$.

\begin{theorem}\label{link234:?+}
    For $q \geq 2$ it is \NP-complete to decide whether a graph admits a \type{?}{+}-$q$-colouring.
\end{theorem}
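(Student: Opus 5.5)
We reduce from defective $q$-colouring with defect $d=1$, i.e. \type?*-$q$-colouring, which is \NP-complete for every $q\geq 2$ \cite{cowen1997defective}, as in the proof of \cref{thm:?v}. Given a graph $G$, we build $H$ by attaching to every vertex $v\in V(G)$ a gadget that supplies $v$ with neighbours of every colour other than $c(v)$, while never giving $v$ an additional neighbour of its own colour. Then the \type+ condition at $v$ is automatic, and the \type? condition at $v$ is exactly the defect-$1$ condition in $G$. The work lies in designing a gadget that has this effect and is itself internally \type?+-colourable.

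Attaching pendants, as in \cref{thm:?v}, does not work. A pendant $p$ adjacent to $v$ must have a neighbour of every colour different from $c(p)$. For $q\geq 3$ this is impossible, since $p$ has only one neighbour. For $q=2$ it forces $c(p)\neq c(v)$, which is fine, but we want a uniform construction for all $q$. We therefore attach to $v$ a copy of $K_q$ with vertices $k_1,\dots,k_q$ and join $v$ to all of them; call this attachment $B_v$. In any colouring, each $k_j$ has at most one neighbour of its own colour, and its neighbours within $B_v$ are the other $k_i$ and $v$.

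First we analyse $B_v$. Suppose the clique $k_1,\dots,k_q$ uses some colour $a$ twice, say $c(k_1)=c(k_2)=a$. Since there are only $q$ clique vertices, some colour $b$ is then missing from the clique. For $k_1$ to see colour $b$ (with $b\neq a$), we need $c(v)=b$. If the clique had two colour repetitions, at least two colours would be missing, and $k_1$ could not see both, a contradiction. So the clique either is rainbow, or has exactly one repeated colour and one missing colour $b=c(v)$.

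The rainbow option is the desired behaviour. In it, exactly one $k_j$ has colour $c(v)$, so $v$ gains exactly one neighbour of its own colour, and this breaks the \type? count at $v$. We should therefore exclude the rainbow option. If $k_j$ has colour $c(v)$, then $k_j$ sees $v$ as a same-colour neighbour, which is allowed, so nothing excludes it yet. This calls for the following fix. Attach to $v$ a gadget $B_v'$ on vertices $k_1,\dots,k_{q-1}$ forming $K_{q-1}$, all adjacent to $v$. Each $k_i$ needs all colours other than its own among its $q-1$ neighbours, which are the other $k_i$'s and $v$. If the $k_i$ together with $v$ use all $q$ colours exactly once, every requirement is met. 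Any repetition leaves some colour missing from the neighbourhood of some $k_i$, since it has only $q-1$ neighbours covering $q-1$ colours. So $\{v,k_1,\dots,k_{q-1}\}$ is forced to be rainbow. Consequently $v$ receives all other colours exactly once and no new same-colour neighbour.

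For $q=2$, $B_v'$ is a single pendant of the opposite colour, which matches the observation above. With $H=G$ plus one $B_v'$ per vertex, the correspondence follows immediately. A \type?+-colouring of $H$ restricts to a defect-$1$ colouring of $G$, because the gadget adds no same-colour neighbours to $v$. Conversely, any defect-$1$ colouring of $G$ extends by colouring each $B_v'$ with the $q-1$ colours different from $c(v)$.

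The main point to check is the forcing argument for $B_v'$. A vertex $k_i$ of degree $q-1$ must see all $q-1$ foreign colours, so its neighbourhood is rainbow and avoids its own colour. Applied to every $k_i$, this makes $\{v,k_1,\dots,k_{q-1}\}$ rainbow. This holds for $q\geq 2$ whenever the $k_i$ have no neighbours outside $B_v'$, so I expect the argument to go through cleanly.
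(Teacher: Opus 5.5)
Your final construction is exactly the paper's reduction from defect-$1$ colouring, and it is correct: attach to every $v$ a copy of $K_{q-1}$ fully joined to $v$, and use the same forcing argument, namely that a clique vertex of degree $q-1$ must see each of the other $q-1$ colours exactly once, so $\{v,k_1,\dots,k_{q-1}\}$ is rainbow. The exploratory $K_q$ paragraph can simply be deleted. That gadget would in fact also have worked, since its rainbow option only forces $v$ to have no same-colour neighbour in $G$, which is still a defect-$1$ colouring, so there was no need to exclude it.
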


\begin{proof}
    We give a reduction from defective-$q$-colouring with defect $d = 1$, which is \NP-complete \cite{cowen1997defective} for $q\geq 2$ colours. Given a graph $G$, we construct a graph $H$ by adding, for each vertex $v \in V(G)$, a clique $K_{q-1}^v$ with vertices $w_1^v,\ldots,w_{q-1}^v$, and the edge $vw_i^v$ for each $i \in [q-1]$.
    
    First, assume that $G$ is a YES-instance of defective colouring, and let $c:V(G) \rightarrow [q]$ be a defective colouring of $G$. We obtain a \type{?}{+}-$q$-colouring $c_{\type{?}{+}}:V(H) \rightarrow [q]$ of $H$ as follows. For each vertex $v \in V(G)$, we set $c_{\type{?}{+}}(v) = c(v)$. Then, for every vertex $v \in V(G)$, we assign a second colour to $w_1^v$, a third colour to $w_2^v$, and so on, until we assign the last colour to $w_{q-1}^v$. It is easy to verify that $c_{\type{?}{+}}$ is a \type{?}{+}-$q$-colouring.

    On the other hand, assume $H$ is a YES-instance of \type{?}{+}-$q$-colouring and let $c_{\type{?}{+}}:V(H) \rightarrow [q]$ be a \type{?}{+}-$q$-colouring of $H$. Fix a vertex $v \in V(G)$.
    
    Observe that the vertex $w_1^v$ has $q-1$ neighbours. Therefore, by definition of \type{?}{+}-$q$-colouring, all other colours in its neighbourhood appear exactly once. In particular, for $i \in [q-1]$, the colour of $w_i^v$ is different from the colour of $v$. Moreover, the vertices $w_i^v$ and $w_j^v$ must receive different colours for each $i \neq j$.
    
    We obtain a defective colouring $c:V(G) \rightarrow [q]$ of $G$ as follows. For each vertex $v \in V(G)$, we set $c(v) = c_{\type{?}{+}}(v)$. It is easy to verify that $c$ is a defective colouring with defect 1.
\end{proof}

\begin{corollary}\label{link2:v+}
    Starting from \type{v}{*}, this reduction also works for \type{v}{+}. In particular, for $q = 2$, it follows from~\cref{thm2:v*} that it is \NP-complete to decide whether a graph admits a \type{v}{+}-colouring with $q$ colours.
\end{corollary}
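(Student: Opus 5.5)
We reduce from \type{v}{*}-$2$-colouring, which is \NP-complete by \cref{thm2:v*}, using the same pendant-clique construction as in the proof of \cref{link234:?+}. For $q=2$ the clique $K_{q-1}^v$ is a single vertex $w^v$, so $H$ is obtained from an instance $G$ of \type{v}{*}-$2$-colouring by attaching one pendant $w^v$ to every vertex $v\in V(G)$. We show that $G$ has a \type{v}{*}-$2$-colouring if and only if $H$ has a \type{v}{+}-$2$-colouring.

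For the forward direction, let $c$ be a \type{v}{*}-$2$-colouring of $G$. Keep $c$ on $V(G)$ and colour each pendant $w^v$ with the colour different from $c(v)$. We then check the conditions vertex by vertex.
\begin{itemize}
\item The pendant $w^v$ has no neighbour of its own colour, which satisfies \typesymbol{v}. Its only neighbour $v$ has the other colour, which satisfies \typesymbol{+}.
\item A vertex $v\in V(G)$ has the same number of neighbours of its own colour in $H$ as in $G$, because $w^v$ has the other colour. So \typesymbol{v} still holds at $v$.
\item The pendant $w^v$ guarantees that $v$ has at least one neighbour of the other colour, so \typesymbol{+} holds at $v$.
\end{itemize}

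For the backward direction, let $c'$ be a \type{v}{+}-$2$-colouring of $H$. The pendant $w^v$ has exactly one neighbour, $v$, and condition \typesymbol{+} at $w^v$ requires a neighbour of the other colour, so $c'(w^v)\neq c'(v)$. Hence the pendant contributes nothing to the count of same-coloured neighbours of $v$. Restricting $c'$ to $V(G)$ therefore gives each $v$ the same number of same-coloured neighbours as in $H$, which is odd or zero. The restriction is thus a \type{v}{*}-$2$-colouring of $G$, because the constraint \typesymbol{*} on the other colour is vacuous.

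Membership in \NP\ is immediate, since a colouring can be checked in polynomial time.

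The only delicate step is the backward direction, namely that \typesymbol{+} at the degree-$1$ pendant forces the pendant's colour. Because this forcing happens, the pendant never disturbs the parity constraint \typesymbol{v} at $v$. This is exactly why the reduction from \cref{link234:?+} transfers from \typesymbol{?} to \typesymbol{v}.
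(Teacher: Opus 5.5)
Your proof is correct and is the paper's approach: you reuse the pendant-clique construction of \cref{link234:?+}. For $q=2$ that construction is a single pendant, and its \typesymbol{+} constraint forces it to take the colour opposite to its attachment vertex, so the \typesymbol{v} condition at that vertex is untouched. The same forcing argument works for every $q\ge 2$: $w_1^v$ has exactly $q-1$ neighbours, so \typesymbol{+} makes $v,w_1^v,\dots,w_{q-1}^v$ pairwise distinct. Hardness of \type{v}{*} is only known for $q=2$, which is why the corollary claims only that case.
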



\subsubsection{\Type{!}{0}-colouring}

Recall that $c:V(G)\rightarrow [q]$ is a \type{!}{0}-$q$-colouring of $G$ if for every $v\in V(G)$ we have $|N_G(v)\cap c^{-1}(c(v))|=1$ and $|N_G(v)\cap c^{-1}(i)|$ is even for every $i\not=c(v)$.

\begin{theorem}\label{link234:!0}
    For $q\geq 2$ it is \NP-complete to decide whether a graph admits
    a \type!0-$q$-colouring.
\end{theorem}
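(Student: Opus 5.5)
The plan is to reduce from exact defective colouring with defect $1$, i.e.\ \type!*-$q$-colouring. This problem is \NP-complete for $q=2$ \cite{Sch78} and for $q\ge 3$ \cite{DemaineKP25}. The condition ``exactly one neighbour of the own colour'' is then inherited directly. The only additional requirement of \type!0 is that every colour $i\neq c(v)$ occurs an even number of times in $N(v)$. So, as in the proof of \cref{thm:?v}, I would attach to every vertex $v$ of the input graph $G$ some extra material that can repair the parities of the foreign colour classes at $v$, whatever they are.

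Plain pendants do not work here. A pendant must have exactly one neighbour of its own colour, so it would copy $c(v)$ and destroy the defect condition at $v$. I would therefore replace pendants by a small \emph{parity gadget} $P$ with one attachment vertex $a$ adjacent to $v$. The gadget must satisfy two properties. First, for every colour $i$ and every colour $j\neq i$ of the outside neighbour, $P$ has a \type!0-colouring with $c(a)=i$ in which all gadget vertices satisfy the constraints and $a$ sees colour $j$ an even number of times overall, counting $v$. Second, in every \type!0-colouring the colour of $a$ differs from $c(v)$. My first attempt would be to build $P$ from $a$ together with a partner $a'$ of the same colour and a block of vertices forced into the other colour. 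The forced block would come in pairs, as $K_2$'s that are each completely joined to $a$ and $a'$, so that every parity comes out even. I would then add one further vertex of colour $c(v)$ inside the gadget so that $a$ sees colour $c(v)$ exactly twice. A short local case check, in the style of \cref{claim:12cubic}, should establish the forcing property.

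With such a gadget, the construction attaches $q$ copies of $P$ to every vertex of even degree and $q-1$ copies to every vertex of odd degree, exactly mirroring \cref{thm:?v}. For the forward direction, fix a \type!* colouring of $G$ and consider a vertex $v$. The colour $c(v)$ contributes exactly $1$ to $\deg_G(v)$. The parity of the number of foreign colour classes that are odd at $v$ is therefore determined by $\deg_G(v)$. This matches the choice of $q$ versus $q-1$ gadgets: one gadget goes to each odd foreign class, and the leftover gadgets, of which there is an even number, are paired up on an arbitrary foreign colour. Each gadget is then completed using its first property. For the backward direction, the second property of the gadget guarantees that no gadget vertex adjacent to $v$ has colour $c(v)$. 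Hence the restriction of any \type!0-colouring of $H$ to $V(G)$ is an exact defective colouring.

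The main obstacle is constructing $P$ so that it is simultaneously flexible and rigid. It must admit a colouring for every pair $(i,j)$, and for $q=2$ there is only one choice of foreign colour, which leaves little freedom. At the same time it must never allow $a$ to take colour $c(v)$. If a single-attachment gadget cannot be made to work, I would let $P$ attach to $v$ by two vertices $a,a'$ of a common colour, so that it contributes an even amount. In that case I would fix the parities instead by deciding, for each foreign colour, whether a separate one-vertex contribution is present, adjusting the degree-parity bookkeeping accordingly.
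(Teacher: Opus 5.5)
Your proposal has a genuine gap: the parity gadget $P$ cannot exist. More generally, no gadget that meets the rest of $H$ only at $v$ can repair parities at $v$.

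Fix a \type{!}{0}-colouring $c$ of $H$ and two colours $i\neq j$. Let $B_{ij}$ be the set of all edges between $c^{-1}(i)$ and $c^{-1}(j)$. Every vertex of colour $i$ has an even number of neighbours of colour $j$, and vice versa. So $B_{ij}$ has only even degrees, and every edge cut of $B_{ij}$ is even.

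For your gadget with $c(a)=i\neq j=c(v)$, this is just the handshake lemma in $B_{ij}[V(P)]$: $a$ would be its unique vertex of odd degree. So your first property is unsatisfiable. Your remark that pendants copy $c(v)$ is the special case that bridges are monochromatic.

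Now take any $P$ whose only outside neighbour is $v$, and apply the cut argument to $V(P)$ with $i=c(v)$. It shows that $P$ always gives $v$ an even number of neighbours of each foreign colour. This equally rules out your fallback, since its one-vertex contributions again hang off $v$ alone. Hence, under your second property, every \type{!}{0}-colouring of $H$ restricts to a \type{!}{0}-colouring of $G$. The reduction therefore fails on any $G$ that is \type{!}{*}- but not \type{!}{0}-colourable, e.g.\ the path on four vertices.

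The missing idea is to balance each odd contribution across the cut. You do this by joining the repairing vertex to a second vertex that is forced to have colour $c(v)$. The paper takes two disjoint copies $G^1,G^2$ of $G$. For each $v$ it adds several $K_2$'s $x_i^vy_i^v$, with $x_i^v$ adjacent to both $v^1$ and $v^2$. The pendant $y_i^v$ forces $c(y_i^v)=c(x_i^v)$, so $v^1,v^2$ are foreign to $x_i^v$ and must share a colour. Each $x_i^v$ thus adds exactly one foreign neighbour to $v^1$ without touching its defect, and restricting to $G^1$ gives a \type{!}{*}-colouring.

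Separately, the counts $q$ and $q-1$ you take from \cref{thm:?v} have the wrong parity for even $q$. The number $k$ of odd foreign classes satisfies $k\equiv \deg_G(v)-1 \pmod 2$. So the leftover numbers $q-k$ (even degree) and $q-1-k$ (odd degree) are both odd when $q$ is even, e.g.\ $q=2$, and cannot be paired. Since no gadget may take colour $c(v)$ here, the number of gadgets must be at least $q-1$ and congruent to $\deg_G(v)-1$ modulo $2$. The paper uses $2q-1$ and $2q$.
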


\begin{proof}
    We give a reduction from \type{!}{*}-$q$-colouring, which is \NP-complete for $2$ colours by \cite{Sch78} and for more than $2$ colours by \cite{DemaineKP25}. Given a graph $G$, we construct a graph $H$ as follows. Take two disjoint copies of $G$, called $G^1$ and $G^2$. For every vertex $v\in V(G)$ of even degree and every $i \in [2q-1]$, we add a $K_2$ with vertices $x_i^v,y_i^v$ and the edges $v^1x_i^v$ and $x_i^vv^2$, where $v^1$ and $v^2$ are the two copies of $v$. For every vertex $v\in V(G)$ of odd degree and every $i \in [2q]$, we add a $K_2$ with vertices $x_i^v,y_i^v$ and the edges $v^1x_i^v$ and $x_i^vv^2$, where $v^1$ and $v^2$ are the two copies of $v$.
    
    First, 
    let $c:V(G) \rightarrow [q]$ be a \type{!}{*}-$q$-colouring of $G$. We obtain a \type{!}{0}-$q$-colouring $c_{\type{!}{0}}:V(H) \rightarrow [q]$ of $H$ as follows. For every vertex $v \in V(G)$, we set $c_{\type{!}{0}}(v^1) = c_{\type{!}{0}}(v^2) = c(v)$. Fix a vertex $v \in V(G)$ and assume first that $v$ has even degree.
    Since the degree of $v$ is even, there is an even number of colour classes containing an odd number of neighbours of $v$. 
    For each $i \in [q] \setminus \{c(v)\}$, if $v$ has an odd number of neighbours of colour $i$, we set $c_{\type{!}{0}}(x_i^v)=c_{\type{!}{0}}(y_i^v)=i$. Fix an arbitrary colour $j$ not equal to $c(v)$ (this is always possible as $q\geq 2$). For every $i \in [2q-1]$ for which $x_i^v$ and $y_i^v$ have not yet been assigned a colour we set $c_{\type{!}{0}}(x_i^v)=c_{\type{!}{0}}(y_i^v)=j$, adding an even number of extra neighbours of $v^1$ and $v^2$ of colour $j$. Observe that $v^1$ and $v^2$ have an even number of neighbours of each colour in $H$ except their own. Furthermore, since $v$ has exactly one neighbour of colour $c(v)$ in $G$, we do not assign any $x_i^v$ the colour $c(v)$, and therefore $|N_H(v^1)\cap c_{\type{!}{0}}^{-1}(c_{\type{!}{0}}(v^1))|=|N_H(v^2)\cap c_{\type{!}{0}}^{-1}(c_{\type{!}{0}}(v^2))|=|N_G(v)\cap c_{\type{!}{0}}^{-1}(c_{\type{!}{0}}(v))|$. Since we guarantee that $c_{\type{!}{0}}(x_i^v)\neq c(v)=c_{\type{!}{0}}(v^1)=c_{\type{!}{0}}(v^2)$, $x_i^v$ has one neighbour $y_i^v$ of its own colour and $0$ or $2$ neighbours of any other colour. Finally, $y_i^v$ has exactly one neighbour which has the same colour.

    Now assume that $v$ has odd degree. 
    Since the degree of $v$ is odd, there is an odd number of colour classes containing an odd number of neighbours of $v$. For each $i \in [q] \setminus \{c(v)\}$, if $v$ has an odd number of neighbours of colour $i$, we set $c_{\type{!}{0}}(x_i^v)=c_{\type{!}{0}}(y_i^v)=i$. Fix an arbitrary colour $j$ not equal to $c(v)$. For every $i \in [2q]$ for which $x_i^v$ and $y_i^v$ have not yet been assigned a colour we set $c_{\type{!}{0}}(x_i^v)=c_{\type{!}{0}}(y_i^v)=j$, adding an even number of extra neighbours of $v^1$ and $v^2$ of colour $j$. Similarly to before we can justify that this is a valid \type{!}{0}-colouring.

    On the other hand, 
    let $c_{\type{!}{0}}:V(H) \rightarrow [q]$ be a \type{!}{0}-$q$-colouring of $H$. Fix a vertex $v \in V(G)$. 
    Since each $y_i^v$ has degree $1$, 
    by the definition of \type{!}{0}-$q$-colouring we obtain that $c_{\type{!}{0}}(x_i^v)=c_{\type{!}{0}}(y_i^v)$. Since $v^1$ is adjacent to $x_i^v$ it follows that $c_{\type{!}{0}}(x_i^v)\neq c_{\type{!}{0}}(v^1)$. Therefore, we have $|N_H(v^1)\cap c_{\type{!}{0}}^{-1}(c_{\type{!}{0}}(v^1))|=|N_{G^1}(v^1)\cap c_{\type{!}{0}}^{-1}(c_{\type{!}{0}}(v^1))|$ by definition of \type{!}{0}-$q$-colouring. We conclude that restricting the colouring $c_{\type{!}{0}}$ to $G^1 \cong G$ results in a \type{!}{*}-colouring $c:V(G) \rightarrow [q]$.
\end{proof}


\subsubsection{\Type{!}{v} and \Type{1}{v}-colouring}

Recall that $c:V(G)\rightarrow [q]$ is a $\sigma$-$q$-colouring of $G$ for $\sigma\in \{\type{!}{v},\type{1}{v}\}$ if for every vertex $v$ it holds that $|N_G(v)\cap c^{-1}(i)|$ is either odd or $0$ for every $i\not=c(v)$. Additionally, for \type{!}{v}-$q$-colouring $|N_G(v)\cap c^{-1}(c(v))|=1$ and for \type{1}{v}-$q$-colouring $|N_G(v)\cap c^{-1}(c(v))|$ is odd.

\begin{theorem}\label{thm:!v}\label{link234:!v}\label{link34:1v}
    For $q \geq 2$ it is \NP-complete to decide whether a graph admits a \type{!}{v}-$q$-colouring  and for $q\geq 3$ it is \NP-complete to decide whether a graph admits a \type{1}{v}-$q$-colouring.
\end{theorem}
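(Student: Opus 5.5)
Both parts of \cref{thm:!v} can be proved by a single construction in the style of the proof of \cref{link234:!0}: pendant $K_2$'s are attached to repair the parities of the non-own colour classes. For \type!v the source problem is \type!*-$q$-colouring, which is \NP-complete for $q\ge 2$ by \cite{Sch78,DemaineKP25}. For \type1v the source is \type1*-$q$-colouring (odd colouring), which is \NP-complete for $q\geq 3$ by \cite{BelmonteS21}; this is exactly why that part is restricted to $q\ge 3$. Given $G$, let $H$ be obtained by attaching to every $v\in V(G)$ some number $N_v$ of pendant edges $x_i^vy_i^v$. Here $x_i^v$ is adjacent to $v$ and to $y_i^v$ only, and $y_i^v$ is a leaf. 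We will choose $N_v$ depending on $\deg_G(v)$ and $q$, for instance $N_v\in\{2q-1,2q\}$ according to the parity of $\deg_G(v)$.

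\textbf{Backward direction.} Let $c'$ be a \type!v- (respectively \type1v-) colouring of $H$. The leaf $y_i^v$ needs a neighbour of its own colour, so $c'(y_i^v)=c'(x_i^v)$. Now suppose $c'(x_i^v)=c'(v)$. In the \type!v case, $x_i^v$ would then have two neighbours of its own colour, which is forbidden. In the \type1v case, that count would be $2$, which is even and hence also forbidden. So $c'(x_i^v)\neq c'(v)$, and the pendants contribute nothing to $v$'s own-colour count. Therefore the restriction of $c'$ to $G$ satisfies the own-colour constraint \type! (respectively \type1) at every vertex. Since the second constraint of the source type is \type*, this restriction is a \type!*- (respectively \type1*-) colouring of $G$.

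\textbf{Forward direction.} Start from a colouring $c$ of $G$ of the source type and keep it on $V(G)$. Give each pair $x_i^v,y_i^v$ a common colour different from $c(v)$. Then $y_i^v$ has exactly one own-colour neighbour. The vertex $x_i^v$ has exactly one own-colour neighbour, namely $y_i^v$, and sees colour $c(v)$ exactly once (odd) and every other colour $0$ times. So all pendant vertices are fine, and the own-colour count at $v$ is unchanged.

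It remains to distribute the $N_v$ pairs among the colours $i\neq c(v)$ so that every count $|N_H(v)\cap c^{-1}(i)|$ ends up odd or $0$. Call a colour $i\ne c(v)$ \emph{bad} if its count in $G$ is even and positive. First give one pair to each bad colour. After that every non-own colour has an odd or zero count, and we must dispose of the remaining pairs. An even number of leftover pairs can be dumped on any colour whose count is currently odd. An odd leftover can be absorbed by putting one pair on a colour whose count is currently $0$, or by moving a single pair to a colour with count $0$.

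\textbf{Main obstacle.} The step I expect to be hardest is the parity bookkeeping that guarantees a suitable dumping colour always exists, whatever the local pattern of counts at $v$. The key observation is that the number of colours $i\neq c(v)$ with odd count has the parity of $\deg_G(v)$ minus the own-colour count. Choosing the parity of $N_v$ accordingly, in the spirit of the choice $2q-1$ versus $2q$ in \cref{link234:!0}, should ensure the leftover is even whenever no odd colour is available. The remaining case is when all non-own counts are zero. Then a leftover of two pairs can go to two distinct other colours, which needs $q\ge 3$, or, for \type!v with $q=2$, to a single colour if the choice of $N_v$ keeps the leftover odd. These small cases I would settle by a direct case analysis on $\deg_G(v)$ and $q$, adjusting $N_v$ if necessary.
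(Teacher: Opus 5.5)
\textbf{Gap: the parity of $N_v$ in the forward direction.} Your construction and your backward direction coincide with the paper's, and the backward argument is correct. The gap is in the forward direction. You defer its parity bookkeeping, and the rule you point to fails.

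The case that forces the parity of $N_v$ is not ``no odd colour is available''. After your first step that is the same as all non-own counts being $0$, and there it is an \emph{even} leftover that causes trouble. The binding case is when every colour $i\neq c(v)$ already occurs in $N_G(v)$. Then an odd leftover cannot be absorbed, because a colour with odd count can only take an even number of further pairs.

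Write $d=\deg_G(v)$, let $s$ be the (odd) own-colour count, and let $O$ be the set of non-own colours with odd count. In this case the number of bad colours is $q-1-|O|\equiv q-1-(d-s)\equiv q-d \pmod 2$. So you need $N_v\equiv q-d\pmod 2$, which depends on $q$ and not only on $d$.

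The rule copied from \cref{link234:!0} takes $N_v=2q-1$ for even $d$ and $N_v=2q$ for odd $d$, i.e.\ $N_v\equiv d-1$. This is right only for odd $q$. For example, take $q=2$ and a vertex with $d=3$, one own-colour neighbour and two of the other colour. It gets $N_v=4$ pairs, and by your own backward argument all of them must take the other colour, which then appears $6$ times. With $d=2$ you get $1+3=4$. The same failure occurs for every even $q$, e.g.\ $q=4$, $d=5$, non-own counts $2,1,1$ and $N_v=8$.

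\textbf{The repair.} Take $N_v\ge q-1$ with $N_v\equiv q-\deg_G(v)\pmod 2$, e.g.\ $N_v=q$ for even degree and $N_v=q-1$ for odd degree. This is exactly the paper's choice. Then the leftover $L=N_v-|B|$ satisfies $L\equiv|Z|\pmod 2$, where $Z$ is the set of non-own colours with count $0$, and your scheme closes as follows.
\begin{itemize}
\item If $L$ is odd, then $Z\neq\emptyset$, and you put all $L$ pairs on one zero colour.
\item If $L$ is even and positive, you dump it on a colour of odd count.
\item Such a colour is missing only when all non-own counts are $0$. Then $L=N_v\equiv q-1$, so $L$ is even only for odd $q\ge 3$, and you split it as $1+(L-1)$ over two of the $q-1\ge2$ available colours.
\end{itemize}

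The paper avoids this case analysis by never using the option $0$. It indexes the pairs by colours. Pair $i$ gets colour $i$ whenever colour $i$ has even count at $v$, and every other pair gets one fixed colour $j\neq c(v)$. Since $c(v)$ has odd count, no pair receives $c(v)$. The number of pairs sent to $j$ equals the number of odd-count index colours, which is even by the parity of $\deg_G(v)$. Hence every non-own count becomes odd.
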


\begin{proof}
    We define $f:\{\type{!}{*},\type{1}{*}\} \rightarrow \{\type{!}{v},\type{1}{v}\}$ as $f(\type{!}{*}) = \type{!}{v}$ and $f(\type{1}{*}) = \type{1}{v}$. 
    Fixing $\sigma \in \{\type{!}{*},\type{1}{*}\}$, we reduce $\sigma$-$q$-colouring to $f(\sigma)$-$q$-colouring. Note that \type{!}{*}-$q$-colouring is \NP-complete for $2$ colours by \cite{Sch78} and for more than $2$ colours by \cite{DemaineKP25}, and \type1*-colouring is \NP-complete for $q\geq 3$ colours by \cite{BelmonteS21}. Given a graph $G$, we construct a graph $H$ as follows. For every vertex $v\in V(G)$ of even degree and every $i \in [q]$ we add a $K_2$ with vertices $x_i^v,y_i^v$ and the edge $vx_i^v$ to $G$. For every vertex $v\in V(G)$ of odd degree and every $i \in [q-1]$ we add a $K_2$ with vertices $x_i^v,y_i^v$ and the edge $vx_i^v$ to $G$.

    First, 
    let $c_{\sigma}:V(G) \rightarrow [q]$ be a $\sigma$-$q$-colouring of $G$. We obtain a $f(\sigma)$-$q$-colouring $c_{f(\sigma)}:V(H) \rightarrow [q]$ of $H$ as follows. For every vertex $v \in V(G)$, we set $c_{f(\sigma)}(v) = c_{\sigma}(v)$. Fix a vertex $v \in V(G)$ and assume first that $v$ has even degree. 
    Since the degree of $v$ is even, there is an even number of colour classes containing an odd number of neighbours of $v$. 
    Hence, there is an even number, if $q$ is even, or an odd number, if $q$ is odd, of colour classes containing an even number of neighbours of $v$. For each $i \in [q]$, if $v$ has an even number of neighbours of colour $i$, we set $c_{f(\sigma)}(x_i^v)=c_{f(\sigma)}(y_i^v)=i$. Fix an arbitrary colour $j$ not equal to $c_{\sigma}(v)$ (this is always possible as $q\geq 2$). For every $i \in [q]$ for which $v$ has an odd number of neighbours of colour $i$ we set $c_{f(\sigma)}(x_i^v)=c_{f(\sigma)}(y_i^v)=j$ adding an even number of extra neighbours of $v$ of colour $j$. Observe that $v$ has an odd number of neighbours of each colour in $H$. Furthermore, since $v$ has an odd number of neighbour of colour $c(v)$ in $G$, we do not assign any $x_i^v$ the colour $c_{\sigma}(v)$, and therefore $|N_H(v)\cap c_{f(\sigma)}^{-1}(c_{f(\sigma)}(v))|=|N_G(v)\cap c_{f(\sigma)}^{-1}(c_{f(\sigma)}(v))|$. Since we guarantee that $c_{f(\sigma)}(x_i^v)\neq c_{\sigma}(v)=c_{f(\sigma)}(v)$, $x_i^v$ has one neighbour $y_i^v$ of its own colour and $0$ or $1$ neighbours of any other colour. Finally, $y_i^v$ has exactly one neighbour which has the same colour.

    Now assume that $v$ has odd degree. 
    Since the degree of $v$ is odd, there is an odd number of colour classes containing an odd number of neighbours of $v$. In particular, there must be at least one colour class containing an odd number of neighbours of $v$ and, without loss of generality, assume the respective colour is $q$. We additionally obtain that there is an even number, if $q-1$ is even, or an odd number, if $q-1$ is odd, of colour classes containing an even number of neighbours of $v$. For each $i\in [q-1]$, if $v$ has an even number of neighbours of colour $i$, we set $c_{f(\sigma)}(x_{i}^v)=c_{f(\sigma)}(y_{i}^v)=i$. Fix an arbitrary colour $j$ not equal to $c_{\sigma}(v)$. For every $i \in [q-1]$ for which $v$ has an odd number of neighbours of colour $i$ we set $c_{f(\sigma)}(x_i^v)=c_{f(\sigma)}(y_i^v)=j$ adding an even number of extra neighbours of colour $j$. Similarly to before we can justify that this is a valid $f(\sigma)$-colouring.

    On the other hand, 
    let $c_{f(\sigma)}:V(H) \rightarrow [q]$ be a $f(\sigma)$-$q$-colouring of $H$. Fix a vertex $v \in V(G)$. 
    Since each $y_i^v$ has degree $1$, 
    by the definition of $f(\sigma)$-$q$-colouring we obtain that $c_{f(\sigma)}(x_i^v)=c_{f(\sigma)}(y_i^v)$. Since $v$ is adjacent to $x_i^v$ it follows that $c_{f(\sigma)}(x_i^v)\neq c_{f(\sigma)}(v)$. Therefore, we have $|N_H(v)\cap c_{f(\sigma)}^{-1}(c_{f(\sigma)}(v))|=|N_G(v)\cap c_{f(\sigma)}^{-1}(c_{f(\sigma)}(v))|$ by definition of $f(\sigma)$-$q$-colouring. We conclude that restricting $c_{f(\sigma)}$
    to $V(G)$ yields a $\sigma$-$q$-colouring.
\end{proof}

\subsubsection{\Type{0}{!}-colouring}
Note that $c:V(G)\rightarrow [q]$ is a \type0!-$q$-colouring if for every $v\in V(G)$ we have $|N_G(v)\cap c^{-1}(c(v))|$ is even and $|N_G(v)\cap c^{-1}(i)|=2$ for every $i\not=c(v)$. We show the following.
\begin{theorem}\label{link234:0!}
    For $q\geq 2$ it is \NP-complete to decide whether a graph admits
    a \type0!-$q$-colouring.
\end{theorem}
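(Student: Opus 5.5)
We would reduce from \type2?-$2$-colouring restricted to cubic graphs, which is \NP-complete by \cref{link2:2?} because the graph $H_\varphi$ built there is cubic. We use the intended reading of \type0!: the number of neighbours of $v$'s own colour is even, and every other colour occurs exactly once in $N(v)$.

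For $q=2$ no construction is needed. On a cubic graph, the colour $\ne c(v)$ appears exactly once, so the own colour appears twice, which is even. Conversely, in a \type2?-$2$-colouring of a cubic graph the own colour appears an even number of times and at least twice, hence exactly twice, and the other colour appears once. So \type0!-$2$-colouring and \type2?-$2$-colouring coincide on cubic graphs, and $H_\varphi$ gives the reduction directly.

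For $q\ge 3$, given a cubic graph $G$, we build $H$ as follows.
\begin{itemize}
\item Take two disjoint copies $G^1$ and $G^2$ of $G$.
\item For every $v\in V(G)$ add a clique $W_v$ on $q-2$ new vertices.
\item Join every vertex of $W_v$ to both $v^1$ and $v^2$.
\end{itemize}
A vertex $w\in W_v$ then has degree exactly $q-1$. It must see each of the other $q-1$ colours exactly once, so its own colour must appear zero times in its neighbourhood. Hence $W_v\cup\{v^1,v^2\}$ is rainbow, that is, it receives all $q$ colours bijectively. A copy $v^1$ has degree $3+(q-2)=q+1$. Since $W_v$ already supplies every colour except $c(v^1)$ and $c(v^2)$, the three $G^1$-neighbours of $v^1$ must consist of exactly two vertices of colour $c(v^1)$ and one of colour $c(v^2)$. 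The symmetric statement holds for $v^2$ in $G^2$.

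The forward direction is straightforward. Given a \type2?-$2$-colouring $c$ of $G$, colour $v^1$ by $c(v)$ and $v^2$ by $3-c(v)$. Swapping the two colours preserves \type2?, so $G^2$ is also correctly coloured internally. Give $W_v$ the colours $3,\dots,q$ bijectively. Every condition is then checked locally by the degree counts above.

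The backward direction is where I expect the main obstacle. Locally, at every vertex, the colouring restricted to $G^1$ looks like a \type2?-colouring with the ``other'' colour being $c(v^2)$. However, the pair $\{c(v^1),c(v^2)\}$ could a priori vary from vertex to vertex. The plan is to show that $v\mapsto (c(v^1),c(v^2))$ behaves consistently along edges:
\begin{itemize}
\item if $uv\in E(G)$ and $c(u^1)=c(v^1)$, then $c(u^2)=c(v^2)$;
\item if $c(u^1)\ne c(v^1)$, then $c(u^1)=c(v^2)$ and $c(v^1)=c(u^2)$.
\end{itemize}
Both claims should be proved using the constraints at $v^1,u^1$ together with the mirrored constraints at $v^2,u^2$. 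A same-coloured edge in $G^1$ should correspond to a same-coloured edge in $G^2$, because the matching edges (those joining different colours) are determined by which neighbour carries the partner colour.

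Given this consistency, $\{c(v^1),c(v^2)\}$ is constant on each connected component of $G$. We may then define $c'(v)=1$ if $c(v^1)$ is the smaller element of the pair and $c'(v)=2$ otherwise. The local counts show that $c'$ is a \type2?-$2$-colouring of $G$.

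If the consistency step fails, the fallback is to enlarge the gadget so that it pins the pair of colours. For instance, we could connect the sets $W_u$ and $W_v$ along each edge $uv$ by a perfect matching, which would force the colour sets $c(W_u)$ and $c(W_v)$ to coincide, and then redo the degree counts accordingly.
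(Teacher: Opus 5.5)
Your proposal is correct, but it takes a different route from the paper.

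\textbf{The paper's route.} It works separately for each $q\ge 2$ and reduces from \type{1}{!}-$q$-colouring. That problem is hard by \cref{cor2:1!} for $q=2$, and by \cref{cor34:1!} for $q\ge 3$, where it rests on edge colouring via improper rainbow colourings. The construction takes $G^1,G^2$ and a $q$-clique $X^v$ whose vertex $x_q^v$ is joined to $v^1$ and $v^2$. The degree-$(q-1)$ clique vertices force $X^v$ to be rainbow. The parity condition at $x_q^v$ then forces $c(v^1)=c(v^2)=c(x_q^v)$. So the gadget adds exactly one own-coloured neighbour, turning ``odd'' into ``even'' without changing any other colour count.

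\textbf{Your route.} You start from a single $2$-colour problem, \type{2}{?}-$2$-colouring of cubic graphs (\cref{link2:2?}), and observe that it coincides with \type{0}{!}-$2$-colouring on cubic graphs. For $q\ge 3$ the $(q-2)$-clique $W_v$ acts as a padding gadget. It absorbs $q-2$ colours at $v^1$ and $v^2$ and forces $c(v^1)\ne c(v^2)$. Thus $G^1$ and $G^2$ carry the same $2$-colouring with the two surviving colours swapped.

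\textbf{What each buys.} Your route lifts one $2$-colour hardness result to every $q$, so it does not need the edge-colouring-based hardness of \type{1}{!} for $q\ge 3$. It also gives hardness on cubic graphs for $q=2$. The paper's gadget needs no regularity of the input, whereas your degree counts use that $G$ is cubic. In exchange, it transfers any hardness of \type{1}{!} to \type{0}{!} for the same $q$.

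\textbf{The consistency step.} The step you flagged as the main obstacle is immediate, so the fallback is unnecessary. You already showed that every $G^1$-neighbour of $v^1$ and every $G^2$-neighbour of $v^2$ has its colour in $P_v=\{c(v^1),c(v^2)\}$. For $uv\in E(G)$ this gives $c(u^1),c(u^2)\in P_v$. Moreover $c(u^1)\ne c(u^2)$, because $W_u\cup\{u^1,u^2\}$ is rainbow. Hence $P_u=P_v$, which is exactly your two bullet claims; no argument about ``matching edges'' is needed.

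With $P_u=P_v$ along every edge, the count at $v^1$ (two neighbours of colour $c(v^1)$, one of colour $c(v^2)$) translates directly into the \type{2}{?} condition for $c'$ at $v$.
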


\begin{proof}
    We give a reduction from \type{1}{!}-$q$-colouring, which is \NP-complete for $2$ colours by \cref{cor2:1!} and for more than $2$ colours by \cref{cor34:1!}. Given a graph $G$, we construct a graph $H$ as follows. Take two disjoint copies of $G$, called $G^1$ and $G^2$. For each vertex $v \in V(G)$, we introduce a $q$-clique $X^v$ with vertices $x_1^v, \dots, x_q^v$.
    Finally, if $v^1$ and $v^2$ are the two copies of $v$, we add the edges $v^1x_q^v$ and $x_q^vv^2$. This completes the construction of $H$, which is illustrated in Figure~\ref{fig:0!}.

    \begin{figure}[h]
  \centering
  \begin{tikzpicture}[scale=0.4]
  \def \dist {0.8cm}
  \def \r{2cm}
  \tikzstyle{lw2}=[line width=1.2]

        \draw[loosely dotted, lw2] (5:2) arc (5:100:2);
        \draw[loosely dotted, lw2] (215:2) arc (215:293:2);
      \node[b](m) at (0,0){};
      \node[gb](u) at (105:2){};
      \node[gb](v) at (157:2){};
      \node[gg](x) at (210:2){};
      \node[gr](y) at (308:2){};
      \node[gb](z) at (360:2){};
      \draw(m)--(u);
      \draw(m)--(v);
      \draw(m)--(x);
      \draw(m)--(y);
      \draw(m)--(z);

      \begin{scope}[xshift=13cm]
        \draw[loosely dotted, lw2] (80:2) arc (80:175:2);
        \draw[loosely dotted, lw2] (250:2) arc (250:310:2);
      \node[b](mm) at (0,0){};
      \node[gb](uu) at (75:2){};
      \node[gb](vv) at (23:2){};
      \node[gr](xx) at (232:2){};
      \node[gg](yy) at (330:2){};
      \node[gb](zz) at (180:2){};
      \draw(mm)--(uu);
      \draw(mm)--(vv);
      \draw(mm)--(xx);
      \draw(mm)--(yy);
      \draw(mm)--(zz);
      \end{scope}

      \begin{scope}[xshift=6.5cm,yshift=2cm]
          \draw[loosely dotted, lw2] (10:\r) arc (10:170:\r);
            \node[ye](y1) at (190:\r){};
            \node[r](x2) at (240:\r){};
            \node[g](y2) at (300:\r){};
            \node[tq](x3) at (350:\r){};
            \node[b](c) at (0,-4){};

            \draw (y1)--(x2)(y1)--(x3)(y1)--(y2);
            \draw (x2)--(x3)(x2)--(y2);
            \draw (y2)--(x3);
            \draw (c)--(x2)(c)--(y2)(c) to[bend left=20] (y1)(c) to[bend right=20] (x3);
            \draw (m) to[bend right=20] (c) to[bend right=20] (mm);
            \node at (0,0.7) {$K_{q-1}$};
            
      \end{scope}

  \end{tikzpicture}
  \caption{Construction for $q \geq 2$.}
    \label{fig:0!}
\end{figure}

    First,
    let $c:V(G) \rightarrow [q]$ be a \type{1}{!}-colouring of $G$. We obtain a \type0!-colouring $c_{\type{0}{!}}:V(H) \rightarrow [q]$ of $H$ as follows. Fix a vertex $v \in V(G)$. For each $\alpha \in \{1,2\}$, we set $c_{\type{0}{!}}(v^\alpha) = c_{\type{0}{!}}(x_q^v) = c(v)$. Let $c_1,\dots, c_{q-1}$ be the remaining colours (the colours different from $c(v)$). We assign $x_i^v$ the colour $c_i$ for each $i \in [q-1]$. It is easy to verify that $c_{\type{0}{!}}$ is a \type0!-colouring, as illustrated in Figure~\ref{fig:0!}.

    On the other hand, 
    let $c_{\type{0}{!}}:V(H) \rightarrow [q]$ be a \type0!-colouring of $H$. Fix a vertex $v \in V(G)$.
    Observe that the vertex $x_1^v$ has $q-1$ neighbours. Therefore, by definition of \type0!-$q$-colouring, all  colours apart from $c_{\type{0}{!}}(x_1^v)$  appear exactly once in its neighbourhood. In particular, the vertices $x_i^v$ and $x_j^v$ must receive different colours for each $i \neq j$. Moreover, since $c_{\type{0}{!}}(x_q^v)$ has to appear an even number of times in the neighbourhood of $x_q^v$, the vertices $v^1$ and $v^2$ must receive the same colour as $x_q^v$.

    As a consequence of the previous paragraph it follows that restricting the colouring $c_{\type{0}{!}}$ to $G^1 \cong G$ results in a \type{1}{!}-colouring $c:V(G) \rightarrow [q]$.
\end{proof}

\subsection{Polynomial cases using systems of linear equations}

In this section, we prove the polynomial-time solvability of some problems using a technique similar to that in~\cite{BelmonteS21}. In summary, we express the problem as the existence of a feasible solution to a system of linear equations over the binary field, which can be decided in polynomial time via Gaussian elimination.

\subsubsection{\Type{0}{1}-colouring}

\begin{theorem}\label{link2:01}
    For $q > 0$, \type{0}{1}-$q$-colouring is polynomial for $q = 2$.
\end{theorem}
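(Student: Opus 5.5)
We encode a $2$-colouring by a vector $x\in\mathbb{F}_2^{V(G)}$, where $x_v=1$ means $c(v)=2$ and $x_v=0$ means $c(v)=1$. We then express the \type01 condition as linear equations over $\mathbb{F}_2$ and decide solvability by Gaussian elimination, in the spirit of \cite{BelmonteS21}.

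First we translate the conditions. Fix a vertex $v$ and write $s_v=\sum_{w\in N(v)}x_w$ over $\mathbb{F}_2$. For $q=2$ there is exactly one colour $i\neq c(v)$. The number of neighbours of $v$ with the same colour as $v$ is $\sum_{w\in N(v)}(1+x_w+x_v)=\deg(v)(1+x_v)+s_v \pmod 2$. The number with the other colour is $\sum_{w\in N(v)}(x_w+x_v)=\deg(v)x_v+s_v\pmod 2$. The \type01 condition requires the first to be $0$ and the second to be $1$ modulo $2$. Adding the two equations gives $\deg(v)\equiv 1\pmod 2$. So if some vertex has even degree, no \type01-$2$-colouring exists, and we answer NO immediately.

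If every vertex has odd degree, the two equations coincide (they are complementary by the previous remark). The remaining condition is $x_v+s_v=1$, that is,
\[ x_v+\sum_{w\in N(v)}x_w = 1 \quad\text{for all } v\in V(G), \]
or $(A+I)x=\mathbf 1$ over $\mathbb{F}_2$, where $A$ is the adjacency matrix. Feasibility of this system can be decided in polynomial time by Gaussian elimination, and any solution directly yields the colouring.

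The main point to check carefully is that the parity computations are exact, with no extra conditions hidden in the \type0 and \type1 constraints. This holds because both are pure parity constraints: \type1 forces the other colour to appear at least once, but that is implied by oddness. There is also no nonemptiness condition as in \type2, so the linear system is an exact characterisation. One could additionally remark that, since $A+I$ is symmetric with all-ones diagonal, Gallai-type arguments show $\mathbf 1$ lies in the column space of $A+I$. This would make the answer YES exactly when all degrees are odd, but that refinement is not needed for polynomiality.
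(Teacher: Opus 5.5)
Your proof is correct and is essentially the paper's argument: both first rule out even-degree vertices, then decide a linear system over $\mathbb{F}_2$ by Gaussian elimination. The only difference is that you eliminate the paper's auxiliary edge variables $x_{uv}$ and write the system directly as $(A+I)x=\mathbf 1$. Your closing remark is also right: for symmetric $M$ over $\mathbb{F}_2$ one has $y^\top M y=y^\top\mathrm{diag}(M)$, so $\mathrm{diag}(M)\perp\ker M$, and this sharpens the statement to ``a \type{0}{1}-$2$-colouring exists if and only if every vertex has odd degree''.
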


\begin{proof}
    For $q = 2$, we build a system of linear equations over the binary field. Given a graph $G$, introduce a variable $x_v \in \{0,1\}$ for each $v \in V(G)$, where $x_v = 1$ if the vertex $v$ receives colour 1, and $x_v = 0$ otherwise. Now, introduce a variable $x_{uv} \in \{0,1\}$ for each $uv \in E(G)$, where $x_{uv} = 1$ if the vertices $u$ and $v$ receive the same colour, and $x_{uv} = 0$ otherwise. We guarantee this latter property by adding the following set of linear equations:
    \begin{equation}
        x_u + x_v + x_{uv} \equiv_2 1 \quad \text{for all } uv \in E(G).
    \end{equation}

    Next, observe that no vertex of even degree may exists. We guarantee this property by adding the linear equation $0 \equiv_2 1$ if such a vertex exists.
    
    Similarly, observe that every vertex of odd degree must share its colour with an even number of neighbours. We guarantee this property by adding the following set of linear equations:
    \begin{equation}
        \sum_{v \in N(u)} x_{uv} \equiv_2 0 \quad \text{for all } u \in V(G) \text{ such that } d(u) \text{ is odd}.
    \end{equation}

    It is easy to verify that $G$ is a YES-instance of \type01-colouring if and only if the system of linear equations above admits a feasible solution, which allows us to conclude.
\end{proof}

\subsubsection{\Type{0}{v}-colouring}

\begin{theorem}\label{link2:0v}
    For $q > 0$, \type{0}{v}-$q$-colouring is polynomial for $q = 2$.
\end{theorem}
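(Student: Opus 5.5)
The plan is to follow the approach of the proof of \cref{link2:01}: translate \type0v-$2$-colouring into a system of linear equations over the binary field and decide its feasibility by Gaussian elimination. As there, we introduce a variable $x_v\in\{0,1\}$ for each $v\in V(G)$ encoding whether $v$ receives colour $1$. We also introduce a variable $x_{uv}\in\{0,1\}$ for each edge $uv\in E(G)$, with the constraint $x_u+x_v+x_{uv}\equiv_2 1$, so that $x_{uv}=1$ exactly when $u$ and $v$ receive the same colour.

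The key step is a case analysis on the parity of $\deg(u)$. With two colours, the constraints at $u$ only involve $s=|N(u)\cap c^{-1}(c(u))|$ and $\deg(u)-s$. The constraint \typesymbol{0} requires $s$ even, and \typesymbol{v} requires $\deg(u)-s$ to be odd or zero.
\begin{itemize}
\item If $\deg(u)$ is even, then $s$ even forces $\deg(u)-s$ even. So \typesymbol{v} can only hold with $\deg(u)-s=0$, i.e.\ all neighbours of $u$ have the colour of $u$. Conversely, if all neighbours share $u$'s colour then $s=\deg(u)$ is even and both constraints hold. We therefore add $x_{uw}\equiv_2 1$ for every $w\in N(u)$. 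This also covers isolated vertices, which impose no constraint.
\item If $\deg(u)$ is odd, then $s$ even automatically makes $\deg(u)-s$ odd, so \typesymbol{v} holds for free. We only add $\sum_{w\in N(u)}x_{uw}\equiv_2 0$.
\end{itemize}

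Next, we verify that $G$ admits a \type0v-$2$-colouring if and only if this system has a solution over $\mathbb{F}_2$. This follows directly from the equivalences above: a colouring gives a solution via $x_v$ and $x_{uv}$, and any solution's $x_v$-values give a colouring that satisfies the constraints at every vertex. The system has $|V(G)|+|E(G)|$ variables and $O(|V(G)|+|E(G)|)$ equations, so Gaussian elimination runs in polynomial time.

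There is no real obstacle here. The only point needing care is the even-degree case: there the \typesymbol{v} condition collapses from a non-linear ``odd or zero'' requirement to the linear requirement that the count is zero. The key observation is that for an even-degree vertex the count can never be odd, so this collapse happens at every such vertex.
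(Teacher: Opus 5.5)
Your proposal is correct and matches the paper's proof essentially verbatim. It uses the same $\mathbb{F}_2$ variables $x_v$ and $x_{uv}$ with $x_u+x_v+x_{uv}\equiv_2 1$, the same constraints $x_{uw}\equiv_2 1$ for every neighbour of an even-degree vertex and $\sum_{w\in N(u)}x_{uw}\equiv_2 0$ for odd-degree vertices, and Gaussian elimination. Your parity argument for why the \typesymbol{v} condition reduces to ``zero'' at even-degree vertices and holds automatically at odd-degree ones is exactly the reasoning the paper leaves implicit.
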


\begin{proof}
    For $q = 2$, we build a system of linear equations over the binary field. Given a graph $G$, introduce a variable $x_v \in \{0,1\}$ for each $v \in V(G)$, where $x_v = 1$ if the vertex $v$ receives colour 1, and $x_v = 0$ otherwise. Now, introduce a variable $x_{uv} \in \{0,1\}$ for each $uv \in E(G)$, where $x_{uv} = 1$ if the vertices $u$ and $v$ receive the same colour, and $x_{uv} = 0$ otherwise. We guarantee this latter property by adding the following set of linear equations:
    \begin{equation}
        x_u + x_v + x_{uv} \equiv_2 1 \quad \text{for all } uv \in E(G).
    \end{equation}

    Next, observe that every vertex of even degree must share its colour with all its neighbours. We guarantee this property by adding the following set of linear equations:
    \begin{equation}
        x_{uv} \equiv_2 1 \quad \text{for all } u \in V(G) \text{ such that } d(u) \text{ is even}, \text{ for all } v \in N(u).
    \end{equation}
    Similarly, observe that every vertex of odd degree must share its colour with an even number of neighbours. We guarantee this property by adding the following set of linear equations:
    \begin{equation}
        \sum_{v \in N(u)} x_{uv} \equiv_2 0 \quad \text{for all } u \in V(G) \text{ such that } d(u) \text{ is odd}.
    \end{equation}

    It is easy to verify that $G$ is a YES-instance of \type0v-colouring if and only if the system of linear equations above admits a feasible solution, which allows us to conclude.
\end{proof}

\subsubsection{\Type{1}{1}-colouring}

\begin{theorem}\label{link2:11}
    For $q > 0$, \type{1}{1}-$q$-colouring is polynomial for $q = 2$.
\end{theorem}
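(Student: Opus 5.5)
We reduce \type11-$2$-colouring to the solvability of a linear system over the binary field, following the proofs of \cref{link2:01} and \cref{link2:0v}. Given $G$, we introduce a variable $x_v\in\{0,1\}$ for each vertex (with $x_v=1$ iff $v$ receives colour $1$). We also introduce a variable $x_{uv}\in\{0,1\}$ for each edge, indicating whether $u$ and $v$ receive the same colour. This meaning is enforced by the equations $x_u+x_v+x_{uv}\equiv_2 1$ for every $uv\in E(G)$.

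Next we translate the constraint at each vertex $u$. For two colours, $u$ must have an odd number of neighbours of its own colour and an odd number of neighbours of the other colour. Hence $\deg(u)$ must be even, so if any vertex has odd degree we add the infeasible equation $0\equiv_2 1$. For every vertex $u$ (necessarily of even degree), we add $\sum_{v\in N(u)} x_{uv}\equiv_2 1$. Since $\deg(u)$ is even, this single parity condition forces the number of differently coloured neighbours to be odd as well, so both constraints hold at $u$.

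It remains to check the correspondence. A \type11-$2$-colouring $c$ yields a solution by setting $x_v=[c(v)=1]$ and $x_{uv}=[c(u)=c(v)]$. Conversely, any solution defines a colouring through the $x_v$. The edge equations force $x_{uv}$ to equal the same-colour indicator, and the vertex equations then give the required parities at every vertex. Gaussian elimination over $\mathbb{F}_2$ decides feasibility in polynomial time.

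There is no real obstacle here. The only subtlety is the degree argument: the requirement that the other colour class be odd is not itself linear as written, but combined with the even-degree observation it reduces to the single equation per vertex.
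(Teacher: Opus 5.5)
Your proof is correct and follows the paper's argument essentially verbatim. It uses the same $\mathbb{F}_2$ system with vertex variables $x_v$, edge variables $x_{uv}$ and edge equations $x_u+x_v+x_{uv}\equiv_2 1$. It adds the infeasible equation $0\equiv_2 1$ whenever an odd-degree vertex exists, and imposes $\sum_{v\in N(u)}x_{uv}\equiv_2 1$ at the remaining vertices, where even degree makes the other-colour parity automatic (the empty sum also correctly excludes isolated vertices).
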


\begin{proof}
    For $q = 2$, we build a system of linear equations over the binary field. Given a graph $G$, introduce a variable $x_v \in \{0,1\}$ for each $v \in V(G)$, where $x_v = 1$ if the vertex $v$ receives colour 1, and $x_v = 0$ otherwise. Now, introduce a variable $x_{uv} \in \{0,1\}$ for each $uv \in E(G)$, where $x_{uv} = 1$ if the vertices $u$ and $v$ receive the same colour, and $x_{uv} = 0$ otherwise. We guarantee this latter property by adding the following set of linear equations:
    \begin{equation}
        x_u + x_v + x_{uv} \equiv_2 1 \quad \text{for all } uv \in E(G).
    \end{equation}

    Next, observe that every vertex of even degree must share its colour with an odd number of neighbours. We guarantee this property by adding the following set of linear equations:
    \begin{equation}
        \sum_{v \in N(u)} x_{uv} \equiv_2 1 \quad \text{for all } u \in V(G) \text{ such that } d(u) \text{ is even}.
    \end{equation}
    Similarly, observe that no vertex of odd degree may exists. We guarantee this property by adding the linear equation $0 \equiv_2 1$ if such a vertex exists.

    It is easy to verify that $G$ is a YES-instance of \type11-colouring if and only if the system of linear equations above admits a feasible solution, which allows us to conclude.
\end{proof}

\subsubsection{\Type{1}{v}-colouring}

\begin{theorem}\label{link2:1v}
    For $q > 0$, \type{1}{v}-$q$-colouring is polynomial for $q = 2$.
\end{theorem}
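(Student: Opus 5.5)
The plan is to follow the same Gaussian-elimination approach used for \cref{link2:01}, \cref{link2:0v} and \cref{link2:11}. First I would turn the \type1v-constraint for two colours into a purely local parity condition that depends only on the degree of the vertex. Let $v$ have degree $d$, and in a $2$-colouring let $s$ be the number of neighbours of $v$ with its own colour. Then exactly $d-s$ neighbours have the other colour. The \type1v-condition says that $s$ is odd and that $d-s$ is odd or $0$.

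\emph{Case $d$ even.} If $s$ is odd then $d-s$ is automatically odd, so the only requirement is that $s$ be odd. This case includes $d=0$, where $s=0$ and the condition fails, as it should.

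\emph{Case $d$ odd.} If $s$ is odd then $d-s$ is even, so $d-s$ must be $0$. Hence $s=d$, which is odd, and the condition becomes: all neighbours of $v$ have the same colour as $v$.

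Next I would encode this over the binary field as in the earlier proofs.
\begin{itemize}
\item For each vertex $v$, introduce a variable $x_v\in\{0,1\}$ recording its colour.
\item For each edge $uv$, introduce a variable $x_{uv}\in\{0,1\}$, tied to the vertex variables by $x_u+x_v+x_{uv}\equiv_2 1$. This forces $x_{uv}=1$ exactly when $u$ and $v$ have the same colour.
\item For every vertex $u$ of even degree, add $\sum_{v\in N(u)} x_{uv}\equiv_2 1$.
\item For every vertex $u$ of odd degree and every $v\in N(u)$, add $x_{uv}\equiv_2 1$.
\end{itemize}

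Finally, I would check that solutions of this system correspond exactly to \type1v-$2$-colourings. Given a colouring, setting the $x$-variables as intended satisfies every equation, by the case analysis above. Conversely, any solution defines a colouring via the $x_v$, and the edge equations guarantee that $x_{uv}$ correctly records whether $u$ and $v$ share a colour. The vertex equations then enforce precisely the local condition derived in each case.

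Since the system is linear over $\mathbb{F}_2$ and has polynomially many variables and equations, Gaussian elimination decides feasibility in polynomial time. There is no real obstacle here; the only point needing care is the reduction of the \type1v-condition to the degree-based case distinction, in particular that for odd degree the condition forces the vertex's whole neighbourhood into its own colour class.
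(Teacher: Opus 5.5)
Your proposal is correct and matches the paper's proof essentially verbatim. It uses the same variables $x_v$ and $x_{uv}$ linked by $x_u+x_v+x_{uv}\equiv_2 1$, the parity equation $\sum_{v\in N(u)}x_{uv}\equiv_2 1$ for even-degree vertices, and $x_{uv}\equiv_2 1$ for every neighbour of an odd-degree vertex, with feasibility decided by Gaussian elimination. Your explicit case analysis on the degree parity, including the degree-$0$ case giving the infeasible equation $0\equiv_2 1$, spells out the justification the paper leaves as ``easy to verify''.
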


\begin{proof}
    For $q = 2$, we build a system of linear equations over the binary field. Given a graph $G$, introduce a variable $x_v \in \{0,1\}$ for each $v \in V(G)$, where $x_v = 1$ if the vertex $v$ receives colour 1, and $x_v = 0$ otherwise. Now, introduce a variable $x_{uv} \in \{0,1\}$ for each $uv \in E(G)$, where $x_{uv} = 1$ if the vertices $u$ and $v$ receive the same colour, and $x_{uv} = 0$ otherwise. We guarantee this latter property by adding the following set of linear equations:
    \begin{equation}
        x_u + x_v + x_{uv} \equiv_2 1 \quad \text{for all } uv \in E(G).
    \end{equation}

    Next, observe that every vertex of even degree must share its colour with an odd number of neighbours.  We guarantee this property by adding the following set of linear equations:
    \begin{equation}
        \sum_{v \in N(u)} x_{uv} \equiv_2 1 \quad \text{for all } u \in V(G) \text{ such that } d(u) \text{ is even}.
    \end{equation}
    Similarly, observe that every vertex of odd degree must share its colour with all its neighbours. We guarantee this property by adding the following set of linear equations:
    \begin{equation}
        x_{uv} \equiv_2 1 \quad \text{for all } u \in V(G) \text{ such that } d(u) \text{ is odd}, \text{ for all } v \in N(u).
    \end{equation}

    It is easy to verify that $G$ is a YES-instance of \type1v-colouring if and only if the system of linear equations above admits a feasible solution, which allows us to conclude.
\end{proof}

\subsubsection{\Type{v}{0}-colouring}

\begin{theorem}\label{link2:v0}
    For $q > 0$, \type{v}{0}-$q$-colouring is polynomial for $q = 2$.
\end{theorem}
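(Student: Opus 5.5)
We plan to follow the same template as the preceding theorems: encode the existence of a \type{v}{0}-$2$-colouring as a system of linear equations over the binary field and decide feasibility by Gaussian elimination. As before, we introduce a variable $x_v\in\{0,1\}$ for each $v\in V(G)$, with $x_v=1$ if and only if $v$ receives colour $1$. We also introduce edge variables $x_{uv}$ with $x_u+x_v+x_{uv}\equiv_2 1$ for all $uv\in E(G)$, so that $x_{uv}=1$ exactly when $u$ and $v$ receive the same colour. Since $1+x_{uv}\equiv_2 x_u+x_v$, we could equally avoid edge variables and work with $x_u+x_v$ directly.

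The key step is a local case analysis by degree parity. Fix $u$ of degree $d$, and let $s$ be the number of neighbours of $u$ of its own colour and $o=d-s$ the number of neighbours of the other colour. The \type{v}{0} condition requires that $o$ is even and that $s$ is odd or $s=0$.
\begin{itemize}
\item If $d$ is odd, then $o$ even forces $s=d-o$ to be odd. The first-coordinate constraint therefore holds automatically, and we only need $\sum_{v\in N(u)} (1+x_{uv})\equiv_2 0$, which is linear.
\item If $d$ is even, then $o$ even forces $s$ even, so $s$ must be $0$. Then $o=d$ is even, as required. So the constraint is equivalent to $x_{uv}\equiv_2 0$ for all $v\in N(u)$, i.e.\ all neighbours of $u$ get the opposite colour. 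This is again linear.
\end{itemize}
Isolated vertices fall into the even case with no equations, which is consistent since $s=o=0$ is allowed.

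It then remains to check the equivalence. Any \type{v}{0}-$2$-colouring yields a solution of the system, by the case analysis above. Conversely, any solution defines a colouring via the $x_v$, and this colouring satisfies both constraints at every vertex, again by the same case analysis read backwards.

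There is no real obstacle here. The only point requiring care is the observation that for even degree the parity constraint on the other colour forces the own-colour count to be even, hence zero. This observation is what turns the non-linear condition ``odd or zero'' into linear equations. Gaussian elimination then decides feasibility in polynomial time.
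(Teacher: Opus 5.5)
Your proposal is correct and follows the paper's proof: the same $\mathrm{GF}(2)$ system with vertex variables $x_v$ and edge variables $x_{uv}$, and the equations $x_{uv}\equiv_2 0$ for every neighbour $v$ of an even-degree vertex $u$. For odd-degree vertices your equation $\sum_{v\in N(u)}(1+x_{uv})\equiv_2 0$ is equivalent to the paper's $\sum_{v\in N(u)} x_{uv}\equiv_2 1$, since $\deg(u)$ is odd.
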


\begin{proof}
    For $q = 2$, we build a system of linear equations over the binary field. Given a graph $G$, introduce a variable $x_v \in \{0,1\}$ for each $v \in V(G)$, where $x_v = 1$ if the vertex $v$ receives colour 1, and $x_v = 0$ otherwise. Now, introduce a variable $x_{uv} \in \{0,1\}$ for each $uv \in E(G)$, where $x_{uv} = 1$ if the vertices $u$ and $v$ receive the same colour, and $x_{uv} = 0$ otherwise. We guarantee this latter property by adding the following set of linear equations:
    \begin{equation}
        x_u + x_v + x_{uv} \equiv_2 1 \quad \text{for all } uv \in E(G).
    \end{equation}

    Next, observe that no vertex of even degree may share its colour with any of its neighbours. We guarantee this property by adding the following set of linear equations:
    \begin{equation}
        x_{uv} \equiv_2 0 \quad \text{for all } u \in V(G) \text{ such that } d(u) \text{ is even}, \text{ for all } v \in N(u).
    \end{equation}
    Similarly, observe that every vertex of odd degree must share its colour with an odd number of neighbours. We guarantee this property by adding the following set of linear equations:
    \begin{equation}
        \sum_{v \in N(u)} x_{uv} \equiv_2 1 \quad \text{for all } u \in V(G) \text{ such that } d(u) \text{ is odd}.
    \end{equation}

    It is easy to verify that $G$ is a YES-instance of \type{v}{0}-colouring if and only if the system of linear equations above admits a feasible solution, which allows us to conclude.
\end{proof}

\subsubsection{\Type{v}{1}-colouring}

\begin{theorem}\label{link2:v1}
    For $q > 0$, \type{v}{1}-$q$-colouring is polynomial for $q = 2$.
\end{theorem}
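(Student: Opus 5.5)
We follow the template of the preceding results (\cref{link2:0v}, \cref{link2:v0}) and encode \type{v}{1}-$2$-colourability as a system of linear equations over the binary field, solvable by Gaussian elimination. For each $v \in V(G)$ we take a variable $x_v$ indicating colour $1$. For each edge $uv$ we take a variable $x_{uv}$ indicating that $u$ and $v$ have the same colour, enforced by $x_u + x_v + x_{uv} \equiv_2 1$.

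Next we rewrite the constraint at a single vertex $u$ in terms of $s_u := \sum_{v\in N(u)} x_{uv}$, the number of neighbours of $u$ sharing its colour. With two colours, the other colour class in $N(u)$ has size $d(u) - s_u$. The \type{v}{1} conditions say that $s_u$ is odd or zero, and that $d(u)-s_u$ is odd.

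We then split by the parity of $d(u)$.

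If $d(u)$ is even, $d(u)-s_u$ odd forces $s_u$ odd. This also satisfies the first condition, so we add the single equation $\sum_{v\in N(u)} x_{uv} \equiv_2 1$. Isolated vertices have $d(u)=0$ and $d(u)-s_u = 0$, which is not odd, so they make the instance infeasible. This is handled automatically, since for them the equation reads $0 \equiv_2 1$.

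If $d(u)$ is odd, $d(u)-s_u$ odd forces $s_u$ even. Together with ``odd or zero'' this means $s_u=0$. This constraint is not a parity condition on a sum, but it is a conjunction of linear equations, $x_{uv} \equiv_2 0$ for all $v\in N(u)$.

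Finally we verify correctness. Any solution of the system gives a colouring via the $x_v$, and the $x_{uv}$ are then forced to be the correct indicators. Conversely, any \type{v}{1}-$2$-colouring yields a solution, so the system is solvable if and only if $G$ admits a \type{v}{1}-$2$-colouring. The system has $O(|V|+|E|)$ variables and equations, giving polynomial time.

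The only real subtlety, and the step needing care, is the odd-degree case. There the combination ``odd or zero'' with ``complement odd'' must collapse to the linear condition $s_u = 0$ rather than a nonlinear disjunction. Checking this collapse is what makes the linear-algebra approach work.
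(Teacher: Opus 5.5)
Your proposal is correct and is essentially the paper's proof. It uses the same system over $\mathbb{F}_2$: the edge equations $x_u + x_v + x_{uv} \equiv_2 1$, the single parity equation $\sum_{v \in N(u)} x_{uv} \equiv_2 1$ at even-degree vertices, and $x_{uv} \equiv_2 0$ for every neighbour $v$ of an odd-degree vertex $u$, while your explicit remark that isolated vertices are correctly rejected via the empty-sum equation $0 \equiv_2 1$ is a detail the paper leaves implicit.
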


\begin{proof}
    For $q = 2$, we build a system of linear equations over the binary field. Given a graph $G$, introduce a variable $x_v \in \{0,1\}$ for each $v \in V(G)$, where $x_v = 1$ if the vertex $v$ receives colour 1, and $x_v = 0$ otherwise. Now, introduce a variable $x_{uv} \in \{0,1\}$ for each $uv \in E(G)$, where $x_{uv} = 1$ if the vertices $u$ and $v$ receive the same colour, and $x_{uv} = 0$ otherwise. We guarantee this latter property by adding the following set of linear equations:
    \begin{equation}
        x_u + x_v + x_{uv} \equiv_2 1 \quad \text{for all } uv \in E(G).
    \end{equation}

    Next, observe that every vertex of even degree must share its colour with an odd number of neighbours. We guarantee this property by adding the following set of linear equations:
    \begin{equation}
        \sum_{v \in N(u)} x_{uv} \equiv_2 1 \quad \text{for all } u \in V(G) \text{ such that } d(u) \text{ is even}.
    \end{equation}
    Similarly, observe that no vertex of odd degree may share its colour with any of its neighbours. We guarantee this property by adding the following set of linear equations:
    \begin{equation}
        x_{uv} \equiv_2 0 \quad \text{for all } u \in V(G) \text{ such that } d(u) \text{ is odd}, \text{ for all } v \in N(u).
    \end{equation}

    It is easy to verify that $G$ is a YES-instance of \type{v}{1}-colouring if and only if the system of linear equations above admits a feasible solution, which allows us to conclude.
\end{proof}

\subsubsection{\Type$\star$1-colouring}

\begin{theorem}\label{link2:*1}
    For $q > 0$, \type{*}{1}-$q$-colouring is polynomial for $q = 2$.
\end{theorem}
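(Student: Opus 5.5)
We will follow the template of the preceding theorems and express \type{*}{1}-$2$-colourability as solvability of a system of linear equations over the binary field, solvable by Gaussian elimination. For $q=2$, a \type{*}{1}-colouring only requires that every vertex $u$ has an odd number of neighbours of the colour different from $c(u)$; the number of same-coloured neighbours is unconstrained.

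Introduce a variable $x_v\in\{0,1\}$ for each $v\in V(G)$, with $x_v=1$ iff $v$ receives colour $1$. For an edge $uv$, the endpoints receive different colours iff $x_u+x_v\equiv_2 1$. The number of neighbours of $u$ with colour different from $c(u)$ is therefore congruent modulo $2$ to $\sum_{v\in N(u)}(x_u+x_v)=\deg(u)\,x_u+\sum_{v\in N(u)}x_v$. So for each vertex $u$ we add the equation
\begin{equation*}
  \deg(u)\,x_u+\sum_{v\in N(u)} x_v \equiv_2 1 .
\end{equation*}
Equivalently, following the previous proofs verbatim, we can introduce edge variables $x_{uv}$ with $x_u+x_v+x_{uv}\equiv_2 1$ and require $\sum_{v\in N(u)}(1+x_{uv})\equiv_2 1$ for every $u$. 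This reads $\sum_{v\in N(u)}x_{uv}\equiv_2 1$ if $\deg(u)$ is even and $\sum_{v\in N(u)}x_{uv}\equiv_2 0$ if $\deg(u)$ is odd.

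Correctness is immediate in both directions. Any \type{*}{1}-$2$-colouring gives a solution by reading off $x_v$. Conversely, any solution defines a colouring in which every vertex sees the other colour an odd number of times, and no other constraint needs checking. The system has polynomial size, so Gaussian elimination decides feasibility in polynomial time.

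There is no real obstacle. The only point needing care is to note that the \type{*} constraint on the own colour imposes nothing, so a single parity equation per vertex suffices. Isolated vertices get the equation $0\equiv_2 1$ and correctly yield infeasibility.
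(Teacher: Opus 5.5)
Your proposal is correct and is essentially the paper's proof: the paper uses exactly the vertex and edge variables with $x_u+x_v+x_{uv}\equiv_2 1$ and the degree-dependent parity equations $\sum_{v\in N(u)}x_{uv}\equiv_2 1$ for even degree and $\sum_{v\in N(u)}x_{uv}\equiv_2 0$ for odd degree, which is your second formulation verbatim. Your first formulation only eliminates the edge variables by substitution, giving one equivalent and slightly leaner equation per vertex.
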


\begin{proof}
    For $q = 2$, we build a system of linear equations over the binary field. Given a graph $G$, introduce a variable $x_v \in \{0,1\}$ for each $v \in V(G)$, where $x_v = 1$ if the vertex $v$ receives colour 1, and $x_v = 0$ otherwise. Now, introduce a variable $x_{uv} \in \{0,1\}$ for each $uv \in E(G)$, where $x_{uv} = 1$ if the vertices $u$ and $v$ receive the same colour, and $x_{uv} = 0$ otherwise. We guarantee this latter property by adding the following set of linear equations:
    \begin{equation}
        x_u + x_v + x_{uv} \equiv_2 1 \quad \text{for all } uv \in E(G).
    \end{equation}

    Next, observe that every vertex of even degree must share its colour with an odd number of neighbours. We guarantee this property by adding the following set of linear equations:
    \begin{equation}
        \sum_{v \in N(u)} x_{uv} \equiv_2 1 \quad \text{for all } u \in V(G) \text{ such that } d(u) \text{ is even}.
    \end{equation}
    Similarly, observe that every vertex of odd degree must share its colour with an even number of neighbours. We guarantee this property by adding the following set of linear equations:
    \begin{equation}
        \sum_{v \in N(u)} x_{uv} \equiv_2 0 \quad \text{for all } u \in V(G) \text{ such that } d(u) \text{ is odd}.
    \end{equation}

    It is easy to verify that $G$ is a YES-instance of \type*1-colouring if and only if the system of linear equations above admits a feasible solution, which allows us to conclude.
\end{proof}

\subsubsection{\Type{v}{v}-colouring}

\begin{theorem}\label{link2:vv}
    \type{v}{v}-$q$-colouring is polynomial for $q = 2$.
\end{theorem}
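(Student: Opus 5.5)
The plan is to follow the same scheme as the preceding results of this subsection (e.g.\ \cref{link2:v0} and \cref{link2:1v}): translate \type{v}{v}-$2$-colourability into the feasibility of a system of linear equations over the binary field, and decide that by Gaussian elimination. As before, introduce $x_v\in\{0,1\}$ for each vertex (the colour of $v$) and $x_{uv}\in\{0,1\}$ for each edge, tied together by $x_u+x_v+x_{uv}\equiv_2 1$, so that $x_{uv}=1$ exactly when $u$ and $v$ get the same colour.

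The key step is a local case analysis showing that the \type{v}{v} condition at each vertex is affine over $\mathbb{F}_2$. Fix a vertex $u$ of degree $d$ and let $s$ be the number of neighbours of $u$ with colour $c(u)$. Since $q=2$, the other colour occurs exactly $d-s$ times in $N(u)$. The constraint is therefore that each of $s$ and $d-s$ is odd or $0$.

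If $d$ is even and positive, then $s$ and $d-s$ have the same parity. So either both are odd, or one of them is $0$. If one is $0$, the other equals $d$, which is even and nonzero, and that violates the condition. Hence the constraint is exactly $\sum_{v\in N(u)}x_{uv}\equiv_2 1$.

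If $d$ is odd, then $s$ and $d-s$ have different parities, so the even one must be $0$. Hence either all neighbours of $u$ share its colour or none do, and both options are allowed because $d$ is odd. This "all edges at $u$ agree" condition is linear: introduce an auxiliary variable $y_u$ and add $x_{uv}+y_u\equiv_2 0$ for all $v\in N(u)$. Equivalently, add $x_{uv}+x_{uw}\equiv_2 0$ for all $v,w\in N(u)$.

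Isolated vertices impose nothing. Then check that solutions of the system correspond exactly to \type{v}{v}-$2$-colourings. The edge equations force the $x_{uv}$ to be consistent with the vertex colours, so any solution yields a valid colouring and conversely. Solving the system takes polynomial time.

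The only step that needs care is the odd-degree case, since "odd or zero" is not itself a parity constraint. The observation that with two colours and odd degree it collapses to "$s\in\{0,d\}$" is what makes it linear. I do not expect a real obstacle there.
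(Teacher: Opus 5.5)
Your proposal is correct and matches the paper's proof. The paper uses the same $\mathbb{F}_2$ system: edge variables tied to vertex colours, $\sum_{v\in N(u)}x_{uv}\equiv_2 1$ for vertices of positive even degree, an auxiliary variable per odd-degree vertex forcing all its incident $x_{uv}$ to be equal, and no constraint on isolated vertices.
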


\begin{proof}
    We build a system of linear equations over the binary field. Given a graph $G$, introduce a variable $x_v \in \{0,1\}$ for each $v \in V(G)$, where $x_v = 1$ if the vertex $v$ receives colour 1, and $x_v = 0$ otherwise. Now, introduce a variable $x_{uv} \in \{0,1\}$ for each $uv \in E(G)$, where $x_{uv} = 1$ if the vertices $u$ and $v$ receive the same colour, and $x_{uv} = 0$ otherwise.  
    We guarantee this latter property by adding the following set of linear equations:
    \begin{equation*}
        x_u + x_v + x_{uv} \equiv_2 1 \quad \text{for all } uv \in E(G).
    \end{equation*}

    Next, observe that if the degree of $u$ is even and larger than $0$, then $u$ must be adjacent to both a vertex in its own colour and a vertex in the other colour. Hence, $u$ must be adjacent to both an odd number of vertices in its own colour and an odd number of vertices in the other colour, where the latter follows from the former. We guarantee these property by adding the following set of linear equations:
    \begin{equation*}
        \sum_{v \in N(u)} x_{uv} \equiv_2 1 \quad \text{for all } u \in V(G) \text{ such that } d(u) \text{ is even and larger than } 0.
    \end{equation*}
    Finally, for every vertex $u$ of odd degree introduce variable $z_u\in \{0,1\}$. Observe that if a vertex $u$ has odd degree then either all its neighbours must be the same colour as $u$ or all neighbours of $u$ must be the colour different from $u$. The variable $z_u=1$ indicates that all neighbours of $u$ have the same colour as $u$ and $z_u=0$ indicates that all neighbours of $u$ have a different colour from $u$. This property is ensured by adding the equations:
    \begin{equation*}
        x_{uv}+z_u\equiv 0 \quad \text{for all } u\in V(G), v\in N(u) \text{ such that } d(u) \text{ is odd.}
    \end{equation*}
    Observe that every vertex of degree $0$ can be coloured in either colour and automatically satisfies the conditions required by \type{v}{v}-colourings.

    It is easy to verify that $G$ is a YES-instance of \type{v}{v}-colouring if and only if the system of linear equations above admits a feasible solution, which allows us to conclude.
\end{proof}

\section{Complexity for three and potentially more colours}\label{sec:threeColours}

\subsection{Reductions from proper colouring}
In this subsection we present reductions from proper colouring with $q$ colours (proper $q$-colouring) which is \NP-complete for $q\geq 3$ \cite{G&J} even if $G$ has maximum degree at most $q+1$. The latter follows from the \NP-completeness of edge-colouring with $q$ colours ($q$-edge-colouring) \cite{Holyer,LeGa} by considering the line graph.


\subsubsection{\Type=0, \Type?0 and \Type{v}{0}-colouring}

First let us recall that $c\colon V(G)\rightarrow [q]$ is a $\sigma$-$q$-colouring of $G$ for $\sigma\in \{\type{=}{0},\type{?}{0},\type{v}{0}\}$ if for every vertex $v$ it holds that $|N_G(v)\cap c^{-1}(i)|$ is even for every $i\not=c(v)$. Additionally, for \type=0-$q$-colouring $|N_G(v)\cap c^{-1}(c(v))|=0$, for \type?0-$q$-colouring $|N_G(v)\cap c^{-1}(c(v))|<2$ and for \type{v}{0}-$q$-colouring $|N_G(v)\cap c^{-1}(c(v))|$ is either odd or $0$. 
\begin{theorem}\label{link34:=0}\label{link34:?0}\label{link34:v0}
    For $q\geq 3$ it is \NP-complete to decide whether a graph admits
    a \type=0, \type?0 or a \type{v}{0}-$q$-colouring.
\end{theorem}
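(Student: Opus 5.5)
The plan is to reduce from proper $q$-colouring, which is \NP-complete for every $q\geq 3$, using a single construction that handles all three variants at once. Given a graph $G$, let $H$ be the graph obtained by replacing every vertex $v$ by two non-adjacent false twins $v'$ and $v''$. For every edge $vw\in E(G)$, all four edges $v'w'$, $v'w''$, $v''w'$ and $v''w''$ are added. Then $N_H(v')=N_H(v'')=\{w',w''\mid w\in N_G(v)\}$, and in particular $\deg_H(v')=2\deg_G(v)$ is even.

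\emph{Forward direction.} Let $c$ be a proper $q$-colouring of $G$, and set $c_H(v')=c_H(v'')=c(v)$. Every colour class then meets $N_H(v')$ in pairs $\{w',w''\}$, so it has even size. Moreover, no neighbour of $v'$ shares its colour, because $c$ is proper. Hence $c_H$ is a \type=0-colouring, and since \type=0 is more restrictive, it is also a \type?0- and a \type{v}{0}-colouring.

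\emph{Backward direction.} The key observation is a parity count at each vertex of $H$. Let $c_H$ be a $\sigma$-colouring of $H$ for $\sigma\in\{\type=0,\type?0,\type{v}0\}$, and fix $x\in V(H)$. The degree of $x$ is even, and every colour $i\neq c_H(x)$ occurs an even number of times in $N_H(x)$. Therefore $|N_H(x)\cap c_H^{-1}(c_H(x))|$ is also even.
\begin{itemize}
\item For \type?0 this count is less than $2$, so it must be $0$.
\item For \type{v}0 it is odd or $0$, so again it must be $0$.
\end{itemize}
Consequently $c_H$ is a proper colouring of $H$ in all three cases. Restricting $c_H$ to the copy $\{v'\mid v\in V(G)\}$, which induces a subgraph isomorphic to $G$, yields a proper $q$-colouring of $G$.

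Membership in \NP{} is immediate, and the construction is polynomial. I do not expect a genuine obstacle here. The one step that needs care is the parity argument that collapses \type?0 and \type{v}0 to \type=0. My first attempt would have been to force $v'$ and $v''$ to receive the same colour via an extra gadget, but this turns out to be unnecessary: the backward direction only uses evenness of degrees in $H$, not any relation between the colours of twins.
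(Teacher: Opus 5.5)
Your proposal is correct and follows essentially the same route as the paper: the paper's ``add a false twin to every vertex'' yields exactly your graph $H$. It also colours both twins with $c(v)$ in the forward direction. In the backward direction it uses the same even-degree parity count to force the own-colour count to $0$ for all three variants, and then restricts to one copy of $G$.
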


\begin{proof}
    We give a reduction from proper $q$-colouring. Given a graph $G$, we construct a graph $H$ by adding a false twin to every vertex. 
    Fix $\sigma \in \{\type{=}{0},\type{?}{0},\type{v}{0}\}$.
    
    First, 
    let $c\colon V(G) \rightarrow [q]$  be a proper colouring of $G$. We obtain a $\sigma$-colouring $c_{\sigma}:V(H) \rightarrow [q]$ of $H$ as follows. For every vertex $v \in V(G)$, if $v_1,v_2 \in V(H)$ are the false twins corresponding to $v$ in $H$, we set $c_{\sigma}(v_1) = c_{\sigma}(v_2) = c(v)$. It is easy to verify that $c_{\sigma}$ is a $\sigma$-$q$-colouring.

    On the other hand, 
    let $c_{\sigma}:V(H) \rightarrow [q]$ be a $\sigma$-colouring of $H$. Fix a vertex $v \in V(G)$ and let $v_1,v_2 \in V(H)$ be the twins corresponding to $v$ in $H$.
    Observe that, by definition of $\sigma$-colouring, the size of $N_H(v_i) \cap c_{\sigma}^{-1}(j)$ is even for each $j \neq c_{\sigma}(v_i)$. Since by construction the size of the neighbourhood of every vertex is even, this implies that he size of $N_H(v_i) \cap c_{\sigma}^{-1}(c_{\sigma}(v_i))$ is even.
    By definition of $\sigma$-$q$-colouring, this implies that the size of $N(v_i) \cap c_{\sigma}^{-1}(c_{\sigma}(v_i))$ must be zero. In other words, $c_{\sigma}$ is a proper colouring of $H$.
    We obtain a proper colouring $c\colon V(G) \rightarrow [q]$ of $G$ as follows. For each vertex $v \in V(G)$, we set $c(v) = c_{\sigma}(v_1)$ or $c(v) = c_{\sigma}(v_2)$ where $v_1,v_2$ are the two twins corresponding to $v$. It is easy to verify that $c$ is a proper colouring.
\end{proof}


\subsubsection{\Type{=}{v}-colouring}

First let us recall that $c\colon V(G)\rightarrow [q]$ is a \type=v-$q$-colouring of $G$ if for every vertex $v$ it holds that $|N_G(v)\cap c^{-1}(c(v))|=0$ ($c$ is a proper colouring) and $|N_G(v)\cap c^{-1}(i)|$ is odd or $0$ for every $i\not=c(v)$. We prove the following.
\begin{theorem}\label{thm:=v}\label{link34:=v}
    For $q\geq 3$ it is \NP-complete to decide whether a graph admits
    a \type=v-$q$-colouring.
\end{theorem}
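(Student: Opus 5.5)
The plan is a reduction from proper $q$-colouring, which is \NP-complete for $q\geq 3$ \cite{G&J}. It follows the pendant-padding idea of \cref{thm:?v} and \cref{thm:!v}. A \type=v-colouring is already a proper colouring, so the only extra requirement is a parity condition on the other colour classes in each neighbourhood. We repair this parity by attaching pendant vertices whose number is fixed in advance but whose colours we may choose freely.

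\emph{Construction.} Given $G$, build $H$ as follows. For every $v\in V(G)$, attach $p_v$ pendant vertices to $v$. Choose $p_v\geq q-1$ with $p_v\equiv q-1-\deg_G(v) \pmod 2$; for instance $p_v=q-1$ if $\deg_G(v)$ is even and $p_v=q$ if it is odd.

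\emph{Forward direction.} Let $c$ be a proper $q$-colouring of $G$, and keep $c$ on $V(G)$. Fix $v$. Among the $q-1$ colours $i\neq c(v)$, let $o$ be the number with $|N_G(v)\cap c^{-1}(i)|$ odd, and let $e=q-1-o$ be the number with this count even (possibly zero).
\begin{itemize}
\item Since $c$ is proper, $\deg_G(v)$ is the sum of these counts, so $o\equiv \deg_G(v)$. Hence $e\equiv q-1-\deg_G(v)\equiv p_v \pmod 2$.
\item Give one pendant colour $i$ for each of the $e$ even colours $i$. Those counts become odd; a previously empty class becomes a class of size $1$.
\item The remaining $p_v-e$ pendants are even in number. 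Colour them all with one fixed colour $j\neq c(v)$, which exists as $q\geq 2$. Adding an even number keeps that class odd.
\end{itemize}
Now every colour $i\neq c(v)$ appears an odd number of times in $N_H(v)$, and $c(v)$ does not appear, because no pendant received $c(v)$ and $c$ is proper on $G$. Each pendant has a single neighbour, of a different colour. That is one occurrence of that colour and zero of the others, so the conditions hold there as well.

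\emph{Backward direction.} A \type=v-colouring of $H$ is in particular proper on $H$. Its restriction to the induced subgraph $G$ is therefore a proper $q$-colouring of $G$.

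The construction is polynomial, so this proves \NP-completeness for $q\geq 3$. The only step needing care is the parity bookkeeping: the number $e$ of even classes must have the same parity as the fixed number $p_v$ of pendants, and $p_v\geq e$ must hold. Both follow from the choice of $p_v$, since $e\leq q-1\leq p_v$.
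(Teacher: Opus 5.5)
Your proof is correct and follows the paper's approach: a reduction from proper $q$-colouring in which pendants are attached to each vertex and coloured so that every colour class other than the vertex's own has odd size in its neighbourhood. The only difference is that the paper attaches $q-2$ pendants to odd-degree vertices rather than your $q$, so it must first single out one colour class of odd size in $N_G(v)$; your larger pendant count avoids that step.
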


\begin{proof}
    We give a reduction from proper $q$-colouring. Given a graph $G$, we construct a graph $H$ by attaching $q-1$ pendants to each vertex of even degree and $q-2$ pendants to each vertex of odd degree.

    First, 
    let $c\colon V(G) \rightarrow [q]$ be a proper colouring of $G$. We obtain a \type=v-$q$-colouring $c_{\type{=}{v}}:V(H) \rightarrow [q]$ of $H$ as follows. For each vertex $v \in V(G)$, we set $c_{\type{=}{v}}(v) = c(v)$. Now, fix a vertex $v \in V(G)$, and assume without loss of generality that $c(v) = q$. 
    If $v$ has even degree, let $v_1,\dots, v_{q-1} \in V(H)$ be the pendants attached to $v$ in $H$. 
    Since $v$ has even degree, there is an even number of colour classes containing an odd number of neighbours of $v$. Furthermore, no neighbour has colour $q$. Hence, there is an even number, if $q-1$ is even, or an odd number, if $q-1$ is odd,  of colour classes among colours $[q-1]$ containing an even number of neighbours of $v$. For each $i\in [q-1]$, if $v$ has an even number of neighbours of colour $i$, we set $c_{\type{=}{v}}(v_{i})=i$. The remaining even number of pendants get assigned colour $1$. Hence, $v$ has no neighbours of colour $q$ in $H$ and an odd number of neighbours of each of the remaining colours in $H$. Furthermore, each of the pendants has no neighbour in their own colour and one neighbour in some other colour. 
    
    If $v$ has odd degree, let $v_1,\dots,v_{q-2} \in V(H)$ be the pendants attached to $v$ in $H$. 
    Since $v$ has odd degree, there is an odd number of colour classes containing an odd number of neighbours of $v$. As there in particular must be at least one colour class containing an odd number of neighbours of $v$, we may assume that $v$ has an odd number of neighbours of colour $q-1$.
    Furthermore, by assumption, colour $q$ does not appear in the neighbourhood of $v$. Hence, there is an odd number, if $q-2$ is odd, or an even number, if $q-2$ is even, of colour classes with colours in $[q-2]$ containing an even number of neighbours of $v$. For each $i\in [q-2]$, if $v$ has an even number of neighbours of colour $i$, we set $c_{\type{=}{v}}(v_{i})=i$. The remaining even number of pendants get assigned colour $1$. Again, $v$ has no neighbours of colour $q$ in $H$ and an odd number of neighbours of each of the remaining colours in $H$. Each of the pendants is adjacent to only one vertex of a different colour. Hence, $c_{\type{=}{v}}$ is a valid \type=v-colouring.

    On the other hand, 
    let $c_{\type{=}{v}}:V(H) \rightarrow [q]$ be a \type=v-colouring of $H$. We obtain a proper colouring $c\colon V(G) \rightarrow [3]$ of $G$ as follows. For each vertex $v \in V(G)$, we set $c(v) = c_{\type{=}{v}}(v)$. It follows directly from the definition of  \type=v-colouring that $c$ is a proper colouring.
\end{proof}


\subsubsection{\Type{2}{$\star$}, \Type{2}{v} and \Type{2}{?}-colouring}

First recall that $c\colon V(G)\rightarrow [q]$ is a $\sigma$-$q$-colouring of $G$ for $\sigma\in \{\type{2}{*},\type{2}{v},\type{2}{?}\}$ if for every vertex $v$ we have $|N_G(v)\cap c^{-1}(c(v))|$ even and not $0$. Additionally, for \type2*-$q$-colouring $|N_G(v)\cap c^{-1}(i)|$ is arbitrary, for \type2v-$q$-colouring $|N_G(v)\cap c^{-1}(i)|$ is odd or $0$ and for \type2?-$q$-colouring $|N_G(v)\cap c^{-1}(i)|<2$ for every $i\not=c(v)$. We show the following.

\begin{theorem}\label{link34:2*}\label{link34:2v}\label{link34:2?}
    For $q\geq 3$ it is \NP-complete to decide whether a graph admits
    a \type2*, \type2v or a \type2?-$q$-colouring.
\end{theorem}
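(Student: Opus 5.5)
The plan is to reduce from proper $q$-colouring, which is \NP-complete for $q\geq 3$ even on graphs of maximum degree at most $q+1$ (via $q$-edge-colouring and line graphs, as recalled at the start of this subsection). I would use one construction $H$ for all three variants. In $H$, every original vertex is forced to already have exactly two neighbours of its own colour inside a private gadget. The original adjacencies are then replaced by gadgets that contribute no neighbour of the vertex's own colour and at most one neighbour of each other colour. With such an $H$, the conditions \type2*, \type2v and \type2? all say the same thing, exactly as in the corollaries for $q=2$ where several types coincide on a restricted graph class.

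\textbf{Vertex gadget.} For each $v\in V(G)$, add two new vertices $v',v''$ so that $\{v,v',v''\}$ is a triangle, and give $v',v''$ no other neighbours. A degree-$2$ vertex needs an even, nonzero number of neighbours of its own colour, so both its neighbours share its colour. Hence $v,v',v''$ are monochromatic in every \type2{\star}-, \type2v- or \type2?-colouring, and $v$ receives exactly $2$ same-coloured neighbours from its triangle. Consequently, among the remaining neighbours of $v$, the number with colour $c(v)$ must be even. The vertices $v',v''$ have no neighbours of any other colour, so they satisfy all three conditions automatically.

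\textbf{Edge gadget.} Adding the edge $uv$ directly is not enough: it only forces the number of same-coloured $G$-neighbours of $u$ to be even, not zero. I would instead replace each edge $uv$ by a gadget $E_{uv}$ made of a few vertices, each equipped with its own pendant triangle. By the previous paragraph, each gadget vertex then needs an even number of same-coloured neighbours among its non-triangle neighbours. I would choose the wiring so that these parity constraints force $c(u)\neq c(v)$. The gadget must also give $u$ and $v$ zero neighbours of their own colour, and give them, and every gadget vertex, at most one neighbour of each other colour; this is what makes the \type2? (and \type2v) version work. A natural first candidate is a path $u-a-b-v$ whose inner vertices carry pendant triangles, combined with extra vertices that use the $q$ available colours in a rainbow fashion so that $a$ and $b$ are forced to differ from each other and from their ends. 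I expect to need some care with parity, possibly by doubling edges through false twins, so that a same-colour $G$-edge always contributes an odd count and is therefore forbidden.

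\textbf{Correctness and main obstacle.} Given a proper $q$-colouring of $G$, colour each triangle with its vertex's colour and colour each edge gadget according to an explicit table depending on the pair $(c(u),c(v))$. The bound on the maximum degree of $G$ is what leaves enough colours to keep every other-colour count at most $1$. Conversely, any valid colouring of $H$ restricted to $V(G)$ is proper by the forcing properties of $E_{uv}$. The main obstacle is designing $E_{uv}$ so that it genuinely forces $c(u)\neq c(v)$, rather than merely forcing both ends to differ from some middle vertex, while still respecting the \type? bound on other colours. I would first try to verify a candidate gadget exhaustively for $q=3$, and then extend it to larger $q$ by attaching rainbow cliques $K_{q-1}$ that absorb the extra colours, in the spirit of the reduction in \cref{link234:?+}.
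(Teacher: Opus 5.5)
There is a genuine gap. The edge gadget $E_{uv}$ carries the whole reduction, and the proposal never constructs it: the path $u-a-b-v$ with ``rainbow'' extras is only a hope. Worse, the requirements you place on $E_{uv}$ cannot be met for the \type{2}{?} variant.

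All of $v$'s original edges are routed through gadgets attached to the single vertex $v$. Each gadget is adjacent to $v$ and, by your requirement, contributes no neighbour of colour $c(v)$. So $v$ has at least $\deg_G(v)$ neighbours, all coloured from the $q-1$ colours other than $c(v)$. \type{2}{?} allows at most one of each, so the forward direction already needs $\deg_G(v)\le q-1$. But every graph of maximum degree at most $q-1$ is $q$-colourable greedily. By Brooks' theorem, the hard instances of proper $q$-colouring must contain vertices of degree $q+1$. On exactly those instances your construction sends YES-instances to NO-instances. So ``the degree bound leaves enough colours'' is backwards, and the three types do not coincide on your $H$.

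The missing idea is to split each original vertex instead of protecting it with edge gadgets. The paper does this as follows. For $v$ of degree $d$, take a cycle $x_1^v,\dots,x_d^v$ and subdivide each of its edges by a vertex $y_i^v$. Fix a bijection $\pi_v$ from the edges at $v$ to $\{x_1^v,\dots,x_d^v\}$. For every edge $e=uv$ of $G$, add the plain edge $\pi_u(e)\pi_v(e)$.

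Your own degree-$2$ observation (the role of $v',v''$) forces each subdivided cycle to be monochromatic. Then each $x_i^v$ has exactly two same-coloured neighbours inside its cycle and exactly one external neighbour. A same-coloured external neighbour would raise the count to $3$, which is odd, so every original edge is properly coloured. The odd contribution you wanted from false twins comes for free once each attachment vertex carries exactly one external edge.

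Conversely, colour each cycle with the colour of its vertex in a proper colouring of $G$. Then every $x_i^v$ has two same-coloured neighbours and one neighbour of one other colour, and every $y_i^v$ has two same-coloured neighbours and nothing else. This satisfies \type{2}{*}, \type{2}{v} and \type{2}{?} simultaneously. No edge gadget, degree bound or rainbow cliques are needed.
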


\begin{proof}
    We give a reduction from proper $q$-colouring. Given a graph $G$, we construct a graph $H$ as follows. For each vertex $v \in V(G)$ of degree $d$, we introduce a cycle $C_v$ with vertices $x_1^v,\ldots,x_d^v$. We subdivide each edge of $C_v$ by inserting a vertex $y_i^v$ between $x_i^v$ and $x_{i+1}^v$, for each $i \in [d]$, with indices modulo $d$. 
    Now, for each vertex $v \in V(G)$ of degree $d$, let $E_v$ be the set of the edges incident to $v$. We arbitrarily choose a bijection $\pi_v: E_v \rightarrow \{x_1^v,\ldots,x_d^v\}$, assigning a label to each edge. Finally, for each edge $e = uv \in E(G)$, we add the edge $\pi_u(e)\pi_v(e)$. See Figure~\ref{fig:2starAND2vAND2question} for an illustration. Fix $\sigma \in \{\type{2}{*},\type{2}{v},\type{2}{?}\}$.

    \begin{figure}[h]
        \centering
        \begin{tikzpicture}[scale=0.4]
        \begin{scope}[xshift=-12cm]
            \node[r](m) at (0,0){};
            \node[gg](x) at (90:2){};
            \node[gb](y) at (210:2){};
            \node[gb](z) at (330:2){};
            \draw (m)--(x);
            \draw (m)--(y);
            \draw (m)--(z);
            \draw[-{Latex[length=2.5mm, width=2.5mm]},decorate, decoration={snake, segment length=4mm, amplitude=0.6mm}, thick] (4,0)--(7.1,0);
        \end{scope}
        \node[r](x1) at (90:1){};
        \node[r](x2) at (210:1){};
        \node[r](x3) at (330:1){};
        \node[r](y1) at (150:2){};
        \node[r](y2) at (270:2){};
        \node[r](y3) at (30:2){};
        \node[gg](xx) at (90:2.7){};
        \node[gb](yy) at (210:2.7){};
        \node[gb](zz) at (330:2.7){};
        \draw (x1)--(y1)--(x2)--(y2)--(x3)--(y3)--(x1);
        \draw (x1)--(xx);
        \draw (x2)--(yy);
        \draw (x3)--(zz);
    \end{tikzpicture}
    \caption{Gadget representing vertex $v$ of $G$ in $H$, showing the cycle and the subdivision vertices.}
    \label{fig:2starAND2vAND2question}
\end{figure}
    
    First, 
    let $c\colon V(G) \rightarrow [q]$ be a proper colouring of $G$. We obtain a $\sigma$-colouring $c_{\sigma}:V(H) \rightarrow [q]$ of $H$ as follows. We set $c_{\sigma}(x_i^v) = c_{\sigma}(y_i^v) = c(v)$ for each $i \in [\deg_G(v)]$. It is easy to verify that $c_{\sigma}$ is a $\sigma$-colouring, as illustrated in Figure~\ref{fig:2starAND2vAND2question}.

    On the other hand, 
    let $c_{\sigma}:V(H) \rightarrow [q]$ be a $\sigma$-colouring of $H$. Fix a vertex $v \in V(G)$, and let $d$ denote its degree. 
    Observe that since $y_i^v$ has degree $2$ both its neighbours $x_i^v$ and $x_{i+1}^v$ must receive the same colour as $y_i^v$ in $c_{\sigma}$. Hence, the vertex set $\{x_1^v,\dots, x_d^v, y_1^v,\dots, y_d^v\}$ must be monochromatic. 
    Next, given $i \in [d]$, observe that $x_i^v$ has exactly one neighbour besides $y_{i-1}^v$ and $y_i^v$, namely $\pi_u(e) = x_j^u$ for some vertex $u$ and index $j$.
    As a consequence, $x_j^u$ must have a different colour than $x_i^v$. We conclude that, for every edge $e = uv \in E(G)$, the vertices $\pi_u(e)$ and $\pi_v(e)$ receive different colours. Hence, we can construct a proper colouring 
    $c\colon V(G) \rightarrow [q]$ of $G$ as follows. For each vertex $v \in V(G)$, we set $c(v) = c_{\sigma}(x_1^v)$. Isolated vertices (who do not have a corresponding subdivided cycle in $H$) can be coloured arbitrarily. 
\end{proof}


\subsubsection{\Type{0}{?}-colouring}

Note that $c\colon V(G)\rightarrow [q]$ is a \type0?-$q$-colouring if for every $v\in V(G)$ we have $|N_G(v)\cap c^{-1}(c(v))|$ is even and $|N_G(v)\cap c^{-1}(i)|<2$ for every $i\not=c(v)$. We show the following.
\begin{theorem}\label{link34:0?}
    For $q\geq 3$ it is \NP-complete to decide whether a graph admits
    a \type0?-$q$-colouring.
\end{theorem}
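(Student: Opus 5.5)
We reduce from proper $q$-colouring, which is \NP-complete for $q\geq 3$ \cite{G&J}, in the same spirit as \cref{link34:2*}: each vertex of $G$ becomes a gadget that is forced to be monochromatic, and adjacent gadgets are linked by single edges. Here the gadget is a large odd clique instead of a subdivided cycle.

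\textbf{Construction.} Given $G$, let $\Delta$ be its maximum degree and choose an odd integer $r\geq \max\{q+2,\Delta\}$. For each $v\in V(G)$ introduce a clique $K^v$ on vertices $x_1^v,\dots,x_r^v$. As in \cref{link34:2*}, fix a bijection $\pi_v$ from the edges at $v$ to distinct vertices of $K^v$. For every edge $e=uv$ of $G$, add the edge $\pi_u(e)\pi_v(e)$. Then every clique vertex has at most one neighbour outside its clique.

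\textbf{Forward direction.} Given a proper colouring $c$ of $G$, colour all of $K^v$ with $c(v)$. Each clique vertex then has $r-1$ neighbours of its own colour, which is even. Its only other neighbour, if any, lies in $K^u$ for some neighbour $u$ of $v$ and so has colour $c(u)\neq c(v)$. Hence every colour other than the vertex's own appears at most once in its neighbourhood, and this is a \type0?-colouring.

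\textbf{Backward direction.} This is the main step: we show that in any \type0?-$q$-colouring each $K^v$ is monochromatic.
\begin{itemize}
\item Let $s_a$ be the number of vertices of colour $a$ in $K^v$. A vertex $x$ of colour $b$ sees $s_a$ clique vertices of colour $a$ for every $a\neq b$, plus at most one external neighbour. The \type0? condition forces $s_a\leq 1$ for all $a\neq b$.
\item Suppose two distinct colours $a\neq b$ both occur in $K^v$. Applying the previous bound from a vertex of colour $a$ and from a vertex of colour $b$ shows that every colour occurring in $K^v$ occurs at most once. Then $r\leq q$, contradicting $r\geq q+2$. So $K^v$ is monochromatic, say of colour $a$.
\item Each vertex of $K^v$ sees $r-1$ (even) neighbours of colour $a$ inside the clique. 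If its external neighbour also had colour $a$, its own-colour count would be $r$, which is odd. This is forbidden, so adjacent cliques $K^u,K^v$ receive different colours.
\end{itemize}
Setting $c(v)$ to be the colour of $K^v$ therefore gives a proper $q$-colouring of $G$; isolated vertices of $G$ are coloured arbitrarily.

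The only delicate point is the pigeonhole step excluding non-monochromatic cliques, which depends on the external neighbour contributing at most one vertex. This is why we require that each clique vertex carries at most one inter-gadget edge and that $r$ is odd and exceeds $q$.
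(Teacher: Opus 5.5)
Your proposal is correct and follows essentially the same route as the paper: one odd clique per vertex of $G$, with single edges between cliques realising the edges of $G$. A pigeonhole argument forces each clique to be monochromatic, and the odd clique size then rules out an external neighbour of the same colour. The only difference is the choice of clique size: the paper reduces from proper $q$-colouring on graphs of maximum degree at most $q+1$, which lets it fix the clique size at $q+1$ or $q+2$, whereas you let the clique size grow with $\Delta$, which avoids that bounded-degree hardness result at the cost of a larger but still polynomial construction.
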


\begin{proof}
    We give a reduction from proper $q$-colouring on graphs of degree at most $q+1$. Given a graph $G$ of degree at most $q+1$, we construct a graph $H$ as follows. 
    We set $p=q+1$ in case $q$ is even and $q+2$ in case $q$ is odd.
    For each vertex $v \in V(G)$  we create a $p$-clique $X^v$ with vertices $x_1^{v},\dots, x_p^v$.  
    For each vertex $v \in V(G)$, let $E_v$ be the set of edges incident to $v$. We arbitrarily choose a bijection $\pi_v: E_v \rightarrow \{x_1^v,\dots,x_{d}^v\}$ where $d$ is the degree of $v$ in $G$. Finally, for each edge $e = uv \in E(G)$, we add the edge $\pi_u(e)\pi_v(e)$. For an illustration see Figure~\ref{fig:0question}.
    
    \begin{figure}[h]
  \centering
  \begin{tikzpicture}[scale=0.4]
  \def \dist {0.8cm}
  \begin{scope}[xshift=0]
      \node[b](m) at (0,0){};
      \node[gr](x) at (90:2){};
      \node[gg](y) at (210:2){};
      \node[gr](z) at (330:2){};
      \draw(m)--(x);
      \draw(m)--(y);
      \draw(m)--(z);
      \draw[-{Latex[length=2.5mm, width=2.5mm]},decorate, decoration={snake, segment length=4mm, amplitude=0.6mm}, thick] (3.5,0)--(5.62,0);
  \end{scope}
  \begin{scope}[xshift=9cm]
      \node[gr](x) at (90:3){};
      \node[gg](y) at (202:3){};
      \node[gr](z) at (340:3){};
      \node[b](w1) at (90:1.3){};
      \node[b](w2) at (162:1.3){};
      \node[b](w3) at (234:1.3){};
      \node[b](w4) at (306:1.3){};
      \node[b](w5) at (18:1.3){};
      \draw(w1)--(w2)(w1)--(w3)(w1)--(w4)(w1)--(w5)(w1)--(x);
      \draw(w5)--(w3)(w5)--(w4)(w5)--(w2)(w3)--(y);
      \draw(w4)--(w3)(w4)--(w2)(w4)--(z);
      \draw(w3)--(w2);
  \end{scope}

  \end{tikzpicture}
  \caption{Gadget representing a vertex $v$ of degree $3$ of $G$ in $H$ for $3$ colours.}
    \label{fig:0question}
\end{figure}

    First, 
    let $c\colon V(G) \rightarrow [q]$ be a proper colouring of $G$. We obtain a \type0?-colouring $c_{\type{0}{?}}:V(H) \rightarrow [q]$ of $H$ as follows. For each $v \in V(G)$ and $i\in [p]$, we set $c_{\type{0}{?}}(x_i^v) = c(v)$. 
    It is easy to verify that $c_{\type{0}{?}}$ is a \type0?-colouring, as illustrated in Figure~\ref{fig:0question}.

    On the other hand, 
    let $c_{\type{0}{?}}:V(H) \rightarrow [q]$ be a \type0?-colouring of $H$. 
    First observe that for each $v\in V(G)$ one colour must appear at least twice in $X^v$, say colour $j$. In case not all vertices of $X^v$ have colour $j$, say $c_{\type0?}(x_i^v)\not= j$, vertex $x_i^v$ has two neighbours of colour $j$ which is different from its own colour, a contradiction. Hence, $X^v$ is monochromatic. Given each vertex $x_i^v$ has at most one neighbour not contained in $X^v$, namely $\pi_u(e) = x_j^u$ for some vertex $u$ and index $j$, and an even number of neighbours inside $X^v$, $c_{\type0?}(x_i^v)\not=c_{\type0?}(x_j^u)$. Hence, for every edge $e = uv \in E(G)$, the vertices $x_1^u$ and $x_1^v$ receive different colours.
    Therefore, we obtain a proper colouring $c\colon V(G) \rightarrow [q]$ by setting $c(v) = c_{\type{0}{?}}(x_1^v)$ for each $v\in V(G)$. 
\end{proof}


\subsubsection{\Type{v}{v}-colouring}

First let us recall that $c\colon V(G)\rightarrow [q]$ is a \type{v}{v}-$q$-colouring of $G$ if for every vertex $v$ and every $i\in [q]$ we have $|N_G(v)\cap c^{-1}(i)|$ is odd or $0$. We prove the following.
\begin{theorem}\label{link34:vv}
    For $q\geq 3$ it is \NP-complete to decide whether a graph admits
    a \type{v}{v}-$q$-colouring.
\end{theorem}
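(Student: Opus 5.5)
The plan is a reduction from proper $q$-colouring, following the pattern of \cref{link34:=0} and \cref{thm:=v}. The difficulty specific to \type{v}{v} is that the constraint on a vertex's own colour (odd or zero) is the same as on every other colour. So properness has to be forced by a parity argument rather than read directly off the definition.

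\textbf{Construction.} Given $G$, I would build $H$ as follows.
\begin{itemize}
\item Replace every vertex $v$ by two false twins $v^1,v^2$, so that the $G$-neighbourhood of $v^i$ in $H$ consists of pairs $u^1,u^2$.
\item Attach to each $v^i$ a small private gadget that supplies exactly one neighbour of each colour other than $c(v^i)$. My first candidate is a clique $K_{q-1}$ completely joined to $v^i$. In a \type{v}{v}-colouring, each clique vertex has degree $q-1$, all its neighbours lie in $K_{q-1}\cup\{v^i\}$, and every colour class there must be odd or empty. The aim is to show that this forces the clique together with $v^i$ to be rainbow, exactly as in the arguments of \cref{link234:?+} and \cref{link234:0!}.
\item If the rainbow property is not forced, I would replace the clique by a slightly larger rigid gadget, for instance $K_{q+1}$ with two vertices identified or a $K_4$-type block as in \cref{claim:K4v*}, and verify its colourings by a short case analysis.
\end{itemize}

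\textbf{Completeness.} From a proper colouring $c$ of $G$, I would give $v^1,v^2$ the colour $c(v)$ and colour the gadget rainbow with the remaining $q-1$ colours. Then for each $v^i$:
\begin{itemize}
\item every colour $j\neq c(v)$ appears once in the gadget plus an even number of times among twin pairs, so an odd number of times in total;
\item the own colour $c(v)$ appears zero times, since $c$ is proper and the gadget avoids $c(v)$.
\end{itemize}
The gadget vertices can be checked directly.

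\textbf{Soundness.} Given a \type{v}{v}-colouring of $H$, I would proceed in three steps.
\begin{enumerate}
\item Show from the gadget analysis that the gadget at $v^i$ contributes exactly one neighbour of each colour other than $c(v^i)$ and none of colour $c(v^i)$.
\item Argue that $v^1$ and $v^2$ receive the same colour. This is the step I expect to be the main obstacle, because false twins have identical neighbourhoods but their own-colour constraints differ if their colours differ. The intended argument compares $N(v^1)\cap c^{-1}(j)$ with $N(v^2)\cap c^{-1}(j)$: the twin-block parts are equal and only the gadget parts differ. If $c(v^1)\neq c(v^2)$, take $j=c(v^1)$. The set $N(v^1)\cap c^{-1}(j)$ must be odd or empty, while $N(v^2)\cap c^{-1}(j)$ equals the same twin-block count plus one, so it must also be odd or empty. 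Both conditions then force the twin-block count $t$ of colour $j$ to satisfy $t=0$ or $t$ odd, and $t+1=0$ or $t+1$ odd, which is impossible. If this parity clash turns out not to bite in some configuration, the fallback is to use $q$ false twins per vertex instead of two.
\item Once twins agree, every colour count contributed by $G$-neighbours is even. For the own colour, the gadget contributes nothing, so the total count is even and must therefore be zero. Hence $c(v^1)\neq c(u^1)$ for every edge $uv$ of $G$, and $c(v):=c(v^1)$ is a proper $q$-colouring of $G$.
\end{enumerate}
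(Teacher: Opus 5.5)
Your construction fails at step 1. The $K_{q-1}$ gadget does not force a rainbow neighbourhood, and as a result $H$ has \type{v}{v}-colourings that do not come from proper colourings of $G$.

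For even $q$, the monochromatic colouring of $H$ is already valid. Each clique vertex has $q-1$ neighbours and each $v^i$ has $q-1+2\deg_G(v)$ neighbours, and both numbers are odd.

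For odd $q$, take $G=K_{q+1}$, which is not $q$-colourable and has all degrees equal to $q$. Colour as follows:
\begin{itemize}
\item every $v^1$ gets colour $1$ and every $v^2$ gets colour $2$;
\item the clique at each $v^1$ is coloured entirely with $2$, and the clique at each $v^2$ entirely with $1$.
\end{itemize}
Then $v^1$ sees $q$ neighbours of colour $1$ and $q+(q-1)=2q-1$ of colour $2$, and $v^2$ is symmetric. Each clique vertex sees $q-2$ neighbours of its own colour and one of another colour. All these counts are odd, so this is a \type{v}{v}-colouring. Hence your reduction maps no-instances to yes-instances for every $q\ge 3$.

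Your step 2 is also wrong as stated. The value $t=0$ satisfies both ``$t=0$ or $t$ odd'' and ``$t+1$ odd'', so disagreeing twins are not excluded; the example above uses exactly such twins. If you did have a gadget contributing exactly one neighbour of each other colour and none of the own colour, twin agreement would be unnecessary anyway. All non-own counts $t_k$ would then be even, and since $\sum_k t_k=2\deg_G(v)$, the own count would be even and hence $0$. Your fallback gadgets are left unspecified, so the proposal does not contain a proof.

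The missing idea is much simpler. Under \type{v}{v}, a vertex of degree $2$ cannot have two neighbours of the same colour, because it would see that colour exactly twice. So a subdivision vertex is a ready-made ``not-equal'' gadget, and neither a rainbow gadget nor twins are needed. The paper subdivides every edge $e$ of $G$ with a vertex $x^e$ and attaches a pendant to every original vertex of even degree. Soundness follows immediately from the subdivision vertices. For completeness, keep the proper colouring on the original vertices and give all subdivision vertices and pendants colour $1$. Then:
\begin{itemize}
\item every original vertex has odd degree and sees only colour $1$;
\item each $x^{uv}$ sees the two distinct colours $c(u)\neq c(v)$ once each;
\item each pendant sees a single vertex.
\end{itemize}
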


\begin{proof}
    We give a reduction from proper $q$-colouring. Given a graph $G$, we construct a graph $H$ as follows. We subdivide each edge $e \in E(G)$ with a vertex  $x^{e}$. 
    Additionally, we attach a pendant $y^v$ to each non-subdivision vertex $v$ of even degree. For an illustration, see Figure~\ref{fig:vv}.

    \begin{figure}[h]
  \centering
  \begin{tikzpicture}[scale=0.4]
  \def \dist {0.8cm}
  \begin{scope}[xshift=-0.5cm]
      \node[b](m) at (0,0){};
      \node[gr](x) at (90:2){};
      \node[gg](y) at (210:2){};
      \node[gr](z) at (330:2){};
      \draw(m)--(x);
      \draw(m)--(y);
      \draw(m)--(z);
      \draw[-{Latex[length=2.5mm, width=2.5mm]},decorate, decoration={snake, segment length=4mm, amplitude=0.6mm}, thick] (2.5,0)--(4.62,0);
  \end{scope}
  \begin{scope}[xshift=7cm]
      \node[b](m) at (0,0){};
      \node[gr](x) at (90:3){};
      \node[gg](y) at (210:3){};
      \node[gr](z) at (330:3){};
      \node[g](xx) at (90:1.5){};
      \node[g](yy) at (210:1.5){};
      \node[g](zz) at (330:1.5){};
      \draw(m)--(xx)--(x);
      \draw(m)--(yy)--(y);
      \draw(m)--(zz)--(z);
  \end{scope}

  \begin{scope}[xshift=17cm]
      \node[b](m) at (0,0){};
      \node[gr](u) at (90:2){};
      \node[gg](x) at (180:2){};
      \node[gr](y) at (270:2){};
      \node[gp](z) at (0:2){};
      \draw(m)--(u);
      \draw(m)--(x);
      \draw(m)--(y);
      \draw(m)--(z);
      \draw[-{Latex[length=2.5mm, width=2.5mm]},decorate, decoration={snake, segment length=4mm, amplitude=0.6mm}, thick] (3,0)--(5.12,0);
  \end{scope}
  \begin{scope}[xshift=26cm]
      \node[b](m) at (0,0){};
      \node[gr](u) at (90:3){};
      \node[gg](x) at (180:3){};
      \node[gr](y) at (270:3){};
      \node[gp](z) at (0:3){};
      \node[g](uu) at (90:1.5){};
      \node[g](xx) at (180:1.5){};
      \node[g](yy) at (270:1.5){};
      \node[g](zz) at (0:1.5){};
      \node[g](uuu) at (45:2.3){};
      \draw(m)--(uu)--(u);
      \draw(m)--(xx)--(x);
      \draw(m)--(yy)--(y);
      \draw(m)--(zz)--(z);
      \draw(uuu)--(m);
  \end{scope}

  \end{tikzpicture}
  \caption{Gadget representing vertex $v$ of $G$ in $H$, showing subdivision vertices and pendants.}
    \label{fig:vv}
\end{figure}

    First, 
    let $c\colon V(G) \rightarrow [q]$ be a proper colouring of $G$. We obtain a \type{v}{v}-colouring $c_{\type{v}{v}}:V(H) \rightarrow [q]$ of $H$ as follows. For each vertex $v \in V(G)$, we set $c_{\type{v}{v}}(v) = c(v)$. For each edge $e \in E(G)$, we set $c_{\type{v}{v}}(x^{e})=1$ and for each $v$ of even degree we set  $c_{\type{v}{v}}(y^{v})=1$. 
    Now each original vertex is adjacent to an odd number of vertices in colour $1$ and no further vertices. Each pendant is adjacent to one vertex of some colour (potentially its own). Each subdivision vertex is adjacent to two vertices of different colours and hence has $1$ or $0$ neighbours of each colour. Hence, $c_{\type{v}{v}}$ is a valid \type{v}{v}-colouring.
    
    On the other hand, 
    let $c_{\type{v}{v}}:V(H) \rightarrow [q]$ be a \type{v}{v}-colouring of $H$. Fix any edge $uv \in E(G)$. Observe that, by definition of \type{v}{v}-colouring, the vertex $x^{uv}$ forces $u$ and $v$ to have different colours. 
    Hence, we obtain a proper colouring $c\colon V(G) \rightarrow [q]$ of $G$ by setting $c(v) = c_{\type{v}{v}}(v)$ for every $v \in V(G)$.
\end{proof}


\subsection{Reductions from edge colouring}
In this section we present reductions from edge-$q$-colouring which is \NP-complete even on $q$-regular graphs \cite{Holyer,LeGa}.

\subsubsection{\IR colouring}\label{ss:improperRainbow}

Many colouring such as types \type{+}1, \type11 and \type1{+} 
coincide in the case when the considered graph is $q$-regular, and we call them
\emph{\iR colourings}. Formally, $c\colon V(G)\rightarrow [q]$ is an \iR $q$-colouring of a $q$-regular graph $G$ if for every $v\in V(G)$ it holds that $|N_G(v)\cap c^{-1}(i)|=1$ for every $i\in [q]$.
An example is given in Figure~\ref{fig:torwall}.

\begin{figure}[h]
  \centering
  \begin{tikzpicture}[scale=0.4]
    \foreach \r in {1,2,3,4} {
      \foreach[count=\n] \c in {r,g,b,r,g,b} {
        \pgfmathtruncatemacro{\a}{60*\n}
        \node[\c] (\n\r) at (\a:\r) {};
      }
      \draw (1\r)--(2\r)--(3\r)--(4\r)--(5\r)--(6\r)--(1\r);
    }
    \foreach \n in {2,4,6} \draw (\n1)--(\n2) (\n3)--(\n4);
    \foreach \n in {1,3,5} {
      \draw (\n2)--(\n3);
      \draw (\n1) to[bend left=45] (\n4);
    }
  \end{tikzpicture}
  \caption[rainbow colouring]{This toroidal wall is a $C_4$-free bipartite
    cubic graph that comes with an \iR colouring.}
  \label{fig:torwall}
\end{figure}

It is easy to see that a graph admits an \iR colouring if and only if
all its connected components admit one. Every $1$-regular graph has an \iR
colouring, because every component is a $K_2$ and the monochromatic colouring
satisfies the rainbow condition.

For a graph $G=(V,E)$ let $D = \{uv \in E \mid N(u) \cap N(v) \ne \es\}$
be the set of edges contained in a triangle of $G$.

\begin{lemma}
  A $q$-regular graph $G=(V,E)$ has an \iR $q$-colouring only if
  $(V, E \sm D)$ has a perfect matching.
\end{lemma}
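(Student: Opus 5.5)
Given an \iR $q$-colouring $c$ of the $q$-regular graph $G$, I would take the set $M$ of monochromatic edges, $M=\{uv\in E \mid c(u)=c(v)\}$, and show that $M$ is a perfect matching of $(V,E\sm D)$. This uses only the definition: for every $v\in V$ and every colour $i\in[q]$, exactly one neighbour of $v$ has colour $i$.

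The first step is that $M$ is a perfect matching of $G$. Fix $v$ and take $i=c(v)$. By the rainbow condition $v$ has exactly one neighbour $w$ with $c(w)=c(v)$. Hence $v$ is incident to exactly one edge of $M$, namely $vw$. Since this holds for every vertex, $M$ is a perfect matching of $G$.

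The second step is that $M\cap D=\es$. Suppose $uv\in M$ and there is some $x\in N(u)\cap N(v)$. Then $c(u)=c(v)$, so $x$ has two distinct neighbours $u$ and $v$ of colour $c(u)$. This contradicts the requirement that colour $c(u)$ appears exactly once in $N(x)$. So no monochromatic edge lies in a triangle, and therefore $M\subseteq E\sm D$.

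Combining both steps, $M$ is a set of edges of $(V,E\sm D)$ covering every vertex exactly once, i.e.\ a perfect matching of $(V,E\sm D)$. I do not expect a real obstacle here. The only point needing care is to apply the rainbow condition at the common neighbour $x$ rather than at $u$ or $v$; the condition at $u$ or $v$ alone would not exclude $uv$ from lying in a triangle.
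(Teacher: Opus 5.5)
Your proposal is correct and follows the paper's proof: $M$ is the set of monochromatic edges, it is a perfect matching because each vertex has exactly one neighbour of its own colour, and $M\cap D=\es$ because each colour appears exactly once in every neighbourhood. You spell out the detail the paper leaves implicit, namely that the rainbow condition must be applied at the common neighbour $x$ of $u$ and $v$.
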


\begin{proof}
  Let $c\colon V \to [q]$ be an \iR colouring of $G$.
  The set $M = \{uv \in E \mid c(u)=c(v)\}$ is a perfect matching of $G$
  since every vertex $v \in V$ has exactly one neighbour $u$ of the same
  colour. Moreover, $D \cap M = \es$ because each colour appears in each
 neighbourhood exactly once.
\end{proof}

\begin{corollary}
  A $2$-regular graph has an \iR colouring if and only if for every connected
  component there is an integer $k>0$ such that this component is a $C_{4k}$.
\end{corollary}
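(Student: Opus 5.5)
The statement is the corollary for $2$-regular graphs: such a graph has an \iR colouring if and only if every connected component is a $C_{4k}$ for some integer $k>0$. I would first reduce to a single cycle, using the observation that a graph admits an \iR colouring if and only if each of its components does. A connected $2$-regular graph is a cycle $C_n$ with $n\ge 3$, so the task is to show that $C_n$ has an \iR $2$-colouring exactly when $4\mid n$.

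For necessity I would use the preceding lemma together with a direct local analysis. Let $c$ be an \iR colouring of $C_n=(v_0,\dots,v_{n-1})$, indices mod $n$. Each vertex $v_i$ has exactly one neighbour of each of the two colours. Hence exactly one of the edges $v_{i-1}v_i$ and $v_iv_{i+1}$ is monochromatic. By the lemma, the monochromatic edges $M=\{uv\mid c(u)=c(v)\}$ form a perfect matching. On a cycle this forces monochromatic and bichromatic edges to alternate, so $n$ is even and every second edge is in $M$.

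Next I would walk along the cycle. Crossing an edge of $M$ keeps the colour and crossing an edge not in $M$ switches it, so the colour pattern is $aabb\,aabb\cdots$. The colour switches exactly $n/2$ times around the cycle. Returning to the starting colour requires an even number of switches, so $n/2$ is even, i.e.\ $4\mid n$. Alternatively, for $n=3$ the lemma alone kills the triangle, since all its edges lie in $D$; the parity argument covers that case anyway.

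For sufficiency, on $C_{4k}$ I set $c(v_i)=1$ if $i \bmod 4\in\{0,1\}$ and $c(v_i)=2$ otherwise. I would then check that each vertex sees one neighbour of each colour: for $i\equiv 0$ its neighbours have residues $3$ and $1$, for $i\equiv1$ residues $0$ and $2$, and the cases $i\equiv 2,3$ are symmetric. Gluing these colourings over the components finishes the proof. The only mildly delicate step is the parity count of colour switches, which is routine.
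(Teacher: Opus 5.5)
Your proof is correct and follows the route the paper intends: reduce to components, use the preceding lemma that the monochromatic edges form a perfect matching (which also rules out triangles), and then argue on a single cycle. The paper leaves the step from that lemma to the corollary implicit. Your count of colour switches ($n/2$ of them, which must be even) is exactly the missing piece, since the lemma alone would still allow, e.g., $C_6$, and your explicit period-$4$ colouring correctly settles sufficiency.
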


Let $G=(V,E)$ be a graph. The \emph{neighbourhood graph} $\NC(G)$ is the graph
obtained from $G$ by removing all edges in $E$ and completing the neighbourhood
of every vertex in $V$. Formally,
$\NC(G) = (V, \bigcup\limits_{v \in V} \! N(v)^{(2)})$.

\begin{lemma} \label{l:N(G)}
  A $q$-regular graph $G=(V,E)$ has an \iR $q$-colouring only if
  $\NC(G)$ has a proper $q$-colouring.  
\end{lemma}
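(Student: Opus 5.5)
The plan is to use the \iR colouring itself as the proper colouring of $\NC(G)$. So let $c\colon V\to[q]$ be an \iR $q$-colouring of the $q$-regular graph $G$. We claim that the same map $c$, now viewed as a colouring of $\NC(G)$, is proper. The vertex set is unchanged, so nothing needs to be constructed; we only have to check edges.

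Take any edge $xy$ of $\NC(G)$. By the definition $\NC(G) = (V, \bigcup_{v\in V} N(v)^{(2)})$, there is a vertex $v\in V$ with $x,y\in N(v)$ and $x\ne y$. In an \iR colouring, $|N(v)\cap c^{-1}(i)|=1$ holds for every $i\in[q]$. Since $|N(v)|=q$, the restriction of $c$ to $N(v)$ is a bijection onto $[q]$. In particular, the distinct vertices $x$ and $y$ of $N(v)$ receive distinct colours, so $c(x)\ne c(y)$. As the edge $xy$ was arbitrary, $c$ is a proper $q$-colouring of $\NC(G)$, which proves the implication in \cref{l:N(G)}.

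The only point needing care is that an edge of $\NC(G)$ may arise from several vertices $v$, or may also be an edge of $G$. Neither matters: one witness $v$ is enough, and the edges of $G$ itself are deliberately removed in $\NC(G)$, so they impose no constraint. There is no real obstacle here; the argument is a direct unfolding of the definitions.
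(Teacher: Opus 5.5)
Your proof is correct and is the paper's argument: the paper's one-line proof just observes that an \iR colouring of $G$ is already a proper $q$-colouring of $\NC(G)$, and you have unfolded exactly that via a witness vertex $v$ with $x,y \in N(v)$. One small remark on your final paragraph: under the formal definition, edges of $G$ that lie in a triangle do belong to $\NC(G)$ (they are in $N(w)^{(2)}$ for a common neighbour $w$), so they are not ``removed''. Your witness argument already handles them, so nothing in the proof changes.
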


\begin{proof}
  An \iR colouring of $G$ is a proper $q$-colouring of $\NC(G)$.
\end{proof}

\begin{lemma}
  For every \iR colouring of a regular graph, all colour classes have
  the same size. Moreover, its size is even.
\end{lemma}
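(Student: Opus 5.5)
Let $c\colon V\to[q]$ be an \iR colouring of a $q$-regular graph $G=(V,E)$ and write $V_i=c^{-1}(i)$ for $i\in[q]$. The plan is a double counting of edges between colour classes, followed by a matching argument for parity.

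\textbf{Equal sizes.} Fix colours $i$ and $j$ and count ordered pairs $(u,w)$ with $u\in V_i$, $w\in V_j$ and $uw\in E$. Counting from $u$: every vertex has exactly one neighbour of colour $j$, so the number of pairs is $|V_i|$. Counting from $w$: every vertex has exactly one neighbour of colour $i$, so the number is $|V_j|$. Hence $|V_i|=|V_j|$.

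The count is valid for every pair $i,j$, including $i=j$. In that case it says only that each $u\in V_i$ has exactly one neighbour in $V_i$, which is consistent. Alternatively, one can sum over vertices: $\sum_{v\in V}|N(v)\cap V_j|=|V|$, while each $w\in V_j$ is counted $\deg(w)=q$ times, so $q|V_j|=|V|$ and $|V_j|=|V|/q$ for every $j$.

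\textbf{Parity.} Each vertex has exactly one neighbour of its own colour, so the monochromatic edges inside $V_i$ form a perfect matching of $G[V_i]$. This is the set $M$ from the previous lemma restricted to $V_i$. The vertices of $V_i$ are therefore paired off by these matching edges, so $|V_i|$ is even.

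No step is a genuine obstacle. The only point needing care is to note that the matching edges of colour $i$ stay inside $V_i$, which holds by definition of $M$.
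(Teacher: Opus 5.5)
Your proof is correct, and it takes a different route from the paper's.

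\textbf{The paper's route.} The paper reduces to two colours. For colours $i\neq j$, the subgraph induced by $V_i\cup V_j$ is $2$-regular, and $c$ restricts to an improper rainbow $2$-colouring of it. By the preceding corollary every component is therefore a $C_{4k}$, coloured in consecutive pairs $1,1,2,2,\dots$, so it has exactly $2k$ vertices of each colour. Summing over components gives both $|V_i|=|V_j|$ and evenness at once, and transitivity finishes the argument.

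\textbf{Your route.} You double count the edges between $V_i$ and $V_j$ directly, or count $q|V_j|=|V|$. You then get parity separately, from $G[V_i]$ being $1$-regular.

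\textbf{Comparison.} Your argument is more elementary and self-contained. It needs neither the $C_{4k}$ characterisation nor the analysis of how two colours must alternate along a cycle. It treats every $q\ge 1$ uniformly, whereas the paper's argument starts at $q=2$. It also gives the explicit value $|V_i|=|V|/q$, from which the subsequent corollary $2q\mid n$ is immediate. In exchange, the paper's route shows more structure: any two colour classes together induce a disjoint union of cycles of length divisible by $4$, with colours appearing in consecutive pairs. That information is useful when constructing or ruling out such colourings, but it is more than this lemma needs.
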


\begin{proof}
  We apply an inductive argument on $q$, the number of colours for
  \iR colourings of $q$-regular graphs. For $q=2$, every \iR colouring
  of $C_{4k}$ has $2k$ vertices of either colour. For $q>2$ consider an
  \iR colouring $c \colon V \to [q]$ of $G=(V,E)$. For every pair of
  different colours $i,j \in [q]$ the subgraph of $G$ induced by the vertices
  of colours $i$ and $j$ is $2$-regular, and $c$ restricted to these vertices
  becomes an \iR colouring of the subgraph. Hence we have $|c^{-1}(i)| =
  |c^{-1}(j)|$. The claim follows by transitivity. 
\end{proof}

\begin{corollary}
  A $q$-regular graph on $n$ vertices has an \iR colouring only if
  $n$ is divisible by $2q$.
\end{corollary}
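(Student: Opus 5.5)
This follows directly from the preceding lemma together with a counting argument. Let $c\colon V\to[q]$ be an \iR colouring of a $q$-regular graph $G$ on $n$ vertices. The preceding lemma says that all $q$ colour classes $c^{-1}(1),\dots,c^{-1}(q)$ have the same size $s$, and that $s$ is even. Since the colour classes partition $V$, we have $n = qs$.

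Writing $s=2t$ gives $n=2qt$, so $2q$ divides $n$. No further obstacle arises, since all the work is contained in the lemma.

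For completeness, we can recheck why $s$ is even without relying on the base case $C_{4k}$. Each vertex has exactly one neighbour of its own colour, so the edges joining two vertices of colour $i$ form a perfect matching of $c^{-1}(i)$. Hence $|c^{-1}(i)|$ is even.

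Equality of the class sizes can also be checked by double counting. Fix colours $i\neq j$ and count the edges between $c^{-1}(i)$ and $c^{-1}(j)$. Each vertex of colour $i$ has exactly one neighbour of colour $j$, so there are $|c^{-1}(i)|$ such edges. By the same reasoning from the other side, there are $|c^{-1}(j)|$ such edges, so the two classes have equal size. The corollary then follows as above: $n=qs$ with $s$ even, hence $2q\mid n$.
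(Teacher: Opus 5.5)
Your proof is correct and matches the paper's: the corollary follows immediately from the preceding lemma, since $n=qs$ with the common class size $s$ even gives $2q\mid n$. Your self-contained recheck of the lemma is slightly more direct than the paper's. You show that the monochromatic edges form a perfect matching of each class, and you double count the $i$--$j$ edges. The paper instead restricts the colouring to the $2$-regular subgraph induced by two colour classes and uses that its components are cycles $C_{4k}$ with $2k$ vertices of each colour. Both routes establish the same facts.
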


The \emph{inflation} $G^{*}$ of a graph $G=(V,E)$ has the vertex set
$\{(v,e) \colon v \in e \text{ and } e \in E\}$. Its vertices $(v,e)$
and $(w,f)$ are adjacent if $v=w$ or $e=f$, see~\cite{Chvatal}.

\begin{theorem}
    For $q\geq 3$ it is \NP-complete to decide whether a $q$-regular graph admits
    an \iR colouring
\end{theorem}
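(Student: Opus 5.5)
We reduce from $q$-edge-colouring of $q$-regular graphs, which is \NP-complete for $q\geq 3$ \cite{Holyer,LeGa}, using the inflation $G^{*}$ defined above. Membership in \NP{} is immediate, since a colouring can be checked in polynomial time. Given a $q$-regular graph $G=(V,E)$, we build $G^{*}$ in polynomial time.

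\emph{Structure of $G^{*}$.} For each $v\in V$, the $q$ vertices $(v,e)$ with $e\ni v$ form a clique $X^v\cong K_q$. For each edge $e=vw$ there is the single edge $(v,e)(w,e)$. A vertex $(v,e)$ therefore has $q-1$ neighbours in $X^v$ and exactly one outside, namely $(w,e)$. Hence $G^{*}$ is $q$-regular, so the notion of \iR colouring applies to it.

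\emph{Forward direction.} Let $\varphi\colon E\to[q]$ be a proper edge colouring, and set $c(v,e)=\varphi(e)$. Consider a vertex $(v,e)$.
\begin{itemize}
\item Its neighbours in $X^v$ have the colours $\varphi(f)$ for the edges $f\neq e$ at $v$. These $q-1$ colours are pairwise distinct and all differ from $\varphi(e)$, because $\varphi$ is proper at $v$.
\item Its outside neighbour $(w,e)$ has colour $\varphi(e)$.
\end{itemize}
So every colour of $[q]$ appears exactly once in $N((v,e))$, and $c$ is an \iR colouring.

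\emph{Backward direction.} Let $c$ be an \iR $q$-colouring of $G^{*}$, and fix $v\in V$.
\begin{enumerate}
\item Take two distinct vertices $x,y\in X^v$. Since $q\geq 3$, there is a third vertex $z\in X^v$, and $z$ is adjacent to both $x$ and $y$. As every colour appears exactly once in $N(z)$, we get $c(x)\neq c(y)$.
\item Hence $X^v$ is rainbow and uses all $q$ colours.
\item Consequently, for a vertex $(v,e)$, its $q-1$ clique neighbours carry every colour except $c(v,e)$.
\item The missing colour $c(v,e)$ must therefore be supplied by the unique outside neighbour. So $c(w,e)=c(v,e)$ for every edge $e=vw$.
\end{enumerate}
Define $\varphi(e)=c(v,e)=c(w,e)$; this is well defined by step 4. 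Edges at a common vertex $v$ receive distinct colours because $X^v$ is rainbow, so $\varphi$ is a proper $q$-edge-colouring of $G$.

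The only delicate point is step 1, showing that the clique is rainbow. It is exactly where $q\geq 3$ is needed: a third common neighbour is required. This is consistent with the case $q=2$, where the corollary above shows the behaviour is different.
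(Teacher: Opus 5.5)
Your proposal is correct and follows the paper's proof: it reduces from $q$-edge-colouring of $q$-regular graphs via the inflation $G^{*}$, colours $(u,e)$ and $(v,e)$ with the colour of $e$ in the forward direction, and in the backward direction uses that each clique $X^v$ is rainbow, which forces $c(u,e)=c(v,e)$. Your step~1 adds something the paper only asserts: it justifies that $X^v$ is rainbow by using a third common neighbour in the clique, which exists because $q\ge 3$.
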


\begin{proof}
  We show that for all integers $q \ge 3$, the edges of a $q$-regular graph $G$ can
  be properly coloured by $q$ colours if and only if $G^{*}$ has an
  \iR colouring. 
  
  First let $c \colon E \to [q]$ be a proper edge colouring of $G=(V,E)$.
  For an edge $e=\{u,v\}$ of $G$ we colour the vertices $(u,e)$ and $(v,e)$
  by colour $c(e)$. This gives an \iR colouring of $G^{*}$.

  Conversely, if $c^{*}$ is an \iR colouring of $G^{*}$ then, for every
  vertex $v \in V$, on the $q$-clique $C$ induced by
  $\{(v,e) \colon e \ni v\}$ in $G^{*}$, each of the $q$ colours appears
  exactly once. Consequently, for
  every edge $e=\{u,v\}$ of $G$ we have $c^{*}(u,e)=c^{*}(v,e)$. Therefore,
  using these common colours on the edges of $G$ results in a proper edge
  colouring of $G$.
\end{proof}

\begin{corollary}\label{cor34:1!}\label{link34:11}\label{link34:1+}\label{link34:1!}\label{link34:v1}\label{link34:v!}\label{link34:*1}\label{link34:*!}\label{link34:+1}\label{link34:++}\label{link34:+!}\label{link34:?1}\label{link34:??}\label{link34:?!}\label{link34:!1}\label{link34:!+}\label{link34:!?}\label{link34:!!}
    For every $q \geq 3$, it is \NP-complete to decide whether a graph admits a \hypertarget{113}{\type11}, \type1+, \type1!, \type{v}{1}, \type{v}{!}, \type*1, \type*!, \type+1, \type++, \type+!, \type?1, \type??, \type?!, \type!1, \type!+, \type!?, or \type!!-colouring with $q$ colours.
\end{corollary}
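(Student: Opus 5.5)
The plan is to derive every listed type from the preceding theorem, which says that deciding whether a $q$-regular graph has an \iR colouring is \NP-complete for $q\geq 3$. The key observation is that on $(q+1)$-regular graphs many constraint types collapse to the \iR condition. The main obstacle is a degree mismatch: an \iR $q$-colouring as defined lives on $q$-regular graphs, yet each vertex must see all $q$ colours exactly once, including its own. That uses exactly $q$ neighbours in total, so the regularity in the preceding theorem is $q$ and everything is consistent. I will therefore work with $q$-regular inputs $G^*$ (the inflation of a $q$-regular graph is again $q$-regular, since $(v,e)$ has $q-1$ clique neighbours and one neighbour $(w,e)$).

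\textbf{Step 1: every listed type holds in an \iR colouring.} Let $c$ be an \iR $q$-colouring of a $q$-regular graph. For every vertex, each colour, its own included, occurs exactly once in the neighbourhood. A count of $1$ is odd, positive, less than $2$, equal to $1$, and odd-or-zero. So it satisfies every constraint in $\{\typesymbol1,\typesymbol v,\typesymbol *,\typesymbol +,\typesymbol ?,\typesymbol !\}$ for the own colour and in $\{\typesymbol1,\typesymbol +,\typesymbol !,\typesymbol ?\}$ for the other colours. This covers every pair $\sigma_1\sigma_2$ in the list.

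\textbf{Step 2: on a $q$-regular graph, every listed colouring is an \iR colouring.} Let $c$ be a $\sigma_1\sigma_2$-$q$-colouring of a $q$-regular graph $H$, and fix a vertex $v$ with own-colour count $a$ and other-colour counts $b_i$, so that $a+\sum_{i\ne c(v)} b_i=q$. I split by $\sigma_2$.

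If $\sigma_2\in\{\typesymbol1,\typesymbol+,\typesymbol!\}$, then each of the $q-1$ values $b_i$ is at least $1$, so $a\le 1$. Since $\sigma_1\in\{\typesymbol1,\typesymbol +,\typesymbol !\}$ forces $a\ge1$, we get $a=1$ and all $b_i=1$.

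The exceptions are the pairs where $\sigma_1\in\{\typesymbol v,\typesymbol *,\typesymbol ?\}$, that is \type{v}1, \type{v}!, \type*1, \type*!, \type?1, \type?!, and \type??, where $a=0$ is a priori allowed. For \type?? (and \type!?) we have $b_i\le1$, so $a\ge 1$; if $\sigma_1=\typesymbol!$ this already gives $a=1$, and for \type?? the bound $a<2$ gives $a=1$ as well, hence every $b_i=1$.

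\textbf{Step 3: the hard case, ruling out $a=0$.} Suppose some vertex $v$ has $a=0$ with $\sigma_2\in\{\typesymbol1,\typesymbol!,\typesymbol+\}$. Then the $q-1$ other colours share $q$ neighbours with each at least $1$ (odd, for $\typesymbol1$). For $\typesymbol!$ the total would be $q-1\ne q$, a contradiction. For $\typesymbol1$ the parity gives $q\equiv q-1+2k$, which is impossible. For $\typesymbol+$, $a=0$ is genuinely possible, but the only listed pair of that form is \type!+, where $\sigma_1=\typesymbol!$ excludes $a=0$.

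If this case analysis fails for some pair, the fallback is to modify the inflation construction directly: use the specific graph $G^*$ and check that its $q$-cliques force each colour to appear once per clique, exactly as in the preceding theorem's converse.

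\textbf{Conclusion.} On $q$-regular graphs all listed types coincide with \iR $q$-colouring. The reduction from $q$-edge-colouring via $G^*$ in the preceding theorem therefore transfers to each of them, and membership in \NP\ is clear.
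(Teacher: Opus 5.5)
Your proposal is correct and follows the paper's argument. The paper gives no separate proof: it relies on the observation that on $q$-regular graphs all the listed types coincide with \iR $q$-colouring, so hardness transfers from the inflation reduction, since $G^{*}$ is $q$-regular. Your counting argument $a+\sum_i b_i=q$, including the parity step that excludes an own-colour count of $0$ when every other colour must appear an odd number of times, fills in the case check the paper leaves implicit. The opening remark about $(q+1)$-regular graphs and the fallback paragraph are unnecessary, and you rightly discard them.
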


\begin{corollary}\label{c8}
    For $q = 3$, it is \NP-complete to decide whether a graph admits a \type{!}{v}-colouring.
\end{corollary}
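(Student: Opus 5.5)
This follows from the improper rainbow result just established, by observing that on cubic graphs with $q=3$ the constraints \type{!}{v} and improper rainbow colouring coincide. The proof of the preceding theorem shows that \iR $3$-colouring is \NP-complete on the inflations $G^{*}$ of cubic graphs $G$, and these inflations are themselves $3$-regular. So it suffices to show that on a $3$-regular graph $H$ a map $c\colon V(H)\to[3]$ is a \type{!}{v}-$3$-colouring if and only if it is an \iR $3$-colouring.

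First suppose $c$ is an \iR $3$-colouring. Then every vertex has exactly one neighbour of its own colour, and exactly one (an odd number) of each other colour. Hence $c$ is a \type{!}{v}-colouring.

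Conversely, suppose $c$ is a \type{!}{v}-colouring of a cubic graph. Each vertex $v$ has exactly one neighbour of colour $c(v)$, so its remaining two neighbours carry the two other colours. Let $i,j$ be these two colours; each count $|N(v)\cap c^{-1}(i)|$ and $|N(v)\cap c^{-1}(j)|$ must be odd or zero, and the two counts sum to $2$. The split $(2,0)$ or $(0,2)$ is excluded, because $2$ is neither odd nor $0$. The only remaining split is $(1,1)$, so every colour appears exactly once in $N(v)$, and $c$ is an \iR colouring.

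Combining the two directions, the reduction from $3$-edge-colouring of cubic graphs via $G\mapsto G^{*}$ also proves \NP-completeness of \type{!}{v}-$3$-colouring; membership in \NP is clear. The only point requiring care is the parity exclusion in the converse, which uses that $q=3$ leaves exactly two foreign colours sharing two neighbours. This is why the statement is restricted to $q=3$; for $q\ge 4$ a cubic vertex cannot see all foreign colours, so the argument does not extend.
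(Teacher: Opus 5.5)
Your proof is correct and is exactly the argument the paper intends: the corollary is read off from the \NP-completeness of \iR colouring on the cubic inflations $G^{*}$, because on $3$-regular graphs a \type{!}{v}-$3$-colouring is the same thing as an \iR $3$-colouring, by the parity count you give. Only your closing aside is slightly off-target: for $q\ge 4$ the relevant instances are $q$-regular rather than cubic, and the real obstruction there is that splits such as $(3,0,\dots,0)$ among the $q-1$ foreign colours become admissible.
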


\subsubsection{Proper rainbow colouring}\label{ss:proper-rainbow}
In this section we consider proper rainbow colourings which are defined similarly to improper rainbow colourings with the difference that proper rainbow colourings are proper colouring. We define a more general version in the following ($\PRC(q,1)$ on $q$-regular graphs is proper rainbow colouring).
Fix two integers $q,k \geq 1$. A proper colouring $c\colon V(G) \rightarrow [q]$ is a $\PRC(q,k)$ of a graph $G$, if $|N(v) \cap c^{-1}(i)| \geq k$ for all $v \in V(G)$ and all $i \neq c(v)$.

\begin{remark}
    If $G$ is $(k(q-1))$-regular, then necessarily $|N(v) \cap c^{-1}(i)| = k$ for all $v \in V$ and all $i \neq c(v)$.
\end{remark}

\begin{theorem}
    For all $q \geq 3$ and $k \geq 2$ even, it is \NP-complete to decide whether a graph admits a $\PRC(q,k)$ even for $(k(q-1))$-regular graphs.
\end{theorem}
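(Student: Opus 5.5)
We plan a reduction from proper $q$-colouring (\NP-complete for $q\geq 3$ even on graphs of maximum degree at most $q+1$, as noted at the start of this section). Membership in \NP{} is clear, and by the remark above every $\PRC(q,k)$ of a $(k(q-1))$-regular graph sees every foreign colour exactly $k$ times, so we can reason with exact counts throughout. The construction uses two kinds of pieces. Each vertex of $G$ is represented by a rigid \emph{vertex gadget}. Each edge of $G$ is represented by local rewirings between gadgets that keep every degree equal to $k(q-1)$.

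\emph{Vertex gadget.} As the basic block we take copies of the complete $q$-partite graph $K_{k,\dots,k}$, which is $k(q-1)$-regular. Our first step is a rigidity claim: in every $\PRC(q,k)$ of a block, each part is monochromatic and the $q$ parts receive $q$ distinct colours. Since adjacent vertices get different colours, any colour occupies vertices from a single part only, so the partition into colour classes refines the partition into parts. The exact count $k$ per foreign colour, together with the part sizes, then forces each part to be a single colour class. The gadget for $v\in V(G)$ is a chain of such blocks, one block per incident edge of $v$ plus spare ones. Consecutive blocks are linked by exchanging edges, as described next, in a way that forces them to carry the same colour-to-part assignment. The colour of $v$ is read off as the colour of a designated part $P_0$, which is the same in all blocks of the gadget.

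\emph{Edge exchange, where the evenness of $k$ is used.} Take two blocks $B,B'$, and pick edges $x_1y_1,x_2y_2$ in $B$ and $x_1'y_1',x_2'y_2'$ in $B'$ in corresponding parts. Delete these four edges and add the cross edges $x_iy_i'$ and $x_i'y_i$. Every degree is unchanged. Each endpoint now has one neighbour from the other block in place of a neighbour from its own block, namely the part where it lost an edge, so the colour counts are preserved exactly when the colours of the corresponding parts agree. Because $k$ is even, the $k$ vertices of a part can be paired, and the exchanged edges can be arranged so that each part of each block is touched symmetrically. This should allow us (i) to link consecutive blocks of a vertex gadget so that they force equal colour assignments, and (ii) for an edge $uv\in E(G)$, to link the two blocks via a fixed permutation of parts that sends $P_0$ of $u$'s block to a part different from $P_0$ of $v$'s block. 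Hard-wiring a fixed permutation of parts is not yet a valid encoding of $c(u)\neq c(v)$: the colourings of the two gadgets would then be related by exactly that permutation, so the gadgets could not be coloured independently. The edge link therefore needs an intermediate ``free permutation'' block whose colourings realise every permutation fixing no prescribed pair. Unused spare blocks are closed off internally so that the whole graph $H$ is $k(q-1)$-regular.

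\emph{Correctness and main obstacle.} Completeness is direct: given a proper $q$-colouring $c$ of $G$, colour every block of $v$'s gadget with $P_0$ receiving $c(v)$, and choose the assignments of the other parts consistently along the links. Soundness reads the colour of $v$ off $P_0$. The main obstacle is the joint rigidity after rewiring. Once edges are exchanged, the blocks are no longer complete multipartite, so the argument that parts are colour classes must be redone locally. We would first check that every vertex still has at least $k(q-1)-2$ neighbours inside its own block. For $q\geq 3$ and $k\geq 2$ this should force each part of a block to remain monochromatic, and then the cross edges transmit exactly the intended equality or inequality. If this local argument fails for small parameters (e.g.\ $q=3$, $k=2$), we would instead replace each block by $K_{k,\dots,k}$ with parts of size $2k$ minus a suitable perfect matching. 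This adds slack for the rigidity argument while keeping the graph regular.
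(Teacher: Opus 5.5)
There is a genuine gap. The step that actually encodes $c(u)\neq c(v)$ is never constructed, and the tools you do specify cannot supply it.

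\textbf{What your exchange can and cannot express.} Your exchange deletes $xy$ from $B$ and $x'y'$ from $B'$ and adds $xy'$ and $x'y$. The exact counts at $x$ and $y$ then force $c(y')=c(y)$ and $c(x')=c(x)$. So every exchange imposes \emph{two} equalities between part colours of the two blocks, and these involve two different parts on each side.
\begin{itemize}
\item For $q=3$ with rigid blocks, two such equalities already fix the bijection between the colour assignments of $B$ and $B'$. In any connected arrangement of rigid blocks linked by exchanges, the whole colouring is therefore determined by that of a single block, up to a global permutation. No disjunctive constraint such as $c(u)\neq c(v)$ can be expressed, and colourability of such a graph is a trivial consistency check.
\item For $q\geq 4$ some slack remains. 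You would still have to design and verify a gadget realising exactly ``$P_0$ of $u$ and $P_0$ of $v$ differ, everything else free''. That gadget must itself be $k(q-1)$-regular, with exactly $k$ neighbours of each foreign colour at every vertex.
\end{itemize}
Your ``free permutation'' block is precisely this object, and it is only named, not built.

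\textbf{Further unfinished steps.} Rigidity of the rewired blocks is only conjectured. The proposed fallback (complete multipartite with parts of size $2k$, minus a perfect matching) has degree $2k(q-1)-1\neq k(q-1)$, so it is not even in the target class.

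\textbf{The missing idea.} Reduce from $q$-edge-colouring of $q$-regular graphs, whose local structure matches $\PRC$ exactly.
\begin{itemize}
\item In the line graph $L(G)$ of a $q$-regular graph $G$, the neighbourhood of $e=uv$ is the disjoint union of the edges at $u$ and the edges at $v$ other than $e$. Each of these sets forms a $q$-clique together with $e$.
\item Hence every proper $q$-colouring of $L(G)$, i.e.\ every proper $q$-edge-colouring of $G$, sees each foreign colour exactly twice at every vertex. It is therefore a $\PRC(q,2)$ of the $2(q-1)$-regular graph $L(G)$.
\item The paper then replaces every vertex of $L(G)$ by $k/2$ pairwise non-adjacent copies, joining copies of adjacent vertices completely; this is where $k$ even enters. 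The result is $k(q-1)$-regular.
\item An edge-colouring lifts by giving all copies the colour of the edge.
\item Conversely, any $\PRC(q,k)$ restricted to one copy of $L(G)$ is a proper $q$-colouring of $L(G)$, i.e.\ a $q$-edge-colouring of $G$.
\end{itemize}
No rigidity or inequality gadgets are needed.
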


\begin{proof}
    We give a reduction from $q$-edge-colouring on $q$-regular graphs. Given a $q$-regular graph $G$, we construct a graph $H$ as follows. Take $k/2$ disjoint copies of the line graph $L(G)$. For each edge $e = uv \in E(G)$, let $e_1,\ldots,e_{k/2}$ be the $k/2$ copies of the vertex of $L(G)$ corresponding to $e$. We add an edge between $e_i$ and every neighbour of $e_j$ within its own copy of $L(G)$, for each $i \neq j$.

    First, 
    let $c\colon E(G) \rightarrow [q]$ be a $q$-edge-colouring of $G$. We obtain a $\PRC(q,k)$-colouring $c':V(H) \rightarrow [q]$ of $H$ as follows. For each edge $e = uv \in E(G)$, we set $c'(e_i) = c(e)$ for each $i \in [k/2]$. It is easy to verify that $c'$ is a $\PRC(q,k)$-colouring.

    On the other hand, 
    let $c':V(H) \rightarrow [q]$ be a $\PRC(q,k)$-colouring of $H$. Fix an edge $e = uv \in V(G)$.
    
    \begin{claim}
        Each copy of the vertex of $L(G)$ corresponding to $e$ has at most two neighbours within its own copy for each colour different from its own. Moreover, for each such colour, one of these neighbours corresponds to a vertex representing an edge in $G$ incident to $u$, and the other corresponds to a vertex representing an edge in $G$ incident to $v$.
    \end{claim}
    
    \begin{claimproof}
        Suppose, by contradiction, that this is not the case. Let $e_i$ be one of the $k/2$ copies of the vertex of $L(G)$ corresponding to $e$. If $e_i$ has at least three neighbours in its own copy sharing the same colour, then, by the pigeonhole principle, at least two of them correspond to vertices representing edges incident to the same endpoint, either $u$ or $v$. Without loss of generality, assume that two of these neighbours correspond to edges incident to $u$. By definition of the line graph, this means that these vertices are adjacent, a contradiction.
    \end{claimproof}

    We conclude that each copy of the vertex of $L(G)$ corresponding to $e$ has at most two neighbours within its own copy for each colour different from its own. Since there are $k/2$ copies of the line graph, it follows that $e_i$ has at most $k$ neighbours of each colour. By definition of $\PRC(q,k)$-colouring, this quantity must be exact. Therefore, each copy of the vertex of $L(G)$ corresponding to $e$ has exactly two neighbours in its own copy, for each colour except its own. Moreover, for each such colour, these two neighbours correspond to edges in $G$ incident to distinct endpoints $u$ and $v$.
    
    We obtain a $q$-edge-colouring $c\colon E(G) \rightarrow [q]$ of $G$ as follows. For each edge $e = uv \in E(G)$, we set $c(e) = c'(e_1)$. It is easy to verify that $c$ is a $q$-edge-colouring.
\end{proof}

\begin{corollary}\label{link34:02}\label{link34:v2}\label{link34:*2}\label{link34:=2}\label{link34:?2}
    For every $q \geq 3$, it is \NP-complete to decide whether a graph admits a \type02, \type{v}{2}, \type*2, \type=2 or \type?2-colouring with $q$ colours.
\end{corollary}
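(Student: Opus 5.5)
The plan is to reduce from $\PRC(q,2)$, which the preceding theorem (with $k=2$) shows is \NP-complete for every $q\geq 3$ even when restricted to $2(q-1)$-regular graphs. On such an input $G$ the remark gives that every $\PRC(q,2)$ is a proper colouring in which each colour $i\neq c(v)$ occurs exactly twice in $N(v)$. The idea is to show that, on $2(q-1)$-regular graphs, each of the five problems \type02, \type{v}2, \type*2, \type=2 and \type?2 with $q$ colours coincides with $\PRC(q,2)$. The graph $H$ from that reduction, built from the single copy of $L(G)$ for $k=2$, is then itself the hard instance, and no new gadget is needed.

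One inclusion is immediate. A $\PRC(q,2)$ of a $2(q-1)$-regular graph has $0$ neighbours of the vertex's own colour, which is even, zero, odd-or-zero, less than $2$ and arbitrary. It also has exactly $2$ neighbours of every other colour, which is even and positive. So it is a $\sigma$-$q$-colouring for all five types $\sigma$.

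For the converse, let $c$ be a $\sigma$-$q$-colouring of a $2(q-1)$-regular graph. Each of the $q-1$ colours other than $c(v)$ appears a positive even number of times, hence at least twice, in $N(v)$. These account for at least $2(q-1)=\deg(v)$ neighbours, so each appears exactly twice and no neighbour has colour $c(v)$. Therefore $c$ is proper and is a $\PRC(q,2)$.

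The only real work is this counting argument. The one point to check is that it never uses the same-colour constraint $\sigma_1$. That is the case: the regularity alone forces the same-colour count to be $0$, and $0$ satisfies every same-colour constraint on the list, so the argument is uniform across all five types. Membership in \NP\ is trivial, which completes the plan.
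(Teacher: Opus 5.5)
Your proposal is correct and matches the paper's proof, which consists of the single observation that on $2(q-1)$-regular graphs all five problems coincide with $\PRC(q,2)$. Your counting argument supplies the justification the paper leaves implicit: each of the $q-1$ other colours must occur at least twice, this exhausts the degree, so properness is forced and the same-colour constraint becomes irrelevant.
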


\begin{proof}
    For $(2(q-1))$-regular graphs, these problems are exactly $\PRC(q,2)$.
\end{proof}

Recall that $c\colon V(G)\rightarrow [q]$ is a \type!2-$q$-colouring of $G$ if for every vertex $v$ it holds that $|N_G(v)\cap c^{-1}(c(v))|=1$ and $|N_G(v)\cap c^{-1}(i)|$ is even and larger than $0$ for every $i\not=c(v)$. We prove the following.

\begin{theorem}\label{link34:!2}
    For $q\geq 3$ it is \NP-complete to decide whether a graph admits
    a \type!2-$q$-colouring.
\end{theorem}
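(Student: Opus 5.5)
We reduce from $\PRC(q,2)$ on $(2(q-1))$-regular graphs, i.e.\ from \type=2-$q$-colouring, which is \NP-complete for every $q\geq 3$ by \cref{link34:=2}. The idea follows the pattern of \cref{link34:!0}: every vertex of $G$ gets exactly one private neighbour of its own colour, and this neighbour is inert with respect to the other colour constraints.

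Given a $(2(q-1))$-regular graph $G$, build $H$ from two disjoint copies $G^1,G^2$ of $G$ by adding the edge $v^1v^2$ for every $v\in V(G)$ (the prism $G\,\square\,K_2$).

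For the forward direction, let $c$ be a $\PRC(q,2)$ of $G$ and colour $v^1,v^2$ with $c(v)$. The colouring $c$ is proper, so each $v^\alpha$ has exactly one neighbour of its own colour, namely its twin copy. For every colour $i\neq c(v)$, the vertex $v^\alpha$ sees exactly the two neighbours of colour $i$ it has inside $G^\alpha$, and $2$ is even and positive. Hence this is a \type!2-colouring of $H$.

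For the backward direction, take a \type!2-colouring $c'$ of $H$. Each vertex $v^1$ has degree $2q-1$, which is odd. It has exactly one neighbour of its own colour and, for each of the $q-1$ other colours, an even number $\geq 2$ of neighbours. Summing gives $1+\sum_{i\neq c'(v^1)}|N\cap c'^{-1}(i)| = 2q-1$, and each term is $\geq 2$, so each term is exactly $2$.

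The main obstacle is ruling out that the unique same-coloured neighbour of $v^1$ lies inside $G^1$ instead of being $v^2$. Suppose $c'(v^2)\neq c'(v^1)$. Then the same-coloured neighbour $w^1$ of $v^1$ lies in $G^1$. Now look at $v^2$. It needs exactly two neighbours of colour $c'(v^1)$; one is $v^1$, so exactly one lies in $G^2$. Counting alone does not exclude this, so parity or degree counting is not enough here.

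To force the twin structure I would strengthen the gadget. Instead of a single edge $v^1v^2$, the link between copies would be a pendant $K_2$ attached as in \cref{link234:!0}. That is, $v^1$ and $v^2$ are both joined to $x^v$, and $x^v$ has a leaf $y^v$. The leaf forces $c'(x^v)=c'(y^v)$, so $x^v$ already has its unique same-coloured neighbour $y^v$, and therefore $c'(v^1)\neq c'(x^v)\neq c'(v^2)$. The constraint \type{!}{2} at $x^v$ also requires $v^1,v^2$ to have the same colour, since otherwise each of two colours appears once in $N(x^v)$. With this link, the constraint at $v^1$ is no longer the right one, because $v^1$ now needs its own-colour neighbour inside $G^1$ and one extra neighbour $x^v$ of a foreign colour, which breaks the evenness.

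So the cleanest route is a different source problem. I would reduce from \type!*-type exact defect-1 colouring in the style of \cref{link234:!0}, padding each foreign colour class at $v$ to an even positive count with pendant $K_2$'s as there. If some colour class is empty at $v$, add two $K_2$-pendants of that colour. The private pendants then make all foreign counts even and at least $2$. The reverse direction goes exactly as in \cref{link234:!0}: leaves force each pendant $K_2$ to be monochromatic and of a colour different from $v$, so restricting to $G^1$ yields a \type!*-colouring. Checking that the pendant vertices $x_i^v$ themselves satisfy \type!2, i.e.\ that they see their foreign colour twice, is the step I expect to need the two-copy trick of \cref{link234:!0}, with $x_i^v$ adjacent to both $v^1$ and $v^2$.
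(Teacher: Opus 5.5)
\textbf{There is a genuine gap.} Your final construction fails, and so does your second one, for a reason that has nothing to do with the source problem.

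In a \type{!}{2}-$q$-colouring, every vertex has exactly one neighbour of its own colour. It also has at least two neighbours of each of the other $q-1$ colours. So every vertex of $H$ must have degree at least $2q-1$. A pendant $K_2$ contains a leaf $y_i^v$ of degree $1$, which sees no foreign colour at all. The vertex $x_i^v$, adjacent to $v^1,v^2,y_i^v$, has degree $3<2q-1$. Hence your $H$ is never \type{!}{2}-colourable, and the reduction sends every instance to NO. The analogy with \cref{link234:!0} breaks exactly here: \typesymbol{0} permits a foreign colour to be absent, while \typesymbol{2} does not.

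Two further problems:
\begin{itemize}
\item Adding two pendants of a colour whenever that colour class is empty at $v$ makes $H$ depend on a colouring. The construction must be fixed from $G$ alone.
\item The prism attempt remains unproven, as you noticed yourself. Nothing excludes colourings with $c'(v^1)\neq c'(v^2)$. In such a colouring $v$ sees, inside $G^1$, one neighbour of its own colour, one of colour $c'(v^2)$, and two of every remaining colour.
\end{itemize}

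\textbf{What is missing} is a rigid gadget in which every vertex has degree exactly $2q-1$ and which gives each $v^\alpha$ a private neighbour of its own colour. The paper starts as you did: from $\PRC(q,2)$ on $2(q-1)$-regular graphs, with two copies $G^1,G^2$. It then adds the following for each $v$:
\begin{itemize}
\item a clique $K^{v,\alpha}\cong K_{2q-1}$ for each $\alpha$, attached to $v^\alpha$ through one vertex $w^{v,\alpha}_{2q-1}$;
\item an edge $x^vy^v$ shared by both copies;
\item edges from $w^{v,\alpha}_1,\dots,w^{v,\alpha}_{q-1}$ to $x^v$ and from $w^{v,\alpha}_q,\dots,w^{v,\alpha}_{2q-2}$ to $y^v$.
\end{itemize}
Your degree count then shows that every vertex sees each foreign colour exactly twice.

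Suppose $v^1w^{v,1}_{2q-1}$ were bichromatic. Then colour $c(v^1)$ occurs exactly once among $w^{v,1}_1,\dots,w^{v,1}_{2q-2}$. So each of these vertices needs its outside neighbour to have colour $c(v^1)$, which forces $c(x^v)=c(y^v)=c(v^1)$. Whichever of $x^v,y^v$ is adjacent to the unique clique vertex of colour $c(v^1)$ then has two neighbours of its own colour, a contradiction. Hence the unique same-coloured neighbour of $v^1$ lies outside $G^1$. Your counting argument then shows that the restriction to $G^1$ is a $\PRC(q,2)$.
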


\begin{proof}
    We give a reduction from $\PRC(q,2)$-colouring on $2(q-1)$-regular graphs. Given a $2(q-1)$-regular graph $G$, we construct a graph $H$ as follows. Take two disjoint copies of $G$, called $G^1$ and $G^2$. For each vertex $v \in V(G)$, we introduce two  $K_{2q-1}$, denoted $K^{v,1}$ and $K^{v,2}$. For $\alpha \in \{1,2\}$, the vertices of the $K_{2q-1}$ $K^{v,\alpha}$ are denoted by $w_1^{v,\alpha}, \dots, w_{2q-1}^{v,\alpha}$.
    Additionally, for every $v\in V(G)$ we introduce a $K_2$ with vertices $x^{v},y^{v}$. Finally, if $v^1$ and $v^2$ are the two copies of $v$, we add the edges $v^{\alpha}w_{2q-1}^{v,\alpha}$ for each $\alpha \in \{1,2\}$, $w_i^{v,\alpha}x^{v}$ for each $i \in \{1,\dots,q-1\}$ and each $\alpha \in \{1,2\}$, and $w_i^{v,\alpha}y^v$ for each $i \in \{q,\dots,2q-2\}$ and each $\alpha \in \{1,2\}$. This completes the construction of $H$, which is illustrated in Figure~\ref{fig:!2}.

    \begin{figure}[h]
        \centering
        \begin{tikzpicture}[scale=0.5]
        \begin{scope}[xshift=-14cm]
            \node[r](m) at (0,0){};
            \node[gg](u) at (-1.3,0){};
            \node[gb](x) at (0,-1.3){};
            \node[gg](y) at (1.3,0){};
            \node[gb](z) at (0,1.3){};
            \draw (m)--(u);
            \draw (m)--(x);
            \draw (m)--(y);
            \draw (m)--(z);
            \draw[-{Latex[length=2.5mm, width=2.5mm]},decorate, decoration={snake, segment length=4mm, amplitude=0.6mm}, thick] (3,0)--(6,0);
        \end{scope}
            \node[r](m1) at (-5,0){};
            \node[gg](u1) at (-6.3,0){};
            \node[gb](x1) at (-5,-1.3){};
            \node[gg](y1) at (-3.7,0){};
            \node[gb](z1) at (-5,1.3){};

            \node[b](a1) at (-1.5,1.5){};
            \node[g](b1) at (-1.5,0.5){};
            \node[b](c1) at (-1.5,-0.5){};
            \node[g](d1) at (-1.5,-1.5){};
            \node[r](e1) at (-3,1){};

            \node[r](v) at (0,1){};
            \node[r](w) at (0,-1){};

            \node[b](a2) at (1.5,1.5){};
            \node[g](b2) at (1.5,0.5){};
            \node[b](c2) at (1.5,-0.5){};
            \node[g](d2) at (1.5,-1.5){};
            \node[r](e2) at (3,1){};
            
            \node[r](m2) at (5,0){};
            \node[gg](u2) at (6.3,0){};
            \node[gb](x2) at (5,-1.3){};
            \node[gg](y2) at (3.7,0){};
            \node[gb](z2) at (5,1.3){};
            
            \draw (m1)--(u1)(m1)--(x1)(m1)--(y1)(m1)--(z1);
            \draw (e1)--(a1)(e1)--(b1)(e1)to[bend right=15](c1)(e1)to[bend right=25](d1);
            \draw (e1) to[bend right=25](m1);
            \draw (a1)--(v)(b1)--(v)(c1)--(w)(d1)--(w);
            \draw (v)--(w);
            \draw (a2)--(v)(b2)--(v)(c2)--(w)(d2)--(w);
            \draw (e2) to[bend left=25](m2);
            \draw (e2)--(a2)(e2)--(b2)(e2)to[bend left=15](c2)(e2)to[bend left=25](d2);
            \draw (m2)--(u2)(m2)--(x2)(m2)--(y2)(m2)--(z2);
            \draw (a1)--(b1)--(c1)--(d1)(a1)to[bend right=35](c1)(b1)to[bend right=35](d1)(a1)to[bend right=40](d1);
            \draw (a2)--(b2)--(c2)--(d2)(a2)to[bend left=35](c2)(b2)to[bend left=35](d2)(a2)to[bend left=40](d2);
        \end{tikzpicture}
        \caption{Gadget representing vertex $v$ of $G$ in $H$, showing the two copies of $G$, the two $K_5$ (here $q=3$), vertices $x^v$ and $y^v$ and all additional edges.}
        \label{fig:!2}
    \end{figure}

    First, 
    let $c\colon V(G) \rightarrow [q]$ be a $\PRC(3,2)$-colouring of $G$. We obtain a \type!2-colouring $c_{\type{!}{2}}:V(H) \rightarrow [q]$ of $H$ as follows. Fix a vertex $v \in V(G)$. For each $\alpha \in \{1,2\}$, we set $c_{\type{!}{2}}(v^\alpha) = c_{\type{!}{2}}(w_1^{v,\alpha}) = c_{\type{!}{2}}(x^v) = c_{\type{!}{2}}(y^v) = c(v)$. Let $c_1,\dots, c_{q-1}$ be the remaining colours (the colours different from $c(v)$). We assign $w_i^{v,\alpha}$ and $w_{q-1+i}^{v,\alpha}$ the colour $c_i$ for each $i \in [q-1]$ and for each $\alpha \in \{1,2\}$. It is easy to verify that $c_{\type{!}{2}}$ is a \type!2-colouring, as illustrated in Figure~\ref{fig:!2}.

    On the other hand, 
    let $c_{\type{!}{2}}:V(H) \rightarrow [q]$ be a \type!2-colouring of $H$. 
    
    \begin{claim}
        The edge $v^1 w_{2q-1}^{v,1}$ is monochromatic for every $v\in V(G)$.
    \end{claim}
    
    \begin{claimproof}
        Suppose, by contradiction, that this is not the case. Without loss of generality, assume $c_{\type{!}{2}}(v^1) = 1$ and $c_{\type{!}{2}}(w_{2q-1}^{v,1}) = 2$. By definition of \type!2-colouring, out of the remaining vertices of the $K_{2q-1}$ $K^{v,1}$ exactly one vertex, say $w_i^{v,1}$, receives colour $1$, exactly one vertex, say $w_j^{v,1}$ receives colour $2$, and there are precisely two vertices of each of the remaining colours.  In this configuration, each vertex of $K^{v,1}$ still requires another neighbour of colour $1$ in their neighbourhood, which forces both $x^v$ and $y^v$ to be coloured $1$. But since $w_i^{v,1}$, $x^v$ and $y^v$ have colour $1$, either $x^v$ or $y^v$ (depending on whether $i\leq q-1$ or $q\leq i\leq q-2$) has two neighbours  of its own colour, a contradiction.
    \end{claimproof}
    As a consequence of the claim it follows that restricting the colouring $c_{\type{!}{2}}$  to $G^1\cong G$ results in a 
     $\PRC(3,2)$-colouring $c\colon V(G) \rightarrow [q]$.
\end{proof}

In order to prove the \NP-Completeness of $PRC(q,k)$ for each $q \geq 3$ and each $k \geq 2$ even, we gave a reduction from $q$-edge-colouring on $q$-regular graphs. This approach does not appear to readily work in order to prove the \NP-Completeness of $PRC(q,1)$, which corresponds to proper rainbow $q$-colouring. A different approach, relying on the so called graph covering problems, turns out to be fruitful.

For a fixed graph $H$, the $H$-cover problem consist in, given a graph $G$, finding a ``local isomorphism'': a labeling of vertices of $G$ by vertices of $H$ so that the label set of the neighbourhood of every $v \in V(G)$ is equal to the neighbourhood (in $H$) of the label of $v$ and each neighbour of $v$ is labeled by a different neighbour of the label of $v$~\cite{kratochvil1997covering}. In other words, we want a graph homomorphism $f:V(G) \rightarrow V(H)$ such that $f$ is bijective in the neighbourhood of each vertex, which is called a covering projection from $G$ to $H$~\cite{bok2024list}.

An interesting case is when $H$ is $k$-regular for a given $k$. It is known that the $H$-cover problem is \NP-Complete for $k \geq 3$ and polynomial time solvable for $k \leq 2$~\cite{fiala2008locally}. In particular, it is \NP-Complete for $H = K_k$ with $k \geq 4$. This result provides exactly what we need to prove the following.

\begin{theorem}
    For all $q \geq 4$ it is \NP-complete to decide whether a $(q-1)$-graph admits a proper rainbow $q$-colouring.
\end{theorem}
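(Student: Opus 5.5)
We will show that for every $q\geq 4$ the problem is equivalent to $K_q$-cover restricted to $(q-1)$-regular graphs, and then invoke the \NP-completeness of $K_q$-cover. For $q\geq 4$, $K_q$ is $k$-regular with $k=q-1\geq 3$, so by the result of Fiala and Kratochv{\'\i}l~\cite{fiala2008locally} (quoted above) deciding whether a graph $G$ admits a covering projection onto $K_q$ is \NP-complete. Membership in \NP{} is clear, since a colouring is a polynomial certificate. So the work is to prove that, for a $(q-1)$-regular graph $G$, a map $c\colon V(G)\rightarrow [q]$ is a proper rainbow $q$-colouring if and only if it is a covering projection $G\rightarrow K_q$ (identifying $[q]$ with $V(K_q)$).

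\emph{Rainbow colouring gives a covering.} Let $c$ be a proper rainbow colouring. Properness means adjacent vertices get distinct colours, so $c$ is a homomorphism to $K_q$. Each colour $i\neq c(v)$ occurs exactly once in $N(v)$. Hence $c$ restricted to $N(v)$ is a bijection onto $[q]\setminus\{c(v)\}=N_{K_q}(c(v))$.

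\emph{Covering gives a rainbow colouring.} Conversely, let $f$ be a covering projection to $K_q$. It is a homomorphism, so it is proper. Bijectivity on each neighbourhood means every colour other than $f(v)$ appears exactly once in $N(v)$.

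\emph{Restricting the inputs.} A graph admitting a covering of $K_q$ is necessarily $(q-1)$-regular, since the neighbourhood bijection forces $\deg(v)=q-1$. So the reduction is the identity on inputs: if the instance is not $(q-1)$-regular, we output a fixed no-instance. Thus the hardness of $K_q$-cover transfers verbatim to proper rainbow $q$-colouring of $(q-1)$-regular graphs. This matches the remark for $\PRC(q,1)$.

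The main point needing care is making sure the cited hardness of $H$-cover for $k$-regular $H$, $k\geq 3$, applies to the specific target $H=K_q$ with $q-1\geq 3$. That is exactly why the statement requires $q\geq 4$. For $q=3$ the target $K_3$ is $2$-regular, where the cover problem is polynomial, and the proof gives nothing, consistent with the open case mentioned in the introduction. I would quote the Fiala–Kratochv{\'\i}l theorem in its form covering all $k$-regular targets with $k\geq 3$. If only a weaker version were available, I would fall back on the explicit statement of Kratochv{\'\i}l et al.~\cite{kratochvil1997covering} for complete graphs $K_{k+1}$, $k\geq 3$.
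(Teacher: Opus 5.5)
Your proposal is correct and follows the paper's proof. The paper likewise observes that a proper rainbow $q$-colouring is exactly a covering projection onto $K_q$ and then invokes the \NP-completeness of $K_q$-cover for $q\geq 4$; it cites \cite[Theorem 4.1]{kratochvil1994complexity}, while you cite the general $k$-regular result via \cite{fiala2008locally} or \cite{kratochvil1997covering}, and you spell out the equivalence, \NP-membership and the regularity filter more explicitly. (One aside: by your own equivalence the case $q=3$ is not open but polynomial, since a $2$-regular graph covers $K_3$ if and only if every cycle has length divisible by $3$.)
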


\begin{proof}
    Note that asking whether $G$ has a proper rainbow $q$-colouring is equivalent to asking whether $G$ is a covering graph of $K_q$. Thus, the result follows from~\cite[Theorem 4.1]{kratochvil1994complexity}.
\end{proof}

\begin{corollary}\label{link4:01}\label{link4:0+}\label{link4:0!}\label{link4:v1}\label{link4:v+}\label{link4:v!}\label{link4:*1}\label{link4:*+}\label{link4:*!}\label{link4:=1}\label{link4:=+}\label{link4:=?}\label{link4:=!}\label{link4:?1}\label{link4:?+}\label{link4:?!}
    For every $q \geq 4$, it is \NP-complete to decide whether a graph admits a \type01, \type0+, \type0!, \type{v}{1}, \type{v}{+}, \type{v}{!}, \type*1, \type*+, \type*!, \type=1, \type=+, \type=?, \type=!, \type?1, \type?+, or \type?!-colouring with $q$ colours.
\end{corollary}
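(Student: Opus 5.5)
The plan is to reduce from proper rainbow $q$-colouring on $(q-1)$-regular graphs, which the preceding theorem shows to be \NP-complete for every $q\geq 4$. The key observation is that on a $(q-1)$-regular graph every variant listed in the corollary coincides with proper rainbow colouring. So the identity map $G\mapsto G$ is already the reduction, and membership in \NP{} is immediate for all these variants.

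To verify the coincidence, fix a $(q-1)$-regular graph $G$ and a colouring $c\colon V(G)\to[q]$ of one of the listed types $\sigma_1\sigma_2$. For a vertex $v$, write $a_i=|N(v)\cap c^{-1}(i)|$, so that $\sum_i a_i=q-1$. In every listed type the second symbol $\sigma_2\in\{\typesymbol{1},\typesymbol{+},\typesymbol{!}\}$ forces $a_i\geq 1$ for each of the $q-1$ colours $i\neq c(v)$. Since these $a_i$ already sum to at least $q-1$, we get $a_i=1$ for all $i\neq c(v)$ and $a_{c(v)}=0$.

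It remains to check that each first symbol $\sigma_1\in\{\typesymbol{0},\typesymbol{v},\typesymbol{*},\typesymbol{=},\typesymbol{?}\}$ is satisfied by $a_{c(v)}=0$. It is: $0$ is even, it is ``odd or zero'', it is unconstrained, it equals $0$, and it is less than $2$. Likewise $a_i=1$ satisfies each of $\typesymbol{1}$, $\typesymbol{+}$ and $\typesymbol{!}$. Hence on $(q-1)$-regular graphs the $\sigma_1\sigma_2$-$q$-colourings are exactly the proper rainbow $q$-colourings.

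The main point needing care is confirming that the hardness theorem genuinely restricts to $(q-1)$-regular inputs (via the $K_q$-cover formulation), so that the regularity argument applies to every instance. No gadget construction is required.
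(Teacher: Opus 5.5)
Your route is the one the paper intends. The corollary is stated right after the $K_q$-cover hardness theorem, and its justification is that on $(q-1)$-regular graphs every listed variant coincides with proper rainbow $q$-colouring. This parallels how \cref{link34:02} follows from coincidence with $\PRC(q,2)$ on $2(q-1)$-regular graphs.

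There is one concrete slip in your case analysis. Not every listed type has second symbol \typesymbol{1}, \typesymbol{+} or \typesymbol{!}: the type \type{=}{?} has second symbol \typesymbol{?}. That symbol only requires $a_i<2$ and does not force $a_i\geq 1$, so your counting step does not apply to this type as written.

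For \type{=}{?} you have to run the count from the first symbol instead. \typesymbol{=} gives $a_{c(v)}=0$, so the $q-1$ neighbours of $v$ are distributed over the $q-1$ colours $i\neq c(v)$. By \typesymbol{?} each of these colours receives at most one neighbour, hence $a_i=1$ for all $i\neq c(v)$. For the converse direction you also need to note that $a_i=1$ satisfies \typesymbol{?}. With this case added, your argument covers all sixteen types and is complete.
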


\subsubsection{\Type{2}{2} and \type{+}{2}-colouring}

Recall that $c\colon V(G)\rightarrow [q]$ is a $\sigma$-$q$-colouring for $\sigma\in \{\type22, \type+2\}$ of $G$  if for every vertex $v$ it holds that $|N_G(v)\cap c^{-1}(i)|$ is even and larger than $0$ for every $i\neq c(v)$. Additionally, for \type22-$q$-colouring  $|N_G(v)\cap c^{-1}(c(v))|$ is even and larger than $0$ and for \type+2-$q$-colouring $|N_G(v)\cap c^{-1}(i)|>0$.
\begin{theorem}\label{thm:22}\label{link34:22}\label{link34:+2}
    For $q \geq 3$ it is \NP-complete to decide whether a graph admits a  \type22 or a \type+2-$q$-colouring.
\end{theorem}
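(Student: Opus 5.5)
The plan is to reduce from $\PRC(q,2)$-colouring on $2(q-1)$-regular graphs, which is \NP-complete for every $q\ge 3$ by the theorem on $\PRC(q,k)$ (with $k=2$). This mirrors the reduction used for \type!2. In a $\PRC(q,2)$-colouring every non-own colour already appears exactly twice, so it already satisfies the second constraint of both \type22 and \type+2. What is missing is the first constraint: every vertex needs a positive (and, for \type22, even) number of neighbours of its own colour.

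The first step is a local gadget that adds to each vertex $v$ exactly two neighbours of its own colour and an even, positive-compatible number of neighbours of every other colour. My first candidate is the lexicographic product $H=G[K_3]$: replace each $v$ by a triangle $T_v$, and each edge $uv$ by a complete bipartite graph $K_{3,3}$ between $T_u$ and $T_v$. Given a $\PRC(q,2)$-colouring $c$ of $G$, colour all three vertices of $T_v$ by $c(v)$. Then every vertex sees exactly $2$ neighbours of its own colour and exactly $6$ of each other colour. Both numbers are even and positive, so this is simultaneously a \type22- and a \type+2-colouring. That settles the easy direction.

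The converse is where I expect the main obstacle. We must show that from any \type22- (or \type+2-) colouring of $H$ we can recover a $\PRC(q,2)$-colouring of $G$. The difficulty is that \type22 and \type+2 only impose parity/positivity, not exact counts, so the product's monochromatic triangles and the exact value $2$ are not forced. I would first try a counting argument: $H$ is $(6(q-1)+2)$-regular, each non-own colour needs an even count at least $2$, and the own colour at least $2$ (resp.\ at least $1$ for \type+2). From this I would try to deduce that every $T_v$ is monochromatic, and that the colour multiset on each $\bigcup_{u\in N(v)}T_u$ is forced. Since this is probably too weak on its own, the fallback is to rigidify the gadget as in the proof for \type!2. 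I would use two copies $G^1,G^2$ of $G$, and attach to each $v^\alpha$ a clique-based gadget (for instance a $K_{2q-1}$ joined to shared vertices $x^v,y^v$) whose only admissible colourings give $v^\alpha$ exactly two neighbours of its own colour and an even number of each other colour. I would then prove a claim analogous to the one there: the edges from $v^\alpha$ into its gadget must be monochromatic. The proof would go by contradiction via parity at the gadget's vertices. Once that claim holds, every vertex of $G^1$ receives an even number of non-own colours from its gadget. The total degree constraint in $G^1$ (degree $2(q-1)$, with each non-own colour needing an even positive count) then forces exactly two neighbours of each other colour and none of its own colour inside $G^1$. Hence the restriction to $G^1\cong G$ is a $\PRC(q,2)$-colouring.

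For \type+2 the same construction should work, with the own-colour condition relaxed to positivity. I would check that the gadget claim's parity argument uses only the evenness of non-own colours and not evenness of the own colour, so that one reduction proves both statements.
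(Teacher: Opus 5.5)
\textbf{There is a genuine gap: you never prove the converse direction, and the fallback you turn to cannot work as described.} You name the $K_{2q-1}$-plus-$x^v,y^v$ gadget from the \type!2 proof. In that gadget every vertex has odd degree $2q-1$. Under \type22, however, every vertex's degree is a sum of $q$ even colour counts, so the graph you would build has no \type22-colouring at all, and even the easy direction fails. The gadget also gives $v^\alpha$ one own-coloured gadget neighbour, not two. (On that $(2q-1)$-regular graph \type+2 coincides with \type!2, so it would settle \type+2 but never \type22.) The claim that the edges into the gadget are monochromatic is the whole content of the converse. You state it only as an intention, and you give no gadget that would make it true.

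\textbf{The missing idea is that your first candidate already works.} You dismissed it on the wrong grounds: monochromatic triangles \emph{are} forced. The vertices of $T_v$ are pairwise true twins in $G[K_3]$. Suppose twins $a,b$ had different colours, and let $m$ be the number of vertices of colour $c(a)$ in $N(a)\setminus\{b\}=N(b)\setminus\{a\}$. Then $a$ has $m$ neighbours of its own colour, while $b$ has $m+1$ neighbours of the non-own colour $c(a)$. \type22 requires both counts to be even, a contradiction, so each $T_v$ is monochromatic.

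Let $c_v$ be the colour of $T_v$, and let $k_i$ be the number of $u\in N_G(v)$ with $c_u=i$. A vertex of $T_v$ then sees $3k_i$ vertices of each colour $i\neq c_v$, so $k_i$ is even and positive, i.e.\ $k_i\geq 2$. Since $\sum_i k_i=2(q-1)$, this forces $k_{c_v}=0$ and $k_i=2$ for all $i\neq c_v$. That is exactly a $\PRC(q,2)$-colouring of $G$.

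For \type+2 no separate argument is needed. $G[K_3]$ is $(6q-4)$-regular, and when every degree is even the own-colour count equals the degree minus a sum of even numbers. It is therefore even, so \type+2 and \type22 coincide on this graph.

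\textbf{Comparison with the paper.} Completed this way, your proof is a genuinely shorter alternative. The paper reduces directly from $q$-edge-colouring of $q$-regular graphs. It uses a $K_{2q}$ per vertex and modified $K_{2q,2q}$ blocks to force $x_i^v,y_i^v,z_1^{v,i}$ to share a colour, and makes the resulting graph $2q$-regular so that the two variants coincide. Your route instead relies on the paper's hardness result for $\PRC(q,2)$, which for $k=2$ is just the line-graph reduction.
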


\begin{proof}
    We give a reduction from $q$-edge-colouring on $q$-regular graphs. Given a $q$-regular graph $G$, we construct a graph $H$ as follows. For each vertex $v \in V(G)$, we introduce a $K_{2q}$ $X^v$ with vertices $x_1^v,\ldots,x_q^v$ and $y_1^v,\ldots,y_q^v$. We also introduce, for each $i \in [q]$, a $K_{2q,2q}$ $Z^{v,i}$ with bipartition $(\{z_1^{v,i},\ldots,z_{2q}^{v,i}\},\{w_1^{v,i},\ldots,w_{2q}^{v,i}\})$. We modify each $Z^{v,i}$ by removing edges $z_1^{v,i}w_1^{v,i}$ and $z_1^{v,i}w_2^{v,i}$ and adding the edge $w_1^{v,i}w_2^{v,i}$. Additionally, for every $i \in [q]$, we replace the edge $x_i^vy_i^v$ by the two edges $x_i^vz_1^{v,i}$ and $y_i^vz_1^{v,i}$. For each vertex $v \in V(G)$, let $E_v$ be the set of the $q$ edges incident to $v$. We arbitrarily choose a bijection $\pi_v: E_v \rightarrow [q]$, assigning a label to each edge. For each $\alpha \in \{x,y\}$, we define $\pi_{v,\alpha}: E_v \rightarrow \{\alpha_1^v,\ldots,\alpha_q^v\}$, where $\pi_{v,\alpha}(e) = \alpha_{\pi_v(e)}^v$. Finally, for each edge $e = uv \in E(G)$ and each $\alpha \in \{x,y\}$, we add the edge $\pi_{u,\alpha}(e)\pi_{v,\alpha}(e)$. The construction of $H$ is illustrated in Figure~\ref{fig:22}. Fix $\sigma\in \{\type22,\type+2\}$. Note that because $H$ is $2q$-regular, the two colouring versions coincide.

    \begin{figure}[h]
        \centering
        \begin{tikzpicture}[scale=0.7]
            \tikzstyle{lw2}=[line width=1.2]
            \def \xdist {2cm}
            \def \r {2cm}
            \def \rr {4.7cm}
            \def \rrr {3.4cm}

            \begin{scope}[xshift=-10.3cm]
                \draw[loosely dotted, lw2,gray] (323:1.7) arc (323:523:1.7);
                \node[a](m) at (0,0){};
                \node[l](x) at (163:1.7){};
                \node[l](y) at (243:1.7){};
                \node[l,fill=white](z) at (323:1.7){};
                \draw[c1,lw1] (m)--(x);
                \draw[c3,lw1] (m)--(y);
                \draw[c2,lw1] (m)--(z);
                \draw[-{Latex[length=2.5mm, width=2.5mm]},decorate, decoration={snake, segment length=4mm, amplitude=0.6mm}, thick] (2.3,0)--(4,0);
            \end{scope}

            \draw[loosely dotted, lw2] (10:\r) arc (10:170:\r);
            \node[r](x1) at (170:\r){};
            \node[r](y1) at (210:\r){};
            \node[b](x2) at (250:\r){};
            \node[b](y2) at (290:\r){};
            \node[g](x3) at (330:\r){};
            \node[g](y3) at (370:\r){};
            \node at (157:\r+0.4cm){$x_1^v$};
            \node at (194:\r){$y_1^v$};
            \node at (237:\r+0.4cm){$x_2^v$};
            \node at (274:\r+0.05cm){$y_2^v$};
            \node at (317:\r+0.4cm){$x_3^v$};
            \node at (354:\r+0.1cm){$y_3^v$};
            \node[gr](xx1) at (163:\rr+0.3cm){};
            \node[gr](yy1) at (180:\rr){};
            \node[gb](xx2) at (243:\rr+0.4cm){};
            \node[gb](yy2) at (260:\rr){};
            \node[gg](xx3) at (323:\rr+0.3cm){};
            \node[gg](yy3) at (340:\rr){};
            \draw (x1)--(x2)(x1)--(x3)(x1)--(y2)(x1)--(y3);
            \draw (y1)--(x2)(y1)--(x3)(y1)--(y2)(y1)--(y3);
            \draw (x2)--(x3)(x2)--(y3);
            \draw (y2)--(x3)(y2)--(y3);
            \draw[lightgray](x1)--(xx1)(x2)--(xx2)(x3)--(xx3)(y1)--(yy1)(y2)--(yy2)(y3)--(yy3);
            \begin{scope}[shift=(208:\rrr),rotate=210]
                \node[r](z1) at (0,0){};
                \node[r](z2) at (0,1){};
                \node[r](z3) at (0.8,0){};
                \node[r](z4) at (0.8,1){};
                \node[b](z5) at (1.6,0){};
                \node[b](z6) at (1.6,1){};
                \node[b](z7) at (2.4,0){};
                \node[b](z8) at (2.4,1){};
                \draw (x1)--(z1)--(y1);
                \draw 
                (z2)--(z4)
                (z1)--(z6)(z1)--(z8);
                \draw (z3)--(z2)(z3)--(z4)(z3)--(z6)(z3)--(z8);
                \draw (z5)--(z2)(z5)--(z4)(z5)--(z6)(z5)--(z8);
                \draw (z7)--(z2)(z7)--(z4)(z7)--(z6)(z7)--(z8);
                \draw[loosely dotted, lw2] (2.6,0.5) -- (3.6,0.5);
            \end{scope}
            \begin{scope}[shift=(288:\rrr),rotate=290]
                \node[b](zz1) at (0,0){};
                \node[b](zz2) at (0,1){};
                \node[b](zz3) at (0.8,0){};
                \node[b](zz4) at (0.8,1){};
                \node[g](zz5) at (1.6,0){};
                \node[g](zz6) at (1.6,1){};
                \node[g](zz7) at (2.4,0){};
                \node[g](zz8) at (2.4,1){};
                \draw (x2)--(zz1)--(y2);
                \draw 
                (zz2)--(zz4)
                (zz1)--(zz6)(zz1)--(zz8);
                \draw (zz3)--(zz2)(zz3)--(zz4)(zz3)--(zz6)(zz3)--(zz8);
                \draw (zz5)--(zz2)(zz5)--(zz4)(zz5)--(zz6)(zz5)--(zz8);
                \draw (zz7)--(zz2)(zz7)--(zz4)(zz7)--(zz6)(zz7)--(zz8);
                \draw[loosely dotted, lw2] (2.6,0.5) -- (3.6,0.5);
            \end{scope}
            \begin{scope}[shift=(368:\rrr),rotate=370]
                \node[g](zzz1) at (0,0){};
                \node[g](zzz2) at (0,1){};
                \node[g](zzz3) at (0.8,0){};
                \node[g](zzz4) at (0.8,1){};
                \node[r](zzz5) at (1.6,0){};
                \node[r](zzz6) at (1.6,1){};
                \node[r](zzz7) at (2.4,0){};
                \node[r](zzz8) at (2.4,1){};
                \draw (x3)--(zzz1)--(y3);
                \draw 
                (zzz2)--(zzz4)
                (zzz1)--(zzz6)(zzz1)--(zzz8);
                \draw (zzz3)--(zzz2)(zzz3)--(zzz4)(zzz3)--(zzz6)(zzz3)--(zzz8);
                \draw (zzz5)--(zzz2)(zzz5)--(zzz4)(zzz5)--(zzz6)(zzz5)--(zzz8);
                \draw (zzz7)--(zzz2)(zzz7)--(zzz4)(zzz7)--(zzz6)(zzz7)--(zzz8);
                \draw[loosely dotted, lw2] (2.6,0.5) -- (3.6,0.5);
            \end{scope}
                
        \end{tikzpicture}
        \caption{Construction of $H$ in the proof of \cref{thm:22}.}
        \label{fig:22}
    \end{figure}

    First, 
    let $c\colon E(G) \rightarrow [q]$ be a $q$-edge-colouring of $G$. We obtain a $\sigma$-colouring $c_{\sigma}:V(H) \rightarrow [q]$ of $H$ as follows. For each edge $e = uv \in E(G)$ and each $\alpha \in \{x,y\}$, we set $c_{\sigma}(\pi_{u,\alpha}(e)) = c_{\sigma}(\pi_{v,\alpha}(e)) = c(e)$. Next, for every vertex $v \in V(G)$ and each $i \in [q]$, we assign to the vertices $z_1^{v,i}$, $z_2^{v,i}$, $w_1^{v,i}$, and $w_2^{v,i}$, the same colour as $x_i^v$ and $y_i^v$. Then, we assign a second colour to $z_3^{v,i}$, $z_4^{v,i}$, $w_3^{v,i}$, and $w_4^{v,i}$, a third colour to $z_5^{v,i}$, $z_6^{v,i}$, $w_5^{v,i}$, and $w_6^{v,i}$, and so on, until we assign the last colour to $z_{2q-1}^{v,i}$, $z_{2q}^{v,i}$, $w_{2q-1}^{v,i}$, and $w_{2q}^{v,i}$. It is easy to verify that $c_{\sigma}$ is a $\sigma$-colouring, as illustrated in Figure~\ref{fig:22}.

    On the other hand, 
    let $c_{\sigma}:V(H) \rightarrow [q]$ be a $\sigma$-colouring of $H$. Fix a vertex $v \in V(G)$.
    
    \begin{claim}
        For each $i \in [q]$, the vertices $x_i^v$, $y_i^v$ and $z_1^{v,i}$ must share the same colour.
    \end{claim}
    
    \begin{claimproof}
        First note that the neighbourhood of $w_3^{v,i}$ is precisely $\{z_1^{v,i},\dots, z_{2q}^{v,i}\}$ and therefore, by definition, every colour $i\neq c_{\sigma}(w_3^{v,i})$ has to appear at least twice among $\{z_1^{v,i},\dots, z_{2q}^{v,i}\}$ and an even number of times. Additionally, $c_{\sigma}(w_3^{v,i})$ has to appear at least once implying that every colour appears exactly twice among $\{z_1^{v,i},\dots, z_{2q}^{v,i}\}$. Since the neighbourhood of $w_1^{v,i}$ is the set $\{w_2^{v,i},z_2^{v,i},\dots, z_{2q}^{v,i}\}$ we conclude that $w_2^{v,i}$ must have colour $ c_\sigma(z_1^{v,i})$, because otherwise $c_\sigma(z_1^{v,i})$ and $c_\sigma(w_2^{v,i})$ would appear an odd number of times in the neighbourhood of $w_1^{v,i}$ contradicting the definition of $\sigma$-colouring. With a symmetric argument we obtain that $c_\sigma(z_1^{v,i})=c_\sigma(w_2^{v,i})=c_\sigma(w_1^{v,i})$. 

        As the neighbourhood of $z_2^{v,i}$ is precisely $\{w_1^{v,i},\dots, w_{2q}^{v,i}\}$, similarly to before, we know that every colour appears exactly twice among $\{w_1^{v,i},\dots, w_{2q}^{v,i}\}$. Hence, restricted to $Z^{v,i}$ the neighbourhood of $z_1^{v,i}$ contains every colour apart from $c_\sigma(z_1^{v,i})$ exactly twice. This implies that at least one of the two remaining neighbours of $z_1^{v,i}$ must have colour $c_\sigma(z_1^{v,i})$. If the other of the two neighbours has colour $k\neq c_\sigma(z_1^{v,i})$ then clearly $z_1^{v,i}$ has $3$ neighbours of colour $k\neq c_\sigma(z_1^{v,i})$, a contradiction. Therefore, $c_\sigma(z_1^{v,i})=c_\sigma(x_1^v)=c_\sigma(y_1^v)$ as claimed.
    \end{claimproof}

    Since $x_i^v$ has $2q$ neighbours, all colours in its neighbourhood appear exactly twice. The same argument applies to the neighbourhood of $y_i^v$. $c_\sigma(z_1^{v,i})=c_\sigma(x_1^v)=c_\sigma(y_1^v)$, for $i \neq j$, the colour of $x_i^v$ and $y_i^v$ is different from the colour of $x_j^v$ and $y_j^v$ for $j\neq i$. 
    This implies that 
    for every edge $e = uv \in E(G)$, the four vertices $\pi_{u,x}(e)$, $\pi_{v,x}(e)$, $\pi_{u,y}(e)$ and $\pi_{v,y}(e)$ receive the same colour. Therefore, we obtain a $q$-edge-colouring $c\colon E(G) \rightarrow [q]$ of $G$ as follows. For each edge $e = uv \in E(G)$, we set $c(e) = c_{\type{2}{2}}(\pi_{u,x}(e))$.
\end{proof}

\subsubsection{\Type{1}{2}-colouring}
Recall that $c\colon V(G)\rightarrow [q]$ is a $\type12$-$q$-colouring  of $G$  if for every vertex $v$ it holds that $|N_G(v)\cap c^{-1}(c(v))|$ is odd and $|N_G(v)\cap c^{-1}(i)|$ is even and larger than $0$ for every $i\neq c(v)$. 
\begin{theorem}\label{thm:12}\label{link34:12}
    For $q \geq 3$ it is \NP-complete to decide whether a graph admits a \type12-$q$-colouring.
\end{theorem}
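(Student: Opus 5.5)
We reduce from $\PRC(q,2)$-colouring on $2(q-1)$-regular graphs. By the theorem on $\PRC(q,k)$ (with $k=2$), this problem is \NP-complete for every $q\geq 3$.

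\textbf{Construction.} Given a $2(q-1)$-regular graph $G$, we build $H$ as follows. For each vertex $v\in V(G)$ we add a clique $K^v\cong K_{2q}$ with vertices $w_1^v,\dots,w_{2q}^v$, and the single edge $v w_{2q}^v$. Then every $v\in V(G)$ has degree $2q-1$ in $H$, and $w_{2q}^v$ has degree $2q$. Every $w_i^v$ with $i<2q$ has degree $2q-1$.

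\textbf{Forward direction.} Let $c$ be a $\PRC(q,2)$-colouring of $G$. Keep $c$ on $V(G)$. On $K^v$ use every colour exactly twice, with $c(w_{2q}^v)=c(v)$.
\begin{itemize}
\item A vertex $v\in V(G)$ sees no own-coloured neighbour in $G$ and exactly two of each other colour there. Its only extra neighbour is $w_{2q}^v$, which has colour $c(v)$. So its own count is $1$ and every other count is $2$.
\item A vertex $w_i^v$ with $i<2q$ sees its own colour once and every other colour twice inside $K^v$.
\item The vertex $w_{2q}^v$ additionally sees $v$, so its own count is $2$ from the clique.
\end{itemize}
The last item breaks the parity condition. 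Replacing $K_{2q}$ by $K_{2q+1}$ with $w$ adjacent to $v$ does not help directly either. The fix we will pursue is to keep $K^v=K_{2q}$ but attach $v$ to it through the vertex whose own-colour partner in the clique is removed. Concretely, delete the edge $w_{2q}^v w_{2q-1}^v$, give $w_{2q-1}^v$ and $w_{2q}^v$ the colour $c(v)$, and add edges $v w_{2q}^v$ and $v' w_{2q-1}^v$. Here $v'$ is the copy of $v$ in a second copy $G^2$ of $G$, in the spirit of the \type!2 reduction. With this change the own count at both $w_{2q-1}^v$ and $w_{2q}^v$ is $1$ from $v$ or $v'$. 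All degrees are then again $2q-1$, and the forward check goes through as above.

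\textbf{Backward direction.} The key counting observation is this. In a \type12-colouring, a vertex of degree $2q-1$ needs at least $2(q-1)$ neighbours of the other colours, since there are $q-1$ of them and each is even and positive. It also needs an odd number, hence at least one, of its own colour. As $2(q-1)+1=2q-1$, every such vertex has exactly one neighbour of its own colour and exactly two of each other colour.

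In the modified $H$ all vertices have degree $2q-1$. Apply the counting to a vertex $w_i^v$ with $i\leq 2q-2$: it is adjacent to all other clique vertices, so its exact counts force each colour to appear exactly twice on $K^v$. Apply it next to $w_{2q}^v$. It sees its own colour once and every other colour twice inside the modified clique, except that its clique partner $w_{2q-1}^v$ is missing. This forces $c(v)=c(w_{2q}^v)$, and likewise $c(v')=c(w_{2q-1}^v)$. Hence $v$ already has its unique own-coloured neighbour outside $G^1$. Its $G^1$-neighbours therefore contain no vertex of its own colour and exactly two of each other colour. So the restriction of the colouring to $G^1\cong G$ is a $\PRC(q,2)$-colouring of $G$.

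\textbf{Main obstacle.} The delicate step is designing the clique gadget so that, first, all degrees equal $2q-1$ and the tight counting applies, and second, the colour forced on the attachment vertex equals that of $v$. The parity bookkeeping at the attachment vertex is where the naive $K_{2q}$ attempt fails, which is why the two-copy variant is needed. The first thing to check in detail is that no alternative colouring of the modified clique (for instance, $w_{2q-1}^v$ and $w_{2q}^v$ receiving different colours) survives.
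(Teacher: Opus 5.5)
\textbf{Where the backward direction fails.} The faulty step is the claim that the gadget ``forces $c(v)=c(w_{2q}^v)$''. You correctly show that every colour appears exactly twice on $K^v$. But the conclusion only follows if $w_{2q-1}^v$ and $w_{2q}^v$ have the same colour. That is exactly the alternative you postponed checking, and it survives. Suppose $c(w_{2q-1}^v)=a\neq b=c(w_{2q}^v)$. Then $w_1^v,\dots,w_{2q-2}^v$ carry $a$ once, $b$ once and every other colour twice. Setting $c(v^1)=a$ and $c(v^2)=b$ gives every gadget vertex exactly one own-coloured neighbour and two of every other colour. Now $v^1$ only needs, inside $G^1$, one neighbour of its own colour $a$, one of colour $b$, and two of each remaining colour. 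So the restriction to $G^1$ need not even be proper.

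\textbf{The reduction itself is false.} Take $q=3$ and let $G$ be the cycle $p_1p_2\cdots p_8$ together with $r_1,r_3$ adjacent to $p_1,p_3,p_5,p_7$ and $r_2,r_4$ adjacent to $p_2,p_4,p_6,p_8$. This $G$ is $4$-regular.

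\emph{$G$ has no $\PRC(3,2)$.} The false twins $r_1,r_3$ must share a colour $\alpha$; otherwise $r_3$ would need two neighbours of colour $c(r_1)$ in $N(r_3)=N(r_1)$. So the odd $p_i$ carry the other two colours $\beta,\gamma$, twice each. An odd $p_i$ of colour $\beta$ already sees $\alpha$ twice, so both its cycle neighbours must be $\gamma$; symmetrically for $\gamma$. Two cyclically consecutive odd $p_i$ with different colours then force their common even neighbour to be both $\beta$ and $\gamma$.

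\emph{Your $H$ is nevertheless \type12-colourable.} Colour $G^1$ by $c_1$ with $p_1,p_2,p_5,p_6\mapsto 1$, $p_3,p_4,p_7,p_8\mapsto 2$ and $r_j\mapsto 3$. Colour $G^2$ by $c_2$, obtained from $c_1$ by swapping $1$ and $2$. In $K^{r_j}$ put $w_5,w_6\mapsto 3$ and $1,1,2,2$ on $w_1,\dots,w_4$. In $K^{p_i}$ put $w_5\mapsto c_1(p_i)$, $w_6\mapsto c_2(p_i)$, and $c_1(p_i),c_2(p_i),3,3$ on $w_1,\dots,w_4$. One checks that every vertex has exactly one own-coloured neighbour and two of each other colour. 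So a NO-instance maps to a YES-instance, and what is missing is a gadget that genuinely forces the attachment vertex to take the colour of $v$.

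\textbf{How to fix it.} The paper obtains such forcing from a different source problem. It reduces from $q$-edge-colouring of $q$-regular graphs using equal gadgets: a $K_{2q}$ plus an anchor joined to both outlets. Every colour appears twice on the clique, so the anchor's parity forces the outlets to agree. Each vertex also gets a modified $K_{2q,2q}$ whose degree-$(2q-1)$ vertices $y_i^v$ force the $w_i^v$ to be pairwise distinct.

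Alternatively, your route from $\PRC(q,2)$ can be repaired by reusing the gadget from the paper's \type!2 reduction. That uses two copies of $G$ and, per vertex, two $K_{2q-1}$'s whose non-attachment vertices are joined to a common edge $x^vy^v$. The resulting graph is $(2q-1)$-regular, and on such graphs \type!2 and \type12 coincide.
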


\begin{proof}
    We give a reduction from $q$-edge-colouring on $q$-regular graphs. For the reduction we use an equal gadget. An equal gadget between vertices $u$ and $v$ consists of a $K_{2q}$ plus edges $ua$, $va$ for an arbitrarily chosen vertex $a$ of the $K_{2q}$.  We call $u$ and $v$ the outlets of the equal gadget and $a$ the anchor of the equal gadget (see \cref{fig:12equal}). It is easy to show the following.
    \begin{figure}[h]
        \centering
        \begin{tikzpicture}[scale=0.4]
            \tikzstyle{lw2}=[line width=1.2]
            \def \xdist {4cm}
            \def \r {3cm}
            \def \size {0.8cm}
            \begin{scope}[xshift=-15cm,yshift=-3cm]
                \node[r](b) at (2,0){};
                \draw[line width=2pt] (b)--($(2,0.7)+( 330:\size)$)--($(2,0.7)+(30:\size)$)--($(2,0.7)+( 90:\size)$)--($(2,0.7)+( 150:\size)$)--($(2,0.7)+( 210:\size)$)--(b);
                \node[gb,label=below:$u$](u) at (0,0){};
                \node[gb,label=below:$v$](v) at (4,0){};
                \draw[lightgray] (u)--(b)--(v);
                \draw[<->,thick] (6,0)--(8,0);
            \end{scope}
            \node[gb,label=below:$u$](uu) at (-\xdist,-3.05){};
            \node[gb,label=below:$v$](vv) at (\xdist,-3.05){};
            \draw (uu)--(vv);
            \draw[loosely dotted, lw2] (30:\r) arc (30:190:\r);
            \node[b](c1) at (190:\r){};
            \node[b](c2) at (230:\r){};
            \node[r,label=below:anchor](c3) at (270:\r){};
            \node[r](c4) at (310:\r){};
            \node[g](c5) at (350:\r){};
            \node[g](c6) at (30:\r){};
            \draw (c1)--(c2)(c1)--(c3)(c1)--(c4)(c1)--(c5)(c1)--(c6)(c2)--(c3)(c2)--(c4)(c2)--(c5)(c2)--(c6)(c3)--(c4)(c3)--(c5)(c3)--(c6)(c4)--(c5)(c4)--(c6)(c5)--(c6);
            \node at (-0.8,1.6){$K_{2q}$};
        \end{tikzpicture}
        \caption{Equal gadget between vertices $u$ and $v$. }
        \label{fig:12equal}
    \end{figure}
    \begin{claim}\label{claim:12equalGadget}
        The outlets of any equal gadget have to receive the same colour while the anchor can have any colour in any $\type12$-$q$-colouring. 
    \end{claim}
    \begin{claimproof}
        Let $u,v$ be the outlets of the equal gadget, $a$ the anchor and $b$ with $b\not=a$ another vertex of the $K_{2q}$ of the equal gadget. Since $b$ has exactly $2q-1$ neighbours, in any \type12-$q$-colouring $b$ must have exactly $2$ neighbours of every colour apart from its own and $1$ neighbour of its own colour. Hence, in the $K_{2q}$ every colour appears exactly twice. This implies that the anchor $a$ has $2$ neighbours of every colour apart from its own and $1$ neighbour of its own colour within the $K_{2q}$. To satisfy the conditions of \type12-colouring, $u$ and $v$ must have the same colour.
    \end{claimproof}
    We now describe the reduction. Given a $q$-regular graph $G$, we construct a graph $H$ as follows. For each vertex $v \in V(G)$, we introduce a $K_{2q,2q}$ with bipartitions $\{w_1^v,\dots, w_q^v, x_1^v,\dots, x_q^v\}$ and $\{y_1^v,\dots, y_q^v, z_1^v,\dots, z_q^v\}$. For every $i\in [q]$ we remove edges $w_i^vz_i^v$, $w_i^vy_i^v$, $x_i^vy_i^v$ and $x_i^vz_i^v$. We add an equal gadget with anchor $a_i^v$ between $y_i^v$ and $w_i^v$, an equal gadget with anchor $b_i^v$ between $w_i^v$ and $x_i^v$ and an equal gadget with anchor $c_i^v$ between $x_i^v$ and $z_i^v$. For every vertex $v \in V(G)$, let $E_v$ be the set of the $q$ edges incident to $v$. We arbitrarily choose a bijection $\pi_v: E_v \rightarrow [q]$, assigning a label to each edge. Finally, for each edge $e = uv \in E(G)$, we add an equal gadget between $w_{\pi(u)}^u$ and $w_{\pi(v)}^v$ with anchor $\overline{a}^e$ and an equal gadget between $x_{\pi(u)}^u$ and $x_{\pi(v)}^v$ with anchor $\overline{b}^e$. See \cref{fig:12} for an illustration.

     \begin{figure}[h]
        \centering
        \begin{tikzpicture}[scale=0.7]
            \tikzstyle{lw2}=[line width=1.2]
            \def \xdist {1.3cm}
            \def \ydist {2cm}
            \def \yydist {2.2cm}
            \def \r {2cm}
            \def \rr {4.7cm}
            \def \rrr {3.4cm}
            \def \size {0.3cm}

            \begin{scope}[xshift=-5.5cm,yshift=0.5cm]
                \draw[loosely dotted, lw2,gray] (0.5,1.4) --(2,1.4);
                \node[a](m) at (0,0){};
                \node[l](x) at (0,1.4){};
                \node[l](y) at (-1,1.4){};
                \node[l,fill=white](z) at (-2,1.4){};
                \draw[c1,lw1] (m)--(x);
                \draw[c3,lw1] (m)--(y);
                \draw[c2,lw1] (m)--(z);
                \draw[-{Latex[length=2.5mm, width=2.5mm]},decorate, decoration={snake, segment length=4mm, amplitude=0.6mm}, thick] (2.3,0.7)--(4,0.7);
            \end{scope}

            \draw[loosely dotted, lw2] (5.3*\xdist,0.5*\ydist) -- (6.3*\xdist,0.5*\ydist);
            \node[g](w1) at (0,\ydist){};
            \node[g](x1) at (\xdist,\ydist){};
            \node[g](y1) at (0,0){};
            \node[g](z1) at (\xdist,0){};
            \node[b](w2) at (2*\xdist,\ydist){};
            \node[b](x2) at (3*\xdist,\ydist){};
            \node[b](y2) at (2*\xdist,0){};
            \node[b](z2) at (3*\xdist,0){};
            \node[r](w3) at (4*\xdist,\ydist){};
            \node[r](x3) at (5*\xdist,\ydist){};
            \node[r](y3) at (4*\xdist,0){};
            \node[r](z3) at (5*\xdist,0){};
            \node at (-0.5,\ydist+0.3cm){$w_1^v$};
            \node at (\xdist+0.5cm,\ydist+0.3cm){$x_1^v$};
            \node at (0,-0.6){$y_1^v$};
            \node at (\xdist,-0.6){$z_1^v$};
            \node[gg](ww1) at (0,\ydist+\yydist){};
            \node[gg](xx1) at (\xdist,\ydist+\yydist){};
            \node[gb](ww2) at (2*\xdist,\ydist+\yydist){};
            \node[gb](xx2) at (3*\xdist,\ydist+\yydist){};
            \node[gr](ww3) at (4*\xdist,\ydist+\yydist){};
            \node[gr](xx3) at (5*\xdist,\ydist+\yydist){};
             \draw[lightgray](w1)--(ww1)(x1)--(xx1)(w2)--(ww2)(x2)--(xx2)(w3)--(ww3)(x3)--(xx3);  
             \draw (w2)--(y1)--(x2)(w2)--(z1)--(x2)(w3)--(y1)--(x3)(w3)--(z1)--(x3)(w3)--(y2)--(x3)(w3)--(z2)--(x3);
             \draw (y2)--(w1)--(z2)(y2)--(x1)--(z2)(y3)--(w1)--(z3)(y3)--(x1)--(z3)(y3)--(w2)--(z3)(y3)--(x2)--(z3);
             \node[g](a1) at (0,0.5*\ydist){};
             \node[g](b1) at (0.5*\xdist,\ydist){};
            \node[g](c1) at (\xdist,0.5*\ydist){};
            \node[gg](d1) at (0,\ydist+0.5*\yydist){};
            \node[gg](e1) at (\xdist,\ydist+0.5*\yydist){};
            \draw (y1)--(a1)--(w1)--(b1)--(x1)--(c1)--(z1);
            \draw[line width=2pt] (a1)--($(-0.35,0.5*\ydist)+( 60:\size)$)--($(-0.35,0.5*\ydist)+(120:\size)$)--($(-0.35,0.5*\ydist)+( 180:\size)$)--($(-0.35,0.5*\ydist)+( 240:\size)$)--($(-0.35,0.5*\ydist)+( 300:\size)$)--(a1);
            \draw[line width=2pt] (b1)--($(0.5*\xdist,\ydist+0.35cm)+( 330:\size)$)--($(0.5*\xdist,\ydist+0.35cm)+(30:\size)$)--($(0.5*\xdist,\ydist+0.35cm)+( 90:\size)$)--($(0.5*\xdist,\ydist+0.35cm)+( 150:\size)$)--($(0.5*\xdist,\ydist+0.35cm)+( 210:\size)$)--(b1);
            \draw[line width=2pt] (c1)--($(-0.35cm+\xdist,0.5*\ydist)+( 60:\size)$)--($(-0.35cm+\xdist,0.5*\ydist)+(120:\size)$)--($(-0.35cm+\xdist,0.5*\ydist)+( 180:\size)$)--($(-0.35cm+\xdist,0.5*\ydist)+( 240:\size)$)--($(-0.35cm+\xdist,0.5*\ydist)+( 300:\size)$)--(c1);
            \draw[line width=2pt,lightgray] (d1)--($(-0.35,\ydist+0.5*\yydist)+( 60:\size)$)--($(-0.35,\ydist+0.5*\yydist)+(120:\size)$)--($(-0.35,\ydist+0.5*\yydist)+( 180:\size)$)--($(-0.35,\ydist+0.5*\yydist)+( 240:\size)$)--($(-0.35,\ydist+0.5*\yydist)+( 300:\size)$)--(d1);
            \draw[line width=2pt,lightgray] (e1)--($(-0.35cm+\xdist,\ydist+0.5*\yydist)+( 60:\size)$)--($(-0.35cm+\xdist,\ydist+0.5*\yydist)+(120:\size)$)--($(-0.35cm+\xdist,\ydist+0.5*\yydist)+( 180:\size)$)--($(-0.35cm+\xdist,\ydist+0.5*\yydist)+( 240:\size)$)--($(-0.35cm+\xdist,\ydist+0.5*\yydist)+( 300:\size)$)--(e1);

            \node[b](a2) at (2*\xdist,0.5*\ydist){};
            \node[b](b2) at (2.5*\xdist,\ydist){};
            \node[b](c2) at (3*\xdist,0.5*\ydist){};
            \node[gb](d2) at (2*\xdist,\ydist+0.5*\yydist){};
            \node[gb](e2) at (3*\xdist,\ydist+0.5*\yydist){};
            \draw (y2)--(a2)--(w2)--(b2)--(x2)--(c2)--(z2);
            \draw[line width=2pt] (a2)--($(-0.35cm+2*\xdist,0.5*\ydist)+( 60:\size)$)--($(-0.35cm+2*\xdist,0.5*\ydist)+(120:\size)$)--($(-0.35cm+2*\xdist,0.5*\ydist)+( 180:\size)$)--($(-0.35cm+2*\xdist,0.5*\ydist)+( 240:\size)$)--($(-0.35cm+2*\xdist,0.5*\ydist)+( 300:\size)$)--(a2);
            \draw[line width=2pt] (b2)--($(2.5*\xdist,\ydist+0.35cm)+( 330:\size)$)--($(2.5*\xdist,\ydist+0.35cm)+(30:\size)$)--($(2.5*\xdist,\ydist+0.35cm)+( 90:\size)$)--($(2.5*\xdist,\ydist+0.35cm)+( 150:\size)$)--($(2.5*\xdist,\ydist+0.35cm)+( 210:\size)$)--(b2);
            \draw[line width=2pt] (c2)--($(-0.35cm+3*\xdist,0.5*\ydist)+( 60:\size)$)--($(-0.35cm+3*\xdist,0.5*\ydist)+(120:\size)$)--($(-0.35cm+3*\xdist,0.5*\ydist)+( 180:\size)$)--($(-0.35cm+3*\xdist,0.5*\ydist)+( 240:\size)$)--($(-0.35cm+3*\xdist,0.5*\ydist)+( 300:\size)$)--(c2);
            \draw[line width=2pt,lightgray] (d2)--($(-0.35cm+2*\xdist,\ydist+0.5*\yydist)+( 60:\size)$)--($(-0.35cm+2*\xdist,\ydist+0.5*\yydist)+(120:\size)$)--($(-0.35cm+2*\xdist,\ydist+0.5*\yydist)+( 180:\size)$)--($(-0.35cm+2*\xdist,\ydist+0.5*\yydist)+( 240:\size)$)--($(-0.35cm+2*\xdist,\ydist+0.5*\yydist)+( 300:\size)$)--(d2);
            \draw[line width=2pt,lightgray] (e2)--($(-0.35cm+3*\xdist,\ydist+0.5*\yydist)+( 60:\size)$)--($(-0.35cm+3*\xdist,\ydist+0.5*\yydist)+(120:\size)$)--($(-0.35cm+3*\xdist,\ydist+0.5*\yydist)+( 180:\size)$)--($(-0.35cm+3*\xdist,\ydist+0.5*\yydist)+( 240:\size)$)--($(-0.35cm+3*\xdist,\ydist+0.5*\yydist)+( 300:\size)$)--(e2);

            \node[r](a3) at (4*\xdist,0.5*\ydist){};
            \node[r](b3) at (4.5*\xdist,\ydist){};
            \node[r](c3) at (5*\xdist,0.5*\ydist){};
            \node[gr](d3) at (4*\xdist,\ydist+0.5*\yydist){};
            \node[gr](e3) at (5*\xdist,\ydist+0.5*\yydist){};
            \draw (y3)--(a3)--(w3)--(b3)--(x3)--(c3)--(z3);
            \draw[line width=2pt] (a3)--($(-0.35cm+4*\xdist,0.5*\ydist)+( 60:\size)$)--($(-0.35cm+4*\xdist,0.5*\ydist)+(120:\size)$)--($(-0.35cm+4*\xdist,0.5*\ydist)+( 180:\size)$)--($(-0.35cm+4*\xdist,0.5*\ydist)+( 240:\size)$)--($(-0.35cm+4*\xdist,0.5*\ydist)+( 300:\size)$)--(a3);
            \draw[line width=2pt] (b3)--($(4.5*\xdist,\ydist+0.35cm)+( 330:\size)$)--($(4.5*\xdist,\ydist+0.35cm)+(30:\size)$)--($(4.5*\xdist,\ydist+0.35cm)+( 90:\size)$)--($(4.5*\xdist,\ydist+0.35cm)+( 150:\size)$)--($(4.5*\xdist,\ydist+0.35cm)+( 210:\size)$)--(b3);
            \draw[line width=2pt] (c3)--($(-0.35cm+5*\xdist,0.5*\ydist)+( 60:\size)$)--($(-0.35cm+5*\xdist,0.5*\ydist)+(120:\size)$)--($(-0.35cm+5*\xdist,0.5*\ydist)+( 180:\size)$)--($(-0.35cm+5*\xdist,0.5*\ydist)+( 240:\size)$)--($(-0.35cm+5*\xdist,0.5*\ydist)+( 300:\size)$)--(c3);
            \draw[line width=2pt,lightgray] (d3)--($(-0.35cm+4*\xdist,\ydist+0.5*\yydist)+( 60:\size)$)--($(-0.35cm+4*\xdist,\ydist+0.5*\yydist)+(120:\size)$)--($(-0.35cm+4*\xdist,\ydist+0.5*\yydist)+( 180:\size)$)--($(-0.35cm+4*\xdist,\ydist+0.5*\yydist)+( 240:\size)$)--($(-0.35cm+4*\xdist,\ydist+0.5*\yydist)+( 300:\size)$)--(d3);
            \draw[line width=2pt,lightgray] (e3)--($(-0.35cm+5*\xdist,\ydist+0.5*\yydist)+( 60:\size)$)--($(-0.35cm+5*\xdist,\ydist+0.5*\yydist)+(120:\size)$)--($(-0.35cm+5*\xdist,\ydist+0.5*\yydist)+( 180:\size)$)--($(-0.35cm+5*\xdist,\ydist+0.5*\yydist)+( 240:\size)$)--($(-0.35cm+5*\xdist,\ydist+0.5*\yydist)+( 300:\size)$)--(e3);
            
        \end{tikzpicture}
        \caption{Construction of $H$ in the proof of \cref{thm:12}.}
        \label{fig:12}
    \end{figure}

    First, 
    let $c\colon E(G) \rightarrow [q]$ be a $q$-edge-colouring of $G$. We obtain a \type12-colouring $c_{\type12}:V(H) \rightarrow [q]$ of $H$ as follows. For every $uv\in E(G)$ we colour the two $K_{2q}$ of the two equal gadgets such that every colour appears twice and the anchors $\overline{a}^e$ and $\overline{b}^e$ have colour $c(e)$. We further set $c_{\type12}(w_{\pi(u)}^u)=c_{\type12}(x_{\pi(u)}^u)=c_{\type12}(y_{\pi(u)}^u)=c_{\type12}(z_{\pi(u)}^u)=c(e)$ and similarly for $v$. For every $i\in [q]$ and $v\in V(G)$ we colour the $3$ $K_{2q}$ of the equal gadgets such that each colour appears exactly twice and such that the anchors $a_i^v, b_i^v, c_i^v$ receive colour $c_{\type12}(w_i^v)$ (which was already fixed before). It is not difficult to verify that $c_{\type12}$ is a valid \type12-$q$-colouring (see also \cref{fig:12} for illustration).

    On the other hand, 
    let $c_{\type12}:V(H) \rightarrow [q]$ be a \type12-colouring of $H$. Fix a vertex $v \in V(G)$. We show the following.
    \begin{claim}
        For $i,j\in [q]$, $i\neq j$ it holds that $c_{\type12}(w_i^u)\neq c_{\type12}(w_j^u)$.
    \end{claim}
    \begin{claimproof}
        First observe that $y_i^v$ has degree $2q-1$ and must have exactly $2$ neighbours of every colour apart from its own and $1$ neighbour of its own colour. Since there is an equal gadget between $w_j^v$ and $x_j^v$ it must hold by \cref{claim:12equalGadget} that $c_{\type12}(w_j^v)=c_{\type12}(x_j^v)$. As $y_i^v$ is adjacent to both $x_j^v$ and $w_j^v$ and $y_i^v$ cannot have $2$ neighbours of its own colour we get that  $c_{\type12}(y_i^v)\neq c_{\type12}(w_j^v)$. As there is also an equal gadget between $y_i^v$ and $w_i^v$, by \cref{claim:12equalGadget} we conclude that $c_{\type12}(y_i^v)=c_{\type12}(w_i^v)\neq c_{\type12}(w_j^v)$, as claimed.
    \end{claimproof}
    Observe additionally that by \cref{claim:12equalGadget}, for every edge $e\in E(G)$ it holds that $c_{\type12}(w_{\pi(u)}^u)=c_{\type12}(w_{\pi(v)}^v)$. Using the claim above we conclude that the colouring $c\colon E(G)\rightarrow [q]$ defined by $c(uv)=c_{\type12}(w_{\pi(u)}^u)$ for every $uv\in E(G)$ is well defined and a valid edge-colouring of $G$.
\end{proof}

\subsubsection{\Type{1}{?} and \Type{v}{?}-colouring}
Recall that $c\colon V(G)\rightarrow [q]$ is a $\sigma$-$q$-colouring of $G$ for $\sigma\in \{\type{1}{?},\type{v}{?}\}$ if for every vertex $v$ it holds that $|N_G(v)\cap c^{-1}(i)|<2$ for every $i\not=c(v)$. Additionally, for \Type1?-$q$-colouring $|N_G(v)\cap c^{-1}(i)|$ is odd and for \Type{v}{?}-$q$-colouring $|N_G(v)\cap c^{-1}(i)|$ is odd or $0$.
\begin{theorem}\label{link3:1?}\label{link3:v?}
    For $q = 3$ it is \NP-complete to decide whether a graph admits a  \type1? and \type{v}{?}-$q$-colouring.
\end{theorem}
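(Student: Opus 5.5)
The plan is to reduce from proper $3$-colouring, which is \NP-complete by \cite{G&J}, and to treat \type1? and \type{v}? together. The reduction graph $H$ will consist only of disjoint copies of $K_4$ together with a perfect matching between them, so that every vertex of $H$ has degree $4$: three neighbours inside its own $K_4$ and exactly one \emph{external} neighbour.

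\textbf{Key claim.} In every \type1?- or \type{v}?-$3$-colouring of such a graph, each $K_4$ is monochromatic and every external edge joins two different colours.

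Proof sketch of the claim. First, a vertex $u$ of a $K_4$ cannot have zero neighbours of its own colour inside the $K_4$. Otherwise its three mates use only the two other colours, so some colour appears twice in $N(u)$, violating \typesymbol{?}. This also removes the \typesymbol{v} option "own colour count $=0$", so the two types behave identically here.

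Next, suppose some $K_4$ is not monochromatic, and let $w$ be a vertex whose colour is in the minority inside the $K_4$. Then $w$ sees at least two mates of a single colour different from its own, again violating \typesymbol{?}. Hence every $K_4$ is monochromatic.

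Finally, each vertex then already has exactly $3$ neighbours of its own colour inside its $K_4$. Its external neighbour must therefore have a different colour, since otherwise the own-colour count would be $4$, which is even.

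Conversely, if the $K_4$s are coloured monochromatically and every external edge is bichromatic, then all constraints hold: each vertex has own-colour count $3$ and at most one neighbour of each other colour.

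\textbf{Construction.} The claim turns $H$ into a proper-colouring instance on a contracted graph: each $K_4$ becomes a node with four "ports", and each port is used by exactly one edge. It remains to simulate an arbitrary graph $G$ in this form. I intend to use equality gadgets. If $K_4$-blocks $A$ and $C$ are both joined to two adjacent blocks $B_1$ and $B_2$, then $B_1$, $B_2$ and $A$ receive three distinct colours, and likewise $B_1$, $B_2$ and $C$. With $q=3$ this forces $A=C$.

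For each vertex $v\in V(G)$ of degree $d$, I represent $v$ by a chain of blocks that are pairwise forced equal by such gadgets, with one free port reserved for each edge of $G$ at $v$. For each edge $uv\in E(G)$, I join a free port of $u$'s chain to a free port of $v$'s chain. This yields the two directions: a proper colouring of $G$ extends to $H$, and any colouring of $H$ restricts to a proper colouring of $G$.

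\textbf{Main obstacle.} The hardest part is the port bookkeeping. Every $K_4$ vertex must have exactly one external edge. A pendant cannot be used to cap a leftover port, because a pendant forces its neighbour to share its colour, so every leftover port has to be matched into extra structure that is always colourable. First I would try to make the counts work exactly by choosing chain lengths and gadget connections. Failing that, I would pair leftover ports between two dummy triangles of blocks, whose mutual edges can always be properly $3$-coloured independently of the rest of $H$, and then verify that this dummy structure never constrains the colours of the vertex chains.
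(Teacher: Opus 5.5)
Your key claim is correct, and it gives a genuinely different reduction from the paper's. The paper reduces from $3$-edge-colouring of cubic graphs. Each vertex becomes a triangle with subdivided edges and a pendant at each corner. The degree-$2$ subdivision vertices force the three corners to receive pairwise distinct colours. The degree-$4$ count then forces each corner to have three neighbours of its own colour, so the corner's pendant and external neighbour share its colour, and the external edges carry the edge colours. Your $K_4$ blocks are rigid in the opposite sense: blocks are monochromatic and external edges are bichromatic. They encode vertex colours directly, need no pendants, and the verification is a three-line local case analysis that handles both types at once.

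The part you flag as the main obstacle is, however, unfinished as written. As described, a chain of diamond gadgets cannot supply $d$ ports. Every interior block of the chain spends two ports on each of its two diamonds, so only the two end blocks keep free ports, at most four in total. The dummy-triangle padding is also left unverified.

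But this bookkeeping is unnecessary. Your first two steps only look at the three mates inside a $K_4$. The external neighbour enters only in the last step. A $K_4$ vertex with no external neighbour simply has own-colour count $3$ and no other neighbours, which is fine for both types.

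So you can drop the perfect-matching requirement and reduce directly from proper $3$-colouring of graphs of maximum degree $4$. Replace each vertex by a $K_4$, and realise each edge by an edge between two unused vertices of the corresponding blocks. This source problem is \NP-complete, as the paper itself notes at the start of its reductions from proper colouring: it follows from \cite{Holyer} via line graphs. For cubic $G$ the line graph $L(G)$ is even $4$-regular, so your original perfect-matching version also works verbatim, with no equality gadgets or padding.

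Applied to $L(G)$, your construction replaces each edge of the cubic graph by a $K_4$, whereas the paper replaces each vertex by a subdivided triangle. Both routes ultimately rest on the hardness of $3$-edge-colouring cubic graphs.
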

\begin{proof} 
    We give a reduction from $3$-edge-colouring on $3$-regular graphs. Given a $3$-regular graph $G$, we construct a graph $H$ as follows. For each vertex $v \in V(G)$, we introduce a triangle with vertices $x_1^v,x_2^v,x_3^v$. We subdivide each edge of the triangle by inserting a vertex $y_i^v$ between $x_i^v$ and $x_{i+1}^v$, for each $i \in [3]$, with indices modulo $3$. Additionally, for every vertex $v \in V(G)$, let $E_v$ be the set of the three edges incident to $v$. We arbitrarily choose a bijection $\pi_v: E_v \rightarrow \{x_1^v,x_2^v,x_3^v\}$, assigning a label to each edge. Finally, for each edge $e = uv \in E(G)$, we add the edge $\pi_u(e)\pi_v(e)$. This completes the construction of $H$, which is illustrated in Figure~\ref{fig:1questionANDvquestion}. Fix $\sigma \in \{\type{1}{?},\type{v}{?}\}$.

    \begin{figure}[h]
        \centering
        \begin{tikzpicture}[scale=0.4]
        \begin{scope}[xshift=-12cm]
            \node[a](m) at (0,0){};
            \node[l](x) at (90:2){};
            \node[l](y) at (210:2){};
            \node[l](z) at (330:2){};
            \draw[c1,lw1] (m)--(x);
            \draw[c2,lw1] (m)--(y);
            \draw[c3,lw1] (m)--(z);
            \draw[-{Latex[length=2.5mm, width=2.5mm]},decorate, decoration={snake, segment length=4mm, amplitude=0.6mm}, thick] (4,0)--(7.1,0);
        \end{scope}
        \node[r](x1) at (90:1){};
        \node[g](x2) at (210:1){};
        \node[b](x3) at (330:1){};
        \node[r](y1) at (150:2){};
        \node[g](y2) at (270:2){};
        \node[b](y3) at (30:2){};
        \node[r](p1) at (115:2.2){};
        \node[g](p2) at (235:2.2){};
        \node[b](p3) at (355:2.2){};
        \node[gr](xx) at (90:3.3){};
        \node[gg](yy) at (210:3.3){};
        \node[gb](zz) at (330:3.3){};
        \draw (x1)--(y1)--(x2)--(y2)--(x3)--(y3)--(x1);
        \draw (p1)--(x1)--(xx);
        \draw (p2)--(x2)--(yy);
        \draw (p3)--(x3)--(zz);
    \end{tikzpicture}
    \caption{Gadget representing vertex $v$ of $G$ in $H$, showing the triangle, the subdivision vertices, and the pendants.}
    \label{fig:1questionANDvquestion}
\end{figure}

    First, 
    let $c\colon E(G) \rightarrow [3]$ be a 3-edge-colouring of $G$. We obtain a $\sigma$-colouring $c_{\sigma}:V(H) \rightarrow [3]$ of $H$ as follows. For every edge $e = uv \in E(G)$, we set $c_{\sigma}(\pi_u(e)) = c_{\sigma}(\pi_v(e)) = c(e)$. For each vertex $v \in V(G)$ and for each $i \in [3]$, we set $c_{\sigma}(p_i^v) = c_{\sigma}(y_i^v) = c(x_i^v)$. It is easy to verify that $c_{\sigma}$ is a $\sigma$-colouring, as illustrated in Figure~\ref{fig:1questionANDvquestion}.
    
    On the other hand, 
    let $c_{\sigma}:V(H) \rightarrow [3]$ be a $\sigma$-colouring of $H$. Fix a vertex $v \in V(G)$. First observe that the vertices $x_i^v$ have pairwise different colours as the subdivision vertices cannot be adjacent to precisely two vertices of the same colour. 
    As every $x_i^v$ has degree $4$, there is one colour appearing at least twice in the neighbourhood of $x_i^v$ which directly implies that $x_i^v$ is adjacent to precisely three vertices of its own colour, by the definition of $\sigma$-colouring.  Therefore, at least one of the subdivision vertices adjacent to $x_i^v$ has to be the same colour as $x_i^v$. This directly implies that either $c_\sigma(y_i^v)=c_\sigma(x_i^v)$ for all $i\in [3]$ or $c_\sigma(y_i^v)=c_\sigma(x_{i+1}^v)$ for all $i\in [3]$ with indices modulo $3$. In particular, one of the subdivision vertices adjacent of $x_i^v$ has a different colour to $x_i^v$ which implies that $c_\sigma(x_i^v)=c_\sigma(p_i^v)=c_\sigma(x_j^u)$ where $uv\in E(G)$ with $\pi_v(uv)=i$ and $\pi_u(uv)=j$. We conclude that 
    the   colouring $c\colon E(G) \rightarrow [3]$ of $G$ defined by setting $c(e) = c_{\sigma}(\pi_u(e)) = c_{\sigma}(\pi_v(e))$ is a valid $3$-edge colouring. 
\end{proof}

\begin{theorem}\label{link4:1?}\label{link4:v?}
    For $q \geq 4$ it is \NP-complete to decide whether a graph admits a  \type1? and \type{v}{?}-$q$-colouring.
\end{theorem}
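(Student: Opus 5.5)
We plan to generalise the reduction used for $q=3$ in \cref{link3:1?}, reducing from $q$-edge-colouring on $q$-regular graphs (\NP-complete by \cite{Holyer,LeGa}). Given a $q$-regular graph $G$, the graph $H$ is built as follows.
\begin{itemize}
\item For every $v\in V(G)$ we create vertices $x_1^v,\dots,x_q^v$ forming a $K_q$.
\item We subdivide every edge $x_i^vx_j^v$ of this $K_q$ once, by a vertex $y_{ij}^v$.
\item We attach one pendant $p_i^v$ to each $x_i^v$.
\item Through a bijection $\pi_v\colon E_v\to\{x_1^v,\dots,x_q^v\}$, for every edge $e=uv$ of $G$ we add the edge $\pi_u(e)\pi_v(e)$.
\end{itemize}
If the pendants and subdivisions do not suffice for the counting below, we will attach further pendant gadgets to the $x_i^v$ to adjust degrees. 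The forward direction copies the $q=3$ case. Colour $\pi_u(e)$ and $\pi_v(e)$ with $c(e)$, and give each pendant the colour of its attachment vertex. Give each subdivision vertex $y_{ij}^v$ the colour of one of its two ends, chosen so that every $x_i^v$ receives an appropriate odd number of own-coloured subdivision neighbours (for instance via an orientation of $K_q$ with suitable out-degree parities). Then check the \type1? and \type{v}? conditions at every vertex.

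For the backward direction, fix $\sigma\in\{\type1?,\type{v}?\}$ and a $\sigma$-colouring of $H$. We argue in three steps.
\begin{enumerate}
\item Every subdivision vertex $y_{ij}^v$ has degree $2$. It cannot see two neighbours of the same foreign colour, because of the constraint $<2$. Under \type1? it sees exactly one neighbour of its own colour; under \type{v}? its own count is $0$ or $1$, and own count $0$ would force the two neighbours to have distinct colours anyway. In all cases $c(x_i^v)\neq c(x_j^v)$. Hence $x_1^v,\dots,x_q^v$ use all $q$ colours bijectively.
\item Every pendant sees only its attachment vertex. Under \type1? this forces the pendant to have the same colour as $x_i^v$; under \type{v}? the pendant is unconstrained, and we must handle that case separately, possibly by using a small $K_2$ as the pendant gadget instead.
\item We count at $x_i^v$. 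Each of the $q-1$ foreign colours may appear at most once in $N(x_i^v)$, and the number of own-coloured neighbours must be odd. The degree of $x_i^v$ is $q+1$, so at least two neighbours share the colour of $x_i^v$, and by parity at least three do. Every subdivision neighbour $y_{ij}^v$ carries either $c(x_i^v)$ or $c(x_j^v)$, and the latter are pairwise distinct.
\end{enumerate}
From these counts we aim to show that the external neighbour $\pi_u(e)$ of $x_i^v$ must carry $c(x_i^v)$. Once that holds, setting $c(e)=c(\pi_v(e))=c(\pi_u(e))$ gives a proper $q$-edge-colouring by step 1.

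The main obstacle is step 3. For $q=3$, the degree of $x_i^v$ (namely $4$) leaves no slack. For larger $q$ the own count may be $3,5,\dots$, and the subdivision vertices provide many neighbours whose colours are only partially forced. The issue is the same as in the $q=3$ proof: there, one subdivision neighbour of $x_i^v$ had a different colour, which pinned the external neighbour. What I would try first is a global double count. Each subdivision vertex is own-coloured for exactly one of its two ends in the \type1? case. So the own-colour counts of $x_1^v,\dots,x_q^v$ coming from subdivision vertices sum to $\binom q2$. Combining this with the per-vertex upper bound $q+1-(\text{number of foreign colours seen})$, one should be able to force every $x_i^v$ to see exactly one foreign subdivision colour or pendant slot, leaving the external edge monochromatic. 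If the parities do not work out for some residue of $q$, I would add one or two extra pendants to each $x_i^v$ to fix the degree parity, as in \cref{link34:0?}. The \type{v}? case then needs the additional check that own count $0$ is impossible at $x_i^v$, which holds because $x_i^v$ would then see at least $q$ neighbours spread over $q-1$ foreign colours.
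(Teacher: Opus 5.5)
Your step 3 is not merely incomplete. For the gadget you propose (every edge of the $K_q$ subdivided, one pendant per $x_i^v$), the backward direction is false.

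In a \type{1}{?}-colouring, step 1 makes $X^v$ rainbow. Hence each $y_{ij}^v$ takes the colour of exactly one of its ends, each pendant takes the colour of its node, and the subdivision vertices form a tournament on $X^v$. Let $d_i$ be the number of own-coloured subdivision neighbours of $x_i^v$. Then $x_i^v$ is satisfied in exactly two ways:
either its external neighbour has its colour and $d_i$ is odd, \emph{or} its external neighbour has colour $c(x_j^v)$ for some $j$ with $c(y_{ij}^v)=c(x_i^v)$ and $d_i\geq 2$ is even.

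Your double count $\sum_i d_i=\binom q2$ rules out the second alternative for $q=3$, since two such nodes would need $\sum_i d_i\geq 5>3$. For $q=4$, however, the in-degree sequence $(1,1,2,2)$ is realisable, and this local slack globalises.

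Take $G=K_5$, which is $4$-regular with chromatic index $5$, on vertex set $\{0,\dots,4\}$. Write $c_v(w)$ for the colour of the node of gadget $v$ belonging to the edge $vw$, and set $c_0(1,2,3,4)=(1,4,2,3)$, $c_1(0,2,3,4)=(2,1,3,4)$, $c_2(0,1,3,4)=(1,3,4,2)$, $c_3(0,1,2,4)=(2,3,4,1)$, $c_4(0,1,2,3)=(3,4,2,1)$. Only the external edges of the triangle $012$ are bichromatic. Give the subdivision vertices between the nodes of colours $\{1,2\},\{1,3\},\{1,4\},\{2,3\},\{2,4\},\{3,4\}$ the colours $1,1,4,2,4,3$ in gadget $0$, $2,1,1,2,4,3$ in gadget $1$, $1,3,1,2,4,3$ in gadget $2$, and $1,1,1,3,2,4$ in gadgets $3$ and $4$. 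Colour each pendant like its node. Then every node has three or five own-coloured neighbours and at most one neighbour of each other colour. So your $H$ is \type{1}{?}-colourable, hence \type{v}{?}-colourable, with $4$ colours, although $K_5$ is not $4$-edge-colourable.

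Adding pendants does not remove this slack: own-coloured pendants shift the required parity of $d_i$ equally in both alternatives. Your forward direction also breaks for \type{1}{?}. With all external edges monochromatic, every $d_i$ must be odd, which needs $\binom q2\equiv q\pmod 2$; this fails for $q\equiv 1,2\pmod 4$.

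The paper avoids the problem by making the gadget rigid rather than symmetric. Only the edge $x_1^vx_2^v$ is subdivided (by $y^v$), and only $x_1^v$ receives a pendant $p^v$. The argument then runs as follows.
First, $X^v$ is rainbow. The vertex $y^v$ separates $x_1^v$ and $x_2^v$, and any other coincidence $c(x_i^v)=c(x_j^v)$ yields a clique vertex of another colour adjacent to both; this is where $q\geq 4$ is used.
Second, every $x_i^v$ with $i\geq 3$ already sees each foreign colour exactly once inside $X^v$, so its external neighbour must be own-coloured.
Third, $x_1^v$ has degree $q+1$ and therefore needs at least three own-coloured neighbours. These can only be $y^v$, $p^v$ and its external neighbour.
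Finally, $x_2^v$ sees every foreign colour via $y^v,x_3^v,\dots,x_q^v$, which forces its external edge to be monochromatic as well.
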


\begin{proof} 
    We give a reduction from $q$-edge-colouring on $q$-regular graphs. Given a $q$-regular graph $G$, we construct a graph $H$ as follows. For each vertex $v \in V(G)$, we introduce a $K_q$ $X^v$ with vertices $x_1^v,\dots,x_q^v$. We subdivide the edges $x_1^vx_2^v$ with vertex $y^v$ and  attach a pendant $p^v$ to vertex $x_1^v$.
    For each vertex $v \in V(G)$, let $E_v$ be the set of the $q$ edges incident to $v$. We arbitrarily choose a bijection $\pi_v: E_v \rightarrow \{x_1^v,\dots,x_q^v\}$, assigning a label to each edge. Finally, for each edge $e = uv \in E(G)$, we add the edge $\pi_u(e)\pi_v(e)$. For illustration see Figure~\ref{fig:1questionANDvquestion}. Fix $\sigma \in \{\type{1}{?},\type{v}{?}\}$.

    First, 
    let $c\colon E(G) \rightarrow [q]$ be an edge-colouring of $G$. We obtain a $\sigma$-colouring $c_{\sigma}:V(H) \rightarrow [q]$ of $H$ as follows. For every edge $e = uv \in E(G)$, we set $c_{\sigma}(\pi_u(e)) = c_{\sigma}(\pi_v(e)) = c(e)$. Additionally, for each vertex $v \in V(G)$ we set $c_{\sigma}(p^v) = c_{\sigma}(y^v) = c(x_1^v)$. It is easy to verify that $c_{\sigma}$ is a $\sigma$-colouring.
    
    On the other hand, 
    let $c_{\sigma}:V(H) \rightarrow [q]$ be a $\sigma$-colouring of $H$. Fix a vertex $v \in V(G)$. 
    
    \begin{claim}
        Every vertex $x_i^v$ has at least three neighbours of its own colour.
    \end{claim}
    \begin{claimproof}
       Since $x_i^v$ has degree $q+1$, there must be some colour $j\in [q]$ appearing at least twice in the neighbourhood of $x_i^v$. By definition of $\sigma$-colouring this implies $c_\sigma(x_i^v)=j$. Since $x_i^v$ must have an odd number of neighbours of colour $j$, we conclude that a third neighbour of $x_i^v$ must have colour $j$. 
    \end{claimproof}
    \begin{claim}\label{claim:different1?v?}
        The vertices $x_1^v,\dots, x_q^v$ receive pairwise different colours.
    \end{claim}
    \begin{claimproof}
        First observe that $x_1^v$ and $x_2^v$  receive different colours as they are adjacent to $y^v$ of degree $2$ which forces them to have different colours. 
        First assume that $c_\sigma(x_i^v)=c_\sigma(x_j^v)$ for $i\not=j$, $i,j\notin \{1,2\}$. In this case both $x_1^v$ and $x_2^v$ are adjacent to $x_i^v$ and $x_j^v$. Since $x_1^v$ and $x_2^v$ have different colours, this contradicts the definition of $\sigma$-colouring. Next presume $c_\sigma(x_i^v)=c_\sigma(x_j^v)$ for $i\in \{1,2\}$, $j\notin \{1,2\}$. Then there is $x_k^v$, $k\geq 3$ with $k\not=j$. Since $x_j^v$ and $x_k^v$ have different colours by our previous argument and $x_k^v$ is adjacent to $x_i^v$ and $x_j^v$, this contradicts the definition of $\sigma$-colouring. 
    \end{claimproof}
    Since the degree of $x_1^v$ is $q+1$, one colour must appear at least twice in the neighbourhood of $x_1^v$. Combining this with \cref{claim:different1?v?} implies that $c_\sigma(x_1^v)=c_\sigma(p^v)=c_\sigma(x_j^u)$ where $uv\in E(G)$ with $\pi_v(uv)=1$ and $\pi_u(uv)=j$. Similarly, since $_i^v$ for $i\geq 2$ has degree $q$, \cref{claim:different1?v?} implies that $c_\sigma(x_1^v)=c_\sigma(x_j^u)$ where $uv\in E(G)$ with $\pi_v(uv)=i$ and $\pi_u(uv)=j$. Hence, the colouring $c\colon E(G) \rightarrow [q]$ defined by setting $c(e) = c_{\sigma}(\pi_u(e)) = c_{\sigma}(\pi_v(e))$ for every edge $e\in E(G)$ is a valid edge-colouring of $G$.
\end{proof}


\subsubsection{\Type{0}{v}-colouring}
In this section we consider \type0v-colouring for at least $3$ colours. We remind the reader that colouring $c\colon V(G)\rightarrow [q]$ of a graph $G$ is a \type0v-colouring if every vertex $v$ of $G$ is adjacent to an even number of vertices that receive the same colour as $v$ and for every other colour in $[q]$, $v$ is either adjacent to none or an odd number of vertices of that colour.
\begin{theorem}\label{thm:0v}\label{link34:0v}
    \type0v-$q$-colouring is \NP-complete for every $q \geq 3$.
\end{theorem}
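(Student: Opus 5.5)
Membership in \NP\ is immediate, so the work is hardness. I would reduce from $q$-edge-colouring on $q$-regular graphs, as in the other results of this subsection, using a \type0v-gadget that makes a clique-like structure rainbow. The aim is that, for each $v\in V(G)$, a set of $q$ "port" vertices $x_1^v,\dots,x_q^v$ is forced to receive pairwise different colours. For each edge $uv$, the ports $\pi_u(uv)$ and $\pi_v(uv)$ must then receive the same colour.

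The basic local fact is the following. If a vertex has degree $q$ and all its neighbours form a clique together with it, then the condition "own colour even, every other colour odd or zero" is very restrictive. For the port vertices I would use $X^v\cong K_q$ and join each $x_i^v$ to its partner $\pi_u(e)$ in the neighbouring gadget, so that every port has degree $q$.

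In the forward direction, take a proper edge colouring $c$ of $G$ and colour each port by the colour of its edge. Then each port $x_i^v$ sees its $q-1$ clique neighbours, one of each other colour, and its partner with the same colour. That gives exactly one neighbour of its own colour, which is odd, so a fix is needed. I would double the coupling: join each port to two partners, namely two copies of the construction as in the proof of \cref{link34:02}, or two false-twin ports. This gives two own-colour neighbours, which is even, and exactly one neighbour of each other colour, which is odd. Concretely, I would take two copies $G^1,G^2$ of the port structure and join $\pi_u^\alpha(e)$ to both $\pi_v^1(e)$ and $\pi_v^2(e)$. Every port then has degree $q-1+2=q+1$.

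In the backward direction, I have to show that in any \type0v-colouring the ports of $X^v$ are pairwise distinct and that the two partners across an edge share the port's colour. Suppose some colour $j$ appears at least twice in $X^v$. I would look at a vertex of $X^v$ with colour different from $j$ and at the parity constraints. A port $x$ of colour $a$ sees the other clique vertices and two partners. If two clique vertices share colour $b\neq a$, then $x$ needs an odd number of $b$-neighbours, so a third $b$ must come from the partners. I would do a counting/parity case analysis on the multiset of colours in $X^v$, using $q\ge 3$ and the fact that every vertex of $X^v$ imposes its condition. This rules out repeats unless whole classes are forced, and then a monochromatic clique is excluded because the partners would have to fix parities simultaneously for all members. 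If that case analysis does not close, I would add a pendant gadget: a degree-$2$ subdivision vertex between $x_1^v$ and $x_2^v$. Under \type0v, a degree-$2$ vertex cannot have exactly one neighbour of its own colour, and two neighbours of the same other colour are forbidden. So its two neighbours have distinct colours different from its own, or both equal its own, and I would exploit this as in \cref{claim:different1?v?}.

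The main obstacle is exactly this backward step: proving that $X^v$ must be rainbow. The constraint \type{0}{v} is weak, because it allows zero occurrences and allows even own-colour counts including $0$. Once rainbowness holds, each port has exactly one clique neighbour of each other colour and none of its own, so its two partners must both share its colour. This yields a proper $q$-edge-colouring via $c(uv)=c(\pi_u(uv))$.
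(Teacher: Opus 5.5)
There is a genuine gap: your construction produces only YES-instances, so the backward direction is false rather than merely unproven. Your graph $H$ is $(q+1)$-regular, and \type{0}{v}-colouring is trivial whenever all degrees have the same parity.

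\begin{itemize}
\item For odd $q$, every vertex has even degree $q+1$. The monochromatic colouring is then valid: each vertex sees its own colour $q+1$ times (even) and every other colour $0$ times, which \typesymbol{v} permits.
\item For even $q$, every vertex has odd degree. Gallai's theorem \cite{Lovasz93} partitions $V(H)$ into $V_1,V_2$ so that every vertex has an even number of neighbours in its own part. Colour $V_1$ by $1$ and $V_2$ by $2$. Each vertex then sees its own colour an even number of times, the other used colour an odd number of times, and all remaining colours zero times.
\end{itemize}

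Hence $H$ is a YES-instance whether or not $G$ is $q$-edge-colourable, and no parity case analysis on $X^v$ can establish rainbowness. Your fallback does not repair this. Subdividing $x_1^vx_2^v$ leaves every port at degree $q+1$. For odd $q$ (in particular $q=3$) the monochromatic colouring stays valid, because the subdivision vertex then has two own-colour neighbours.

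The missing idea is that any working reduction must produce graphs with vertices of both degree parities, together with gadgets that actively exclude monochromatic and two-colour solutions. The paper builds these from two facts about a clique $A$:
\begin{itemize}
\item if $A$ contains a vertex with no neighbours outside $A$, then every colour class of $A$ has odd size or is empty;
\item if, in addition, some vertex of $A$ has exactly one outside neighbour $u$, then $u$ receives a colour not occurring on $A$.
\end{itemize}

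From these the paper obtains two gadgets.
\begin{itemize}
\item A small gadget built around a triangle with a degree-$2$ vertex gives a non-equal gadget forcing $c(u)\neq c(v)$.
\item A $(q+1)$-clique whose vertices $w_4,\dots,w_{q+1}$ are pairwise joined by non-equal gadgets uses at least $q-1$ colours. This forces its two outside neighbours onto the single remaining colour, giving an equal gadget.
\end{itemize}

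Rainbowness of the $q$ ports of each vertex of $G$ is then imposed directly, by joining the ports pairwise with non-equal gadgets, instead of by a parity analysis of a $K_q$. For each edge $e=uv$, the two ports and a new vertex $y_e$ are pairwise joined by equal gadgets. A pendant $z_e$ at $y_e$ fixes the parity at $y_e$ in the forward direction.
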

We first observe the following simple consequence of the definition of \type0v-colouring.
\begin{observation}\label{obs:0v-clique}
    Let $A=\{a_1,\dots, a_t\}$ be the vertices of a clique in $G$ and $c\colon V(G)\rightarrow [q]$ a \type0v-colouring of $G$. If $a_1$ has degree $t-1$ in $G$, then every colour class of $c$ restricted to $A$ has either odd size or is empty.  
\end{observation}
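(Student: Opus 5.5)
The statement to prove is Observation~\ref{obs:0v-clique}, which follows by a parity count at $a_1$.

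Since $a_1$ has degree $t-1$ and lies in the clique $A$, its neighbourhood is exactly $N(a_1)=A\setminus\{a_1\}$. Consequently, for every colour $i\in[q]$ the set $N(a_1)\cap c^{-1}(i)$ coincides with $(A\setminus\{a_1\})\cap c^{-1}(i)$. So the \type0v-constraint at $a_1$ translates directly into a statement about how the colour classes meet $A$.

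The argument then splits into two cases.

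\emph{Case $i=c(a_1)$.} The constraint \typesymbol{0} says that $|N(a_1)\cap c^{-1}(i)|$ is even. The class of colour $i$ restricted to $A$ consists of these neighbours together with $a_1$ itself, so its size is this even number plus one. Hence it is odd, and in particular non-empty.

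\emph{Case $i\neq c(a_1)$.} The vertex $a_1$ does not lie in this class, so $A\cap c^{-1}(i)=N(a_1)\cap c^{-1}(i)$. The constraint \typesymbol{v} says exactly that this set is odd or empty.

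Combining the two cases, every colour class of $c$ restricted to $A$ is either of odd size or empty, which is the claim.

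The only point needing care is that the degree hypothesis gives $N(a_1)\subseteq A$. Without it, neighbours of $a_1$ outside $A$ would spoil the count. There is no real obstacle beyond this.
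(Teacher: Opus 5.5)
Your proof is correct and is exactly the direct parity count at $a_1$ that the paper has in mind. The paper gives no written proof, calling this ``a simple consequence of the definition of \type0v-colouring''. Your observation that $N(a_1)=A\setminus\{a_1\}$, followed by the split into $i=c(a_1)$ and $i\neq c(a_1)$, is precisely that consequence spelled out.
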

The following observation is an easy consequence of \cref{obs:0v-clique}.
\begin{observation}\label{obs:0v-cliqueNeighbours}
    Let $A=\{a_1,\dots, a_t\}$ be the vertices of a clique in $G$ and $c\colon V(G)\rightarrow [q]$ a \type0v-colouring of $G$. If $a_1$ has degree $t-1$ in $G$ and $a_2$ has degree $t$ and $u$ is the neighbour of $a_2$ not contained in $A$ then $u$ has a colour not appearing in $c(A)$.  
\end{observation}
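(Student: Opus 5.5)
We derive \cref{obs:0v-cliqueNeighbours} directly from \cref{obs:0v-clique} by a parity count at $a_2$. Let $j=c(a_2)$. By \cref{obs:0v-clique}, every colour class of $c$ restricted to $A$ is either empty or of odd size. The neighbours of $a_2$ are $A\setminus\{a_2\}$ together with $u$.

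Suppose for contradiction that $c(u)=i\in c(A)$, so colour $i$ occurs an odd number $m_i\geq 1$ of times in $A$. We split into two cases.

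\textbf{Case $i=j$.} Inside $A\setminus\{a_2\}$ the vertex $a_2$ has $m_j-1$ neighbours of colour $j$, which is an even number. Adding $u$ gives $m_j$ neighbours of its own colour, which is odd. This contradicts the requirement in \type0v-colouring that $|N(a_2)\cap c^{-1}(c(a_2))|$ be even.

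\textbf{Case $i\neq j$.} All $m_i$ vertices of colour $i$ in $A$ are neighbours of $a_2$. Together with $u$, this gives $m_i+1$ neighbours of colour $i$. Since $m_i$ is odd, $m_i+1$ is even, and it is at least $2$. This violates the condition \typesymbol{v} (the count must be odd or zero) for the colour $i\ne c(a_2)$.

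In both cases we reach a contradiction, so $c(u)\notin c(A)$.

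There is no real obstacle here. The only point needing care is to use the degree-$(t-1)$ vertex $a_1$ solely through \cref{obs:0v-clique}, which gives the odd-or-empty structure of the colour classes on $A$. The argument itself only needs that $a_2$'s unique outside neighbour is $u$.
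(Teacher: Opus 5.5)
Your proof is correct and is the argument the paper intends when it calls this ``an easy consequence of \cref{obs:0v-clique}''. You use the odd-or-empty structure of the colour classes on $A$ and then count parities at $a_2$, splitting on whether $c(u)=c(a_2)$ or not.
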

With these observation we can now prove \cref{thm:0v}.
\begin{proof}[Proof of \cref{thm:0v}]
    We first build an equal and a non-equal gadget. The non-equal gadget between vertices $u$ and $v$ is depicted in \cref{fig:0v-first}. In particular, we say that in a graph $G$, vertex $u\in V(G)$ is connected to vertex $v\in V(G)$ by a non-equal gadget (on vertices $u,v,w,w_1,w_2,w_3$) if vertices $w_1,w_2,w_4,w$ do not have neighbours outside the non-equal gadget while $u$ and $v$ may have neighbours outside the non-equal gadget. Note that the non-equal gadget is not symmetric which will be of little relevance in the construction later.
    \begin{figure}[h]
        \centering
        \begin{tikzpicture}[scale=0.4]
            \tikzstyle{lw2}=[line width=1.2]
            \def \xdist {1.5cm}
            \def \ydist {1.2cm}
            \begin{scope}[xshift=-11.2cm]
                \draw[lw2] (0,0.14)--(2,0.14);
                \draw[lw2] (0,-0.14)--(2,-0.14);
                \draw[lw2] (1,0.14)--(1,-0.14);
                \node[a,fill=white,label=below:$u$](u) at (0,0){};
                \node[a,fill=white,label=below:$v$](v) at (2,0){};
                \draw[<->,thick] (4,0)--(6,0);
            \end{scope}
            \node[r,label=left:$u$](uu) at (-1*\xdist,-1*\xdist){};
            \node[b,label=right:$v$](vv) at (1*\xdist,-1*\xdist){};
            \node[b, label=left:$w$](w1) at (0,0){};
            \node[b, label=left:$w_1$](w2) at (-1*\xdist,1*\xdist){};
            \node[b, label=right:$w_2$](w3) at (1*\xdist,1*\xdist){};
            \node[b, label=above:$w_3$](w4) at (0,2.65*\xdist){};
            \draw (uu)--(w1)--(vv)--(uu)--(w2)--(w3)--(w1);
            \draw (vv)--(w3)--(w4)--(w2);
            \draw (-0.4*\xdist,-1.6*\xdist)--(uu)--(-1.6*\xdist,-1.6*\xdist);
            \draw[loosely dotted, lw2] (-0.55*\xdist,-1.45*\xdist)--(-1.45*\xdist,-1.45*\xdist);
            \draw (0.4*\xdist,-1.6*\xdist)--(vv)--(1.6*\xdist,-1.6*\xdist);
            \draw[loosely dotted, lw2] (0.55*\xdist,-1.45*\xdist)--(1.45*\xdist,-1.45*\xdist);
    
        \end{tikzpicture}
        \caption{Non-equal gadget connecting vertex $u$ to vertex $v$. In a \type0v-colouring, the vertices $u$ and $v$ receive different colours.}
        \label{fig:0v-first}
    \end{figure}
    \begin{claim}\label{claim:0v-first}
        Consider a non-equal gadget connecting vertex $u$ to vertex $v$. In a \type0v-colouring, the vertices $u$ and $v$ receive different colours. Furthermore, given different colours for $u$ and $v$ there is a \type0v-$2$-colouring of the non-equal gadget in which $u$ has three neighbours of the colour of $v$ and $v$ has no neighbours of the colour of $u$.
    \end{claim}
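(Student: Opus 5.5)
The plan is to argue directly from the local parity constraints of \type0v-colourings at the internal vertices of the gadget, which have small degree and no outside neighbours. Reading off \cref{fig:0v-first}, the edges of the gadget are $uv$, $uw$, $uw_1$, $vw$, $vw_2$, $ww_2$, $w_1w_2$, $w_1w_3$ and $w_2w_3$. Hence $N(w)=\{u,v,w_2\}$, $N(w_1)=\{u,w_2,w_3\}$, $N(w_2)=\{w_1,w,v,w_3\}$ and $N(w_3)=\{w_1,w_2\}$. The key constraint used throughout is that no vertex may have exactly two neighbours of a colour different from its own.

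\textbf{First part (different colours).} Assume for contradiction that $c(u)=c(v)=\alpha$ in some \type0v-colouring $c$. Consider $w$, which is adjacent to both $u$ and $v$.
\begin{itemize}
\item If $c(w)\neq\alpha$, then $w$ has at least two neighbours of colour $\alpha\neq c(w)$. This number is $2$ or $3$, and it must be odd, so $w_2$ would have colour $\alpha$ as well.
\item If $c(w)=\alpha$, then $w$ already has two neighbours of its own colour. Evenness of the own-colour count then forces $c(w_2)\neq\alpha$.
\end{itemize}
In the second case, $w_2$ has the two neighbours $w$ and $v$ of colour $\alpha\neq c(w_2)$. Its other two neighbours are $w_1$ and $w_3$, and for the count of colour $\alpha$ to be odd, exactly one of them must have colour $\alpha$. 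I will rule out this sub-case by a short check of the triangle $w_1w_2w_3$, using $w_3$ (degree $2$) and $w_1$. Since $w_3$ has degree $2$, its two neighbours are either both of its own colour or of two distinct colours, both different from $c(w_3)$. The first case (where $c(w)\ne\alpha$ and $c(w_2)=\alpha$) is handled by the same kind of check at $w_2$: it has neighbours $v$ and $w$, with $c(v)=\alpha=c(w_2)$ and $c(w)\neq\alpha$. I expect this finite case analysis on the triangle to be the main obstacle. It is mechanical but fiddly, and if it does not close I would enumerate the colourings of $\{w,w_1,w_2,w_3\}$ relative to $\alpha$ explicitly (only "equal to $\alpha$" versus "which other colour" matters).

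\textbf{Second part (existence).} Given distinct colours $c(u)$ and $c(v)$, I colour $w$, $w_1$, $w_2$ and $w_3$ with $c(v)$ and then verify the constraints vertex by vertex.
\begin{itemize}
\item $w_3$ has $2$ own-colour neighbours and nothing else.
\item $w_1$ has $2$ own-colour neighbours ($w_2$, $w_3$) and exactly $1$ neighbour of colour $c(u)$.
\item $w_2$ has $4$ own-colour neighbours.
\item $w$ has $2$ own-colour neighbours ($v$, $w_2$) and $1$ neighbour of colour $c(u)$.
\end{itemize}
Within the gadget, $u$ then has exactly the three neighbours $w$, $v$, $w_1$, all of colour $c(v)$, which is odd. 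Apart from $u$ itself, $v$ has no gadget neighbour of colour $c(u)$, and it gains exactly two own-colour neighbours ($w$, $w_2$), an even number. Consequently, the parities at $u$ and $v$ coming from outside the gadget are shifted only in a controlled way, as stated.
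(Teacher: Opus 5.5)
Your proof is correct. The existence part is the same as the paper's, which simply exhibits the colouring of \cref{fig:0v-first} with $w,w_1,w_2,w_3$ all in colour $c(v)$. Your qualification ``apart from $u$ itself'' is also the right reading of the statement: $uv$ is an edge of the gadget, so $v$ literally has one neighbour of colour $c(u)$. This is exactly the ``one or three'' count used later in \cref{claim:0v-second}.

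For the first part you take a somewhat different route from the paper.
\begin{itemize}
\item The paper first applies \cref{obs:0v-cliqueNeighbours} to the triangle $w_1w_2w_3$, where $w_3$ has degree $2$ and $u$ is the outside neighbour of $w_1$. This gives $c(u)\notin c(\{w_1,w_2,w_3\})$.
\item Assuming $c(u)=c(v)=\alpha$, the paper then notes that $w_2\neq\alpha$, so $w$ is forced to colour $\alpha$, and $w_2$ sees exactly the two $\alpha$-neighbours $w$ and $v$.
\item You instead split on $c(w)$ and defer the triangle to the end.
\end{itemize}

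The deferred check does close, and quickly. The vertex $w_3$ forces the triangle to be monochromatic or rainbow.
\begin{itemize}
\item If $c(w)\neq\alpha$, then $c(w_2)=\alpha$. The own-colour parity at $w_2$, which already sees $v$, makes exactly one of $w_1,w_3$ coloured $\alpha$. The triangle then has exactly two $\alpha$-vertices, which is impossible.
\item If $c(w)=\alpha$, the triangle must be rainbow with $\alpha$ on exactly one of $w_1,w_3$. If that vertex is $w_1$, then $w_1$ has exactly one own-colour neighbour, namely $u$. If it is $w_3$, then $w_1$ sees the two $\alpha$-neighbours $u$ and $w_3$. Either way this is a contradiction.
\end{itemize}

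Both arguments are independent of the number of colours. Your version avoids the auxiliary clique observation but needs a two-level case split. The paper's ordering, which first places $c(u)$ outside the triangle's colours, disposes of your case $c(w)\neq\alpha$ in one line. You should write these sub-cases out explicitly rather than leave them as an anticipated obstacle.
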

    \begin{claimproof}
        Let $G$ be a graph in which vertex $u$ is connected to vertex $v$  by a non-equal gadget and $c\colon V(G)\rightarrow [q]$ a \type0v-colouring of $G$.
        Since $w_3$ has degree $2$, by \cref{obs:0v-cliqueNeighbours} $u$ must have a colour which does not appear in  $c(\{w_1,w_2,w_3\})$. Since $c(w_2)\not=c(u)$ we obtain from considering the neighbourhood of $w$ that $c(u)=c(v)$ implies that $c(w)=c(u)=c(v)$. But in this case, $w_2$ has two neighbours in colour $c(u)$, a contradiction. Hence, $c(u)\not= c(v)$. The colouring in the second statement is depicted in \cref{fig:0v-first}.
    \end{claimproof}
    Now fix $q\geq 3$. We build the equal gadgets as follows. We first introduce vertices $u,v,w_1,\dots, w_{q+1}$. 
    We add a non-equal gadget connecting every pair $w_i,w_j$ with $4\leq i<j\leq q+1$. The orientation of non-equal gadgets is arbitrary. Additionally, we add edges $w_iw_j$ for $1\leq i\leq 3$, $i<j\leq q+1$ and the edges $w_2u$ and $w_3v$ (see \cref{fig:0v-second} for an illustration). We say that $u,v$ are connected by an equal gadget (on vertices $u,v,w_1,\dots, w_{q+1}$) in a graph $G$ if $w_1,\dots, w_{q+1}$ have no neighbours outside the equal gadget while $u$ and $v$ might have an arbitrary number of neighbours outside the gadget. 
    \begin{figure}[h]
        \centering
        \begin{tikzpicture}[scale=0.4]
            \tikzstyle{lw2}=[line width=1.2]
            \def \xdist {2cm}
            \def \r {4cm}
            \def \rr {4.15cm}
            \def \rrr {3.85cm}
            \begin{scope}[xshift=-14.2cm]
                \draw[lw2] (0,0.14)--(2,0.14);
                \draw[lw2] (0,-0.14)--(2,-0.14);
                \node[a,fill=white,label=below:$u$](u) at (0,0){};
                \node[a,fill=white,label=below:$v$](v) at (2,0){};
                \draw[<->,thick] (4,0)--(6,0);
            \end{scope}
            \draw[lw2] (130:1.04*\r)--(170:1.04*\r);
            \draw[lw2] (130:0.96*\r)--(170:0.96*\r);
            \draw[lw2] (150:0.89*\r)--(150:0.98*\r);
            \draw[lw2] (330:1.04*\r)--(370:1.04*\r);
            \draw[lw2] (330:0.96*\r)--(370:0.96*\r);
            \draw[lw2] (350:0.89*\r)--(350:0.98*\r);
            \draw[lw2] (128:1.05*\r)--(372:1.05*\r);
            \draw[lw2] (130:0.96*\r)--(370:0.96*\r);
            \draw[lw2] (250:0.24*\r)--(250:0.16*\r);
            \draw[lw2] (327:1.05*\r)--(173:1.05*\r);
            \draw[lw2] (330:0.96*\r)--(170:0.96*\r);
            \draw[lw2] (70:0.47*\r)--(70:0.55*\r);
            \draw[lw2] (130:1.05*\r)--(332:1.05*\r);
            \draw[lw2] (133:0.95*\r)--(329:0.95*\r);
            \draw[lw2] (0:0.19*\r)--(13:0.25*\r);
            \draw[lw2] (370:1.05*\r)--(168:1.05*\r);
            \draw[lw2] (367:0.95*\r)--(171:0.95*\r);
            \draw[lw2] (140:0.19*\r)--(127:0.25*\r);
            \draw[loosely dotted, lw2] (10:\r) arc (10:130:\r);
            \node[r,label=left:$u$](uu) at (-1.43,-6){};
            \node[r,label=right:$v$](vv) at (1.43,-6){};
            \node[ye, label=left:$w_{q}$](wq) at (130:\r){};
            \node[tq, label=left:$w_{q+1}$](wq1) at (170:\r){};
            \node[b, label=left:$w_1$](w1) at (210:\r){};
            \node[b, label=left:$w_2$](w2) at (250:\r){};
            \node[b, label=right:$w_3$](w3) at (290:\r){};
            \node[p, label=right:$w_4$](w4) at (330:\r){};
            \node[o, label=right:$w_5$](w5) at (370:\r){};
            \draw (uu)--(w2)--(w1)--(w3)--(w2);
            \draw (vv)--(w3);
            \draw (w1) -- (wq1);
            \draw (w1) -- (wq);
            \draw (w1) -- (w4);
            \draw (w1) -- (w5);
            \draw (w2) -- (wq1);
            \draw (w2) -- (wq);
            \draw (w2) -- (w4);
            \draw (w2) -- (w5);
            \draw (w3) -- (wq1);
            \draw (w3) -- (wq);
            \draw (w3) -- (w4);
            \draw (w3) -- (w5);
            \draw (-0.4*\xdist,-3.6*\xdist)--(uu)--(-1.6*\xdist,-3.6*\xdist);
            \draw[loosely dotted, lw2] (-0.55*\xdist,-3.45*\xdist)--(-1.3*\xdist,-3.45*\xdist);
            \draw (0.4*\xdist,-3.6*\xdist)--(vv)--(1.6*\xdist,-3.6*\xdist);
            \draw[loosely dotted, lw2] (0.55*\xdist,-3.45*\xdist)--(1.3*\xdist,-3.45*\xdist);
    
        \end{tikzpicture}
        \caption{Equal gadget between vertices $u$ and $v$. In a \type0v-colouring, the vertices $u$ and $v$ receive the same colours.}
        \label{fig:0v-second}
    \end{figure}
    \begin{claim}\label{claim:0v-second}
        Consider an equal gadget connecting two vertices $u$ and $v$. In a \type0v-colouring, the vertices $u$ and $v$ receive the same colour. 
        Furthermore, given a colour for $u$ and $v$ and a different colour for the neighbours of $u$ and $v$ in the equal gadget, we can complete this colouring into a \type0v-$q$-colouring of the non-equal gadget.
    \end{claim}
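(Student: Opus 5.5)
The plan is to first show that the colours of $w_1,\dots,w_{q+1}$ are essentially forced, and then read off the colours of $u$ and $v$ from the constraints at $w_2$ and $w_3$. By \cref{claim:0v-first}, any two of $w_4,\dots,w_{q+1}$ joined by a non-equal gadget get different colours. Since every such pair is joined, these $q-2$ vertices receive $q-2$ pairwise distinct colours; call this set of colours $B$, so exactly two colours lie outside $B$. Note that $N(w_1)=\{w_2,w_3,w_4,\dots,w_{q+1}\}$, so $w_1$ has degree $q$ and sees each colour of $B$ at least once from $w_4,\dots,w_{q+1}$.

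The first step is a case analysis at $w_1$ showing that $w_1,w_2,w_3$ all get one common colour $a\notin B$.
\begin{itemize}
\item Suppose $c(w_1)\in B$. Then $w_1$ already has one neighbour of its own colour among the $w_j$, and the evenness condition forces exactly one of $w_2,w_3$, say $w_2$, to share the colour of $w_1$. Then $w_3$ has colour different from $c(w_1)$ but sees two neighbours $w_1,w_2$ of that colour. This is an even positive count of a foreign colour, a contradiction.
\item Suppose $c(w_1)\notin B$ and neither $w_2$ nor $w_3$ has colour $c(w_1)$. Neither can have a colour of $B$, since $w_1$ would then see that colour twice. They cannot share a colour either, for the same reason. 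So $w_1,w_2,w_3$ would need three distinct colours outside $B$, but only two exist.
\end{itemize}
Hence $w_1,w_2,w_3$ all have some colour $a\notin B$.

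Next look at $w_2$, whose neighbourhood is $\{w_1,w_3,w_4,\dots,w_{q+1},u\}$. It sees colour $a$ twice from $w_1,w_3$, so $u$ cannot have colour $a$, since that would make the own-colour count $3$, which is odd. Also $u$ cannot take a colour of $B$, since that colour would then appear exactly twice at $w_2$. So $c(u)$ is the unique colour outside $B\cup\{a\}$. The symmetric argument at $w_3$ gives $v$ that same colour, so $c(u)=c(v)$.

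For the completion, colour $w_1,w_2,w_3$ with the prescribed colour $a$ and give $w_4,\dots,w_{q+1}$ the remaining $q-2$ colours, i.e.\ those different from $a$ and from $c(u)=c(v)$. Each non-equal gadget is coloured as in \cref{claim:0v-first}, with its interior coloured in the colour of its $v$-end. Then I would check the conditions vertex by vertex.
\begin{itemize}
\item The vertices $w_1,w_2,w_3$ each see $a$ exactly twice and every other colour at most once, which is allowed.
\item Each $w_j$ with $j\geq 4$ sees $a$ three times. At a gadget where it is the $v$-end it gains two neighbours of its own colour, and at a $u$-end it gains three neighbours of the partner's colour. Since all partner colours are pairwise distinct and different from $a$, these contributions never accumulate: every foreign colour is seen $0$ or $3$ times and the own colour an even number of times.
\item The interior gadget vertices are fine by \cref{claim:0v-first}.
\end{itemize}
I expect this parity bookkeeping for the $w_j$, which may be endpoints of several non-equal gadgets at once, to be the main obstacle. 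The key point to verify carefully is that distinctness of the colours on $w_4,\dots,w_{q+1}$ prevents the threes from adding up to an even count.
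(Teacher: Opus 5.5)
\textbf{There is a genuine gap in your first case.} It is not contradictory, because you miscounted at $w_3$. Let $w_j$ ($j\geq 4$) be the vertex with $c(w_j)=b=c(w_1)\in B$. After evenness at $w_1$ forces, say, $c(w_2)=b\neq c(w_3)$, the vertex $w_3$ is adjacent to $w_1$, $w_2$ \emph{and} $w_j$. So it sees colour $b$ three times (four only if $c(v)=b$), and three is allowed.

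Your intermediate claim that $w_1,w_2,w_3$ share a colour $a\notin B$ is therefore false. Take $q=3$, where there are no inner non-equal gadgets. Colour $w_1,w_2,w_4$ with $1$, $w_3$ with $2$, and $u,v$ with $3$. This is a valid \type0v-colouring of the equal gadget, and analogous examples exist for every $q\geq 3$. Your argument at $w_2$ assumes $c(w_1)=c(w_3)=a$, so it says nothing about such colourings, and $c(u)=c(v)$ is not yet proved.

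You can repair this within your approach. In this case the colour $d$ of the other vertex among $w_2,w_3$ lies outside $B$, since otherwise $w_1$ sees $d$ twice. Now check both vertices. The one of colour $b$ already sees $b$ twice ($w_1$ and $w_j$). The one of colour $d$ sees $d$ zero times and $b$ three times. Together these force both $u$ and $v$ onto the unique colour outside $B\cup\{d\}$. Your second case is fine.

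The paper avoids this case distinction altogether. Applying \cref{obs:0v-clique} at $w_1$ shows that every colour class on the clique $\{w_1,\dots,w_{q+1}\}$ is odd or empty. Since $q-2$ odd numbers cannot sum to $q+1$, at least $q-1$ colours appear on the clique. Then \cref{obs:0v-cliqueNeighbours} at $w_2$ and $w_3$ puts $u$ and $v$ on a colour absent from the clique, and at most one such colour exists.

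Your completion matches the paper's construction. One small slip: at a gadget where $w_j$ is the $v$-end, $w_j$ still sees its partner once. So foreign colours are seen $0$, $1$ or $3$ times rather than $0$ or $3$, which is still admissible.
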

    \begin{claimproof}
        Let $G$ be a graph with vertices $u, v$ which are connected by an equal gadget and $c\colon V(G)\rightarrow [q]$ a \type0v-colouring of $G$.
        Since the non-equal gadget between two vertices $w_i$ and $w_j$ implies that $w_i$ and $w_j$ are adjacent, we observe that $\{w_1,\dots, w_{q+1}\}$ is a clique. We now consider the size of $c(\{w_1,\dots, w_{q+1}\})$. Since we pairwise connected $w_4,\dots, w_{q+1}$ by non-equal gadgets we know that $c(w_i)\not= c(w_j)$ for $4\leq i<j\leq q+1$. As $w_1$ has degree $q$, we can apply \cref{obs:0v-clique} and observe that there must be an additional colour appearing among vertices $w_1,w_2$ and $w_3$. Hence, $c(\{w_1,\dots, w_{q+1}\})$ contains at least $q-1$ colours. Since both $w_2$ and $w_3$ have precisely one neighbour not in the clique $\{w_1,\dots, w_{q+1}\}$ we can apply \cref{obs:0v-cliqueNeighbours} to conclude that $u$ and $v$ both have a colour not contained in  $c(\{w_1,\dots, w_{q+1}\})$ for which there is at most one option and hence $c(u)=c(v)$. 
        
        Finally,  we construct a \type0v-$q$-colouring $c\colon \{u,v,w_1,\dots,w_{q+1}\}\rightarrow [q]$ as follows. Without loss of generality, presume that we fixed colour $1$ for $u$ and $v$. First, set $c(w_1)=c(w_2)=c(w_3)=2$, and $c(w_i)=i-1$ for every $4\leq i\leq q+1$. If in the construction there is a non-equal gadget connecting $w_i$ to $w_j$, then we colour all remaining vertices in the gadget with $c(w_j)$. It is easy to verify that the condition for a valid \type0v-colouring is satisfied for $u,v,w_1,w_2,w_3$. Hence, consider $w_i$, $4\leq i\leq q+1$. Observe that in each non-equal gadget which connects some vertex $w_j$ to $w_i$, $w_i$ has two neighbours coloured $c(w_i)$. No other neighbours of $w_i$ receive colour $c(w_i)$ and hence $w_i$ has an even number of neighbours in its own colour. Furthermore, $w_i$ has no neighbours in colour $1$ and $3$-neighbours in colour $2$. Finally, $w_i$ has either one or three neighbours for every other colour depending on whether the respective non-equal gadget is oriented toward or away from $w_i$. Hence, $c$ is a \type0v-colouring of the non-equal gadget.
    \end{claimproof}
    We give a reduction from $q$-edge-colouring of $q$-regular graph. Given an instance $G$ of $q$-edge colouring, we construct a graph $H$ as follows. For each vertex $v\in V(G)$ we introduce $q$ vertices $x_1^v,\dots x_q^v$ which we pairwise connect with a non-equal gadget (we may assume that the non-equal gadget connects $x_i^v$ to $x_j^v$ with $i<j$). Now, for every vertex $v \in V(G)$, let $E_v$ be the set of the $q$ edges incident to $v$. We arbitrarily choose a bijection $\pi_v: E_v \rightarrow \{x_1^v,\dots, x_q^v\}$, assigning a label to each edge. For every edge $e = uv \in E(G)$ we introduce a new vertex $y_e$. We add equal gadget between  $\pi_u(e)$ and $\pi_v(e)$, between  $y_e$ and $\pi_u(e)$, and between $y_e$ and $\pi_v(e)$. Finally, for every edge $e\in E(G)$ we introduce another vertex $z_e$ which is just made to be a pendant of $y_e$. This completes the construction of $H$, which is illustrated in \cref{fig:0v-third}. 
    \begin{figure}[h]
        \centering
        \begin{tikzpicture}[scale=0.4]
            \tikzstyle{lw2}=[line width=1.2]
            \def \xdist {2cm}
            \def \r {3.5cm}
            \def \rr {7cm}
            \def \rrr {5.25cm}
            \begin{scope}[xshift=-16.2cm]
                \draw[loosely dotted, lw2,gray] (345:3) arc (345:555:3);
                \node[a](m) at (0,0){};
                \node[l, fill=white](w) at (195:3){};
                \node[l](x) at (245:3){};
                \node[l](y) at (295:3){};
                \node[l,fill=white](z) at (345:3){};
                \draw[c5,lw1] (m)--(w);
                \draw[c1,lw1] (m)--(x);
                \draw[c6,lw1] (m)--(y);
                \draw[c3,lw1] (m)--(z);
                \draw[-{Latex[length=2.5mm, width=2.5mm]},decorate, decoration={snake, segment length=4mm, amplitude=0.6mm}, thick] (4,0)--(7.1,0);
            \end{scope}
            \draw[lw2] (297:1.04*\r)--(195.5:1.04*\r);
            \draw[lw2] (298:0.96*\r)--(194.5:0.96*\r);
            \draw[lw2] (247:0.66*\r)--(247:0.59*\r);

            \draw[lw2] (242:0.96*\r)--(345.5:0.96*\r);
            \draw[lw2] (243:1.04*\r)--(344.5:1.04*\r);
            \draw[lw2] (293:0.66*\r)--(293:0.59*\r);
            
            \draw[lw2] (192.7:\r)--(347.2:\r);
            \draw[lw2] (197.2:\r)--(342.7:\r);
            \draw[lw2] (270:0.21*\r)--(270:0.3*\r);
            
            \draw[lw2] (245:1.04*\r)--(295:1.04*\r);
            \draw[lw2] (245:0.96*\r)--(295:0.96*\r);
            \draw[lw2] (270:0.86*\r)--(270:0.95*\r);

            \draw[lw2] (245:1.04*\r)--(195:1.04*\r);
            \draw[lw2] (245:0.96*\r)--(195:0.96*\r);
            \draw[lw2] (220:0.86*\r)--(220:0.95*\r);

            \draw[lw2] (345:1.04*\r)--(295:1.04*\r);
            \draw[lw2] (345:0.96*\r)--(295:0.96*\r);
            \draw[lw2] (320:0.86*\r)--(320:0.95*\r);

            \draw[lw2,lightgray] (193:\r)--(194:\rr);
            \draw[lw2,lightgray] (197:\r)--(196:\rr);

            \draw[lw2,lightgray] (243:\r)--(244:\rr);
            \draw[lw2,lightgray] (247:\r)--(246:\rr);

            \draw[lw2,lightgray] (293:\r)--(294:\rr);
            \draw[lw2,lightgray] (297:\r)--(296:\rr);

            \draw[lw2,lightgray] (343:\r)--(344:\rr);
            \draw[lw2,lightgray] (347:\r)--(346:\rr);

            \draw[lw2,lightgray] (192.5:\r)--(208.5:\rrr);
            \draw[lw2,lightgray] (197.5:\r)--(211.5:\rrr);

            \draw[lw2,lightgray] (242.5:\r)--(258.5:\rrr);
            \draw[lw2,lightgray] (247.5:\r)--(261.5:\rrr);

            \draw[lw2,lightgray] (292.5:\r)--(308.5:\rrr);
            \draw[lw2,lightgray] (297.5:\r)--(311.5:\rrr);

            \draw[lw2,lightgray] (342.5:\r)--(358.5:\rrr);
            \draw[lw2,lightgray] (347.5:\r)--(361.5:\rrr);

            \draw[lw2,lightgray] (193.7:\rr)--(208:\rrr);
            \draw[lw2,lightgray] (196.3:\rr)--(212:\rrr);

            \draw[lw2,lightgray] (243.7:\rr)--(258:\rrr);
            \draw[lw2,lightgray] (246.3:\rr)--(262:\rrr);

            \draw[lw2,lightgray] (293.7:\rr)--(308:\rrr);
            \draw[lw2,lightgray] (296.3:\rr)--(312:\rrr);

            \draw[lw2,lightgray] (343.7:\rr)--(358:\rrr);
            \draw[lw2,lightgray] (346.3:\rr)--(362:\rrr);

            \draw[loosely dotted, lw2] (345:\r) arc (345:555:\r);
            \draw[loosely dotted, lw2,gray] (345:\rr) arc (345:555:\rr);
            
            \node[tq](w1) at (195:\r){};
            \node[r](w2) at (245:\r){};
            \node[p](w3) at (295:\r){};
            \node[b](w4) at (345:\r){};

            \node[gtq](v1) at (195:\rr){};
            \node[gr](v2) at (245:\rr){};
            \node[gp](v3) at (295:\rr){};
            \node[gb](v4) at (345:\rr){};

            \node[gtq](u1) at (210:\rrr){};
            \node[gr](u2) at (260:\rrr){};
            \node[gp](u3) at (310:\rrr){};
            \node[gb](u4) at (360:\rrr){};

            \node[gr](x1) at (225:\rrr){};
            \node[gp](x2) at (275:\rrr){};
            \node[gb](x3) at (325:\rrr){};
            \node[gtq](x4) at (375:\rrr){};

            \draw[lightgray] (u1) -- (x1);
            \draw[lightgray] (u2) -- (x2);
            \draw[lightgray] (u3) -- (x3);
            \draw[lightgray] (u4) -- (x4);
        \end{tikzpicture}
        \caption{Construction of $H$ in the proof of \cref{thm:0v}.}
        \label{fig:0v-third}
    \end{figure}

    We now argue that the reduction is correct. Given a \type0v-$q$-colouring $c$ of $H$ we define a $q$-edge-colouring by setting the colour of edge $e\in E(G)$ be $c(y_e)$. By construction, this is a valid edge colouring. Hence, assume that $c\colon E(G)\rightarrow [q]$ is a $q$-edge-colouring of $G$. We obtain a \type0v-colouring $c_{\type0v}:V(H)\rightarrow [q]$ of $H$ as follows. For every edge $e=uv\in E(G)$ we set $c_{\type0v}(y_e)=c_{\type0v}(\pi_u(e))=c_{\type0v}(\pi_v(e))=c(e)$. Choose a colour $k_e\not=c(y_e)$ for the neighbours of $y_e, \pi_u(e)$ and $\pi_v(e)$ in the three equal gadgets connecting them. We colour the remaining vertices of the equal gadgets between $y_e, \pi_u(e)$ and $\pi_v(e)$ using the \type0v-colouring from \cref{claim:0v-second}. Additionally, we set $c(z_e)=k_e$. Since $c$ is an edge colouring, for every $v\in V(G)$ it holds that $c_{\type0v}(x_i^v)\not=c_{\type0v}(x_j^v)$ for $i\not=j$. Hence, using \cref{claim:0v-first}, we can \type0v-colour the  non-equal gadgets between pairs  $x_i^v$ and $x_j^v$ for $i\not=j$ with two colours.
    
    We are left to verify that $c_{\type0v}$ is a valid \type0v-colouring. Clearly, the property is satisfied for every vertex only appearing in an equal or non-equal gadget by \cref{claim:0v-first} and \cref{claim:0v-second}. Hence, consider a vertex $x_i^v$. For every non-equal gadget directed towards $x_i^v$ there are two vertices of colour $c_{\type0v}(x_i^v)$ and otherwise $x_i^v$ has no neighbours in the same colour. Additionally, depending on the direction of the non-equal gadgets, $x_i^v$ is adjacent to either one or three vertices of every other colour within the non-equal gadgets. Furthermore, for the edge $e$ with $\pi_v(e)=x_i^v$, $x_i^v$ has two neighbours (one per equal gadget) of colour $k_e\not=c_{\type0v}(x_i^v)$. Hence, $x_i^v$ is adjacent to an odd number of vertices in every colour apart from  $c_{\type0v}(x_i^v)$. Next, consider  vertex $y_e$. Since $y_e$ has three neighbours ($z_e$ and the one vertex per equal gadget), and these three vertices received colour $k_e\not=c_{\type0v}(y_e)$, the property is satisfied for $y_e$. Finally, every vertex $z_e$ has only one neighbour $y_e$ and $c_{\type0v}(z_e)=k_e\not=c_{\type0v}(y_e)$ concluding the proof.
\end{proof}

\subsubsection{\Type{2}{+}, \Type{2}{1} and \type{2}{!}-colouring}
Recall that $c\colon V(G)\rightarrow [q]$ is a $\sigma$-$q$-colouring of $G$ for $\sigma\in \{\type{2}{+},\type{2}{1},\type{2}{!}\}$ if for every vertex $v$ it holds that $|N_G(v)\cap c^{-1}(c(v))|$ is even and greater than $0$. Additionally, for \Type2+-$q$-colouring $|N_G(v)\cap c^{-1}(i)|>0$, for \Type{2}{1}-$q$-colouring $|N_G(v)\cap c^{-1}(i)|$ is odd, and for \Type{2}{!}-$q$-colouring $|N_G(v)\cap c^{-1}(i)|=1$ for every $i\not=c(v)$.
\begin{theorem}\label{link34:2+}\label{link34:21}\label{link34:2!}
    For $q \geq  3$ it is \NP-complete to decide whether a graph admits a  \type2+, a \type21, or a \type2!-$q$-colouring.
\end{theorem}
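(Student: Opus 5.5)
We reduce from $q$-edge-colouring on $q$-regular graphs, as in the preceding subsections, and arrange that the constructed graph $H$ is $(q+1)$-regular. On a $(q+1)$-regular graph the three variants coincide. Each of the $q-1$ colours $i\ne c(v)$ must occur at least once in $N(v)$. The own colour must occur an even, positive number of times, hence at least twice. Since the degree is $q+1$, this forces exactly two neighbours of the own colour and exactly one neighbour of every other colour. So \type2+, \type21 and \type2! all reduce to this single condition on such graphs, and one construction handles all three.

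The construction mirrors the one used for \type2* via subdivided cycles. For each $v\in V(G)$ we build a gadget containing vertices $x_1^v,\dots,x_q^v$, one per incident edge via a bijection $\pi_v$. For each edge $e=uv$ we join $\pi_u(e)$ and $\pi_v(e)$. Each $x_i^v$ gets $q$ further neighbours inside the gadget, intended to contain exactly two vertices of its own colour and exactly one vertex of each other colour except the colour supplied across the edge. Concretely, I would start from a $K_{q+1}$-like core in which every vertex already sees all $q$ colours, then split and pad it so that every vertex has degree $q+1$. A first attempt: take $x_1^v,\dots,x_q^v$ as a clique $K_q$. Attach to each $x_i^v$ a private vertex $p_i^v$ meant to share its colour. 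Make the $p_i^v$ themselves sit in a $(q+1)$-regular padding structure, for instance a second copy of the clique, matched to the first, with $p_i^v$ adjacent to $x_i^v$ and to the $p_j^v$.

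The key steps, in order, are as follows.
\begin{itemize}
\item Verify the reformulation on $(q+1)$-regular graphs stated above.
\item Prove a local claim that in any valid colouring of $H$, the vertices $x_1^v,\dots,x_q^v$ receive pairwise distinct colours, and the edge neighbour of $x_i^v$ across $e$ has the same colour as $x_i^v$. This claim is what makes $c(e):=c_H(\pi_u(e))$ a proper $q$-edge-colouring.
\item Conversely, from a $q$-edge-colouring, set $c_H(\pi_u(e))=c_H(\pi_v(e))=c(e)$, colour the padding vertices by copying the colour of the $x$ they shadow, and check the counts (two of the own colour, one of each other) at every vertex.
\end{itemize}

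The main obstacle is designing the per-vertex gadget so that its colouring is forced. In particular, it must be impossible for two $x_i^v$ to share a colour, for the cross edge to be bichromatic, or for the padding to absorb the own-colour occurrences instead. If the doubled-clique padding admits a spurious colouring (for example $x_i^v$ and $p_i^v$ of different colours with the cross edge supplying the second own-colour neighbour), I would modify it with small rigid gadgets. One option is $K_{q+2}$ minus a perfect matching, which for suitable parity is $q$-regular and forces each colour to appear twice by counting arguments similar to the equal gadget in \cref{thm:12}. Forcing could then be argued colour class by colour class, using the count of exactly one neighbour per foreign colour.
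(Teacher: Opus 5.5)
Your overall strategy matches the paper's: a reduction from $q$-edge-colouring of $q$-regular graphs to a $(q+1)$-regular graph $H$, on which the three types collapse to ``exactly two neighbours of the own colour, exactly one of each other colour''. Your first bullet is correct. The proposal has two genuine gaps, however.

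\textbf{The concrete gadget does not work.} In your doubled-clique padding, $p_i^v$ is adjacent only to $x_i^v$ and to the $q-1$ vertices $p_j^v$, so it has degree $q$, not $q+1$. This is fatal rather than cosmetic. For each of the three types a vertex needs at least two neighbours of its own colour and at least one of each of the other $q-1$ colours, so every vertex needs degree at least $q+1$. Your $H$ would therefore never be colourable. Your forward direction fails exactly here: if $p_i^v$ copies the colour of $x_i^v$, it sees only one vertex of its own colour.

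The missing ingredient is to duplicate the cross edges on the second layer, i.e.\ for $e=uv$ also join $p^u_{\pi_u(e)}$ and $p^v_{\pi_v(e)}$. This makes $H$ $(q+1)$-regular and makes your forward colouring valid. It is precisely the paper's construction: cliques $X^v,Y^v$, the matching $x_i^vy_i^v$, and cross edges in both layers. Your fallback, $K_{q+2}$ minus a perfect matching, is $q$-regular and hits the same degree obstruction. The counting behind the equal gadget of \cref{thm:12} also does not transfer, because it relies on an odd own-colour count.

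\textbf{The key forcing claim is asserted but not proved, and for $q=3$ it is not local.} For $q\ge 4$ it is short. If $x_1^v,x_2^v$ share colour $1$, then $x_3^v$ sees two vertices of colour $1$. Since every non-own colour occurs exactly once, $x_3^v$ must itself have colour $1$, and then $x_4^v$ sees three vertices of colour $1$, which is impossible. Once $X^v$ is rainbow, $x_i^v$ already sees every other colour once inside $X^v$. Hence $y_i^v$ and the cross neighbour must both carry its colour, which yields the edge colouring.

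For $q=3$, a monochromatic triangle $X^v$ is consistent inside a single vertex gadget: each $x_i^v$ then gets colours $2$ and $3$ from $y_i^v$ and its cross neighbour. So no argument confined to one gadget can work, and no ``small rigid gadget'' is specified that would fix this. The paper instead propagates into the neighbouring gadget:
\begin{itemize}
\item Without loss of generality $y_1^v$ has colour $2$.
\item This forces $Y^v$ to be monochromatic in colour $2$. Otherwise the vertex of $Y^v$ not of colour $2$ sees two neighbours of colour $2$.
\item This forces the cross neighbours $x_i^u$ of $x_1^v$ and $y_i^u$ of $y_1^v$ to have colour $3$.
\item Then $x_i^u$ already has colour $1$ from $x_1^v$ and one own-colour neighbour $y_i^u$, so the other two vertices of $X^u$ must have colours $3$ and $2$.
\item Finally the colour-$2$ vertex of $X^u$ sees two vertices of colour $3$, a contradiction.
\end{itemize}
Without this or an equivalent non-local argument, your proposal does not establish hardness for $q=3$.
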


\begin{proof}
    We give a reduction from $q$-edge-colouring on $q$-regular graphs. Given a $q$-regular graph $G$, we construct a graph $H$ as follows. For each vertex $v \in V(G)$, we introduce two disjoint $K_q$ $X^v$ and $Y^v$ with vertices $x_1^v,\dots,x_q^v$ and $y_1^v,\dots, y_q^v$. We add the edges $x_i^vy_i^v$ for every $i\in [q]$.
    For each vertex $v \in V(G)$, let $E_v$ be the set of $q$ edges incident to $v$. We choose a bijection $\pi_v: E_v \rightarrow [q]$, assigning a label to each edge. For each edge $e = uv \in E(G)$ we add the edges $x_{\pi_u(e)}^ux_{\pi_v(e)}^v$ and $y_{\pi_u(e)}^uy_{\pi_v(e)}^v$. For an illustrated see Figure~\ref{fig:2+and21and2!}. 

    \begin{figure}[h]
  \centering
  \begin{tikzpicture}[scale=0.3]
  \def \dist {0.8cm}
  \begin{scope}[xshift=-14cm]
      \node[a](m) at (0,0){};
      \node[l](x) at (180:3){};
      \node[l](y) at (60:3){};
      \node[l](z) at (300:3){};
      \draw[c1,lw1](m)--(x);
      \draw[c3,lw1](m)--(y);
      \draw[c2,lw1](m)--(z);
      \draw[-{Latex[length=2.5mm, width=2.5mm]},decorate, decoration={snake, segment length=4mm, amplitude=0.6mm}, thick] (2,0)--(5.1,0);
  \end{scope}
    
    \foreach \r in {1,3} {
      \foreach[count=\n] \c in {r,g,b} {
        \pgfmathtruncatemacro{\a}{90+120*\n}
        \node[\c] (\n\r) at (\a:1.3*\dist*\r) {};
      }
      \draw (1\r)--(2\r)--(3\r)--(1\r);
    }
    \draw (11)--(13) (21)--(23) (31)--(33);
    \node[gb, xshift=\dist, yshift=1.5*\dist] (32) at (90:0.2) {};
    \node[gb, xshift=\dist, yshift=1.9*\dist] (34) at (90:1.4) {};
    \draw (31)--(32) (33)--(34) (32)--(34);
    \node[gg, xshift=\dist, yshift=-1.5*\dist] (22) at (330:0.2) {};
    \node[gg, xshift=1.5*\dist, yshift=-1.9*\dist] (24) at (330:1.4) {};
    \draw (21)--(22) (23)--(24) (22)--(24);
    \node[gr, xshift=-1.8*\dist, yshift=0.1*\dist] (12) at (210:0.2) {};
    \node[gr, xshift=-2.3*\dist, yshift=-0.1*\dist] (14) at (210:1.4) {};
    \draw (11)--(12) (13)--(14) (12)--(14);
    
  \end{tikzpicture}
  \caption{Gadget representing vertex $v$ of $G$ in $H$, showing triangles and the extra edge.}
    \label{fig:2+and21and2!}
\end{figure}

    Fix $\sigma \in \{\type{2}{+},\type{2}{1},\type{2}{!}\}$. Note that the graph $H$ is $(q+1)$-regular implying the following.
    \begin{claim}\label{claim:regImplication2xcol}
        In any $\sigma$-colouring every vertex is adjacent to exactly two vertices of its own colour and exactly one vertex of every remaining colour.
    \end{claim}
    First, 
    let $c\colon E(G) \rightarrow [q]$ be a $q$-edge-colouring of $G$. We obtain a $\sigma$-colouring $c_{\sigma}:V(H) \rightarrow [q]$ of $H$ as follows. For each edge $e = uv \in E(G)$ we set $c_{\sigma}(x_{\pi_u(e)}^u) = c_{\sigma}(y_{\pi_u(e)}^u)=c_{\sigma}(x_{\pi_v(e)}^v) = c_{\sigma}(y_{\pi_v(e)}^v) = c(e)$. It is easy to verify that $c_{\sigma}$ is a $\sigma$-colouring, as illustrated in Figure~\ref{fig:2+and21and2!}.
    On the other hand,
    let $c_{\sigma}:V(H) \rightarrow [q]$ be a $\sigma$-colouring of $H$. Fix a vertex $v \in V(G)$.
    \begin{claim}
        For all $q\geq 4$ no colour appears twice in $X^v$ and in $Y^v$.
    \end{claim}
    \begin{claimproof}
        Towards a contradiction, assume, without loss of generality, that $c_\sigma(x_1^v)=c_\sigma(x_2^v)=1$. Since $x_3^v$ is adjacent to both $x_1^v$ and $x_2^v$, we get $c_\sigma(x_3^v)=1$ by \cref{claim:regImplication2xcol}. But this yields a contradiction to \cref{claim:regImplication2xcol}, as $x_4^v$ is adjacent to three vertices of colour $1$. The argument for $Y^v$ is analogous.
    \end{claimproof}
    For three colours the argument is a little more involved, as there is a valid $\sigma$-colouring for one vertex gadget, but this colouring cannot be combined with any valid colouring of neighbouring vertex gadgets.
    \begin{claim}
        For all $q=3$ no colour appears twice in $X^v$ and in $Y^v$.
    \end{claim}
    \begin{claimproof}
        Towards a contradiction, we assume again that $c_\sigma(x_1^v)=c_\sigma(x_2^v)=1$ which implies $c_\sigma(x_3^v)=1$ by \cref{claim:regImplication2xcol}. Since $x_1^v$ has two neighbours of the same colour $y_1^v$ must receive a different colour, say $c_\sigma(y_1^v)=2$. By \cref{claim:regImplication2xcol} there must be two vertices of colour $2$ among the remaining neighbours $y_2^v$, $y_3^v$ and $y_i^u$ for some $i\in [3]$, $u\in V(G)$. Hence, either $y_2^v$ or  $y_3^v$ must receive colour $2$, say $c_\sigma(y_2^v)=2$. Since $y_3^v$ has two neighbours of colour $2$, we obtain $c_\sigma(y_3^v)=2$ by \cref{claim:regImplication2xcol}. By \cref{claim:regImplication2xcol}, this further implies that $c_\sigma(x_i^u)=c_\sigma(y_i^u)$ where $uv\in E(G)$ with $\pi_v(uv)=1$ and $\pi_u(uv)=i\in [3]$. Since $x_i^u$ has colour $3$, has one neighbour $x_1^v$ in colour $2$ and one neighbour $y_i^u$ in colour $3$, the remaining two vertices of $X^y$ must have colour $1$ and $3$. Hence, $X^u$ has two vertices of colour $3$ and one of colour $1$, yielding a contradiction to \cref{claim:regImplication2xcol}.
    \end{claimproof}
    As a direct consequence of the two claim we obtain that for every edge $e=uv\in E(G)$ it must hold $c_{\sigma}(x_{\pi_u(e)}^u)=c_{\sigma}(y_{\pi_u(e)}^u)=c_{\sigma}(x_{\pi_v(e)}^v)=c_{\sigma}(y_{\pi_v(e)}^v)$ using \cref{claim:regImplication2xcol} and for two edges $uv,vw\in E(G)$ it holds that $c_\sigma(x_{\pi_v(uv)}^v)\not=c_\sigma(x_{\pi_v(vw)}^v)$. Therefore, we obtain a valid $q$-edge colouring $c\colon E(G)\rightarrow [q]$ by setting $c(uv)=c_\sigma(x_{\pi_v(uv)}^v)=c_\sigma(x_{\pi_u(uv)}^u)$
\end{proof}

\subsection{Reductions from other colouring problems}
In this section we provide reductions from different colouring problems.

\subsubsection{\Type{=}{+} and \Type{=}{1}-colouring}
Recall that $c\colon V(G)\rightarrow [q]$ is a $\sigma$-$q$-colouring of $G$ for $\sigma\in \{\type=+,\type=1\}$ if for every vertex $v$ it holds that $|N_G(v)\cap c^{-1}(c(v))|=0$. Additionally, for \type=+-$q$-colouring $|N_G(v)\cap c^{-1}(i)|>0$ and for \type=1-$q$-colouring $|N_G(v)\cap c^{-1}(i)|$ is odd for every $i\not=c(v)$.

\begin{theorem}\label{link34:=+}\label{link34:=1}
    For $q \geq 3$ it is \NP-complete to decide whether a graph admits a \type=+ or \type=1-$q$-colouring.
\end{theorem}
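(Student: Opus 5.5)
The plan is to reduce from proper $q$-colouring, which is \NP-complete for $q\geq 3$ \cite{G&J}. The reduction for \type=+ is short; the reduction for \type=1 needs an extra parity gadget, and building that gadget is the step I expect to be hardest.

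\textbf{\type=+.} Given a graph $G$, build $H$ by adding, for every $v\in V(G)$, a clique $W^v$ on $q-1$ new vertices and joining $v$ to all of $W^v$. Then $\{v\}\cup W^v$ is a $K_q$.
\begin{itemize}
\item In any proper $q$-colouring of $H$ this $K_q$ is rainbow.
\item Each $w\in W^v$ has degree exactly $q-1$, so $w$ sees every colour other than $c(w)$ exactly once.
\item The vertex $v$ already sees every colour other than $c(v)$ inside $W^v$.
\end{itemize}
Hence any proper colouring $c$ of $G$ extends to a \type=+-$q$-colouring of $H$, by colouring $W^v$ with the $q-1$ colours different from $c(v)$. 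Conversely, a \type=+-colouring of $H$ is proper on $H$, so it is proper on the subgraph $G$.

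\textbf{\type=1, case $q\geq 4$.} Here the same gadget does not work directly, because $v$ also sees its $G$-neighbours, which can destroy the parity. For $q\geq 4$ I would bypass the problem: a $(q-1)$-regular graph has a \type=1-$q$-colouring exactly when it has a proper rainbow $q$-colouring. This is \NP-complete by the covering result \cite{kratochvil1994complexity}, and \cref{link4:=1} already records it. So the genuinely new case is $q=3$. For uniformity I would nevertheless aim for a single reduction covering all $q\geq 3$.

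\textbf{\type=1, case $q=3$: planned reduction.} I would reduce from \type=v-$q$-colouring, which is \NP-complete for $q\geq 3$ by \cref{thm:=v}. In a \type=v-colouring each vertex $v$ sees every foreign colour either an odd number of times or not at all. The idea is to attach to $v$ a \emph{parity gadget} with the following properties:
\begin{enumerate}
\item The gadget contributes to $v$ exactly one neighbour of a freely choosable colour different from $c(v)$, and nothing else.
\item It has an internal \type=1-colouring for every such choice.
\item In every \type=1-colouring of $H$, its attachment vertex avoids $c(v)$.
\end{enumerate}
We attach one such gadget for each colour that is absent at $v$, chosen with that colour. The absent colours then become present exactly once, and odd counts remain odd, so a \type=v-colouring of the original graph extends to a \type=1-colouring of $H$. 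The converse direction is immediate: restricting a \type=1-colouring of $H$ to the original graph gives a proper colouring in which each foreign colour appears an odd number of times or is missing, because the gadget contributions are single vertices of foreign colours. There is one complication: the number of absent colours at $v$ depends on the colouring, not only on $v$. I would therefore attach a fixed number of gadgets per vertex and pair up the superfluous ones on a common colour, so that they add an even number; this is the same parity bookkeeping as in the proof of \cref{thm:=v}.

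\textbf{Main obstacle: constructing the parity gadget.} The gadget is the hard part because its attachment vertex $x$ itself sees $v$. So $x$ needs an odd number of neighbours of colour $c(v)$ in total. My first attempt would be a vertex $x$ placed in a $K_{q}$ minus one edge, so that $x$ misses precisely the colour it will receive from $v$. I would first check by hand, for $q=3$, whether this small graph admits the required \type=1-colourings and forces a colour different from $c(v)$ at $x$. If it does not, my fallback is to search over small graphs built from $K_q$ pieces for a gadget with the required colouring behaviour.
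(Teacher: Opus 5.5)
Your \type=+ reduction is correct and is exactly the paper's construction. The gap is in the \type=1 case for $q=3$. There you leave the parity gadget unconstructed, and the gadget you describe cannot exist.

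In a \type=1-$3$-colouring every vertex has no neighbour of its own colour and an odd number of neighbours of each of the two other colours. Hence every vertex of $H$ must have even degree. Now take a gadget with vertex set $X$ that is joined to the rest of $H$ only by the single edge $xv$. In $H[X]$ the vertex $x$ has odd degree and all other vertices have even degree, which contradicts the handshake lemma. So property 2 fails for every such gadget, no matter how many copies you attach or how you pair them. A search over small graphs would come up empty; the same obstruction holds for every odd $q$. Your $q\ge 4$ detour through proper rainbow colourings and the covering result is valid, but it does not rescue $q=3$.

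The missing idea is the choice of source problem. You rightly noticed that the $K_{q-1}$ gadget breaks parity when you start from proper colouring. The paper avoids this by reducing \type=1 from \type=0-$q$-colouring rather than from proper colouring or \type=v. \type=0-$q$-colouring is already \NP-complete for all $q\ge 3$ by \cref{link34:=0}, via a false twin for every vertex.

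Forward direction: in a \type=0-colouring each $v$ sees every foreign colour an even number of times in $G$. The very same gadget, a $K_{q-1}$ fully joined to $v$ and coloured with the $q-1$ colours other than $c(v)$, adds exactly one neighbour of each foreign colour. This makes every count odd.

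Converse: each vertex of $W^v$ has degree $q-1$, so it sees each foreign colour exactly once. This forces $\{v\}\cup W^v$ to be rainbow. Hence $v$ gains exactly one neighbour of each foreign colour from $W^v$, so its counts inside $G$ are even. The restriction to $G$ is therefore a \type=0-colouring.

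This covers all $q\ge 3$ uniformly, with no new gadget.
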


\begin{proof}
    For $q \geq 3$, we give a reduction for \type{=}{+}-$q$-colouring from \type{=}{*}-$q$-colouring and for \type{=}{1}-$q$-colouring from \type{=}{0}-$q$-colouring. We therefore define $f:\{\type{=}{*},\type{=}{0}\} \rightarrow \{\type{=}{+},\type{=}{1}\}$ as $f(\type{=}{*}) = \type{=}{+}$ and $f(\type{=}{0}) = \type{=}{1}$. Given a graph $G$, we construct a graph $H$ by adding, for each vertex $v \in V(G)$, a $K_{q-1}$ called $W^v$ with vertices $w_1^v,\ldots,w_{q-1}^v$, and the edge $vw_i^v$ for each $i \in [q-1]$. 
    
    First, 
    let $c_{\sigma}:V(G) \rightarrow [q]$ be a $\sigma$-$q$-colouring of $G$. We obtain a $f(\sigma)$-$q$-colouring $c_{f(\sigma)}:V(H) \rightarrow [q]$ of $H$ as follows. For each vertex $v \in V(G)$, we set $c_{f(\sigma)}(v) = c_{\sigma}(v)$. We assign each vertex in $W^v$ a unique colour among the remaining colour in $[q]\setminus \{c_{\sigma}(v)\}$. It is easy to verify that $c_{f(\sigma)}$ is a $f(\sigma)$-$q$-colouring.

    On the other hand, 
    let $c_{f(\sigma)}:V(H) \rightarrow [q]$ be a $f(\sigma)$-$q$-colouring of $H$. Fix a vertex $v \in V(G)$.
    Observe that the vertex $w_1^v$ has $q-1$ neighbours. Therefore, by definition of $f(\sigma)$-$q$-colouring, all colours apart from $c_{f(\sigma)}(w_1^v)$  appear in its neighbourhood exactly once. In particular, for $i \in [q-1]$, the colour of $w_i^v$ is different from the colour of $v$. Moreover, the vertices $w_i^v$ and $w_j^v$ must receive different colours for each $i \neq j$.
    We obtain a $\sigma$-$q$-colouring $c_{\sigma}:V(G) \rightarrow [q]$ of $G$ as follows. For each vertex $v \in V(G)$, we set $c_{\sigma}(v) = c_{f(\sigma)}(v)$. It is easy to verify that $c_{\sigma}$ is a $\sigma$-$q$-colouring.
\end{proof}

\subsubsection{\Type{0}{+} and \Type{0}{1}-colouring}

Recall that $c\colon V(G)\rightarrow [q]$ is a $\sigma$-$q$-colouring of $G$ for $\sigma\in \{\type0+,\type01\}$ if for every vertex $v$ it holds that $|N_G(v)\cap c^{-1}(c(v))|$ is even. Additionally, for \type0+-$q$-colouring $|N_G(v)\cap c^{-1}(i)|>0$ and for \type01-$q$-colouring $|N_G(v)\cap c^{-1}(i)|$ is odd for every $i\not=c(v)$.

\begin{theorem}\label{link3:0+}\label{link3:01}
    For $q = 3$ it is \NP-complete to decide whether a graph admits a \type0+ or \type01-$q$-colouring.
\end{theorem}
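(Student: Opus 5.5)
The plan is to reduce from $3$-edge-colouring of cubic graphs, which is \NP-complete \cite{Holyer,LeGa}, and to use one gadget for both \type0+ and \type01. The gadget is built around a forcing fact that holds for both types when $q=3$.

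\emph{Forcing fact.} Let $w$ be a vertex of degree $2$ with neighbours $a,b$. Its own-colour count must be even, so it is $0$ or $2$.
\begin{itemize}
\item If it were $2$, the two other colours would not appear in $N(w)$. That is forbidden both by \type+ and by \type1.
\item So $c(a)$, $c(b)$ and $c(w)$ are pairwise different.
\end{itemize}
Hence every degree-$2$ vertex acts as a rainbow triangle constraint on itself and its two neighbours.

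\emph{Vertex gadget.} Let $G$ be cubic. For each $v\in V(G)$, create three port vertices $x_1^v,x_2^v,x_3^v$, one for each edge at $v$ via a bijection $\pi_v$ as in the earlier proofs. For each pair $\{i,j\}$, add a degree-$2$ vertex $w_{ij}^v$ adjacent to $x_i^v$ and $x_j^v$. By the forcing fact:
\begin{itemize}
\item the ports of each gadget are rainbow;
\item each $w_{ij}^v$ gets the third colour;
\item consequently each port $x_i^v$ sees, from its two $w$'s, exactly one neighbour of each colour different from its own and none of its own colour.
\end{itemize}

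\emph{Edge connections.} For an edge $e=uv$, the two ports $\pi_u(e)$ and $\pi_v(e)$ must end up equal in colour. In addition, each port must receive an even number of own-coloured neighbours and nothing else outside its gadget. That keeps its counts for the other two colours at $(1,1)$, which satisfies both \type+ and \type1. To achieve this, I would take two copies $H_1,H_2$ of the whole construction. In each copy I join $\pi_u(e)$ to both copies of $\pi_v(e)$, and symmetrically. Each port then has exactly two outside neighbours.

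\emph{Completeness.} Given a $3$-edge-colouring of $G$:
\begin{itemize}
\item colour every copy of the port $\pi_v(e)$ with the colour of $e$;
\item colour each $w_{ij}^v$ with the third colour.
\end{itemize}
A direct check shows every vertex satisfies both \type01 and \type0+.

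\emph{Soundness, and the main obstacle.} For soundness, the rainbow property of the ports is already forced. The remaining step is to show that each port has the same colour as its partners. Parity only says the two outside partners of $x$ are both of $x$'s colour, or both not.
\begin{itemize}
\item The bad case is that both partners carry one other colour. This gives counts $(0;3,1)$, which \type01 allows.
\item For \type0+, the partners may also carry the two other colours, giving $(0;2,2)$.
\end{itemize}
To exclude these cases I would replace the plain edges between partner ports by an \emph{equal gadget} that forces its two ends to have the same colour. Such a gadget is again built from degree-$2$ forcing vertices. For example, take a path $x - a - b - y$ in which $a$ and $b$ each also get a private degree-$2$ neighbour arrangement that pins $c(a)=c(b)\neq c(x)$. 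The gadget must be designed so that its contribution at $x$ is even for the own colour and preserves the $(1,1)$ pattern; I would tune it by attaching the gadget twice so that contributions pair up.

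Verifying that such a gadget exists and contributes correct parities at the ports, for both types simultaneously, is the part I expect to need real care. Once equality across each edge is forced, rainbow ports at every vertex give a proper $3$-edge-colouring. Since \type01 and \type0+ are both in \NP, this yields \NP-completeness for $q=3$.
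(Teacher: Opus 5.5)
Your forcing fact, the rainbow vertex gadget and the completeness direction are correct. The gap is soundness, the only nontrivial direction: the construction you actually specify is not sound. With two copies and each edge $e=uv$ replaced by the $K_{2,2}$ between the copies of $\pi_u(e)$ and $\pi_v(e)$, every cubic $G$ yields a YES-instance. Colour the ports of each vertex gadget $1,2,3$ in any order, colour the $w$'s with the third colour, and use the same colouring in both copies. A port of colour $\alpha$ whose partner has colour $\beta\neq\alpha$ then has no own-coloured neighbour, and $3$ resp.\ $1$ neighbours of the two foreign colours, which both \type{0}{1} and \type{0}{+} accept.

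So everything rests on the equal gadget, which you leave unconstructed, and the sketch you give cannot work as stated. A path $x - a - b - y$ that forces $c(a)=c(b)\neq c(x)$ (and at best $c(b)\neq c(y)$) excludes only one colour at $y$, and with three colours that leaves two choices. To pin $c(x)=c(y)$ you must exclude two distinct colours at both ends. One way: take a fresh copy $m_1,m_2,m_3$ of your vertex gadget and join each of $x$ and $y$ to $m_1$ and to $m_2$ through new degree-$2$ vertices. This forces $c(x)=c(y)=c(m_3)$, and the $m_i$ end with own count $0$ and odd foreign counts. However, $x$ gains one neighbour of each foreign colour, so you need two such gadgets per edge to keep the parities odd for \type{0}{1}; the two-copies trick then becomes unnecessary. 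Until a gadget of this kind is written down and verified, the reduction is incomplete.

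For comparison, the paper needs no gadget at all. It reduces from \type{=}{+}- resp.\ \type{=}{1}-$3$-colouring (\NP-complete by the preceding theorem) by subdividing every edge once. Your forcing fact is precisely the key step there. Each subdivision vertex $x^{uv}$ forces $c(u),c(v),c(x^{uv})$ to be pairwise distinct, so the original graph is properly coloured and every original vertex has no own-coloured neighbour in $H$. Moreover, its number of neighbours of a foreign colour $a$ in $H$ equals its number of $G$-neighbours of the other foreign colour. Since the \typesymbol{+} and \typesymbol{1} constraints are symmetric in the two foreign colours, the conditions transfer in both directions.
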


\begin{proof}
    For $q = 3$, we give a reduction for \type{0}{+}-$q$-colouring from \type{=}{+}-$q$-colouring and for \type{0}{1}-$q$-colouring from \type{=}{1}-$q$-colouring. We therefore define $f:\{\type{=}{+},\type{=}{1}\} \rightarrow \{\type{0}{+},\type{0}{1}\}$ as $f(\type{=}{+}) = \type{0}{+}$ and $f(\type{=}{1}) = \type{0}{1}$. Given a graph $G$, we construct a graph $H$ by subdividing each edge $e \in E(G)$ with a vertex $x^{e}$.
    
    First, 
    let $c_{\sigma}:V(G) \rightarrow [q]$ be a $\sigma$-$q$-colouring of $G$. We obtain a $f(\sigma)$-$q$-colouring $c_{f(\sigma)}:V(H) \rightarrow [q]$ of $H$ as follows. For each vertex $v \in V(G)$, we set $c_{f(\sigma)}(v) = c_{\sigma}(v)$. For each edge $uv \in E(G)$, we assign to the vertex $x^{uv}$ the unique colour in $[q] \setminus \{c_{\sigma}(u),c_{\sigma}(v)\}$. It is easy to verify that $c_{f(\sigma)}$ is a $f(\sigma)$-$q$-colouring.

    On the other hand, 
    let $c_{f(\sigma)}:V(H) \rightarrow [q]$ be a $f(\sigma)$-$q$-colouring of $H$. Fix any edge $uv \in E(G)$.
    Observe that, by definition of $f(\sigma)$-$q$-colouring, the vertex $x^{uv}$ forces $u$ and $v$ to have different colours. Moreover, the vertices $u$ and $v$ must receive the two colours in $[q] \setminus \{c_{f(\sigma)}(x^{uv})\}$. Hence, we obtain a $\sigma$-$q$-colouring $c_{\sigma}:V(G) \rightarrow [q]$ of $G$ by setting $c_{\sigma}(v) = c_{f(\sigma)}(v)$ for every $v \in V(G)$.
\end{proof}

\begin{corollary}\label{link3:v+}\label{link3:*+}\label{link3:?+}\label{link3:v1}\label{link3:*1}\label{link3:?1}
    Starting from \type=+, this reduction also works for \type{v}+, \type*+, \type?+. Analogously, starting from \type=1, this reduction also works for \type{v}1, \type*1, \type?1.
\end{corollary}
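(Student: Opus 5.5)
The plan is to reuse the construction from the proof of \cref{link3:0+} verbatim. Given $G$, let $H$ be obtained by subdividing every edge $uv$ with a vertex $x^{uv}$, and fix $q=3$. For $\tau\in\{\typesymbol{v},\typesymbol{*},\typesymbol{?}\}$, I would show that $G$ has a \type=+-$3$-colouring if and only if $H$ has a $\tau$\typesymbol{+}-$3$-colouring. Analogously, $G$ has a \type=1-$3$-colouring if and only if $H$ has a $\tau$\typesymbol{1}-$3$-colouring. The source problems are \NP-complete for $q=3$ by \cref{link34:=+}, so the corollary follows.

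The key observation concerns the degree-$2$ vertex $x^{uv}$. In $H$ its two neighbours are $u$ and $v$, and both constraints \typesymbol{+} and \typesymbol{1} require each of the two colours other than $c(x^{uv})$ to appear among them (exactly once, in fact). Hence $\{c(u),c(v)\}=[3]\setminus\{c(x^{uv})\}$. This gives three facts at once:
\begin{itemize}
\item $c(u)\neq c(v)$;
\item $x^{uv}$ has $0$ neighbours of its own colour, which is allowed by \typesymbol{v}, \typesymbol{*} and \typesymbol{?};
\item $c(x^{uv})\neq c(v)$, so every original vertex has $0$ neighbours of its own colour in $H$, again allowed by all three first constraints.
\end{itemize}
So the first coordinate $\tau$ never binds, and this is exactly why the reduction extends beyond the case $\tau=\typesymbol{=}$.

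Next I would compare colour counts at an original vertex $v$ with $c(v)=i$. A neighbour $u$ of $v$ in $G$ with $c(u)=j\neq i$ corresponds in $H$ to the neighbour $x^{uv}$, which has the unique colour $k\notin\{i,j\}$. So the number of $H$-neighbours of $v$ of colour $k$ equals the number of $G$-neighbours of colour $j$, and vice versa. The map is thus a swap of the two foreign colours, which preserves the conditions ``$>0$'' and ``odd'' for both of them. Both directions follow from this. Given a \type=+-colouring (resp.\ \type=1-colouring) of $G$, colour each $x^{uv}$ with the third colour; the constraints then hold at every vertex of $H$. Conversely, given a colouring of $H$, the restriction to $V(G)$ is proper by the first fact, and it satisfies \typesymbol{+} (resp.\ \typesymbol{1}) by the count swap.

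I expect no genuine obstacle. The one point needing care is that a subdivision vertex has its single neighbour of each foreign colour exactly once. This satisfies \typesymbol{+} and \typesymbol{1} as the second constraint, and the $0$ same-colour neighbours are admissible for every $\tau$ listed. A first constraint such as \typesymbol{2} or \typesymbol{!} would fail at this point, which explains why those types are excluded from the corollary.
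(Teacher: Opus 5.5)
Your proposal is correct and follows the paper's approach: you reuse the edge-subdivision reduction from the $q=3$ theorem. Each degree-$2$ subdivision vertex forces its two endpoints to take the two remaining colours, so every vertex of $H$ has no neighbour of its own colour and the first constraint never binds. Your colour-swap count at the original vertices simply spells out the step the paper leaves as easy to verify.
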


\section{Conclusions}
We systematically study the computational complexity of graph colouring with different combinations of parity constraints. We leave \type{v}{*}-colouring (\textsl{i.e.} every vertex is required to have either $0$ or an odd number of neighbours of its own colour) for $q\geq 3$ colours open. We believe that this variant is of particular interest. It is known that every graph admits an even colouring with at most $2$-colours \cite{Lovasz93}. However, if you restrict this variant to forbid a vertex to have zero neighbours of its own colour, it becomes \NP-complete (\cref{link34:2*} for $q\geq 3$ colours and \cref{link2:2*} for two colours). On the other hand, odd colouring is \NP-complete \cite{BelmonteS21}. However,  we were not able to even find an example of a graph requiring more than three colours to be \type{v}{*}-coloured. We therefore believe that three colours might always be sufficient which would imply that allowing vertices to have no neighbours of their own colour in odd colouring makes the problem easier. Finally, relaxing the odd number constraint to allow $0$ as well also features in strong odd colourings in the literature and we therefore believe that the problem is of interest.

Another direction of research that we believe may be of interest is to consider maximization of the number of colours rather then minimization. In many considered variants the monochromatic colouring is already permissible. Its not however clear, whether there also exists colourings which use strictly more than one colour. For such problems considering the maximization problem may be of interest. In the literature, the maximum number of colours in some colouring variant are referred to as achromatic numbers (while the classical minimum number of colours are referred to as chromatic numbers).

\bibliography{bib}

@article{Chvatal,
  author       = {Va\v{s}ek Chv\'{a}tal},
  title        = {Tough graphs and hamiltonian circuits},
  journal      = {Discret.\ Math.},
  volume       = {5},
  number       = {3},
  pages        = {215--228},
  year         = {1973},
  url          = {https://doi.org/10.1016/0012-365X(73)90138-6},
  doi          = {10.1016/0012-365X(73)90138-6},
  bibsource    = {dblp computer science bibliography, https://dblp.org}
}

@book{G&J,
  author       = {M.\,R.~Garey and.
                  David S.~Johnson},
  title        = {Computers and Intractability: {A} Guide to the Theory of {NP}-Completeness},
  publisher    = {W. H. Freeman},
  year         = {1979},
  isbn         = {0-7167-1044-7},
  bibsource    = {dblp computer science bibliography, https://dblp.org}
}

@article{Holyer,
  author       = {Ian Holyer},
  title        = {The {NP}-Completeness of Edge-Coloring},
  journal      = {{SIAM} J.\ Comput.},
  volume       = {10},
  number       = {4},
  pages        = {718--720},
  year         = {1981},
  url          = {https://doi.org/10.1137/0210055},
  doi          = {10.1137/0210055},
  bibsource    = {dblp computer science bibliography, https://dblp.org}
}

@article{LeGa,
  author       = {Daniel Leven and
                  Zvi Galil},
  title        = {{NP}-completeness of Finding the Chromatic Index of Regular Graphs},
  journal      = {J.\ Algorithms},
  volume       = {4},
  number       = {1},
  pages        = {35--44},
  year         = {1983},
  url          = {https://doi.org/10.1016/0196-6774(83)90032-9},
  doi          = {10.1016/0196-6774(83)90032-9},
  bibsource    = {dblp computer science bibliography, https://dblp.org}
}

@book{Lovasz93,
  author    = {L{\'{a}}szl{\'{o}} Lov{\'{a}}sz},
  title     = {Combinatorial Problems and Exercises},
  publisher = {North-Holland},
  year      = {1993},
  isbn      = {9780821869475}
}

@article{BelmonteS21,
  author    = {R{\'{e}}my Belmonte and
               Ignasi Sau},
  title     = {On the Complexity of Finding Large Odd Induced Subgraphs and Odd Colorings},
  journal   = {Algorithmica},
  volume    = {83},
  number    = {8},
  pages     = {2351--2373},
  year      = {2021},
  url       = {https://doi.org/10.1007/s00453-021-00830-x},
  doi       = {10.1007/s00453-021-00830-x},
  bibsource = {dblp computer science bibliography, https://dblp.org}
}

@article{Scott01,
  author    = {Alex D.\ Scott},
  title     = {On Induced Subgraphs with All Degrees Odd},
  journal   = {Graphs Comb.},
  volume    = {17},
  number    = {3},
  pages     = {539--553},
  year      = {2001},
  url       = {https://doi.org/10.1007/s003730170028},
  doi       = {10.1007/s003730170028},
  bibsource = {dblp computer science bibliography, https://dblp.org}
}

@misc{PetrS21,
  author    = {Petrusevski, Mirko and Skrekovski, Riste},
  title     = {Colorings with neighborhood parity condition},
  publisher = {arXiv},
  doi       = {10.48550/ARXIV.2112.13710},
  url       = {https://arxiv.org/abs/2112.13710},
  year      = {2021}, 
  copyright = {Creative Commons Zero v1.0 Universal}
}

@article{cowen1997defective,
  title={Defective coloring revisited},
  author={Cowen, Lenore and Goddard, Wayne and Jesurum, C.~Esther}, 
  journal={Journal of Graph Theory},
  volume={24},
  number={3},
  pages={205--219},
  year={1997},
  publisher={Wiley Online Library},
  doi = {https://doi.org/10.1002/(SICI)1097-0118(199703)24:3<205::AID-JGT2>3.0.CO;2-T},
  url = {https://onlinelibrary.wiley.com/doi/abs/10.1002/\%28SICI\%291097-0118\%28199703\%2924\%3A3\%3C205\%3A\%3AAID-JGT2\%3E3.0.CO\%3B2-T},
  eprint = {https://onlinelibrary.wiley.com/doi/pdf/10.1002/\%28SICI\%291097-0118\%28199703\%2924\%3A3\%3C205\%3A\%3AAID-JGT2\%3E3.0.CO\%3B2-T}
}

@article{kratochvil1997covering,
  title={Covering regular graphs},
  author={Kratochv\'{\i}l, Jan and Proskurowski, Andrzej and Telle, Jan Arne},
  journal={Journal of Combinatorial Theory, Series B},
  volume={71},
  number={1},
  pages={1--16},
  year={1997},
  publisher={Elsevier}
}

@article{bok2024list,
  title={List covering of regular multigraphs with semi-edges},
  author={Bok, Jan and Fiala, Ji{\v{r}}{\'\i} and Jedli{\v{c}}kov{\'a}, Nikola and Kratochv{\'\i}l, Jan and Rz{\k{a}}{\.z}ewski, Pawe{\l}},
  journal={Algorithmica},
  volume={86},
  number={3},
  pages={782--807},
  year={2024},
  publisher={Springer}
}

@article{fiala2008locally,
  title={Locally constrained graph homomorphisms—structure, complexity, and applications},
  author={Fiala, Ji{\v{r}}{\'\i} and Kratochv{\'\i}l, Jan},
  journal={Computer Science Review},
  volume={2},
  number={2},
  pages={97--111},
  year={2008},
  publisher={Elsevier}
}

@inproceedings{kratochvil1994complexity,
  title={Complexity of graph covering problems},
  author={Kratochv{\'\i}l, Jan and Proskurowski, Andrzej and Telle, Jan Arne},
  booktitle={International Workshop on Graph-Theoretic Concepts in Computer Science},
  pages={93--105},
  year={1994},
  organization={Springer}
}

@article{kwon2024strong,
  title={Strong odd coloring of sparse graphs},
  author={Kwon, Hyemin and Park, Boram},
  journal={arXiv preprint arXiv:2401.11653},
  year={2024}
}

@article{CaroPST25,
  author       = {Yair Caro and
                  Mirko Petrusevski and
                  Riste Skrekovski and
                  Zsolt Tuza},
  title        = {On strong odd colorings of graphs},
  journal      = {Discret. Math.},
  volume       = {348},
  number       = {11},
  pages        = {114601},
  year         = {2025},
  url          = {https://doi.org/10.1016/j.disc.2025.114601},
  doi          = {10.1016/J.DISC.2025.114601},
  bibsource    = {dblp computer science bibliography, https://dblp.org}
}

@article{Pilipczuk25,
  author       = {Micha\l\ Pilipczuk},
  title        = {Strong Odd Colorings in Graph Classes of Bounded Expansion},
  journal      = {Electron. J. Comb.},
  volume       = {32},
  number       = {4},
  year         = {2025},
  url          = {https://doi.org/10.37236/14259},
  doi          = {10.37236/14259},
  bibsource    = {dblp computer science bibliography, https://dblp.org}
}

@article{fall,
    AUTHOR = {Dunbar, J. E. and Hedetniemi, S. M. and Hedetniemi, S. T. and
              Jacobs, D. P. and Knisely, J. and Laskar, R. C. and Rall, D. F.},
     TITLE = {Fall colorings of graphs},
      NOTE = {Papers in honour of Ernest J. Cockayne},
   JOURNAL = {J. Combin. Math. Combin. Comput.},
  FJOURNAL = {Journal of Combinatorial Mathematics and Combinatorial Computing},
    VOLUME = {33},
      YEAR = {2000},
     PAGES = {257--273},
      ISSN = {0835-3026,2817-576X}
}

@misc{b&fall,
      title={Optimal b-Colourings and Fall Colourings in {$H$}-Free Graphs}, 
      author={Jungho Ahn and Tala Eagling-Vose and Felicia Lucke and David Manlove and Fabricio Mendoza and Dani\"{e}l Paulusma},
      year={2026},
      eprint={2603.26214},
      archivePrefix={arXiv},
      primaryClass={math.CO},
      url={https://arxiv.org/abs/2603.26214}
}

@article{b-col,
  author  = {Robert W. Irving and David F. Manlove},
  title   = {The b-chromatic number of a graph},
  journal = {Discrete Applied Mathematics},
  volume  = {91},
  number  = {1},
  pages   = {127--141},
  year    = {1999},
  issn    = {0166-218X},
  doi     = {https://doi.org/10.1016/S0166-218X(98)00146-2},
  url     = {https://www.sciencedirect.com/science/article/pii/S0166218X98001462}
}

@book{ApHa1989,
  title     = {Every planar map is four colorable},
  author    = {Appel, Kenneth I. and Haken, Wolfgang},
  volume    = {98},
  year      = {1989},
  publisher = {American Mathematical Soc.}
}

@misc{McKay2012,
      title={A note on the history of the four-colour conjecture}, 
      author={Brendan D. McKay},
      year={2012},
      eprint={1201.2852},
      archivePrefix={arXiv},
      primaryClass={math.CO},
      url={https://arxiv.org/abs/1201.2852}, 
}

@Inbook{Karp2010,
author="Karp, Richard M.",
editor="J{\"u}nger, Michael
and Liebling, Thomas M.
and Naddef, Denis
and Nemhauser, George L.
and Pulleyblank, William R.
and Reinelt, Gerhard
and Rinaldi, Giovanni
and Wolsey, Laurence A.",
title="Reducibility Among Combinatorial Problems",
bookTitle="50 Years of Integer Programming 1958-2008: From the Early Years to the State-of-the-Art",
year="2010",
publisher="Springer Berlin Heidelberg",
address="Berlin, Heidelberg",
pages="219--241",
isbn="978-3-540-68279-0",
doi="10.1007/978-3-540-68279-0_8",
url="https://doi.org/10.1007/978-3-540-68279-0_8"
}

@incollection{frick1993survey,
  title={A survey of $(m,k)$-colorings},
  author={Frick, Marietjie},
  booktitle={Annals of Discrete Mathematics},
  volume={55},
  pages={45--57},
  year={1993},
  publisher={Elsevier}
}

@article{woodall1990improper,
  title={Improper colourings of graphs},
  author={Woodall, D.R.},
  journal={Graph colourings},
  volume={218},
  pages={45--63},
  year={1990},
  publisher={Longman Scientific and Technical Harlow}
}

@inproceedings{Sch78,
author = {Schaefer, Thomas J.},
title = {The complexity of satisfiability problems},
year = {1978},
isbn = {9781450374378},
publisher = {Association for Computing Machinery},
address = {New York, NY, USA},
url = {https://doi.org/10.1145/800133.804350},
doi = {10.1145/800133.804350},
booktitle = {Proceedings of the Tenth Annual ACM Symposium on Theory of Computing},
pages = {216–226},
numpages = {11},
location = {San Diego, California, USA},
series = {STOC '78}
}

@article{cheilaris2013unique,
  author       = {Panagiotis Cheilaris and
                  Bal{\'{a}}zs Keszegh and
                  D{\"{o}}m{\"{o}}t{\"{o}}r P{\'{a}}lv{\"{o}}lgyi},
  title        = {Unique-Maximum and Conflict-Free Coloring for Hypergraphs and Tree
                  Graphs},
  journal      = {{SIAM} J. Discret. Math.},
  volume       = {27},
  number       = {4},
  pages        = {1775--1787},
  year         = {2013},
  url          = {https://doi.org/10.1137/120880471},
  doi          = {10.1137/120880471},
  bibsource    = {dblp computer science bibliography, https://dblp.org}
}

@article{DemaineKP25,
  author       = {Erik D. Demaine and
                  Kritkorn Karntikoon and
                  Nipun Pitimanaaree},
  title        = {2-Colorable Perfect Matching is {NP}-complete in 2-Connected 3-Regular
                  Planar Graphs},
  journal      = {Theory Comput. Syst.},
  volume       = {69},
  number       = {2},
  pages        = {22},
  year         = {2025},
  url          = {https://doi.org/10.1007/s00224-025-10221-2},
  doi          = {10.1007/S00224-025-10221-2},
  bibsource    = {dblp computer science bibliography, https://dblp.org}
}

@article{fabrici2023proper,
  title={Proper conflict-free and unique-maximum colorings of planar graphs with respect to neighborhoods},
  author={Fabrici, Igor and Lu{\v{z}}ar, Borut and Rindo{\v{s}}ov{\'a}, Simona and Sot{\'a}k, Roman},
  journal={Discrete Applied Mathematics},
  volume={324},
  pages={80--92},
  year={2023},
  publisher={Elsevier}
}

@article{caro2022remarks,
  title={Remarks on odd colorings of graphs},
  author={Caro, Yair and Petru{\v{s}}evski, Mirko and {\v{S}}krekovski, Riste},
  journal={Discrete Applied Mathematics},
  volume={321},
  pages={392--401},
  year={2022},
  publisher={Elsevier}
}

@article{ahn2025proper,
  title={The proper conflict-free k-coloring problem and the odd k-coloring problem are {NP}-complete on bipartite graphs},
  author={Ahn, Jungho and Im, Seonghyuk and Oum, Sang-il},
  journal={Discrete Applied Mathematics},
  volume={377},
  pages={10--17},
  year={2025},
  publisher={Elsevier}
}

\appendix

 \section{Trivial cases}\label{sec:trivial}

\subsection{Case \boldmath$q = 1$}

In this subsection, we denote $\Sigma = \{\typesymbol{0},\typesymbol{2},\typesymbol{1},\typesymbol{v},\typesymbol{*},\typesymbol{=},\typesymbol{+},\typesymbol{?},\typesymbol{!}\}$. Fix $\sigma \in \Sigma$.

\begin{description}
	
	\item[type \typesymbol{0}$\sigma$]\label{link1:00}\label{link1:02}\label{link1:01}\label{link1:0v}\label{link1:0*}\label{link1:0=}\label{link1:0+}\label{link1:0?}\label{link1:0!} For $q = 1$ and $\sigma \in \Sigma$, a graph $G$ is \typesymbol{0}$\sigma$-$q$-colourable if and only if all connected components are even, which can be decided in polynomial time.
	
	\item[type \typesymbol{2}$\sigma$]\label{link1:20}\label{link1:22}\label{link1:21}\label{link1:2v}\label{link1:2*}\label{link1:2=}\label{link1:2+}\label{link1:2?}\label{link1:2!} For $q = 1$ and $\sigma \in \Sigma$, a graph $G$ is \typesymbol{2}$\sigma$-$q$-colourable if and only if all connected components are even and the graph has no isolated vertices, which can be decided in polynomial time.
	
	\item[type \typesymbol{1}$\sigma$]\label{link1:10}\label{link1:12}\label{link1:11}\label{link1:1v}\label{link1:1*}\label{link1:1=}\label{link1:1+}\label{link1:1?}\label{link1:1!} For $q = 1$ and $\sigma \in \Sigma$, a graph $G$ is \typesymbol{1}$\sigma$-$q$-colourable if and only if all connected components are odd, which can be decided in polynomial time.
	
	\item[type \typesymbol{v}$\sigma$]\label{link1:v0}\label{link1:v2}\label{link1:v1}\label{link1:vv}\label{link1:v*}\label{link1:v=}\label{link1:v+}\label{link1:v?}\label{link1:v!} For $q = 1$ and $\sigma \in \Sigma$, a graph $G$ is \typesymbol{v}$\sigma$-$q$-colourable if and only if all connected components are odd or isolated vertices, which can be decided in polynomial time.
	
	\item[type \typesymbol{*}$\sigma$]\label{link1:*0}\label{link1:*2}\label{link1:*1}\label{link1:*v}\label{link1:**}\label{link1:*=}\label{link1:*+}\label{link1:*?}\label{link1:*!} For $q = 1$ and $\sigma \in \Sigma$, every graph $G$ is \typesymbol{*}$\sigma$-$q$-colourable.
	
	\item[type \typesymbol{=}$\sigma$]\label{link1:=0}\label{link1:=2}\label{link1:=1}\label{link1:=v}\label{link1:=*}\label{link1:==}\label{link1:=+}\label{link1:=?}\label{link1:=!} For $q = 1$ and $\sigma \in \Sigma$, a graph $G$ is \typesymbol{=}$\sigma$-$q$-colourable if and only if all connected components are isolated vertices, which can be decided in polynomial time.
	
	\item[type \typesymbol{+}$\sigma$]\label{link1:+0}\label{link1:+2}\label{link1:+1}\label{link1:+v}\label{link1:+*}\label{link1:+=}\label{link1:++}\label{link1:+?}\label{link1:+!} For $q = 1$ and $\sigma \in \Sigma$, a graph $G$ is \typesymbol{+}$\sigma$-$q$-colourable if and only if the graph has no isolated vertices, which can be decided in polynomial time.
	
	\item[type \typesymbol{?}$\sigma$]\label{link1:?0}\label{link1:?2}\label{link1:?1}\label{link1:?v}\label{link1:?*}\label{link1:?=}\label{link1:?+}\label{link1:??}\label{link1:?!} For $q = 1$ and $\sigma \in \Sigma$, a graph $G$ is \typesymbol{?}$\sigma$-$q$-colourable if and only if all connected components are edges or isolated vertices, which can be decided in polynomial time.
	
	\item[type \typesymbol{!}$\sigma$]\label{link1:!0}\label{link1:!2}\label{link1:!1}\label{link1:!v}\label{link1:!*}\label{link1:!=}\label{link1:!+}\label{link1:!?}\label{link1:!!} For $q = 1$ and $\sigma \in \Sigma$, a graph $G$ is \typesymbol{!}$\sigma$-$q$-colourable if and only if all connected components are edges, which can be decided in polynomial time.
	
\end{description}

\subsection{Case \boldmath$q = 2$ for \typesymbol{=}$\sigma$-colouring}

\begin{description}
	
	\item[type =0]\label{link2:=0} For $q = 2$, a graph $G$ is \type=0-$q$-colourable if and only if the graph is bipartite and all connected components are even, which can be decided in polynomial time.
	
	\item[type =2]\label{link2:=2} For $q = 2$, a graph $G$ is \type=2-$q$-colourable if and only if the graph is bipartite, has no isolated vertices, and all connected components are even, which can be decided in polynomial time.
	
	\item[type =1]\label{link2:=1} For $q = 2$, a graph $G$ is \type=1-$q$-colourable if and only if the graph is bipartite and all connected components are odd, which can be decided in polynomial time.
	
	\item[type =v]\label{link2:=v} For $q = 2$, a graph $G$ is \type=v-$q$-colourable if and only if the graph is bipartite and all connected components are odd or isolated vertices, which can be decided in polynomial time.
	
	\item[type =$\star$]\label{link2:=*} For $q = 2$, a graph $G$ is \type=*-$q$-colourable if and only if the graph is bipartite, which can be decided in polynomial time.
	
	\item[type ==]\label{link2:==} For $q = 2$, a graph $G$ is \type==-$q$-colourable if and only if all connected components are isolated vertices, which can be decided in polynomial time.
	
	\item[type =+]\label{link2:=+} For $q = 2$, a graph $G$ is \type=+-$q$-colourable if and only if the graph is bipartite and has no isolated vertices, which can be decided in polynomial time.
	
	\item[type =?]\label{link2:=?} For $q = 2$, a graph $G$ is \type=?-$q$-colourable if and only if all connected components are edges or isolated vertices, which can be decided in polynomial time.
	
	\item[type =!]\label{link2:=!} For $q = 2$, a graph $G$ is \type=!-$q$-colourable if and only if all connected components are edges, which can be decided in polynomial time.
	
\end{description}

\subsection{Other easy cases for \boldmath$q=2$}

\begin{description}
	
	\item[type \type*+]\label{link2:*+} A graph containing an isolated vertex has no \type*+-colouring. All other graphs have such a colouring with two colours. Assume $G$ is a connected graph other than $K_1$. We choose a spanning tree $T$ of $G$. Since $T$ is bipartite, it has a proper $2$-colouring, which is a \type*+-colouring of $G$.
	
	\item[type ??]\label{link2:??} For $q = 2$, a graph $G$ is \type??-$q$-colourable if and only if all connected components are paths or cycles of length divisible by four, which can be decided in polynomial time.
	
	\item[type ?!]\label{link2:?!} For $q = 2$, a graph $G$ is \type?!-$q$-colourable if and only if all connected components are paths of odd length or cycles of length divisible by four, which can be decided in polynomial time.
	
	\item[type !?]\label{link2:!?} For $q = 2$, a graph $G$ is \type!?-$q$-colourable if and only if all connected components are paths of odd length or cycles of length divisible by four, which can be decided in polynomial time.
	
	\item[type !!]\label{link2:!!} For $q = 2$, a graph $G$ is \type!!-$q$-colourable if and only if all connected components are cycles of length divisible by four, which can be decided in polynomial time.
	
\end{description}

\subsection{Easy cases for \boldmath$q=3$}

\begin{description}
	
	\item[type =?]\label{link3:=?} For $q = 3$, a graph $G$ is \type=?-$q$-colourable if and only if all connected components are paths or cycles of length divisible by three, which can be decided in polynomial time.
	
	\item[type =!]\label{link3:=!} For $q = 3$, a graph $G$ is \type=!-$q$-colourable if and only if all connected components are cycles of length divisible by three, which can be decided in polynomial time.
	
\end{description}

\subsection{Polynomial-time solvable cases for every \boldmath$q$}\label{ss3easy}

\begin{description}
	
	\item[type 00]\label{link1234:00} We call colourings of type \type00 \emph{even-even colourings}. To allow for such a colouring, each connected component of the graph has to be even. All such graphs have an even-even colouring where all vertices receive the same colour. This colouring is called \emph{monochromatic}.
	
	\item[type \type*0 and \type*v and \type*? and $\star$=]\label{link1234:*0}\label{link1234:*v}\label{link1234:*?}\label{link1234:*=} For every graph the monochromatic colouring is of type \type*0, \type*v, \type*? and \type*=. 
	
	\item[type \boldmath$\star\star$]\label{link1234:**} For every graph, every colouring is of type \type**.
	
	\item[type 0=]\label{link1234:0=} The monochromatic colouring is of type \type0= if and only if all connected components are even. Otherwise, there is no such colouring. Moreover, such colourings are monochromatic in each connected component.
	
	\item[type 2=]\label{link1234:2=} All colourings of type \type2= are monochromatic in each connected component. They exist if and only if the graph has no isolated vertices and every connected component is even.
	
	\item[type 1=]\label{link1234:1=} The monochromatic colouring is of type \type1= if and only if all connected components are odd. Moreover, \type1=-colourings are monochromatic in each connected component.
	
	\item[type ==]\label{link1234:==} For $q>0$, a graph has a colouring of type \type== if and only if it is edgeless. All colourings work.
	
	\item[type +0 and +v and \type+* and +? and +=]\label{link1234:+0}\label{link1234:+v}\label{link1234:+*}\label{link1234:+?}\label{link1234:+=} In all cases monochromatic colouring works if and only if the graph has no isolated vertices. Moreover, such colourings are monochromatic in each connected component.
	
	\item[type 10]\label{link1234:10} A \type10-colouring only exists if all components are odd. In this case the monochromatic colouring is of type \type10. Moreover, such colourings are monochromatic in each connected component.
	
	\item[type 20]\label{link1234:20} Every component has to be even and the graph cannot have isolated vertices. In this case the monochromatic colouring is of type \type20. Moreover, such colourings are monochromatic in each connected component.
	
	\item[type v$=$]\label{link1234:v=} The monochromatic colouring is of type \type{v}=\ if and only if all connected components are odd or isolated vertices. Moreover, such colourings are monochromatic in each connected component.
	
	\item[type \typesymbol{?}\typesymbol{=}]\label{link1234:?=} The monochromatic colouring is of type \type?=\ if and only if all connected components are edges or isolated vertices. Moreover, such colourings are monochromatic in each connected component.
	
	\item[type \typesymbol{!}\typesymbol{=}]\label{link1234:!=} The monochromatic colouring is of type \type!= if and only if all connected components are edges. Moreover, such colourings are monochromatic in each connected component.
	
\end{description}

\end{document}